\newcounter{sarrow}
\newcommand\xrsquigarrow[1]{%
\stepcounter{sarrow}%
\mathrel{\begin{tikzpicture}[baseline= {( $ (current bounding box.south) + (0,-0.5ex) $ )}]
\node[inner sep=.5ex] (\thesarrow) {$\scriptstyle #1$};
\path[draw,<-,decorate,
  decoration={snake,amplitude=2.5pt,segment length=2mm,pre=lineto,pre length=4pt}]
    (\thesarrow.south east) -- (\thesarrow.south west);
\end{tikzpicture}}%
}
\def\be{\begin{equation}}
\def\ee{\end{equation}}
\def\nn{\nonumber}
\def\arg{\operatorname{arg}}
\def\lm{\limits}
\def\Tr{{\rm Tr}\,}
\def\CA{{\mathcal A}}
\def\CT{{\mathcal T}}
\def\CP{{\mathcal P}}
\def\CW{{\mathcal W}}
\def\CB{{\mathcal B}}
\def\CK{{\mathcal K}}
\def\CM{{\mathcal M}}
\def\IR{{\mathbb{R}}}
\def\FOmega{\underline{\overline{\Omega}}}
\newcommand{\twid}{\widetilde}
\newcommand{\Hilb}{\mathcal{H}}
\newcommand{\cl}{\operatorname{cl}}
\newcommand{\conj}{\overline}
\newcommand{\inup}{\textcolor{red}{\Upsilon}}
\newcommand{\indown}{\textcolor{red}{\Delta}}
\newcommand{\oup}{\textcolor{blue}{\Upsilon}}
\newcommand{\odown}{\textcolor{blue}{\Delta}}
\newcommand{\oQ}{\textcolor{blue}{Q}}
\newcommand{\osQ}{\textcolor{blue}{\mathcal{Q}}}
\newcommand{\inX}{\textcolor{magenta}{X}}
\newcommand{\incdat}{\textcolor{red}{\operatorname{Incoming-Data}}}
\newcommand{\outdat}{\textcolor{blue}{\operatorname{Outgoing-Data}}}
\newcommand{\intdat}{\textcolor{blue}{\operatorname{Internal-Data}}}
\newcommand{\formgamma}{\mathbb{Z} [ \! [ \twid{\Gamma} ] \! ]}
\newcommand{\lift}[1]{\bm{#1}}
\newcommand{\liftnb}[1]{\bm{#1}}
\newcommand{\la}{\bm{a}}
\newcommand{\lb}{\bm{b}}
\newcommand{\lac}{\bm{\conj{a}}}
\newcommand{\lbc}{\bm{\conj{b}}}
\newcommand{\lte}{\bm{e}}
\newcommand{\ltf}{\bm{f}}
\newcommand{\N}{{\mathcal N}}
\newtheorem{theorem}{Theorem}[section]
\newtheorem{lemma}[theorem]{Lemma}
\newtheorem{proposition}[theorem]{Proposition}
\newtheorem{corollary}[theorem]{Corollary}
\newenvironment{definition}[1][Definition]{\begin{trivlist}
\item[\hskip \labelsep {\bfseries #1}]}{\end{trivlist}}
\newenvironment{remark}[1][Remark]{\begin{trivlist}
\item[\hskip \labelsep {\bfseries #1}]}{\end{trivlist}}
\newenvironment{claim}[1][Claim]{\begin{trivlist}
\item[\hskip \labelsep {\bfseries #1}]}{\end{trivlist}}
\newcommand{\labitem}[2]{%
\def\@itemlabel{#1:}
\item
\def\@currentlabel{#1}\label{#2}}
\newcommand{\ti}[1]{\textit{#1}}
\title{Wild Wall Crossing and BPS Giants}
\abstract{We show that the BPS spectrum of pure $SU(3)$ four-dimensional
super Yang-Mills with ${\cal N} = 2$ supersymmetry exhibits a surprising phenomenon:
there are regions of the Coulomb branch where the growth of the BPS degeneracies
with the charge is \emph{exponential}.
We show this using spectral networks and independently using
wall-crossing formulae and quiver methods.  The computations using
spectral networks provide a very nontrivial example of how these networks
determine the four-dimensional BPS spectrum. We comment on some physical implications
of the wild spectrum: for example, exponentially
many field-theoretic BPS states with large charge are gigantic.
Finally, we exhibit some surprising, thus far unexplained,
regularities of the BPS spectrum.}
\author[1]{Dmitry Galakhov,}
\author[2]{Pietro Longhi,}
\author[3]{Tom Mainiero,}
\author[4]{Gregory W. Moore,}
\author[5]{and Andrew Neitzke}
\affiliation[1]
{Institute for Theoretical and Experimental Physics,\\ Moscow, Russia,}
\affiliation[1,2,4]
{NHETC and Department of Physics and Astronomy, Rutgers University,\\
Piscataway, NJ 08855--0849, USA}
\affiliation[3]
{Department of Physics, University of Texas at Austin,\\
Austin, TX 78712, USA}
\affiliation[5]
{Department of Mathematics, University of Texas at Austin,\\
Austin, TX 78712, USA}
\emailAdd{galakhov@physics.rutgers.edu}
\emailAdd{longhi@physics.rutgers.edu}
\emailAdd{mainiero@physics.utexas.edu}
\emailAdd{gmoore@physics.rutgers.edu}
\emailAdd{neitzke@math.utexas.edu}
\date{\today}
\begin{document}

\maketitle

\section{Introduction \& Conclusion}

One good reason to investigate the BPS spectra of
four-dimensional $\N=2$ field theories is that one might
discover new phenomena in field theory.  This paper
demonstrates an example of such a new phenomenon.

In the past few years there has been much progress
in understanding the BPS spectra of $\N=2, d=4$ theories.
For recent reviews see \cite{Moore:2012yp,MOORE_FELIX,Cecotti:2012se}.
These methods have been particularly powerful when applied to the theories of class
$S[A_1]$.  As one example, the
spectrum generator technique of \cite{GMN2} gives an algorithm
which can --- in principle --- give the BPS spectrum of any
theory of class $S[A_1]$ anywhere on its Coulomb branch.
Advances in quiver technology have also been very
effective in investigating this class of theories \cite{CV,DIACONESCU}.
In contrast, theories associated to higher rank gauge groups, such as theories of class
$S[A_{K-1}]$ for $K > 2$, have been less explored.

It has been noted by various authors that
theories of class $S[A_{K-1}]$ for $K > 2$
could have higher spin BPS states, beyond the familiar hypermultiplets and vectormultiplets
which occur in theories of class $S[A_1]$.
One result of this paper is that this expectation is indeed correct: higher spin BPS multiplets do
occur at some points of the Coulomb branch in one explicit theory of class $S[A_2]$,
namely the pure $d=4$, $\N=2$, $SU(3)$ theory.

In addition, we find a much more surprising phenomenon:  theories of class $S$ can have
\ti{wild BPS spectra}, i.e. at some points of the Coulomb branch,
the number of BPS states with mass $\le M$ grows \ti{exponentially} with $M$.
The main result of this paper is two independent demonstrations, in
Sections \ref{sec:spectral-wild-SU3} and \ref{sec:wc-wild-SU3}, that
wild spectra appear in the pure $d=4, \N=2, SU(3)$ theory.

As explained in Section \ref{sec:physical-estimates} below,
this exponential growth is physically a bit surprising.
Indeed, the existence of a conformal fixed point defining the 4d theory,
plus dimensional analysis, implies that the degeneracy of
BPS states at energy $E$ in finite volume $V$ cannot grow faster than $\exp[\mathrm{const} \times V^{1/4} E^{3/4}]$.
On the other hand, here we are finding that the spectrum of BPS 1-particle
states grows like $\exp [\mathrm{const} \times E]$.
The resolution of this puzzle must lie in the difference between BPS 1-particle states
and states in the finite volume Hilbert space; we propose that the size of the objects
represented by the BPS 1-particle states grows with $E$, so that for any fixed $V$,
most of the BPS 1-particle states simply do not fit into the finite-volume Hilbert space.
Indeed, in Section \ref{sec:physical-estimates},
using Denef's picture of BPS bound states, we demonstrate directly that
their size does indeed grow with $E$.
The invalid exchange of large $E$ and large $V$ limits when
accounting for field theory entropy should perhaps serve
as a cautionary tale.

\medskip

Here is the fundamental idea which we use to find wild BPS degeneracies.
Suppose we have an $\N=2$ theory and a point of the Coulomb branch
in which the spectrum contains two BPS hypermultiplets,
of charges $\gamma$ and $\gamma'$, and no bound states thereof --- i.e.
we have the BPS degeneracies $\Omega(\gamma) = 1$, $\Omega(\gamma') = 1$,
$\Omega(a \gamma + b \gamma') = 0$ for all other $a, b \ge 0$.
Then suppose we move on the Coulomb branch to a point where the central charges
$Z_\gamma$ and $Z_{\gamma'}$ have the same phase.  Such a point lies on a wall
of marginal stability.  On the other side of the wall, the spectrum includes
bound states with charge $a \gamma + b \gamma'$ for various $a$, $b$.  Their precise
degeneracies can be determined by the Kontsevich-Soibelman wall-crossing formula,
and indeed depend \ti{only} on the integer $m = \langle{\gamma,\gamma'}\rangle$.
For this reason we call the collection of BPS states thus generated an ``$m$-cohort.''

The cases $m = 1$ and $m = 2$ occur already in the theories of class $S[A_1]$.
For $m=1$ an $m$-cohort contains only a single bound state; for $m=2$ an $m$-cohort contains an
infinite set of hypermultiplets plus a single vector multiplet.
In either case, at any rate, one does not get wild degeneracies.  In contrast, for $m>2$
the wall-crossing formula shows that an $m$-cohort does contain wild degeneracies.
Indeed, even if one restricts attention to charges of the form $n(\gamma + \gamma')$, one already
has exponential growth.
This is explained and made precise in Proposition \ref{prop:asymp}, Section \ref{subsec:3kron}, and
Section \ref{sec:asymptotics} below.  With this in mind, for any $m>2$, we will say that a theory contains
``$m$-wild degeneracies'' if its BPS spectrum contains an $m$-cohort.

The BPS degeneracies arising in $m$-cohorts have been studied at some length in the mathematics literature
because they
arise as Donaldson-Thomas invariants attached to the $m$-Kronecker
quiver in one region of its stability parameter space. The latter
have been intensively studied in
\cite{REINEKE,REINEKE03,REINEKE08,REINEKE-09,WEIST,WEIST12}.  One interesting feature noted there
is that for $m>2$, the phases of the central charges of BPS states in an $m$-cohort are
\ti{dense} in some arc of the circle.

This discussion motivates two approaches to the problem of exhibiting
wild degeneracies in a physical theory.
Our first approach goes via the ``spectral networks'' of \cite{GMN5,GMN6}:
rather than studying the wall-crossing directly, we make a guess about the
kind of spectral networks which \ti{could} arise from wall-crossing involving
two hypermultiplets with arbitrary $m = \langle{\gamma,\gamma'}\rangle$.
For $m = 1$ the network we draw looks like a saddle,
which motivates an equine terminology:  our networks are built from constituents
we call ``horses'' (defined in Section. \ref{sec:herds}, Figure \ref{fig:horse}, and detailed in Appendix \ref{app:herd-appendix}), glued together to form ``$m$-herds.''
See Figure \ref{fig:herds} for some examples.
We show moreover that $m$-herds indeed occur in physical spectral
networks at some particular points of the Coulomb branch of the $SU(3)$ theory:  see Figure
\ref{fig:3-herd-in-su3} for the evidence.
The general rules of spectral networks, combined with
Proposition 3.1 and Proposition 3.2 below, lead
to the following formula for the BPS spectrum for charges of
the form $n (\gamma + \gamma'):= n \gamma_c$ in the wild region.
 We first form a generating
function $P_m(z)$ related to the BPS spectrum by
\be
P_m(z)  = \prod_{n=1}^{\infty} (1-(-1)^{mn} z^n)^{n \Omega(n\gamma_c)/m}.
\ee
Then, Proposition 3.1 states that $P_m(z)$ is a solution of the algebraic equation \eqref{eq:P},
which we reproduce here:
	\begin{equation}
		P_{m} = 1 + z \left( P_{m} \right)^{(m-1)^2}.
		\label{eq:P-intro}
	\end{equation}
This equation had been identified previously by Kontsevich and
Soibelman \cite{KS_MOTIVIC_I} and by Gross and Pandharipande
\cite{GROSS}, as the one governing the generating function of BPS degeneracies of an $m$-cohort,
for charges of the form $n(\gamma + \gamma')$.
It follows that if we have an $m$-herd ($m>2$) somewhere in our theory, then our theory does contain at least
the part of an $m$-cohort corresponding to charges of the form
$n(\gamma + \gamma')$.  In particular, if the theory contains an $m$-herd, then it does
contain wild degeneracies.  Since we have found $m$-herds at some points of the Coulomb branch in
the pure $SU(3)$ theory, we conclude that we indeed have wild degeneracies in that theory.

The algebraic equation \eqref{eq:P-intro} is an instance of a more general phenomenon.
It has been observed by Kontsevich that the generating functions of Donaldson-Thomas invariants
are often solutions of algebraic equations. In fact, for the Kronecker
quiver this has been proved \cite{KontsevichCommun}.
 Our analysis via spectral networks produces the algebraic
equation \eqref{eq:P-intro} directly.  Moreover, we expect that this will happen more generally, as we explain in Appendix
\ref{app:algeq}; thus spectral networks seem to be a natural framework for explaining
Kontsevich's observation.

Our second method of demonstrating the existence of wild spectra
uses wall-crossing more directly.  Namely, in Section \ref{subsec:PathCoulomb}
we exhibit a path on the Coulomb branch
which begins in a strong coupling chamber
with a finite set of BPS states, and leads to a
wall-crossing between two hypermultiplet charges $\gamma$, $\gamma'$
with $\langle \gamma, \gamma' \rangle = 3$.
As we have discussed above, the existence of such a path directly
implies the existence of wild spectra.  In fact this gives more than we got from the
spectral network:  it shows that there is a whole $3$-cohort in the spectrum.
In Section \ref{sec:wild-SU3} we perform some nontrivial checks of this
statement by factorizing the spectrum generator derived from
the known finite spectrum in a strong coupling chamber.
In Section \ref{subsec:3kron} we also check numerically the exponential growth of
the BPS degeneracies for sequences of charges of the
form $n (a \gamma + b\gamma')$, $n\to \infty$,  for various values of $a,b$.

In Section \ref{subsec:RelateQuivers} we discuss the
behavior of the ``BPS quivers'' of the $SU(3)$ theory
along the path found in Section \ref{subsec:PathCoulomb}.  It turns out that
the Kronecker 3-quiver is in fact a subquiver of the BPS quiver, after
one has performed suitable mutations and   made a suitable choice of half-plane to define simple roots.
  We similarly
argue that for \ti{all} $m \ge 3$ (not only $m=3$) there are Kronecker $m$-subquivers and corresponding
$m$-wild spectra on the Coulomb branch of the $SU(3)$ theory.

In the course of our investigations we also studied the protected spin characters
(a.k.a. ``refined BPS degeneracies'') for the $m$-Kronecker quiver in the
wild region. Our main tool was the   ``motivic'' Kontsevich-Soibelman
formula \cite{KS_MOTIVIC_I,KS_MOTIVIC_II}. While investigating these
spin degeneracies   we discovered some
beautiful but strange systematics.  Some
of these were  previously discovered by Weist and Reineke
in \cite{WEIST12} and \cite{REINEKE03}, respectively, but
some are new. We
collect them in Section \ref{sec:moonshine}. Perhaps the most notable
new observation is that the spin degeneracies appear, (on the basis of
numerical data), to obey a universal scaling law. See
equations \eqref{eq:scalingfunction} and \eqref{eq:Poisson}.

In Section \ref{sec:OpenProblems} we discuss a few  open problems and questions
raised by the present work. Appendix \ref{app:PSC-tables} reviews
definitions of protected spin characters and presents some data.
The remaining appendices address more technical points of spectral networks.

\section*{Acknowledgements}

We thank Ofer Aharony, Tom Banks,  Frederik Denef, Emanuel Diaconescu,
Maxim Kontsevich,
Jan Manschot and Steve Shenker
for useful discussions and  correspondence.
The work of DG, PL, and GM is supported by the DOE under grant
DE-FG02-96ER40959.  GM also gratefully acknowledges
partial support from a Simons Fellowship and hospitality
of the Aspen Center for Physics (NSF Grant 1066293)
during part of this research.
The work of AN is supported by the NSF under grant numbers
DMS-1006046 and DMS-1151693.
The work of GM and AN is jointly supported by an NSF Focused Research Group
award DMS-1160461 and DMS-1160591.
The work of DG was partly supported by Ministry of Education and Science of the Russian Federation under contract 8207, by NSh-3349.2012.2, by RFBR grants 13-02-00457, 12-02-31535-mol-a.
The work of TM was partly supported by the NSF under an NSF Research Training Group award
DMS-0636557.

\section{Brief Review of Spectral Networks} \label{sec:sn-review}

In this section we give a brief review of the spectral network machinery and its use for computing BPS spectra in  $\mathcal{N}=2, \, d=4$ theories of class $S$.
For a more complete discussion we refer the reader to \cite{GMN5}.
For a more informal (but incomplete) review see  \cite{MOORE_FELIX}.

\subsection{The Setting}
Recall that the $\mathcal{N}=2, \, d=4$ theories of class $S$ are
specified by three pieces of data \cite{GMN2,Gaiotto:2009we}:
\begin{enumerate}
	\item A Lie algebra $\mathfrak{g}$ of ADE type (as in \cite{GMN5} the following discussion assumes $\mathfrak{g}=A_{K-1}$),

	\item a compact Riemann surface $C$ with punctures at points $\mathfrak{s}_{1},\cdots, \mathfrak{s}_{n} \in C$,

	\item a collection of defect operators $D$ located at the punctures.
\end{enumerate}
To shed some light on this collection of data, we note that such theories can be constructed via a partial twist (preserving eight supercharges) of the $\mathcal{N}=(2,0),\, d=6$ theory $S[\mathfrak{g}]$ defined on $\mathbb{R}^{3,1} \times C$.  The defect operators $D$ are codimension-2 defects located at $\mathbb{R}^{3,1} \times \{ \mathfrak{s}_{1} \}, \, \cdots , \, \mathbb{R}^{3,1} \times \{ \mathfrak{s}_{n} \}$.
A four-dimensional $\mathcal{N}=2$ field theory is produced after integrating out the degrees of freedom along $C$ and is labeled $S[\mathfrak{g},C,D]$.

We now present some useful definitions.

\begin{definition}[Definitions]\
\begin{enumerate}
	\item The \textit{Coulomb branch} $\CB$ of $S[\mathfrak{g},C,D]$ is the set of tuples $(\phi_{2}, \cdots, \phi_{K})$ of holomorphic $r$-differentials $\phi_{r}$ with singularities at $\mathfrak{s}_{1},\cdots, \mathfrak{s}_{n} \in C$ prescribed by the defect operators $D$.

	\item Let $u = (\phi_{2}, \, \cdots, \, \phi_{r} ) \in \CB$ and denote the holomorphic cotangent bundle of $C$ as $\CT^* C$.  Then the \textit{spectral cover} is a $K$-sheeted branched cover $\pi_{u}: \Sigma_{u} \rightarrow C$, where $\Sigma_{u}$ is the subvariety\footnote{$\Sigma_{u}$ is also called the Seiberg-Witten curve.}
	\begin{equation}
	\Sigma_{u} := \{\lambda \in \CT^*C: \lambda^{K} + \sum_{r=2}^{K} \phi_{r} \lambda ^{K-r} = 0\} \subset \CT^* C,
	\end{equation}
	and the projection $\pi_{u}$ is the restriction of the standard projection $\CT^* C \rightarrow C$.

	\item As $\Sigma_{u} \subset \CT ^* C$ it carries a natural holomorphic $1$-form which is just the restriction of the tautological (Liouville) 1-form.  In the spirit of its tautological nature we abuse notation and denote this 1-form $\lambda_{u}$.
	\end{enumerate}
\end{definition}

Often we will work over a fixed $u \in \CB$; so eventually the index $u$ will be dropped where there is no ambiguity.

\subsubsection{Spectral Cover Crash Course}

Let us make some observations about the spectral cover. First, the fibers are given by
\begin{equation*}
	\pi_{u}^{-1}(z) = \{ \lambda(z) \in \CT_{z}^{*} C: \lambda(z)^{K} + \sum_{r=2}^{K} \phi_{r}(z) \lambda^{K - r}(z) = 0 \},
\end{equation*}
i.e. the roots of the defining polynomial of $\Sigma_{u}$ at the point $z$.  Generically, $\pi_{u}^{-1}(z)$ consists of $K$ distinct roots, although at particular values of $z$ (branch points) two or more roots may coincide.  In fact, letting $C'=C-\{ \text{branch points} \}$,
$\pi_{u}|_{C'}$ is a $K$-fold (unramified) cover of $C'$.

If $\pi_{u}|_{C'}$ is a non-trivial cover, the roots do not fit together into global holomorphic $1$-forms on $C$ as they undergo monodromy around branch points.  However, restricted to the complement of a choice of branch cuts on $C$, the cover is trivializable: a projection of $K$ distinct sheets onto the complement.  Each sheet is the graph traced out by a root of the defining polynomial; such roots are distinct holomorphic differential forms.  A choice of trivialization of the restricted cover is a bijective map between the set of $K$ sheets and the set $\{1, 2, \dots, K\}$, or equivalently, a labeling of the roots of the defining polynomial from $1$ to $K$.

\begin{definition}[Definitions] \
	\begin{enumerate}
	\item Make a suitable choice of branch cuts for the branched cover $\pi_{u}: \Sigma_{u} \rightarrow C$.  The complement of these branch cuts in $C$ will be denoted by $C^{c}$.

	\item A choice of trivialization of $\pi^{-1}(C^{c}) \rightarrow C^{c}$ will be denoted by a labeling of the roots of the defining polynomial for $\Sigma$, i.e. a labeling $\lambda_{i} \in H^{0}(C^{c}; K),\, i = 1, \dots, K$, where each $\lambda_{i}$ (a holomorphic $1$-form on $C^{c}$) is a distinct root of the defining polynomial for $\Sigma$.  Note that this gives us a labeling of sheets: the  $i$th sheet is the graph of $\lambda_{i}$ in $\CT^{*}C$.  If we wish, we can extend the $\lambda_{i}(z)$ to branch points $z$ to speak of ``collisions'' of sheets.

	\item For later convenience, we define
	\begin{align*}
	\lambda_{ij} := \lambda_{i} - \lambda_{j} \in H^{0}(C^{c};K).
	\end{align*}

	\end{enumerate}
\end{definition}

As in \cite{GMN5} we will assume that all branch points are simple, i.e.
at most two sheets of $\Sigma$ collide at any $z$.

\begin{definition}
	A branch point of type  $ij$ ($i,j \in 1, \cdots, K$) is a point $z \in C$ where the $i$th and $j$th sheets of $\Sigma_{u}$ collide, i.e, $\lambda_{i}(z) = \lambda_{j}(z)$.
\end{definition}

The data of the full spectral cover can be recovered after trivializing by specifying the monodromy around all branch points, and all closed cycles of $C$.  In this paper, we assume \textit{simple ramification}: in a neighborhood around each branch point, the spectral cover looks like the branched cover $z \mapsto z^2$ of the disk.  Thus, for a simple closed curve surrounding a branch point of type $ij$, there is a $\mathbb{Z}/2\mathbb{Z}$ monodromy
\begin{equation*}
	\lambda_{i} \leftrightarrow \lambda_{j}.
\end{equation*}
Monodromy around an arbitrary closed cycle of $C$ may permute the sheets in a more complicated fashion.

\subsubsection{BPS objects in $S[A_{K-1}, C, D]$}
Theories of class $S$ admit a zoo rich in BPS species, each of which has a different classical description from the point of view of the six-dimensional geometry of $\mathbb{R}^{3,1} \times C$.  Our ultimate interest in this paper is in the 4D (vanilla) BPS states,
but the power behind the spectral network machine draws heavily on the symbiosis between these different species; so we take a moment to project each of them into the spotlight.

\subsubsection*{BPS Strings and ``vanilla'' 4D BPS states}
4D BPS states in the four-dimensional $\mathcal{N}=2$ theory arise from extended objects in the 6D description: BPS strings.
In the effective IR description, at a point $u \in \CB$, BPS strings wrap closed paths $p$ on the branched cover $\Sigma_{u} \subset \CT^*C \rightarrow C$.  The resulting states are classified by their homology classes $\gamma  = [p] \in H_{1}(\Sigma_{u}; \mathbb{Z})$ in the sense that there is a natural grading of the Hilbert space of BPS strings as
\begin{equation*}
	\Hilb^{\operatorname{BPS}}(u) = \bigoplus_{ \gamma \in H_{1}(\Sigma_{u}; \mathbb{Z})} \Hilb(\gamma;u),
\end{equation*}
commuting with the action of the super-Poincar\'{e} group.
\begin{definition}
	The \textit{charge lattice} of 4D BPS states at a point $u \in \CB$ is $\Gamma_{u} = H_{1}(\Sigma_{u}; \mathbb{Z})$. It is equipped with an antisymmetric pairing $\langle \cdot, \cdot \rangle: \Gamma_{u} \times \Gamma_{u} \rightarrow \mathbb{Z}$ given by the intersection form on $H_{1}(\Sigma_{u}; \mathbb{Z})$.
\end{definition}

To count the number of BPS states of a particular charge $\gamma$ we recall a major celebrity of this paper: the second helicity supertrace (a.k.a. the ``BPS degeneracy" or ``BPS index")
\begin{align*}\label{eq:eq:SecHelSTR}
	\Omega(\gamma; u) &= -\frac{1}{2} \Tr_{\Hilb(\gamma;u)} (2 J_{3})^2 (-1)^{2J_{3}},
\end{align*}
where $J_{3}$ is any generator of the rotation subgroup of the massive little group.  This index is piecewise constant on $\CB$, jumping across real codimension-1 walls of marginal stability  on $\CB$ where two BPS states with linearly independent charges $\gamma, \gamma' \in \Gamma_{u}$ have central charges of the same phase: $\arg \left(Z_{\gamma} \right) = \arg \left(Z_{\gamma'} \right)$.   To compute this index we will not rely on its definition as a supertrace, but instead utilize the geometric methods of the spectral network machine.

\begin{remark}[Remarks] \
	\begin{enumerate}
		\item On $\CB$ there may be (complex) codimension-1 loci where a cycle of $\Sigma_{u}$ degenerates.  Let $\CB^{*} = \CB - \text{\{degeneration loci\}}$.  Then the collection $\widehat{\Gamma} = \{\Gamma_{u} \}_{u \in \CB^{*}}$ forms a local system of lattices $\widehat{\Gamma} \rightarrow \CB^{*}$.  This local system is often equipped with a non-trivial monodromy action.

		\item As mentioned previously, we will often drop the subscript $u \in \CB$ as we will often be working over a single point on the Coulomb branch, or choosing a local trivialization of the local system of lattices on some open set.

		\item Strictly speaking, the lattice of charges $\Gamma_u$
		is not quite $H_{1}(\Sigma_{u}; \mathbb{Z})$ \cite{GMN2}; in the theory we consider
		in this paper, though, $\Gamma_u$ is just
		a sublattice of $H_{1}(\Sigma_{u}; \mathbb{Z})$, and for our considerations there
		is no harm in replacing $\Gamma_u$ by  $H_{1}(\Sigma_{u}; \mathbb{Z})$.
		(If we considered the theory with ${\mathfrak g} = \mathfrak{gl}(K)$ instead of ${\mathfrak g} = \mathfrak{sl}(K)$
		then the charge lattice would be literally $H_{1}(\Sigma_{u}; \mathbb{Z})$.)

		\item From the four-dimensional point of view, $\Gamma_{u}$ is the lattice of electric/magnetic and flavor charges in the IR effective abelian gauge theory defined at $u$.

	\end{enumerate}
\end{remark}
Fix $u \in \CB$. The central charge and mass of a string $p: S^{1} \rightarrow \Sigma$ are\footnote{The integral $\int_{p} \lambda$ is only a function of the class $[p] \in H_{1}(\Sigma; \mathbb{Z})$; hence, the central charge reduces to a function $\Gamma \rightarrow \mathbb{C}$.}
\begin{align*}
	Z_{p} = \frac{1}{\pi} \int_{p} \lambda, \\
	M_{p} = \frac{1}{\pi} \int_{p} |\lambda|.
\end{align*}
With this, the BPS condition $|Z_{p}| = M_{p}$ is given by
\begin{equation}
	 \int_{p} \lambda = e^{i \vartheta} \int_{p} |\lambda|
	 \label{eq:BPS_cond_int}
\end{equation}
 for some $\vartheta = \arg(Z_p) \in \mathbb{R}/2\pi \mathbb{Z}$.  The value of $\vartheta$ specifies which four-dimensional BPS subalgebra is preserved.  We can rewrite this condition in more useful form; indeed, let $v_p$ denote a vector field along the path $p$, then (\ref{eq:BPS_cond_int}) is true iff
\begin{equation}
	\operatorname{Im} \left[e^{-i \vartheta} \langle \lambda, v_{p} \rangle \right] = 0.
	\label{eq:BPS_cond_vanilla}
\end{equation}

\subsubsection*{Solitons}
The theory $S[\mathfrak{g},C,D]$ is equipped with a special set of BPS surface defect operators $\{\mathbb{S}_{z} \}_{z \in C}$ parameterized (in the UV\footnote{In the six-dimensional UV description, the operator $\mathbb{S}_{z}$ attached to a point $z \in C$ is a surface defect which intersects $C$ at a single point.}) by points on $C$.  In the IR, for fixed $u \in \CB$, the operator $\mathbb{S}_{z}$ possesses finitely many massive vacua labeled by the set $\pi^{-1}(z)$ (with $\pi = \pi_{u}$).
 Letting $z \in C'$, then \textit{solitons} are BPS states\footnote{After insertion of $\mathbb{S}_{z}$ there are four remaining supercharges.  A BPS soliton preserves two supercharges.} bound to the defect $\mathbb{S}_{z}$, which interpolate between two different vacua.  Classically, they are given by oriented paths $s:[0,1] \rightarrow \Sigma$ with endpoints $s(0),\, s(1) \in \pi^{-1}(z)$; furthermore, each such path satisfies a BPS condition that we will now describe.

Consider a soliton path $s$ such that, after choosing a trivialization,
$s$ only runs along sheets $i$ and $j$ and such that
 the projection $s_{C} := \pi \circ s$ is a connected open path on $C$.  Let $v_{s_{C}}$ be a vector field along the path $s_{C}$. Then, the BPS condition is the differential equation
\begin{align}
	\operatorname{Im} \left[ e^{-i \vartheta} \langle \lambda_{ij}, v_{s_{C}} \rangle \right] &= 0
	\label{eq:BPS_soliton_cond}
\end{align}
for some fixed angle $\vartheta$.
For more complicated solitons that travel along more than two sheets, we can break the soliton up into a concatenation of
partial solitons running along various pairs of sheets; each partial soliton involved in the concatenation must
satisfy (\ref{eq:BPS_soliton_cond}) where $ij$ is replaced by the relevant pair of sheets, and $\vartheta$ is the same for each partial soliton.
Hence, the BPS condition for solitons leads to a system of $\binom{K}{2}$ differential equations on $C'$ (one for each disjoint pair of sheets).
For such a soliton $s$, broken into partial solitons $\{s^{r}\}_{r = 1}^{L}$ as decribed above, its central charge and mass are
\begin{equation}
	\begin{aligned}
		Z_{s} &= \sum_{r = 1}^{L} \frac{1}{\pi} \int_{s^{r}_{C}} \lambda_{ij}\\
		M_{s} &= \sum_{r = 1}^{L} \frac{1}{\pi} \int_{s^{r}_{C}} |\lambda_{ij}|.
	\end{aligned}
	\label{eq:soliton_mass}
\end{equation}
We can now identify the angle as the
phase of the central charge, $\vartheta = \arg(Z_{s})$, and indeed the BPS condition is equivalent
to $M_{s} = |Z_{s}|$.

As with 4D BPS states, solitons also carry a charge, but now given by a relative homology class as they are open paths.

Let $z \in C'$; choose a labeling of the $K$ points in $\pi^{-1}(z) \in C'$.
\begin{definition}\
	\begin{enumerate}
		\item Let $z_{l} \in \pi^{-1}(z)$ denote the pre-image of $z \in C'$ on the $l$th sheet.  Then a soliton is of type $ij$ if it is given by a path that begins on $z_{i}$ and ends on $z_{j}$.  We also refer to such solitons as $ij$-\textit{solitons}.

		\item $\Gamma_{ij}(z,z)$ is the set of charges of $ij$-solitons, i.e.
					\begin{equation*}
						\Gamma_{ij}(z,z) := \left \{[a] \in  H_{1}(\Sigma, \{z_{i} \} \cup \{ z_{j} \}; \mathbb{Z}): \text{$a$ is a $1$-chain with $\partial a = z_{j} - z_{i}$} \right \}.
				\end{equation*}

		\item The total set of soliton charges is
			\begin{equation*}
				\Gamma(z,z) :=  \bigsqcup_{i,j = 1}^{K} \Gamma_{ij}(z,z).
			\end{equation*}
	\end{enumerate}
\end{definition}

\begin{remark}[Remarks] \
	\begin{enumerate}

		\item A soliton $s$ can be extended by ``parallel transporting" its endpoints.  Indeed, let $s$ be a soliton of type $ij$ with $s(0),\, s(1) \in \pi^{-1}(z)$.  Now, given a path $q:[0,1] \rightarrow C'$ from $z$ to $z'$, let $q\{n\}$ denote the lift of $q$ to the $n$th sheet of $\Sigma$ defined by lifting the initial point $q(0)$ to sheet $n$; then one can define the transported path,
		\begin{align}
			\mathbb{P}_{q} s &= q\{j\} \star s \star q^{-}\{i\}
			\label{eq:soliton_trans}
		\end{align}
		where $\star$ denotes concatenation of paths, and $q^{-}\{i\}$ is $q\{i\}$ with reversed orientation.  The resulting path on $\Sigma$ has endpoints in $\pi^{-1}(z')$.  If $s$ is an $ij$ soliton, then the path $\mathbb{P}_{q}s$ is a soliton iff $q$ satisfies (\ref{eq:BPS_soliton_cond}) for the same pair of sheets $ij$.

\item $\bigcup_{z \in C'} \Gamma(z,z) \rightarrow C'$ is a local system over $C'$: for any path $q:[0,1] \rightarrow C$ there is a
parallel transport map $P_{q}: \Gamma(q(0),q(0)) \rightarrow \Gamma(q(1),q(1))$, induced by
the map $\mathbb{P}_{q}$ defined above, and only depending on the homotopy class of $q$ relative to the endpoints. (Henceforth we
abbreviate this as ``rel endpoints.'')

		\item  If there is an extension of an $ij$-soliton through a branch cut emanating from an $ij$ branch point, it becomes a $ji$-soliton.  More generally, if a soliton passes through any branch cut, its type is permuted according to the permutation of sheets across the branch cut.

	\end{enumerate}
\end{remark}

Just as with 4D vanilla BPS states, for each $a_{z} \in \Gamma(z,z)$, there is an index $\mu(a_{z}) \in \mathbb{Z}$ that counts BPS solitons of charge $a_{z}$.  Again, this can be defined as a supertrace over an appropriate BPS subspace, however, we will compute it via
geometric methods.  Using the parallel transport map described in the remarks above, this BPS index is also stable along extensions of solitons at generic $z \in C'$; \footnote{In the sense that if $s$ is an $ij$ soliton, and $q$ is a
sufficiently small path on $\Sigma$ satisfying (\ref{eq:BPS_soliton_cond}), then $\mu([s]) = \mu(P_{q}[s])$} it jumps only at points $z \in C'$ where solitons of different types exist and interact.  This motivates the following
(notation-simplifying) definitions.

\begin{definition}[Definitions]\
	\begin{enumerate}
		\item A soliton $s: [0,1] \rightarrow \Sigma$ is said to be of \textit{phase} $\theta$ if it satisfies the BPS condition\footnote{Thought of as a system of equations on each ``partial soliton" as described above.} (\ref{eq:BPS_soliton_cond}) for $\vartheta = \theta$.

		\item A point $z \in C$ is said to \textit{support} a soliton of phase $\vartheta$ if there exists a soliton $s$ with  $[s] \in \Gamma(z,z)$ and $\mu([s]) \neq 0$.  A path $p$ on $C$ supports a family of solitons of phase $\vartheta$ if each point on $p$ supports a soliton fitting into a 1-parameter family of solitons of phase $\vartheta$.  When the phase $\vartheta$ is clear from context, occasionally we will just say that $p$ supports a family of solitons.

		\item Let $p \subset C$ be a path on $C$ supporting a family of solitons of phase $\vartheta$
extending a soliton $s_{0}$ with charge $a_{z} = [s_{0}] \in \Gamma(z,z)$.  With an abuse of notation, occasionally
$a$ will denote any one of the parallel transports of $a_{z}$ along the path $p$.

		\item Let $z \in p$ and $a_{z} \in \Gamma(z,z)$.  If the index $\mu(a_{z})$ is constant for any soliton in the family generated by parallel transports of $a_{z} \in \Gamma(z,z)$ along  $p \subset C$, then we will denote the index by $\mu(a,p) \in \mathbb{Z}$.
		\end{enumerate}
	\end{definition}

\subsubsection*{Framed 2D-4D States}
We consider one final BPS construction: the framed 2D-4D states.  Given $\vartheta \in \mathbb{R}/(2 \pi \mathbb{Z}),\, z_1,\,z_{2} \in C$, and $\wp$ a path on $C$ from $z_1$ to $z_2$, one can associate two surface defects
 $\mathbb{S}_{z_{1}}$ and $\mathbb{S}_{z_{2}}$, along with a supersymmetric interface $L_{\wp, \vartheta}$ between these two surface defects.\footnote{From the four-dimensional perspective, $L_{\wp, \vartheta}$ is a line defect extended along $\mathbb{R}^{0,1}$ and living on the interface between $\mathbb{S}_{z_{1}}$ and $\mathbb{S}_{z_{2}}$.} The interface is supersymmetric in the sense that it preserves two out of the four supercharges preserved by the surface defects; the parameter $\vartheta$ controls which two are preserved.  Framed 2D-4D states are the vacua of the theory after insertion of this defect.

Geometrically, such a state is represented by a path $f: [0,1] \rightarrow \Sigma$ such that there exists a finite subdivision of times
\begin{align*}
	[0,1] &= [0,t_{1}] \cup [t_{1}, t_{2}] \cup \cdots \cup [t_{N-1}, 1]
\end{align*}
and, with respect to this subdivision:
\begin{itemize}
	\item $f|_{[0,t_{1}]}$ and $f|_{[t_{N}, 1]}$ have images in $\pi^{-1}(\wp)$ (in particular, the path begins on a lift of $z_{1}$ and ends on a lift of $z_{2}$).

	\item If $1 < i < N -2$, then $f_{[t_{i}, t_{i + 1}]} $ is either a soliton of phase $\vartheta$, or has image in $\pi^{-1}(\wp)$.
\end{itemize}

 When $f$ is projected to $C$ the resulting path looks like $\wp$ with finitely many diversions to solitons (and back) along the way. In \cite{GMN3}, such a path $f$ is referred to as a \textit{millipede with body $\wp$ and phase $\vartheta$}.

   Similar to solitons, we can classify framed 2D-4D states by their values in a set of charges given by relative
   homology classes $[f]$, for $f$ a millipede; as the geometric description above suggests, now the relative
   cycles can have boundaries on pre-images of two different points on $C$.

\begin{definition}
Let $\wp:[0,1] \rightarrow C$ with $\wp(0) = z$ and $\wp(1) = w$; with a choice of labeling of sheets above $\pi^{-1}(z)$ and $\pi^{-1}(w)$, let $z_{i}$ (resp. $w_{i}$) be a point on the $i$th sheet in $\pi^{-1}(z)$ (resp. $\pi^{-1}(w)$).  Then, the set of charges of framed 2D-4D states corresponding to $\wp$ is
	\begin{align*}
		\Gamma(z,w) &:=  \bigsqcup_{i,j = 1}^{K} \left \{[a] \in H_{1}(\Sigma, \{z_{i}\} \cup \{z_{j}\}; \mathbb{Z}): \text{$a$ is a 1-chain with $\partial a = w_j - z_i$ } \right\}.
	\end{align*}
\end{definition}

Furthermore, for each $a \in \Gamma(z,w)$ we define the counting index $\FOmega(L_{\wp,\vartheta}, a)$ that, once again, can be defined via a supertrace over an appropriate Hilbert space, but we will only utilize its interpretation from a geometric perspective.

\begin{remark}
	It is believed that the theory obtained after insertion of the defect $L_{\wp, \vartheta}$ only depends on the homotopy class (rel boundary) of $\wp$.  This homotopy invariance is the key ingredient that ties the story of spectral networks together.
\end{remark}

\subsubsection{Adding a Little Twist} \label{sec:twist}
Before proceeding to the definition of the $\mathcal{W}_{\vartheta}$ networks, we make an important technical detour.  As discussed in \cite{GMN5}, the indices $\mu(a)$ and $\FOmega(L_{\wp,\vartheta},a)$ are only well-defined up to a sign, due to potential integer shift ambiguities in the fermion number operators that enter their definitions.  To correct these ambiguities globally over all regions of parameter space, it suffices to construct (geometrically motivated) $\mathbb{Z}/2\mathbb{Z}$ extensions of $\Gamma$ and $\Gamma(z,w)$.  First, a bit of notation that will be used throughout this section and part of Appendix \ref{app:herd-appendix}.

\begin{definition}
	Let $S$ be a real surface, then $\xi^{S}: \twid{S} \rightarrow S$ is the unit tangent bundle projection to $S$.
\end{definition}
  The map $\xi^{S}_{*}: H_{1}(\twid{S}; \mathbb{Z}) \rightarrow H_{1}(S; \mathbb{Z})$ has a kernel generated by the homology class that has a representative winding once around some fiber.

\begin{definition}
Let $H \in H_{1}(\twid{\Sigma};\mathbb{Z})$ denote the homology class represented by a 1-chain that winds once around a fiber of $\twid{\Sigma} \rightarrow \Sigma$, then
\begin{align*}
	\twid{\Gamma} := H_{1}(\twid{\Sigma}; \mathbb{Z}) / \left(2H \right).
\end{align*}
	We abuse notation and denote the image of $H$ in $\twid{\Gamma}$ by $H$ again.
\end{definition}
It follows that $\twid{\Gamma}$ is a $\mathbb{Z}/2\mathbb{Z}$ extension of $\Gamma$, i.e there is an exact sequence of abelian groups,
\begin{equation*}
	0 \rightarrow \mathbb{Z}/2\mathbb{Z} \rightarrow \twid{\Gamma} \rightarrow \Gamma \rightarrow 0.
\end{equation*}
Similarly, for framed states and solitons we define extended charge sets.  First we pass through an intermediate construction.
\begin{definition}\
	 Let $\twid{\pi}: \twid{\Sigma} \rightarrow \twid{C}$ be the restriction of $d\pi: T\Sigma \rightarrow TC$ to the unit tangent bundle.   For fixed $\twid{z},\, \twid{w} \in \twid{C}$, choose a labeling of sheets above $\pi^{-1}(z)$ and $\pi^{-1}(w)$; let $z_{i}$ (resp. $w_{i}$) be a point on the $i$th sheet in $\pi^{-1}(z)$ (resp. $\pi^{-1}(w)$), then define
			\begin{equation}
				\begin{aligned}
				G_{ij}(\twid{z},\twid{w}) & := \left\{[a] \in H_{1}(\twid{\Sigma}, \{\twid{z}_{i}\} \cup \{\twid{w}_{j}\}; \mathbb{Z}) : \text{$a$ is a 1-chain with $\partial a = \twid{w}_{j}- \twid{z}_{i}$} \right \}, \\
				G(\twid{z}, \twid{w}) & :=  \bigsqcup_{i,j = 1}^{K} G_{ij}(\twid{z},\twid{w}).
				\end{aligned}
				\label{eq:G_def}
			\end{equation}
\end{definition}

\begin{remark}
	 $G(\twid{z}, \twid{w})$ is equipped with an $H_{1}(\twid{\Sigma};\mathbb{Z})$ action given by the addition of a closed cycle (at the level of chains).
\end{remark}

This allows us to make the following definition,
\begin{definition}
		\begin{align}
			\twid{\Gamma}(\twid{z},\twid{w}) &:= G(\twid{z},\twid{w})/\langle 2 H \rangle.
			\label{eq:twid_gamma_def}
		\end{align}
\end{definition}

Sometimes it is useful to view $\twid{\Gamma}(\twid{z},\twid{w})$ as a disjoint union of quotients of $G_{ij}$:

\begin{definition}
		\begin{align*}
			\twid{\Gamma}_{ij}(\twid{z},\twid{w}) &:= G_{ij}(\twid{z},\twid{w})/\langle 2 H \rangle.
		\end{align*}
\end{definition}
So we may write,
\begin{align*}
	\twid{\Gamma}(\twid{z},\twid{w}) &:= \bigsqcup_{i,j =1}^{K} \twid{\Gamma}_{ij}(\twid{z},\twid{w}).
\end{align*}

\begin{remark}
	\item $\twid{\Gamma}(\twid{z},\twid{w})$ is equipped with a $\twid{\Gamma}$ action, descending from addition of a closed cycle
with a relative cycle.  For $\gamma \in \twid{\Gamma}$ and $a \in \twid{\Gamma}(\twid{z},\twid{w})$ we will denote
this action by $\gamma: a \mapsto a + \gamma = \gamma + a$.  In fact, for any ordered pair $ij$, $\twid{\Gamma}_{ij}(\twid{z}, \twid{w})$  is a torsor for $\twid{\Gamma}$.
\end{remark}

$\twid{\Gamma}(\twid{z},\twid{w})$ carries an extra $\mathbb{Z}/2\mathbb{Z}$'s worth of ``winding" information in the sense that $\twid{\Gamma}(\twid{z}, \twid{w}) \overset{\mathrm{proj}}{\rightarrow} \Gamma(z,w)$ is a principal $\mathbb{Z}/2\mathbb{Z}$ bundle, with $\mathrm{proj}$ given by forgetting lifts\footnote{More precisely, $\operatorname{proj}$ is the map descending from the induced map on relative homology $(\xi^{\Sigma})_{*}: H_{1} \left(\twid{\Sigma}, \pi^{-1}(\twid{z}) \cup \pi^{-1}(\twid{w}); \mathbb{Z} \right) \rightarrow H_{1} \left(\Sigma, \pi^{-1}(z) \cup \pi^{-1}(w); \mathbb{Z} \right)$ where $z = \left(\pi \circ \xi^{\Sigma} \right)(\twid{z})$ and $w = \left(\pi \circ \xi^{\Sigma} \right)(\twid{w})$.}, and the $\mathbb{Z}/2 \mathbb{Z}$ action given by adding $H$.

Now, a soliton is a smooth curve on $\Sigma$; furthermore, the tangent vectors at the endpoints (which lie on disjoint sheets) of a soliton are oppositely oriented in the sense that their pushforwards to $C$ are oppositely oriented.

\begin{definition}
 	Let $\twid{z} \in \twid{C}$, then $-\twid{z} \in \twid{C}$ is the unit tangent vector pointing in the opposite direction to $\twid{z}$.
\end{definition}

\begin{remark}
	To every soliton (represented by a smooth path) there is a natural lifted charge in $\twid{\Gamma}(\twid{z}, - \twid{z})$ that descends from the relative homology class of the soliton's tangent framing lift.
\end{remark}

We introduce one final piece of technology.  First, note that for each $\twid{z} \in \twid{C'}$ there is a disjoint union of $K$ lattices inside of the set $\twid{\Gamma}(\twid{z},\twid{z})$:
\begin{equation*}
	\bigsqcup_{i = 1}^{K} \twid{\Gamma}_{ii}(\twid{z}, \twid{z}) \subset \twid{\Gamma}(\twid{z},\twid{z}).
\end{equation*}
Any representative of an element in $\twid{\Gamma}_{ii}(\twid{z}, \twid{z})$ has zero boundary, hence, is actually a cycle.  Indeed,there is a canonical ``basepoint forgetting" isomorphism of lattices $\twid{\Gamma}_{ii}(\twid{z}, \twid{z}) \cong \twid{\Gamma}$ for each $i = 1, \cdots, K$, descending from the identity map at the level of chain representatives.  This allows us to define the \textit{closure} map.
\begin{definition}
	\begin{align*}
		\cl: \bigcup_{i = 1}^{K} \twid{\Gamma}_{ii}(\twid{z}, \twid{z}) \rightarrow \twid{\Gamma}
	\end{align*}
	is the map which acts on each component by the ``basepoint-forgetting" map described above.
\end{definition}

Now, due to the sign ambiguity in $\mu$ and $\FOmega$ then, na\"{i}vely, only their absolute values are well-defined: i.e. we have functions,
\begin{align*}
	\mu_{ \geq 0} &: \bigcup_{z \in C'} \Gamma(z,z) \rightarrow \mathbb{Z}_{\geq 0}\\
	\FOmega_{ \geq 0} ( \wp, \cdot) &: \bigcup_{(z,w) \in C' \times C'} \Gamma(z,w) \rightarrow \mathbb{Z}_{\geq 0}.
\end{align*}
However, with our ``lifted charge" definitions, we can lift $\mu_{\geq 0}$ to a function $\mu: \bigcup_{\twid{z} \in \twid{C'}} \twid{\Gamma}(\twid{z}, -\twid{z}) \rightarrow \mathbb{Z}$ such that $\forall a \in \bigcup_{\twid{z} \in \twid{C'}} \twid{\Gamma}(\twid{z}, -\twid{z})$,
\begin{equation}
	\begin{aligned}
		\left| \mu(a) \right| &= \mu_{\geq 0}( \xi_{*}^{\Sigma} a)\\
		\mu(a + H) &= - \mu(a).
		\label{eq:mu_resolved}
	\end{aligned}
\end{equation}
Similarly, fixing a path $\wp$ on $C$, the framed BPS degeneracies lift to well-defined functions $\FOmega(L_{\wp,\vartheta}, \cdot): \bigcup_{(\twid{z},\twid{w}) \in \twid{C'} \times \twid{C'}} \twid{\Gamma}(\twid{z}, \twid{w}) \rightarrow \mathbb{Z}$ such that $\forall a \in \bigcup_{(\twid{z},\twid{w}) \in \twid{C'} \times \twid{C'}} \twid{\Gamma}(\twid{z}, \twid{w})$,
\begin{equation}
	\begin{aligned}
		\left| \FOmega \left(L_{\wp,\vartheta},a \right) \right| &=  \FOmega_{\geq 0} \left (L_{\wp,\vartheta}, \xi_{*}^{\Sigma} a \right)\\
		\FOmega \left(L_{\wp,\vartheta}, a + H \right) &= -\FOmega \left(L_{\wp,\vartheta}, a \right).
	\end{aligned}
	\label{eq:fomega_resolved}
\end{equation}

\subsection{The $\mathcal{W}_{\vartheta}$ Networks} \label{sec:wtheta}
Using (\ref{eq:BPS_soliton_cond}), we can produce a concrete picture of (the projections to $C$ of) $ij$-solitons on the curve $C$.  This motivates the following definitions.

\begin{definition}
	Fix $\vartheta \in \mathbb{R}/2\pi \mathbb{Z}$, for each (ordered) pair of sheets $ij$ we define a (real) oriented line field $l_{ij}(\vartheta)$ on $C^{c}$ given at every $z \in C^{c}$ by
	\begin{equation*}
		l_{ij,z}(\vartheta) :=  \left \{v \in T_{z} C :
	\operatorname{Im} \left[ e^{-i \vartheta} \left \langle \lambda_{ij}, v \right \rangle \right]=0 \right \},
	\end{equation*}
	with $v$ positively oriented if $\operatorname{Re} \left[ e^{-i \vartheta}  \left \langle \lambda_{ij}, v \right \rangle \right]>0$.
\end{definition}

Given an integral curve $p$ of $l_{ij}(\vartheta)$, the orientation of $l_{ij}(\vartheta)$ tells us how to lift the curve back to a curve $p_{\Sigma}$ on $\Sigma$.

\begin{definition}
	Any integral curve $p$ (on $C'$) of $l_{ij}(\vartheta)$ has a lift to a curve $p_{\Sigma}$ on $\Sigma$ defined as the union of $p\{i\}$ (the lift of $p$ to the $i$th sheet), and $p^{-}\{j\}$ (the lift of $p$ to the $j$th sheet, reversing orientation).
 \end{definition}

\begin{remark}[Remarks] \
	\begin{itemize}
		\item Fix $z_{*} \in C^{c}$ and take a neighborhood $U$ of $z_{*}$ that does not contain any branch cuts of type $ij$.  Then for each ordered pair $ij$ we can define local coordinates $w_{ij}: U \rightarrow \mathbb{C}$ by
		\begin{equation}
			w_{ij}(z) = \int_{z_{*}}^{z} \left(\lambda_{i} - \lambda_{j} \right).
			\label{eq:wij_coord}
		\end{equation}
		 In these coordinates, the integral curves of $l_{ij}(\vartheta)$ are precisely the straight lines of inclination $\vartheta$.

		\item Note that the line field $l_{ji}(\vartheta)$ is just $l_{ij}(\vartheta)$ with reversed orientation.

	\item On a cycle surrounding a branch point of type $ij$, the monodromy action induces $\lambda_{ij}  \mapsto \lambda_{ji} = -\lambda_{ij}$; hence, $l_{ij}(\vartheta) \mapsto l_{ji}(\vartheta)$ (i.e., the line field orientation reverses when passing through a branch cut extending from a branch point.)

	\end{itemize}
\end{remark}

We can finally define the (real) codimension-1 networks of interest.
\begin{definition}
	\begin{equation*}
		\CW_{\vartheta} = \bigcup_{\text{ordered pairs $ij$}} \left\{p: \text{$p$ is an integral curve of $l_{ij}(\vartheta)$ and $p$ supports a soliton of phase $\vartheta$}  \right\} \subset C'.
	\end{equation*}
\end{definition}
The network $\CW_{\vartheta}$ is composed of individual integral curve segments, which may interact and join each other at vertices on $C'$.
\\
\begin{definition}[Definitions]\
	\begin{enumerate}
		\item An individual integral curve segment on $\CW_{\vartheta}$ is called a \textit{street}.\footnote{In \cite{GMN5} these were also referred to as $S$-walls.}  A \textit{street of type $ij$} is a street that is an integral curve of $l_{ij}(\vartheta)$.

		\item A joint is a point on $C'$ where two or more streets of different types meet.
	\end{enumerate}
\end{definition}

The upshot of all these constructions is that now we have a solidified picture of solitons via a network on $C'$.  Indeed, we can lift $\CW_{\vartheta}$ to a graph on $\operatorname{Lift}(\CW_{\vartheta}) \subset \Sigma$ by taking the union of the lifts (as defined above) $p_{\Sigma}$ of each street $p$.
  Then an $ij$ soliton of phase $\vartheta$ traces out a path supported on $\operatorname{Lift}(\CW_{\vartheta})$, and begins and ends on points $z_{i},\,z_{j}$ that are lifts to the $i$th and $j$th streets (respectively) of a point $z$ on a street of type $ij$.  In particular, an $ij$ street of $\CW_{\vartheta}$ represents the endpoints of a set of solitons of type $ij$.

 From a constructive viewpoint, however, the reader may feel unsatisfied as we have not yet defined how to determine the condition that $p \subset \CW_{\vartheta}$ actually \textit{supports} a soliton of phase $\vartheta$, i.e. $\mu(a,p) \neq 0$, for some $a$ the charge of a soliton of phase $\vartheta$. To fill this void we remark that there are exactly three integral curves of $l_{ij}(\vartheta) \cup l_{ji}(\vartheta)$ emerging from each $ij$-branch point.  On each such integral curve $p$ there is a family of solitons represented by ``small" paths:
  for $z \subset p$ and $z_{i}, z_{j} \in \pi^{-1}(z)$ lifts of $z$ to sheet $i$ and sheet $j$ (respectively), there is a soliton supported on $p_{\Sigma}$ traveling from $z_{i} \in \Sigma$, through the ramification point on $\Sigma$, to $z_{j} \in \Sigma$.  Such solitons become arbitrarily light as $z$ approaches the branch point.  Furthermore, as argued in \cite{GMN5}, letting $a$ be the (lifted) charge of any soliton in this family, we assign
 \begin{align}
	\mu(a ,p) &= +1.
	\label{eq:simpleton_input}
\end{align}
\begin{definition}[Terminology]
	The ``light" solitons described in the previous paragraph will be called \textit{simpletons}.
\end{definition}

We defer the problem of determining the soliton indices $\mu$ on all other streets until the appropriate definitions are developed in the next section; for now it will suffice to say that, with this condition, the soliton indices on all other streets can be determined via a set of algebraic equations.

\subsubsection{$\CK$-walls and Degenerate Networks} \label{sec:degen_net}
Of particular interest in this paper will be $\CW_{\vartheta}$ networks of a very special type.

\begin{definition}
	 A street $p \subset \CW_{\vartheta}$ is \textit{two-way} if it consists of a coincident $ij$-street and a $ji$-street.  Equivalently, $p$ supports $ij$-solitons and $ji$-solitons.  A street that is not two-way is called \textit{one-way}.  A network that contains a two-way street is said to be \textit{degenerate}.
\end{definition}

We adopt the following convention in order to keep track of the individual directions of the constituent one-way streets of a two-way street.

\begin{definition}[Convention]
	Let $p$ be a two-way street consisting of coincident $ij$ and $ji$-streets, then we will say $p$ is of type $ij$ and assign it the orientation of its constituent $ij$-street. (Or, equivalently, we will say $p$ is of type $ji$ and assign it the orientation of its constituent $ji$-street.)
\end{definition}

As described in \cite{GMN5}, sec. 6.2, for generic values of $\vartheta$, the network $\CW_{\vartheta}$ will only contain one-way streets due to a bifurcation behavior of integral curves near branch points.  However, at critical values $\vartheta_{c} \in \mathbb{R}/\mathbb{Z}$, an $ij$ street will collide with a $ji$ street and the network $\CW_{\vartheta_{c}}$ will contain two-way streets.  Now we make an important claim:

\begin{center}
	$\CW_{\vartheta}$ contains a two-way street $\Rightarrow \exists$ a homologically non-trivial closed loop on $\Sigma$ satisfying (\ref{eq:BPS_cond_vanilla}) for some phase $\vartheta \in \mathbb{R}/2\mathbb{Z}$.
	\label{claim_2way}
\end{center}

To see this, fix a point $z \in p \subset C$ on any two-way street $p$; without loss of generality we will say $p$ is of type $ij$.  Then $z$ supports a soliton of type $ij$ and a soliton of type $ji$, both of the same phase $\vartheta$; the concatenation of these two paths yields a closed loop $l$ satisfying (\ref{eq:BPS_cond_vanilla}) for the phase $\vartheta$.  Moreover, this loop is homologically non-trivial.  Indeed, the period of $l$ is just the sum of the periods of the two solitons forming it.  However, both have periods (central charges) of the same phase; so the sum must be nonzero.

Thus, via the claim, a degenerate network automatically leads to a possible 4D BPS state of charge $[l] \in \Gamma$; in fact, there are possible BPS states of charges $n [l],\, n \in \mathbb{Z}_{>0}$.  All that remains is to determine the BPS indices $\Omega(n [l])$ which, as expressed more explicitly below, are computable from the soliton data supported on $\CW_{\vartheta}$.

In practice, degenerate networks can be found by looking for discontinuous changes in the topology of $\CW_{\vartheta}$ as $\vartheta$ is varied.  Indeed, if a region $R \subset \mathbb{R}/\mathbb{Z}$ does not contain any degenerate networks then, as $\vartheta$ is varied continuously in $R$, the network $\CW_{\vartheta}$ also varies continuously (in the sense described in \cite{GMN5}).
However, if the region $R$ contains a single critical angle $\vartheta_{c}$, the bifurcation of integral curves near a branch point induces a discontinuous change in the topology of $\CW_{\vartheta}$ as $\vartheta$ is varied\footnote{However, there may be an accumulation point  of critical angles as in the picture of the vector multiplet when $K=2$ (see \cite{GMN5}).  Around such an accumulation point the topology of $\CW_{\vartheta}$ rapidly changes, and there is no open region containing the accumulation point where the topology smoothly varies.
Even ``worse," as we will see, the critical angles can densely fill an open interval.} past $\vartheta_{c}$. (If we consider the parameter space of $\vartheta$ and the
Coulomb branch then the locus where degenerate networks appear defines \textit{$\CK$-walls}.)

\subsubsection{Formal Variables} \label{sec:form_var}
In order to construct the generating functions that keep track of various BPS degeneracy indices, it is helpful to construct spaces of formal variables with some algebraic structure.
\begin{definition}
		$\formgamma$ is the commutative ring of formal series generated by formal variables $X_{\gamma},\, \gamma \in \twid{\Gamma}$ such that
		\begin{align*}\\
			X_{0} &= 1,\\
			X_{H} &= -1,\\
			X_{\gamma} X_{\gamma'} &= X_{\gamma + \gamma'}.
		\end{align*}
\end{definition}

To define an algebraic structure for formal variables in 2D-4D/soliton charges, we note that there is a partially defined ``addition" operation.

\begin{definition}[Definitions]\
	\begin{enumerate}
			\item Let $a \in \twid{\Gamma}(\twid{z}_{1},\twid{z}_{2})$, then
			\begin{align*}
				\operatorname{end}(a) &= \twid{z}_{2}\\
				\operatorname{start}(a) &= \twid{z}_{1}.
			\end{align*}

			\item Let $a, b \in \bigcup_{\twid{z}, \twid{w} \in \twid{C}} \twid{\Gamma}(\twid{z},\twid{w})$, then if $\text{end}(a) = \text{start}(b)$ there is a well-defined operation (concatenation of paths) $a + b \in \bigcup_{\twid{z}, \twid{w} \in \twid{C}} \twid{\Gamma}(\twid{z},\twid{w})$ descending from the usual addition of relative homology cycles.
		\end{enumerate}
\end{definition}
With this we can define the space of interest.

\begin{definition} The \emph{homology path algebra}
	$\CA$ is the non-commutative $\formgamma$-algebra of formal series generated by formal variables $X_{a}$, for every  $a \in \bigcup_{\twid{z}, \twid{w} \in \twid{C}} \twid{\Gamma}(\twid{z},\twid{w})$; such that
	\begin{enumerate}
		\item For $\gamma \in \twid{\Gamma}$,
		\begin{align*}
			X_{\gamma} X_{a} = X_{a + \gamma} = X_{\gamma + a},
		\end{align*}

		\item for any $a,\, b \in \bigcup_{\twid{z}, \twid{w} \in \twid{\Sigma}} \twid{\Gamma}(\twid{z},\twid{w})$
		\begin{align*}
			X_{a} X_{b} &=
			\left\{
			\begin{array}{ll}
				X_{a + b}, & \text{if $\operatorname{end}(a) = \operatorname{start}(b)$}\\
				0, & \text{otherwise}
			\end{array}
			\right. .
		\end{align*}
	\end{enumerate}
\end{definition}

There are two important $\formgamma$-subalgebras of $\CA$.

\begin{definition}[Definition]\
	\begin{enumerate}
			\item $\CA_{S}$ is the $\formgamma$-subalgebra generated by formal variables in soliton charges\\
			 $a \in \bigcup_{\twid{z} \in \twid{C}} \twid{\Gamma}(\twid{z},-\twid{z})$.

		\item $\CA_{C}$ is the (commutative) $\formgamma$-subalgebra generated by formal variables in\\ $a \in \bigcup_{\twid{z} \in \twid{C}} \bigsqcup_{i = 1}^{K} \twid{\Gamma_{ii}}(\twid{z},\twid{z})$.
	\end{enumerate}
\end{definition}

The closure map can be easily extended to $\CA_{C}$.
\begin{definition}
	\begin{align*}
		\cl:& \CA_{C} \rightarrow \formgamma
	\end{align*}
	is the linear extension of the map
	\begin{align*}
		\cl(X_{a}) &= X_{\cl(a)}.
	\end{align*}
\end{definition}

We now define generating functions for BPS indices.
\begin{definition}\
	 For each path $\wp$ from $z \in C$ to $w \in C$ that represents an interface $L_{\wp,\vartheta}$,
we associate the framed generating function
		\begin{align*}
			F(\wp,\vartheta) &:= \sum_{a_{*} \in \Gamma(z,w)} \FOmega(L_{\wp,\vartheta},a) X_{a} \in \CA,
		\end{align*}
	where $a \in \twid{\Gamma}(\twid{z}, \twid{w})$ is a lift of the charge $a_{*} \in \Gamma(z,w)$ such that $\twid{z}$ and $\twid{w}$
	are the unit tangent vectors at the ends of $\wp$.
	\end{definition}

	For each street $p$ of type $ij$, we associate two \textit{soliton generating functions}: $\Upsilon(p)$, that encodes the indices of solitons of type $ij$, and $\Delta(p)$, that encodes the indices of solitons of type $ji$.

\begin{definition}
	 Let $z \in p \subset C$, then we define
				\begin{align}
					\Upsilon_{z}(p) &:= \sum_{a_{*} \in \Gamma_{ij}(z,z)} \mu(a) X_{a} \in \CA_{S}
					\label{eq:upsilon_def} \\
					\Delta_{z}(p) &:= \sum_{b_{*} \in \Gamma_{ji}(z,z)} \mu(b) X_{b} \in \CA_{S},
			\label{eq:delta_def}
				\end{align}
				where $a \in \twid{\Gamma}_{ij}(\twid{z}, -\twid{z})$, $b \in \twid{\Gamma}_{ji}(-\twid{z}, \twid{z})$ denote respective lifts  of $a_{*} \in \Gamma_{ij}(z,z)$ and $b_{*} \in \Gamma_{ji}(z,z)$, for $\twid{z} \in \twid{C}$ the unit tangent vector agreeing with the orientation of $p$ at the point $z \in C$.\footnote{As $\twid{\Gamma}(\twid{z}, -\twid{z})$ is a principal $\mathbb{Z}/2\mathbb{Z}$ bundle over $\Gamma(z,z)$, there are two possible lifts of $a_{*}$ related by addition of $H$.  Via $X_{H} = -1$ along with (\ref{eq:mu_resolved}) and (\ref{eq:fomega_resolved}), the definition of $\Upsilon_{z}(p)$ is independent of the choice of lift.  This argument also applies to $\Delta_{z}(p)$.}
\end{definition}

\begin{definition}
	From the soliton generating functions on a street $p$, we can define the
	\textit{street factor},
				\begin{equation*}
					\begin{aligned}
						Q(p) &:= \cl \left[ 1 + \Upsilon_{z}(p) \Delta_{z}(p) \right]\\
						&= 1 + \sum_{a_{*} \in \Gamma_{ij}(z,z), b_{*} \in \Gamma_{ji}(z,z)} \mu(a) \mu(b) X_{\cl(a+b)} \in \formgamma.
					\end{aligned}
					\label{eq:def-Q}
				\end{equation*}
	where $z \in C$ is any point on $p$.
\end{definition}

\begin{remark}
	As the notation suggests, $Q(p)$ is independent of the choice of point $z$.  This follows as the index $\mu(a)$ is constant as any charge $a$ is parallel transported along any path supported on $p \subset C$.  By the same reasoning, for any $z,z' \in p$, $\Upsilon_{z}(p)$ and $\Upsilon_{z'}(p)$ are related by applying an appropriate parallel transport map \footnote{For this reason, the point $z$ in soliton generating functions is often dropped as in the calculations of Appendix \ref{app:herd-appendix}.} (similarly for $\Delta_{z}(p)$ and $\Delta_{z'}(p)$).
\end{remark}

And now for the punchline.

\subsubsection{Computing $\Omega(n \gamma_{c})$} \label{sec:comp_Omega}
The power of the spectral network machine can be summarized with the following squiggly arrows:
\begin{center}
	\begin{tabularx}{\linewidth}{XcccX}
		Jumping of Framed 2D-4D Spectrum + Homotopy Invariance of $L_{\wp,\vartheta}$ & $\overset{(A)}{\xrsquigarrow{\hspace{4mm}}}$ & Soliton Spectrum & $\overset{(B)}{\xrsquigarrow{\hspace{4mm}}}$ & (Vanilla) 4D spectrum.
	\end{tabularx}
\end{center}

To understand $(A)$: the framed generating function $F(\wp,\vartheta)$ is piecewise constant in the sense that as the endpoints of $\wp$ are varied on $C - \CW_{\vartheta}$, then $F(\wp,\vartheta)$ does not vary in $\CA$; however, if an endpoint of $\wp$ is varied across a street of $\CW_{\vartheta}$, then $F(\wp, \vartheta)$ will jump in a manner depending on the spectrum of solitons located on that street.  Indeed, $F(\wp, \vartheta)$ is the sum of the charges of ``millipedes," and as the ``body" $\wp$ of each such millipede crosses the street $p$, then the millipede can gain an extra leg by detouring along a soliton supported along $p$; hence, the spectrum of 2D-4D states (represented by millipedes) will jump.  To reproduce the soliton spectrum we utilize the homotopy invariance of the operator $L_{\wp,\vartheta}$ to equate the different jumps of $F(\wp, \vartheta)$ across different, but homotopic (rel endpoints), paths $\wp$.  The resulting equations are equivalent to conditions on the soliton generating
functions.  These conditions, combined with the simpleton input data (\ref{eq:simpleton_input}),
  allow us to completely determine the soliton generating functions, which encapsulate the soliton spectrum.

To describe $(B)$, let $\Gamma_c \subset \Gamma$ be the lattice of charges $\gamma$ with $e^{-i \vartheta_{c}} Z_\gamma
\in \IR_-$; then the degenerate network $\CW_{\vartheta_{c}}$ captures all of the 4D BPS states carrying charges
$\gamma \in \Gamma_c$.
Their spectrum can be extracted from the generating functions $Q(p)$. But, first we have to deal with a technical point.

\begin{definition}[Definitions]\
	\begin{enumerate}
		\item	For every curve $q$ on a surface $S$, there is a canonical ``lift" $\widehat{q}$ to a curve on $\twid{S}$, given by the tangent framing.

		\item	For each $\gamma \in \Gamma$, we define another lift $\twid{\gamma} \in \tilde\Gamma$
			by the following rule.
			First, represent $\gamma$ as a union of smooth closed curves $\beta_m$ on $\Sigma$.
			Then $\twid{\gamma}$ is the sum of $\widehat{\beta}_m$, shifted by $\left( \sum_{m \le n} \delta_{mn} + \# (\beta_m \cap \beta_n) \right) H$
			(of course, because we work modulo $2H$, all that matters here is whether this sum is odd or even.)
	\end{enumerate}
\end{definition}
One can check directly (see Appendix \ref{app:sign-rule})
that $\tilde\gamma$ so defined is independent of the choice of how we represent
$\gamma$ as a union of $\beta_m$; this requirement is what forced us to add the tricky-looking shift.

Then, for each street $p$, we factorize $Q(p)$ as a product:
\begin{definition}
	\begin{equation} \label{eq:Q-exp}
		Q(p) = \prod_{\gamma \in \Gamma_c} (1 - X_{\twid{\gamma}})^{\alpha_\gamma(p)}.
	\end{equation}
	This representation determines the coefficients $\alpha_\gamma(p)$.
\end{definition}

\begin{definition}
Let $\lift{p}_{\Sigma} \in C_{1}(\Sigma; \mathbb{Z})$ be the one-chain corresponding to the lift $p_{\Sigma}$, then we define\footnote{Note that the sum over streets in (\ref{eq:L_def}) reduces to a sum over two-way streets; indeed, $Q(p) \neq 1$ iff $p$ is two-way.}
	\begin{align}
		L(\gamma) &:= \sum_{\text{streets p}} \alpha_\gamma(p) \lift{p}_{\Sigma} \in C_{1}(\Sigma; \mathbb{Z}).
		\label{eq:L_def}
	\end{align}
\end{definition}
Now, as shown in \cite{GMN5}, the magic of this definition is that $L(\gamma)$
 is actually a 1-cycle satisfying the BPS condition (\ref{eq:BPS_cond_int}) for $\vartheta = \vartheta_{c}$. \footnote{This last comment follows from the fact that $\int_{p_{\Sigma}} \lambda = \int_{p} \lambda_{ij} \in e^{i \vartheta_{c}} \mathbb{R}_{<0} $ for any street $p$ of type $ij$.} Let us make the further assumption that $\Gamma_{c}$ is a rank-1 lattice, which holds automatically off of the walls of marginal stability on $\CB$, then it follows that both $\gamma$ and $[L(\gamma)]$ are multiples of a choice of generator $\gamma_{c} \in \Gamma_{c}$.  With this in mind, the journey to the end of the squiggly arrow $(B)$ follows by analyzing the jumping of $F(\wp,\vartheta)$, but now as $\vartheta$ is varied across the critical angle $\vartheta_{c}$ (fixing $\wp$).  The resulting analysis (see \cite{GMN5}, sec. 6) leads us to the desired result:
\begin{align}
	[L(\gamma)] &=  \Omega(\gamma) \gamma, \qquad \gamma \in \Gamma_c,
	\label{eq:Omega_L_rel}
\end{align}
from which all BPS indices of 4D BPS states with central charge phase $\vartheta_{c}$ can be computed.

\subsubsection*{Abstract Spectral Networks}
It is possible to abstract the properties of the $\CW_{\vartheta}$ networks in order to draw networks on $C$ that do not necessarily come from integral curves of (\ref{eq:BPS_soliton_cond}).  It is not necessary to give a precise list of the properties here, and we instead refer the interested reader to Section 9 of \cite{GMN5}. There, the abstracted networks are particularly useful for defining the ``non-abelianization map" between moduli spaces of flat $GL(1)$-bundles on $\Sigma$, and flat $GL(K)$-bundles on $C$.  In this paper, however, our interest in abstract spectral networks will be in constructions of \textit{potential} $\CW_{\vartheta}$ networks.  Indeed, the $m$-herds mentioned in the introduction, and introduced in Section \ref{sec:herds}, are examples of abstract networks on an arbitrary curve $C$.  By searching the parameter space of the pure $SU(3)$ theory, where $C = S^{1} \times \mathbb{R}$ and $K=3$, it turns out that a large subset of $m$-herds actually arise as $\CW_{\vartheta}$ networks
at various points on the Coulomb branch.

\section{Spectral network analysis of a wild point on the Coulomb branch} \label{sec:spectral-wild-SU3}

\begin{figure}[t!]
	\begin{center}
		 \includegraphics[scale=0.55]{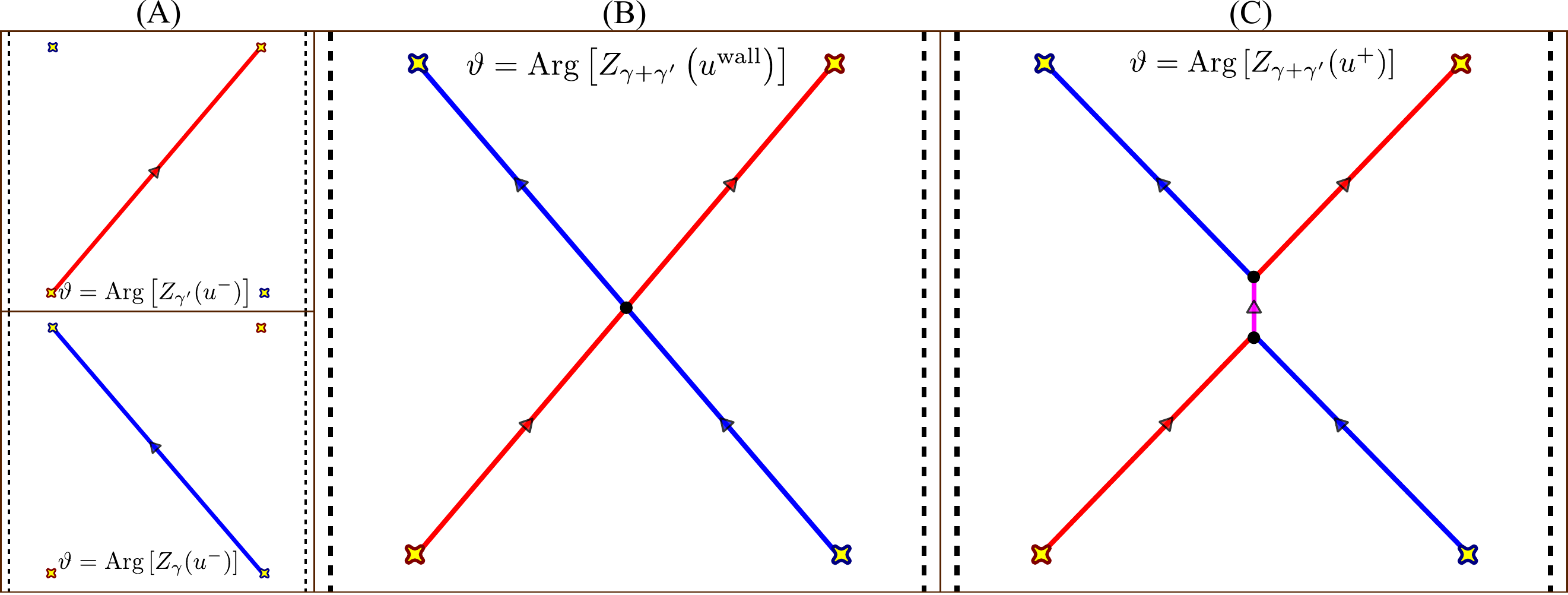}
		\caption{A hypothetical wall-crossing of two hypermultiplets with charges $\gamma,\, \gamma'$ such that $\langle \gamma, \gamma' \rangle =1$.  Streets of type $12$ are shown in red, $23$ in blue, and $13$ in fuchsia; only two-way streets are depicted.  Arrows denote street orientations according to the convention described in Section \ref{sec:degen_net}. Yellow crosses denote branch points.  Arrows denote the direction of solitons of type $12,\,23,$ or $13$ (according to the street).  The black dotted lines are identified to form the cylinder. $(A)$: The two hypermultiplet networks at a point $u^{-}$ just ``before" the wall of marginal stability. $(B)$: The hypermultiplet networks at a point $u^{\text{wall}}$ on the wall of marginal stability and at phase $\vartheta = \arg\left[Z_{\gamma}(u^{\text{wall}}) \right] = \arg\left[Z_{\gamma'}(u^{\text{wall}}) \right] = \arg\left[Z_{\gamma + \gamma'}(u^{\text{wall}}) \right]$. $(C)$: Slightly ``after" the wall at a point $u^{+}$, a BPS bound state of charge $\gamma + \gamma'$ is born and a two-way street of type $13$ ``grows" as one proceeds away from the wall.
\label{fig:1-herd_motivation}}
	\end{center}
\end{figure}

\begin{figure}[t!]
	\begin{center}
		 \includegraphics[scale=0.55]{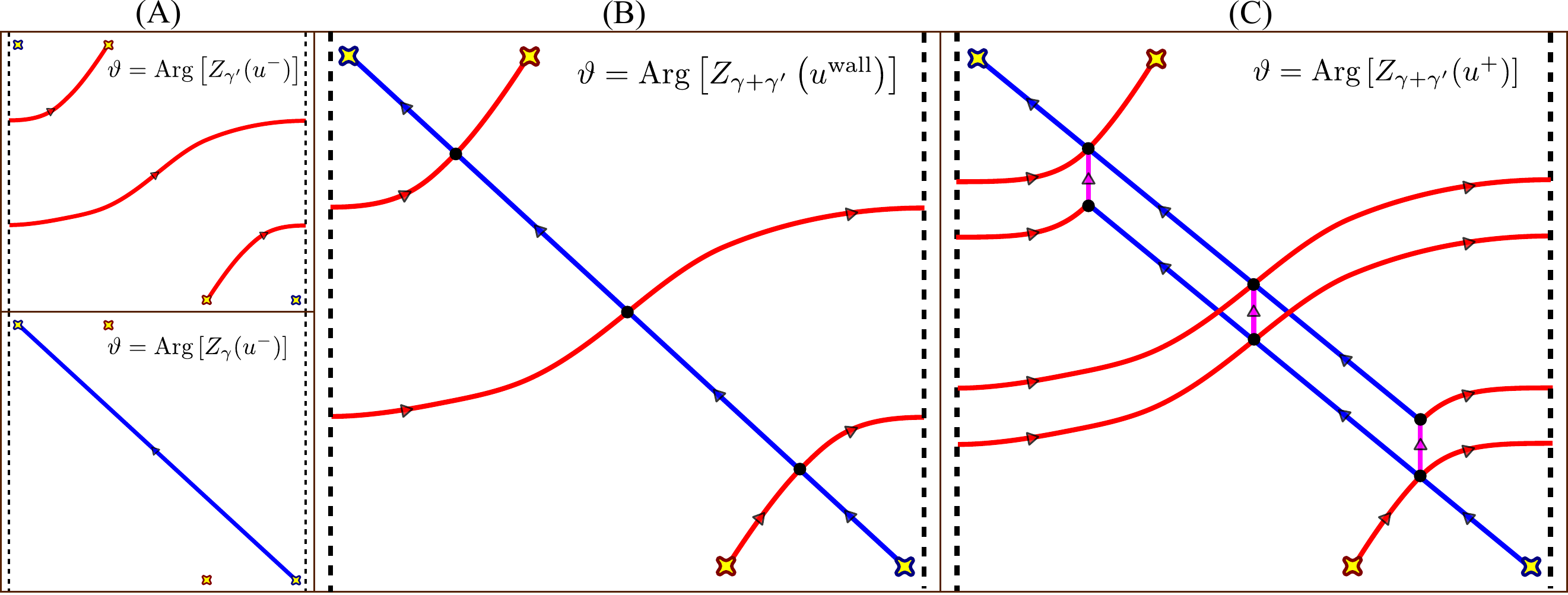}
		\caption{A hypothetical wall-crossing of two hypermultiplets with charges $\gamma,\, \gamma'$ such that $\langle \gamma, \gamma' \rangle =3$.  The story is similar to that described in the caption of Fig.~\ref{fig:1-herd_motivation}. \label{fig:3-herd_motivation}}
	\end{center}
\end{figure}

\subsection{Horses and Herds} \label{sec:herds}
We begin by describing a sequence of spectral networks that may arise in the hypothetical wall-crossing between two BPS hypermultiplets of charges $\gamma,\,\gamma' \in \Gamma$ such that $\langle \gamma, \gamma' \rangle = m$.  Indeed, assume at some point on the Coulomb branch there are two BPS states
(occurring at different phases) such that the degenerate network associated to each state has a single two-way street given by a simple curve passing through two branch points of the same type (frame $(A)$ of Figs. \ref{fig:1-herd_motivation}-\ref{fig:3-herd_motivation}); such spectral networks are associated to BPS hypermultiplets.  Now, assume that there exists a marginal stability wall on the Coulomb branch associated to the (central charge phase) crossing of these two hypermultiplets (and no other BPS states).  On the other side of the wall, a possible bound state of charge $\gamma + \gamma'$ may be formed (where $\gamma,\, \gamma'$ are the charges of the original hypermultiplets).  Figs.
\ref{fig:1-herd_motivation}-\ref{fig:3-herd_motivation} depict three hypothetical snapshots along a path passing through the wall of marginal stability for the cases $m= 1,3$; frame $(C)$ depicts a guess at the appearance of the degenerate network associated to the bound state of charge $\gamma + \gamma'$.  After drawing such pictures for progressively higher $m$, and given a sufficient dose of mildly-confused staring, one will begin to notice that the (two-way streets of) networks associated to the bound state of charge $\gamma + \gamma'$ can be decomposed into $m$-components that look like ``extended" saddles; as they are the generalization of saddles we have no choice but to call each such component a ``horse."

\begin{figure}
	\begin{center}
		 \includegraphics[scale= 0.75]{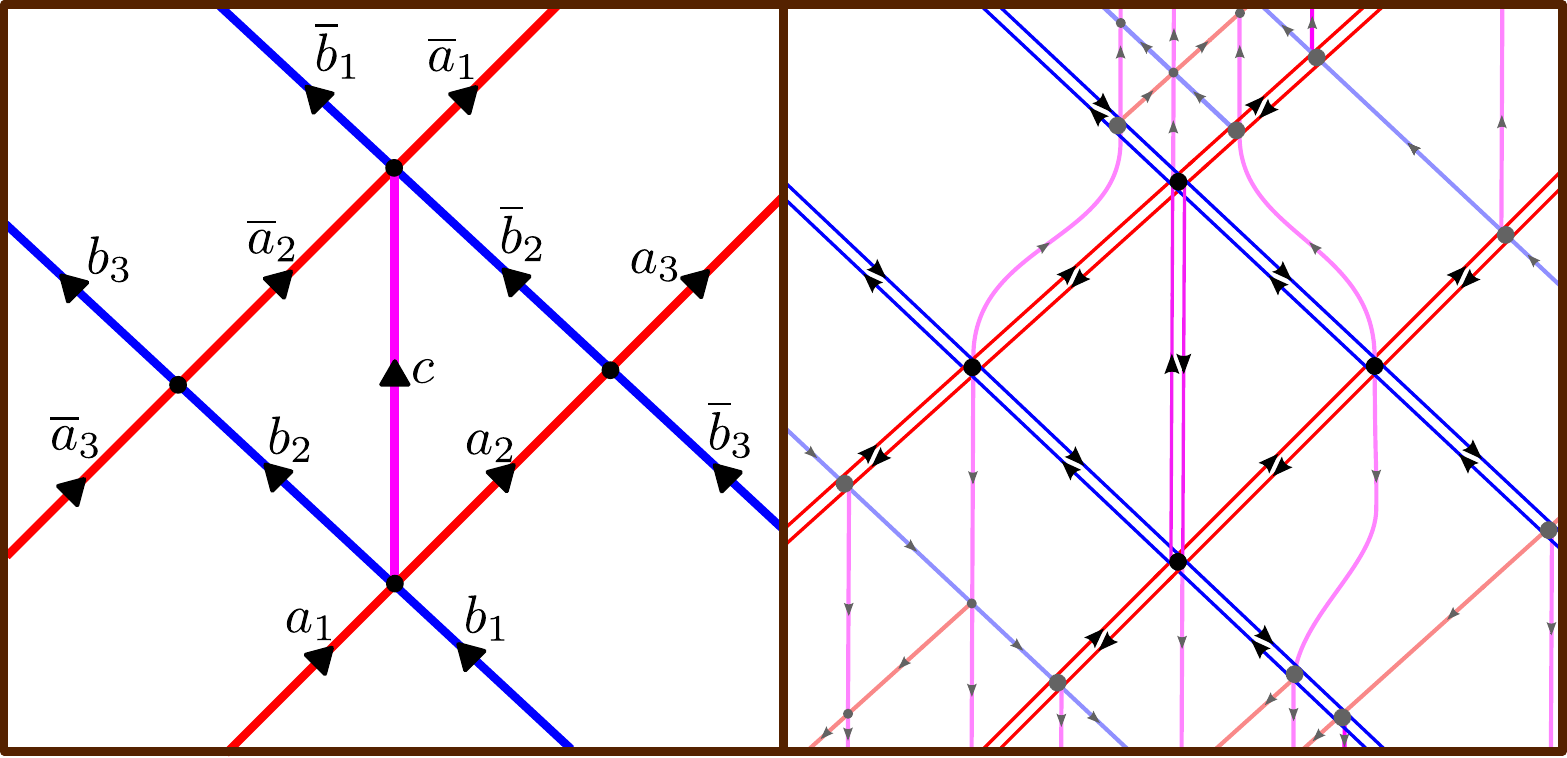}
		\caption{\textit{Left Frame}: Two-way streets of a horse on some open disk $U$; the solid streets depicted are capable of being two-way; one-way streets are not shown.  The sheets of the cover $\Sigma \rightarrow C$ are (locally) labeled from $1$ to $K \geq 3$.  Red streets are of type $12$, blue streets are of type $23$, and fuchsia streets are of type $13$.  We choose an orientation for this diagram such that all streets ``flow up."  \textit{Right Frame}: A relatively simple example of a horse with one-way streets shown as partially transparent and two-way streets resolved (using the ``British resolution", cf. Appendix \ref{app:six-way} or \cite{GMN5}).  One can imagine horses with increasingly intricate ``backgrounds" of one-way streets. \label{fig:horse}}
	\end{center}
\end{figure}

\begin{figure}
	\begin{center}
		\includegraphics[scale=0.25]{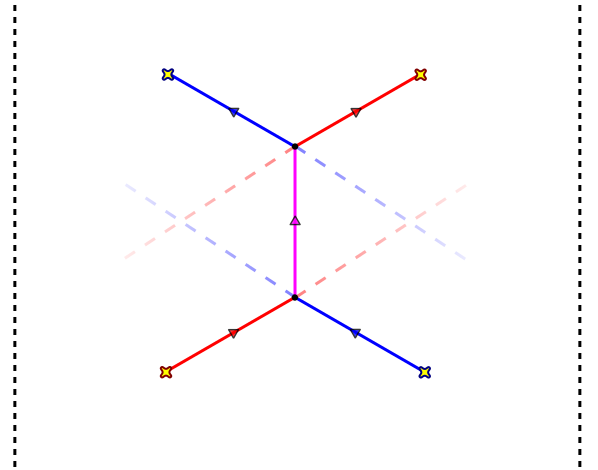}
		\includegraphics[scale=0.25]{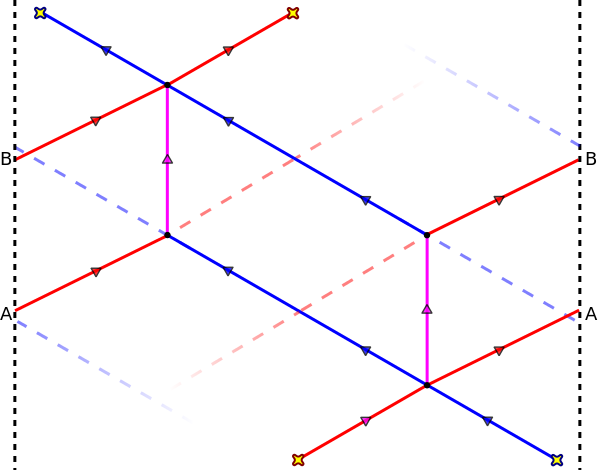}
		\includegraphics[scale=0.25]{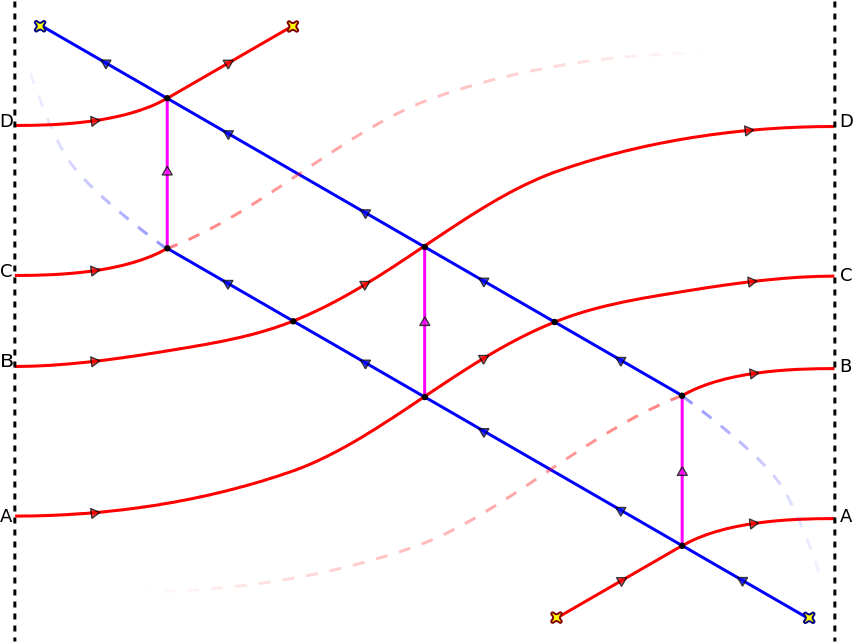}
		\includegraphics[scale=0.25]{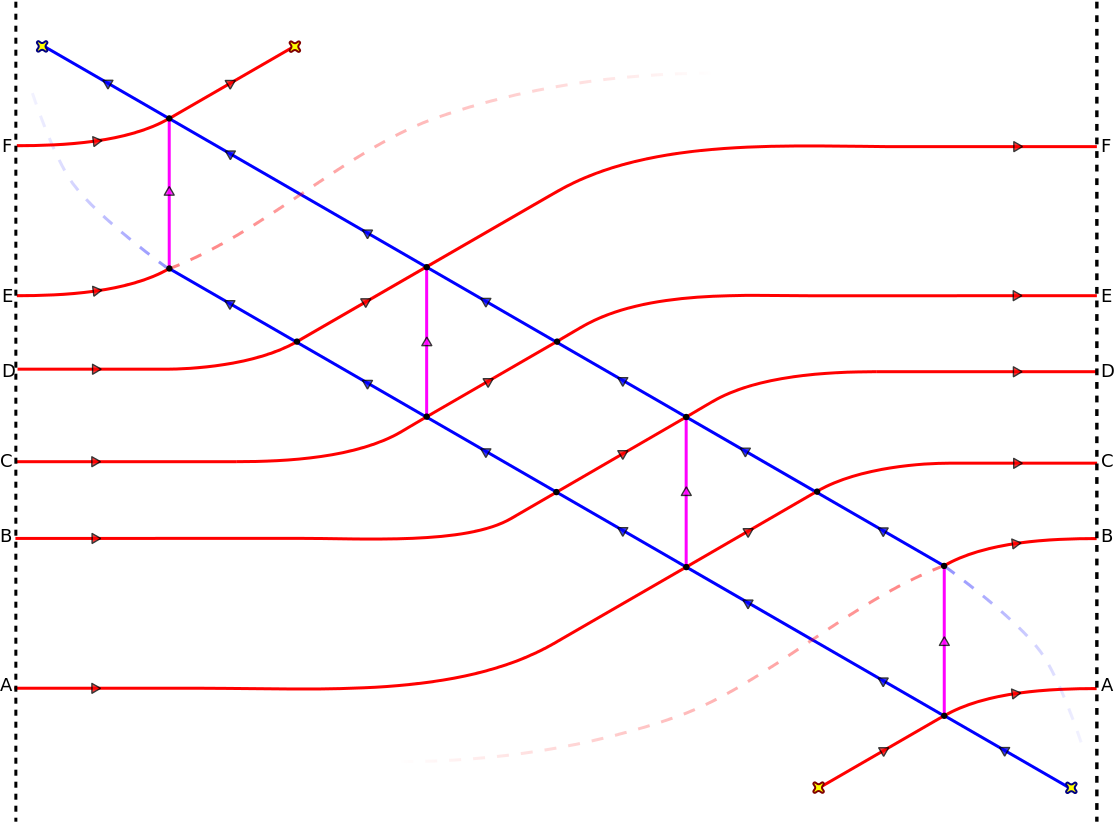}
	\end{center}
	\caption{The first four herds on the cylinder. Solid streets are two-way; dotted, transparent streets are streets of Fig.~\ref{fig:horse} that happen to be only one-way as indicated by Prop.~\ref{prop_Q}.  The black dotted lines are identified to form the cylinder and capital Latin letters are placed on either side to aid in the identification of streets.  Top row (from left to right): The 1-herd (saddle) and 2-herd. The
 middle row shows a 3-herd and the bottom row shows a 4-herd. \label{fig:herds}}
\end{figure}

\begin{definition}[Definitions]\
	\begin{enumerate}
	 \item A \textit{horse street} $p \in \{a_{1},a_{2},a_{3},b_{1},b_{2},b_{3},c,\conj{a_{1}},\conj{a_{2}},\conj{a_{3}},\conj{b_{1}}, \conj{b_{2}},\conj{b_{3}}\}$ is one of the streets of Fig.~\ref{fig:horse} (left frame).

	 \item Let $N$ be a spectral network (subordinate to some branched cover $\Sigma \rightarrow C$) and $U \subset C'$ be an open disk region.  Then $U \cap N$ is a \textit{horse} if a subset of its streets can be identified with Fig.~\ref{fig:horse} in a way such that:
	 	\begin{enumerate}
	 		\item every two-way street is a horse street,

	 		\item there is always a two-way street identified with the street labeled $c$.
	 	\end{enumerate}
	\end{enumerate}
\end{definition}

We can reconstruct the two-way streets of the full spectral network by gluing $m$ horses back together.  This leads us to the following working definition (a more complete definition is provided in Appendix \ref{app:herd-appendix}), which we extend to any curve $C$.

\begin{definition}[Working Definition]
	Given a collection of $m$ horses, let $p^{(l)}$ denote a horse street on the $l$th horse ($l=1,\cdots, m$). A spectral network on a curve $C$ is an $m$-herd if its two-way streets are generated by gluing together $m$ horses using the following relations:
	\begin{equation}
			\begin{aligned}
			a_{1}^{(l)} &= a_{3}^{(l-1)}\\
			b_{1}^{(l)} &= b_{3}^{(l-1)}\\
			\conj{a_{1}}^{(l)} &= \conj{a_{3}}^{(l+1)}\\
			\conj{b_{1}}^{(l)} &= \conj{b_{3}}^{(l+1)},
			\end{aligned}
			\label{eq:horse_glue}
		\end{equation}
		and such that $a_{1}^{(1)},\, b_{1}^{(1)},\, \conj{a_{1}}^{(m)},\,$ and $\conj{b_{1}}^{(m)}$ are connected to four distinct branch points.
\end{definition}

\begin{remark}
	It can be shown from our definition that a $1$-herd (which consists of a single horse) is just a saddle.  Indeed, a small computation will show that $Q(p)$ is nontrivial ($Q(p) \neq 1$) only  for $p=a_{1}, \, \overline{a_{1}}, \, b_{1}, \, \overline{b_{1}},\,$ and $c$; this leads us to the picture of a saddle extending from four branch points (pictured in the top left corner of Fig.~\ref{fig:herds}).
\end{remark}

An advantage of the decomposition into horses is computability: a horse should be thought of as a scattering machine which takes inflowing solitons, and regurgitates outgoing solitons as well as all spectral data ``bound" to the horse.\footnote{See Appendix \ref{app:horse_machine} for a the precise and explicit description of the horse as a scattering machine.}  The combinatorial problem of computing the BPS degeneracies $\Omega(n \gamma_{c}),\,n \geq 1$, using spectral network machinery, is then greatly simplified and explicit results can be obtained for all $m \geq 1$.  In fact, we have the following.
\begin{proposition} \label{prop_Q}\
	Let $N$ be an $m$-herd, then $Q(p)$ for all two-way streets $p$ on $N$ are given in terms of powers of a single generating function $P_{m}$ satisfying the algebraic equation
	\begin{equation}
		P_{m} = 1 + z \left( P_{m} \right)^{(m-1)^2},
		\label{eq:P}
	\end{equation}
	where $z = (-1)^{m} X_{ \twid{\gamma} + \twid{\gamma}'}$ for some $\gamma, \gamma' \in H_{1}(\Sigma;\mathbb{Z})$ such that $\langle \gamma, \gamma' \rangle = m$.  In particular, adopting the notation $Q(p,l) := Q(p^{(l)})$,
	\begin{equation}
	\begin{aligned}
		 P_{m} &= Q(c,l)\\
		\left( P_{m} \right)^{m-l} &= Q(a_{2},l) = Q(b_{2},l) = Q(a_{3},l) = Q(b_{3},l)\\
		\left( P_{m} \right)^{l-1} &= Q(\conj{a_{2}},l) = Q(\conj{b_{2}},l) = Q(\conj{a_{3}},l) = Q(\conj{b_{3}},l)\\
		\left( P_{m} \right)^{m-l+1} &= Q(a_{1},l) = Q(b_{1},l)\\
		\left( P_{m} \right)^{l} &= Q(\conj{a_{1}},l) = Q(\conj{b_{1}},l).
		\end{aligned}
		\label{eq:Q_red}
	\end{equation}
	for $l=1,\cdots,m$.\\
\end{proposition}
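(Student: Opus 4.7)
The plan is to analyze the $m$-herd as a chain of $m$ scattering modules, each one (a horse) transforming incoming soliton generating functions on $a_1^{(l)}, b_1^{(l)}, \conj{a_1}^{(l)}, \conj{b_1}^{(l)}$ into outgoing ones on $a_3^{(l)}, b_3^{(l)}, \conj{a_3}^{(l)}, \conj{b_3}^{(l)}$. The two basic inputs are: (i) the simpleton rule $\mu(a,p)=+1$ on each of the three curves emerging from a branch point, which fixes the boundary data where $a_1^{(1)},b_1^{(1)},\conj{a_1}^{(m)},\conj{b_1}^{(m)}$ attach to branch points; and (ii) the homotopy invariance of $F(\wp,\vartheta)$ across each internal joint of a horse, which (as in \cite{GMN5}) translates into algebraic relations among the $\Upsilon$'s and $\Delta$'s on the streets meeting at that joint. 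The technical backbone of the proof will be the explicit horse-as-scattering-machine analysis carried out in Appendix \ref{app:herd-appendix}.

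First, for a single horse I would solve the joint equations: set unknown generating series on $a_j,b_j,c,\conj{a_j},\conj{b_j}$ with coefficients in $\CA_S$ and use the six-way (or British-resolution) joint rules to express $\Upsilon,\Delta$ on each internal street in terms of the incoming ones. Because a horse is symmetric under the $1\leftrightarrow 3$, $a\leftrightarrow b$ involutions of the cover labels, the three street-factor classes $\{c\}$, $\{a_i,b_i\}$, $\{\conj{a_i},\conj{b_i}\}$ are tied together; I expect the combinatorics to collapse and yield $Q(c)$ as the ``fundamental'' output, with the remaining $Q$'s as monomials in this one together with the incoming $Q$'s on $a_1,b_1,\conj{a_1},\conj{b_1}$.

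Second, I would chain $m$ horses via the gluing relations \eqref{eq:horse_glue}, turning the single-horse output into a recursion in the horse index $l=1,\dots,m$. The simpleton boundary conditions at $l=1$ (for $a_1^{(1)},b_1^{(1)}$) and $l=m$ (for $\conj{a_1}^{(m)},\conj{b_1}^{(m)}$) close the system. After substituting in $z=(-1)^m X_{\twid\gamma+\twid\gamma'}$ (the sign coming from the $X_H=-1$ convention together with the framing shifts), the recursion should collapse to the ansatz \eqref{eq:Q_red}, with $P_m := Q(c,l)$ independent of $l$ and obeying
\begin{equation*}
P_m \;=\; 1 + z\,(P_m)^{(m-1)^2},
\end{equation*}
where the exponent $(m-1)^2$ reflects that each composite loop closing up through the whole herd passes through the same number of street factors contributed by the $(m-1)\times(m-1)$ ``internal'' pattern of horses. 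Consistency of the ansatz under the gluing relations is then a direct check.

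The main obstacle, and the step I would want to execute most carefully, is the single-horse joint computation. The counting of soliton compositions at the internal joints is intricate because solitons of types $12$, $23$ and $13$ interact, and the closure map $\cl$ interacts nontrivially with the non-commutative product in $\CA_S$; one must ensure the framing/sign conventions of Section \ref{sec:twist} are applied consistently so that the $(-1)^m$ in $z$ emerges cleanly. Once the single-horse scattering rules are in hand, the recursion on $l$ and the derivation of \eqref{eq:P} and \eqref{eq:Q_red} will be essentially algebraic.
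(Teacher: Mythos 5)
Your proposal follows essentially the same route as the paper's proof in Appendix \ref{app:herd-appendix}: solve the six-way junction equations on a single horse viewed as a scattering machine, chain the $m$ horses via the gluing relations with the simpleton/sourced-horse boundary conditions at $l=1$ and $l=m$, and collapse the resulting recursion in $l$ to powers of $Q(c,l)$, which is shown to be $l$-independent and to satisfy \eqref{eq:P} with the exponent $(m-1)^2$ arising from the product $\prod_{r\neq l} Q_{\conj{b_2}}^{(r)}Q_{b_2}^{(r)}=(P_m)^{(m-1)^2}$. The sketch is correct in structure and identifies the right key steps.
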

\begin{proof}
	See Appendix \ref{app_prop_Q_pf} for the full calculational proof.
\end{proof}

The precise cycle $\gamma_{c} = \gamma + \gamma'$ that appears depends on the embedding of $N$ in $C$ as a graph.  Further, as shown at the end of Appendix \ref{app_prop_Q_pf}, there are cycles representing $\gamma$ and $\gamma'$ that look like the charges of simple ``saddle-connection" hypermultiplets.  Indeed, the cycle representing either $\gamma$ or $\gamma'$  projects down to a path on $C$ that runs between two distinct branch points of the same type.  These are precisely the (hypothetical) hypermultiplets whose wall-crossing motivated the construction of $m$-herds.\footnote{The representative cycles discussed here, however, do not live entirely on $\operatorname{Lift}(N) \subset \Sigma$.  Roughly speaking representatives of $\gamma,\, \gamma'$ are given by the lifts of paths running along the $a_{i},\conj{a_{i}}$ and $b_{i}, \, \conj{b_{i}} $ respectively, but these do not define closed paths on $\Sigma$ without running through at least one street of type $13$.}

\begin{remark}[Remarks] \
\begin{itemize}
	\item A street $p$ is two-way iff $Q(p) \neq 1$.  Thus, (\ref{eq:Q_red}) states that on the first ($l=1$) and last ($l=m$) horses, some streets depicted in Fig.~\ref{fig:horse} are only one-way.

	\item When $m = 1,2$, (\ref{eq:P}) has easily derivable solutions:
	\begin{align}
		P_{1} &= 1 + z, \label{eq:P_1}\\
		P_{2} &= (1 - z)^{-1}. \label{eq:P_2}
	\end{align}
	For a saddle ($m = 1$), this result, combined with (\ref{eq:Q_red}), states that there are five two-way streets; each such two-way street $p$ is equipped with a generating function $Q(p) = 1 + z$, as originally derived in \cite{GMN5}.
\end{itemize}
\end{remark}

\subsection{Connection with Kontsevich-Soibelman, Gross-Pandharipande} \label{sec:GP}

The algebraic equation (\ref{eq:P}) and relevant solutions appear in a conjecture by Kontsevich and Soibelman (KS) \cite{KS_MOTIVIC_I}, later proven by Reineke  \cite{REINEKE-09} and generalized by Gross-Pandharipande (GP) \cite{GROSS}.  A series solution of (\ref{eq:P}) can be obtained using the Lagrange formula for reversion of series and the result for $m>1$  is \cite{KS_MOTIVIC_I}:
\begin{equation}
	\begin{aligned}
	P_{m} &= \sum_{n=0}^{\infty} \frac{1}{1+(m^2-2m) n} \binom{(m-1)^2 n}{n} z^n, \\
	 &= \exp \left[\sum_{n=1}^{\infty} \frac{1}{(m-1)^2 n} \binom{(m-1)^2 n}{n} z^{n} \right].
	 \end{aligned}
\label{eq:P_sol}
\end{equation}
To describe the connection between our result and that of KS and GP, we review the generalized conjecture of GP, briefly adopting their notation in \cite{GROSS}.

The algebraic equation (\ref{eq:P}) appears in \cite{GROSS}. \footnote{A different, but related, algebraic equation on the quantity $(P_{m})^{m}$ was originally stated by Kontsevich and Soibelman in \cite{KS_MOTIVIC_I}.}  There, the object of study is a group of (formal 1-parameter familes of) automorphisms of the torus $\mathbb{C}^* \times \mathbb{C}^*$ generated by $\theta_{(a,b),f}$ that are defined by
\begin{equation*}
	\begin{array}{lr}
		\theta_{(a,b),f}(x) = f^{-b} \cdot x, \, &	\theta_{(a,b),f}(y) = f^{a} \cdot y
	\end{array}
\end{equation*}
where $x$ and $y$ are coordinate functions on the two factors of $\mathbb{C}^* \times \mathbb{C}^*, \, (a,b) \in \mathbb{Z}^2, \,$ and $f$ is a formal series of the form
\begin{equation*}
	f = 1+ x^{a} y^{b} \left[ t f_{1}(x^{a} y^{b})+ t^2 f_{2}(x^{a} y^{b}) + \cdots \right], \, f_{i}(z) \in \mathbb{C}[z].
\end{equation*}
Alternatively we may say $f \in \mathbb{C}[x,x^{-1},y,y^{-1}][[t]]$ (i.e. $f$ is a formal power series in $t$ with coefficients Laurent polynomials in $x$ and $y$). Such automorphisms preserve the holomorphic symplectic form
\begin{equation*}
	\omega = (xy)^{-1} dx \wedge dy.
\end{equation*}
Now, letting
\begin{equation*}
	\begin{array}{lr}
		S_{q} = \theta_{(1,0),(1+t x)^{q}}, &  T_{r} = \theta_{(0,1),(1+t y)^{r}},
	\end{array}
\end{equation*}
we consider the commutator
\begin{equation}
	T_{r}^{-1} \circ S_{q} \circ T_{r} \circ S_{q}^{-1} = \prod^{\rightharpoonup} \theta_{(a,b), f_{(a,b)}}
	\label{eq:GP_comm}
\end{equation}
where the product on the right hand side is over primitive vectors $(a,b) \in \mathbb{Z}^{2}$ (i.e. $\gcd(a,b)=1$) such that $a,b>0$, and the order of the product is taken with increasing slope $a/b$ from left to right.  The conjecture of Gross-Pandharipande involves the slope $1$ term of (\ref{eq:GP_comm}).

\begin{remark}[Conjecture (Gross-Pandharipande)] \ \\
	For arbitrary $(q,r)$, the slope 1 term $\theta_{(1,1), f_{(1,1)}}$ in (\ref{eq:GP_comm}) is specified by
	\begin{equation*}
	f_{1,1} = \left( \sum_{n=0}^{\infty} \frac{1}{(q r - q - r) n + 1} \binom{(q-1) (r-1) n}{n} t^{2 n} x^{n} y^{n} \right)^{q r}.
	\end{equation*}
\end{remark}

The case $q=r$ was first conjectured by KS, and later proven by Reineke.  Now, letting
\begin{equation}
	\CP_{q,r} = \sum_{n=0}^{\infty} \frac{1}{(q r - q - r) n + 1} \binom{(q-1) (r-1) n}{n} t^{2 n} x^{n} y^{n},
	\label{eq:GP_P_sol}
\end{equation}
For general $q,r$, Gross and Pandharipande noted that $\CP_{q,r}$ satisfies the equation
\begin{equation}
	t^2 x y \left( \CP_{q,r} \right)^{(q-1) (r-1)} - \CP_{q,r} + 1 = 0;
	\label{eq:GP_P}
\end{equation}
so that $f_{1,1}$ is an algebraic function (over $\mathbb{Q}(t,x,y)$).

In the case $q=r=m$, the equation (\ref{eq:GP_P}) and solution (\ref{eq:GP_P_sol}) bear striking similarity to (\ref{eq:P}) and (\ref{eq:P_sol}), which motivates identifying
$t^2 xy = z$ in hopes of identifying $\CP_{m,m}$ with $P_{m}$.

To motivate the identification $t^2 xy = z$, we turn our attention back to the original motivation for our definition of $m$-herds:  they are expected to arise after two hypermultiplets of charges $\gamma,\, \gamma'$, with $\langle \gamma, \gamma' \rangle = m$, cross a wall of marginal stability.  If $m$-herds do arise in this manner, then in the resulting wall-crossing formula we should expect the $P_{m}$ to be related to the generating function for the KS transformations attached to the charges $n(\gamma + \gamma'),\, n>0$.  We now go about unpacking the identification of such a wall crossing formula with (\ref{eq:GP_comm}).

	Assume on one side of the wall $\arg(Z_{\gamma}) < \arg(Z_{\gamma'})$, then the wall crossing formula reads (see Section \ref{sec:specgen})
\begin{equation}
	\CK_{\gamma} \CK_{\gamma'} = \CK_{\gamma'} \left[ \prod_{(a,b) \in \mathbb{Z}^2} \left(\CK_{a \gamma + b \gamma'} \right)^{\Omega(a \gamma + b \gamma')} \right] \CK_{\gamma}
	\label{eq:hm_wcf}
\end{equation}
where all products are taken in order of increasing central charge phase (when read from left to right) and the $\CK_{\alpha}$ are transformations on a twisted Poisson algebra of functions on the torus $T = \Gamma \otimes_{\mathbb{Z}} \mathbb{C}^{\times}$, i.e. the space of functions generated by polynomials in formal variables $Y_{\alpha}, \alpha \in \Gamma$ equipped with twisted product given by
\begin{equation}
	Y_{\alpha} Y_{\beta}= (-1)^{\langle \alpha, \beta \rangle} Y_{\alpha + \beta}. \label{eq:twisted-product-rule}
\end{equation}
$T$ is equipped with a holomorphic symplectic form induced by the symplectic pairing on $\Gamma$; it is equivalently given by the holomorphic Poisson bracket
\begin{equation}\label{eq:formal-Y}
	\left \{ Y_{\alpha}, Y_{\beta} \right \} = \langle \alpha, \beta \rangle Y_{\alpha} Y_{\beta}.
\end{equation}
Now, the $\CK_{\alpha}$ are symplectomorphisms that act as
\begin{equation}\label{eq:formal-K}
	\CK_{\alpha}: Y_{\beta} \mapsto (1-Y_{\alpha})^{\langle \alpha, \beta \rangle} Y_{\beta}.
\end{equation}
For $\langle \gamma, \gamma' \rangle = m$, it follows that
\begin{equation}
	\begin{array}{lr}
		\begin{aligned}
		\CK_{\gamma} &: Y_{\gamma}  \mapsto Y_{\gamma}, \\
		\CK_{\gamma} &: Y_{\gamma'}  \mapsto (1 - Y_{\gamma})^{m} Y_{\gamma'},
		\end{aligned} &
		\begin{aligned}
		\CK_{\gamma'} &: Y_{\gamma}  \mapsto (1 - Y_{\gamma'})^{-m} Y_{\gamma},\\
		\CK_{\gamma'} &: Y_{\gamma'}  \mapsto Y_{\gamma'}.
		\end{aligned}
	\end{array}
	\label{eq:K_action}
\end{equation}
We identify the torus $\mathbb{C}^{\times} \times \mathbb{C}^{\times}$ of Gross-Pandharipande by the subtorus of $T$ generated by
\begin{equation*}
	\begin{aligned}
	x & := Y_{\gamma}\\
	y & := Y_{\gamma'};
	\end{aligned}
\end{equation*}
then by (\ref{eq:K_action}) we have\footnote{To make the identification with $S_{m}$ and $T_{m}$ we evaluate the formal (time) parameter at $t = -1$.  Alternatively, we could set $-tx = Y_{\gamma}$ and $-ty = Y_{\gamma'}$.}
\begin{equation*}
	\begin{aligned}
	\CK_{\gamma} &= \theta_{(1,0),(1 - x)^{m}} = S_{m}\\
	\CK_{\gamma'} &= \theta_{(0,1),(1 - y)^{m}} = T_{m}.
	\end{aligned}
\end{equation*}
Furthermore, noting that
\begin{equation}
	x^{a} y^{b} = (-1)^{\langle a \gamma, b \gamma' \rangle} Y_{a \gamma + b \gamma'} = (-1)^{mab} Y_{a \gamma + b \gamma'},
	\label{eq:sign_origin}
\end{equation}
then
\begin{equation*}
	\begin{aligned}
	\CK_{a \gamma + b \gamma'}&: x = Y_{\gamma}  \mapsto (1 - Y_{a \gamma + b \gamma'})^{- m b}  Y_{\gamma} = (1 - (-1)^{m a b} x^{a} y^{b})^{m b} x \\
	& : y = Y_{\gamma'}  \mapsto (1 - Y_{a \gamma + b \gamma'})^{-m a} Y_{\gamma'} ^{m a} = (1 - (-1)^{m a b} x^{a} y^{b})^{-m a} y;
	\end{aligned}
\end{equation*}
giving the identification
\begin{equation*}
	\CK_{a \gamma + b\gamma'} = \theta_{(a,b), (1 - (-1)^{m a b} x^{a} y^{b})^{m}}.
\end{equation*}

  On the right hand side of (\ref{eq:hm_wcf}) $\arg(Z_{\gamma}) > \arg(Z_{\gamma'})$ and so the phase ordered product is equivalent to ordering by increasing slope $a/b$ from left to right.  This completes the identification of (\ref{eq:hm_wcf}) with (\ref{eq:GP_comm}).  Matching the slope 1 terms in both equations,
  \begin{align*}
 	 \theta_{(1,1),f_{1,1}} = \prod_{n \geq 1} \left(\CK_{n \gamma_{c} } \right)^{\Omega(n \gamma_{c})},
  \end{align*}
 where $\gamma_{c} := \gamma + \gamma'$; in terms of generating functions, this is equivalent to the statement\footnote{To see this, let $g_{n} = (1 - (-1)^{mn} (xy)^{n})^{m}$, then $\CK_{n \gamma_{c}} = \theta_{(n,n), g_{n}} = \theta_{(1,1),(g_{n})^{n}}$; furthermore, as $\theta_{(1,1),(g_{n})^{n}}$ fixes the product $xy$: $\theta_{(1,1),(g_{n})^{n}} \circ \theta_{(1,1),(g_{l})^{l}} = \theta_{(1,1),(g_{n})^{n} (g_{l})^{l}}$.}
 \begin{equation*}
 	f_{1,1} = \prod_{n \geq 1} \left[(1 - (-1)^{m n} z^{n})^{m} \right]^{n \Omega(n \gamma_{c})}.
 \end{equation*}
Equivalently, as $f_{1,1} = (\CP_{m,m})^{m^2}$,
 \begin{equation}
 	\left( \CP_{m,m} \right)^{m} = \prod_{n \geq 1} (1 - (-1)^{m n} z^{n})^{n \Omega(n \gamma_{c})}.
 	\label{eq:index_pred}
 \end{equation}
Now assume that the generating function $P_{m}$, derived in the context of spectral networks, \textit{is} the generating function $\CP_{m,m}$, derived in the context of wall crossing; then, given the exponents $\{\alpha_{n}\}_{n \geq 1}$ of the factorization of $P_{m}$ (see (\ref{eq:P_decomp})), (\ref{eq:index_pred}) predicts spectral network techniques will show $\Omega(n \gamma_{c}) = m \alpha_{n} / n$.    As we will see, this prediction is confirmed with Prop.~\ref{prop_l}.

\subsection{Herds of horses are wild (for $m \geq 3$)}

\begin{definition}
	For each two-way street $p$, define the sequence of exponents $\left\{\alpha_{n}\left(p,l\right)\right\}_{n \geq 1} \subset \mathbb{Z}$ via
	\begin{equation}
		Q(p,l) = \prod_{n=1}^{\infty} \left( 1 - (-1)^{m n} z^{n} \right)^{\alpha_{n} \left( p, l \right)}.
		\label{eq:Q_herd_exp}
	\end{equation}
	We also define the sequence of integers $\left\{\alpha_{n}\right\}_{n \geq 1}$ via
	\begin{equation}
		P_m = \prod_{n=1}^{\infty} (1 - (-1)^{m n} z^{n})^{\alpha_{n}}.
		\label{eq:P_decomp}
	\end{equation}
\end{definition}
By Prop. \ref{prop_Q}, we can express all $\alpha_{n}(p,l)$ as multiples of $\alpha_n$. \footnote{The radius of convergence $R$ of the series in equation (\ref{eq:P}) is $\log R = -c_m$, where $c_m$ is given in equation (\ref{eq:c_m}); in particular $R < 1$. Therefore, the product expansion is only a formal expansion and is not absolutely convergent; otherwise, it would predict that all the singularities of $d \log P$ sit on the unit circle.}

\begin{remark}
	The choice of signs $(-1)^{m n}$ follows from our convention of factorization, defined by $(\ref{eq:Q-exp})$, in terms of formal variables in the image of $Y_{\gamma} \mapsto X_{\twid{\gamma}}$ (which forms an embedding of the twisted algebra of $Y_{\gamma},\, \gamma \in \Gamma$, as subalgebra of $\formgamma$ as detailed in Appendix \ref{app:sign-rule}).  By Prop. \ref{prop_Q}, $z^{n} = (-1)^{m n} X_{n \twid{\gamma}_{c}}$ for some $\gamma_{c} \in \Gamma$, leading to the choice of signs in $(\ref{eq:Q_herd_exp})$.
\end{remark}

\begin{proposition} \label{prop_l}
	\begin{equation*}
		[L(n \gamma_{c}) ] = m \alpha_{n} \gamma_{c} \in H_{1}(\Sigma; \mathbb{Z}).
	\end{equation*}
\end{proposition}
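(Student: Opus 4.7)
The proof splits into an algebraic reduction and a topological identification. Proposition \ref{prop_Q} combined with the factorization $P_m=\prod_n(1-(-1)^{mn}z^n)^{\alpha_n}$ from (\ref{eq:P_decomp}) gives, for each two-way street $p^{(l)}$ in the herd,
\[
Q(p,l)=P_m^{k(p,l)}=\prod_{n\ge 1}\bigl(1-(-1)^{mn}z^n\bigr)^{k(p,l)\,\alpha_n},
\]
with the integer $k(p,l)\in\{1,\,l-1,\,l,\,m-l,\,m-l+1\}$ read directly off (\ref{eq:Q_red}). Since $\Gamma_c$ is rank one and $z^n=(-1)^{mn}X_{n\twid{\gamma}_c}$ by the sign convention in Appendix \ref{app:sign-rule}, the defining factorization (\ref{eq:Q-exp}) and the normalized version (\ref{eq:Q_herd_exp}) match termwise, yielding $\alpha_{n\gamma_c}(p^{(l)})=k(p,l)\,\alpha_n$. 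Substituting into (\ref{eq:L_def}) factors out the $n$-dependence,
\[
L(n\gamma_c)=\alpha_n\cdot\mathcal{L},\qquad \mathcal{L}:=\sum_{p,l} k(p,l)\,\lift{p^{(l)}}_\Sigma,
\]
and reduces the proposition to the $n$-independent topological identity $[\mathcal{L}]=m\gamma_c$ in $H_1(\Sigma;\mathbb Z)$.

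The approach to this identity is to exhibit $\mathcal{L}$ as a $1$-chain sum of $m$ cycles $\gamma_c^{(k)}$, $k=1,\ldots,m$, each homologous to the standard saddle-type representative of $\gamma_c=\gamma+\gamma'$ whose existence is established at the end of Appendix \ref{app_prop_Q_pf}. The telescoping pattern $m-l,\,m-l+1,\,l-1,\,l$ in (\ref{eq:Q_red}) strongly suggests such a nested family, with $\gamma_c^{(k)}$ penetrating the herd to depth $k$: constructed, concretely, by splicing the lift of the $a$-chain $a_1^{(1)},a_2^{(1)},a_1^{(2)},\ldots,a_2^{(k)}$ on the $(12)$-sheet pair, a bridging segment of the central street $c^{(k)}$ (which connects sheets $1$ and $3$), the lift of the $b$-chain $b_2^{(k)},b_3^{(k)},\ldots,b_3^{(m)}$ on the $(23)$-pair, and the symmetric return along the barred streets, closing through a complementary $(13)$-strand of $c^{(k)}$ furnished by the British resolution (Appendix \ref{app:six-way}). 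Each such $\gamma_c^{(k)}$ is homologous to $\gamma_c$ because it differs from the standard representative only by interior deformations within the herd.

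The crux is then a combinatorial check that summing the coefficients with which each horse-street class is traversed by $\gamma_c^{(1)},\ldots,\gamma_c^{(m)}$ reproduces $k(p,l)$ \emph{on the nose}. The accounting for the $a$- and $b$-streets is the arithmetic one expects: a street at depth $l$ is used by precisely those copies $\gamma_c^{(k)}$ reaching at least that deep (or at most that deep, for the barred streets), producing the exponents $m-l$, $m-l+1$, $l-1$, $l$ of (\ref{eq:Q_red}). The delicate case, and the main obstacle of the argument, is the central street $c^{(l)}$: in the naive picture the cycle $\gamma_c^{(l)}$ uses $c^{(l)}$ twice (once by its $\gamma$-half and once by its $\gamma'$-half), but the two passages lie on opposite $(13)$-sheet strands of the British resolution and must be shown to combine in the chain $\lift{c^{(l)}}_\Sigma$ into a single net copy, matching $k(c,l)=1$. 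Once this sheet-resolution bookkeeping is carried out---using that $\mathcal{L}$ is automatically a $1$-cycle by the general spectral-network result leading to (\ref{eq:Omega_L_rel})---the chain-level identity $\mathcal{L}=\sum_{k=1}^m \gamma_c^{(k)}$ yields $[\mathcal{L}]=m\gamma_c$, and hence $[L(n\gamma_c)]=m\alpha_n\gamma_c$, as claimed.
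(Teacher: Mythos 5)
Your overall strategy---factor out $\alpha_n$ via Proposition \ref{prop_Q} and the factorization \eqref{eq:P_decomp}, then decompose the resulting $n$-independent chain $\mathcal{L}$ into $m$ closed words each homologous to $\gamma_c$---is exactly the heuristic the paper gives in its own proof sketch, and your multiplicity bookkeeping for the $a$-, $b$-, $\conj{a}$- and $\conj{b}$-streets (the telescoping counts $m-l$, $m-l+1$, $l-1$, $l$) is the right one. Be aware, though, that the paper's complete proof in Appendix \ref{app:prop_l_pf} does \emph{not} construct these $m$ nested cycles: it introduces auxiliary $1$-chains $\lte_i^{(l)}, \ltf_i^{(l)}$ (lifts of auxiliary streets whose existence is guaranteed by the \ref{cond_noholes} condition), establishes a list of disk-bounding homology relations, and uses them to convert the $l$-weighted barred terms into $l$-weighted unbarred ones so that $(m-l)+l=m$ collapses the sum directly. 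Your route is viable but requires actually carrying out the cycle construction that the paper's appendix deliberately sidesteps.

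Two points in your write-up would not survive that construction as stated. First, the ``main obstacle'' you identify---that $\gamma_c^{(k)}$ traverses $c^{(k)}$ twice, once for each half of $\gamma_c$, with the two passages needing to merge into one net copy---is a phantom. The single lifted chain $\lift{c}^{(k)} = c^{(k)}\{1\} - c^{(k)}\{3\}$ already supplies both the sheet-$1$ and sheet-$3$ strands: at the lower joint the combined $a$-chain and $b$-chain have boundary (lift of the joint to sheet $1$) minus (lift to sheet $3$), which is cancelled by the start of $\lift{c}^{(k)}$ in a \emph{single} pass, and symmetrically at the upper joint against the $\conj{a}$- and $\conj{b}$-chains. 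There is no doubling and nothing to cancel; indeed the separate ``$\gamma$-half'' and ``$\gamma'$-half'' cannot individually close through a $13$-street at all (their boundaries live on sheet pairs $23$ and $12$ respectively), which is exactly why each of the paper's $m$ words contains both an $a$- and a $b$-chain together with one $c$. Relatedly, your $a$-chain must terminate at $a_1^{(k)}$ (the street entering the joint where $c^{(k)}$ originates), not at $a_2^{(k)}$; otherwise the multiplicity of $a_2^{(l)}$ summed over $k$ comes out $m-l+1$ rather than the required $m-l$. Second, the assertion that each $\gamma_c^{(k)}$ is homologous to $\gamma_c$ ``by interior deformations within the herd'' is precisely where the \ref{cond_noholes} condition must be invoked---without it the bounding disks used to deform one representative into another need not exist---so this step cannot simply be waved through.
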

\begin{proof}[Proof (sketch)]
	A rough argument goes as follows.  Note that, using Prop. \ref{prop_Q} and the definition of $L(n \gamma_{c})$ in (\ref{eq:L_def}), we have
	\begin{equation}
		\begin{aligned}
		L(n \gamma_{c}) &= \sum_{l=1}^{m} \sum_{p^{(l)}} \alpha_{n}(p,l) \liftnb{p}^{(l)} \\
		 &= \alpha_{n} \sum_{l=1}^{m} \left\{ \liftnb{c}^{(l)} + (m-l) \left( \lift{a_{2}}^{(l)} + \lift{a_{3}}^{(l)} + \lift{b_{2}}^{(l)} + \lift{b_{3}}^{(l)} \right) \right. \\
		 & + (l-1) \left( \lift{\conj{a_{2}}}^{(l)} + \lift{\conj{a_{3}}}^{(l)}  + \lift{\conj{b_{2}}}^{(l)} + \lift{\conj{b_{3}}}^{(l)} \right) + \\
		& \left. + (m-l+1) \left(\lift{a_{1}}^{(l)} + \lift{b_{1}}^{(l)} \right) +  l \left( \lift{\conj{a_{1}}}^{(l)} + \lift{\conj{b_{1}}}^{(l)} \right) \right\}.
		\end{aligned}
		\label{eq:ln}
	\end{equation}
	Each term in this sum can be split up into a sum of words of the form
	\begin{align*}
		\la_{1}^{(1)} + \lb_{1}^{(1)} + \left(\cdots \right) + \lac_{1}^{(m)} + \lbc_{1}^{(m)},
	\end{align*}
	Each such word represents a closed cycle on the lift of the $m$-herd to a graph on $\Sigma$, and is homologous\footnote{This homological equivalence can be shown using explicit calculations of the form shown in Appendix \ref{app:prop_l_pf}. For the reader that wishes to avoid excruciating detail: sufficient staring at some simple examples will suffice.} to $\gamma_{c}$. As $\la_{1}^{(1)},\, \lb_{1}^{(1)},\,\lac_{1}^{(m)},\, \lbc_{1}^{(m)}$ all come with multiplicity $m$ in (\ref{eq:ln}), then there are $m$ such words and the proposition follows.  A full proof, using brute-force homology calculations, can be found in Appendix \ref{app:prop_l_pf}.
\end{proof}

Via (\ref{eq:Omega_L_rel}), the immediate result of Prop. \ref{prop_l} is that
\begin{equation*}
	\Omega(n \gamma_{c}) = \frac{m \alpha_{n}}{n} ;
\end{equation*}
so all that remains is to compute $\alpha_{n}$.   For the cases $m = 1,2$: using  (\ref{eq:P_1}) and (\ref{eq:P_2}) we immediately have\footnote{The case $m=1$ (i.e. the saddle) was also computed in \cite{GMN5}.}
\begin{equation}
	\alpha_{n} =
	\left\{
	\begin{array}{rr}
		\delta_{n,1},   & \text{if $m = 1$}\\
		-\delta_{n,1},   & \text{if $m = 2$}
	\end{array}
	\right.
	\Rightarrow
	\Omega(n \gamma_{c}) =
	 \left\{
	\begin{array}{rr}
		\delta_{n,1},  & \text{if $m = 1$}\\
		-2\delta_{n,1},  & \text{if $m = 2$}
	\end{array}
	\right. .
	\label{eq:Omega_simple}
\end{equation}
 More generally, we can find an explicit form for $\alpha_{n}$ by taking the $\log$ of both sides of (\ref{eq:P_decomp}), matching powers of $z$, and applying M\"{o}bius inversion to derive
\begin{equation*}
	\alpha_{n} = \frac{1}{n}\sum_{d|n} (-1)^{m d + 1}  \mu \left(\frac{n}{d} \right) \frac{1}{(d-1)!} \left[\frac{d^{d}}{dz^{d}} \log(P_m)\right]_{z=0},
\end{equation*}
where $\mu$ is the M\"{o}bius mu function.  Using (\ref{eq:P_sol}),
\begin{equation*}
	\alpha_{n} = \frac{1}{(m-1)^2 n} \sum_{d|n} (-1)^{m d + 1} \mu \left(\frac{n}{d} \right) \binom{(m-1)^2 d}{d}, \,  m \geq 2.
\end{equation*}

\begin{corollary}
For $m \geq 2 $,
\begin{equation}
\Omega(n \gamma_{c}) = \frac{m}{(m-1)^2 n^2}\sum_{d|n} (-1)^{m d + 1}  \mu \left(\frac{n}{d} \right)  \binom{(m-1)^2 d}{d}.
\label{eq:Omega_sol}
	\end{equation}
\end{corollary}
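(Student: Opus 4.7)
The plan is to combine three ingredients already in hand: (i) Proposition~\ref{prop_l}, (ii) the relation \eqref{eq:Omega_L_rel}, and (iii) the explicit series solution \eqref{eq:P_sol} for the generating function $P_{m}$. By \eqref{eq:Omega_L_rel} applied to $\gamma = n\gamma_{c}$, together with Proposition~\ref{prop_l}, we have immediately
\begin{equation*}
\Omega(n\gamma_{c})\,\gamma_{c} \;=\; \tfrac{1}{n}[L(n\gamma_{c})] \;=\; \tfrac{m \alpha_{n}}{n}\,\gamma_{c},
\end{equation*}
so the problem reduces to extracting a closed form for the integers $\alpha_{n}$ defined by the product decomposition \eqref{eq:P_decomp}.

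First I would take the logarithm of both sides of \eqref{eq:P_decomp} and Taylor-expand $\log(1-(-1)^{mn}z^{n})$, turning the defining identity into
\begin{equation*}
\log P_{m} \;=\; -\sum_{n\ge 1}\alpha_{n}\sum_{k\ge 1}\frac{(-1)^{mnk}z^{nk}}{k}.
\end{equation*}
Extracting the coefficient of $z^{N}$ yields $[z^{N}]\log P_{m} = -\sum_{d\mid N}\tfrac{d}{N}(-1)^{mN}\alpha_{d}$, which is a standard Möbius-invertible convolution. Applying Möbius inversion gives $\alpha_{n}$ in terms of the coefficients of $\log P_{m}$.

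Next, I would feed in the explicit series \eqref{eq:P_sol}, which already supplies
\begin{equation*}
\log P_{m} \;=\; \sum_{n\ge 1}\frac{1}{(m-1)^{2}n}\binom{(m-1)^{2}n}{n} z^{n},
\end{equation*}
so that $\tfrac{1}{(d-1)!}\bigl[\tfrac{d^{d}}{dz^{d}}\log P_{m}\bigr]_{z=0} = \tfrac{1}{(m-1)^{2}}\binom{(m-1)^{2}d}{d}$. Substituting into the Möbius-inverted formula, carefully tracking the sign $(-1)^{md+1}$ that arises because of the $(-1)^{mn}$ factors in \eqref{eq:P_decomp}, produces
\begin{equation*}
\alpha_{n} \;=\; \frac{1}{(m-1)^{2} n}\sum_{d\mid n}(-1)^{md+1}\mu\!\left(\tfrac{n}{d}\right)\binom{(m-1)^{2}d}{d},\qquad m\ge 2.
\end{equation*}
Multiplying by $m/n$ gives the stated formula for $\Omega(n\gamma_{c})$.

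The computation is essentially routine; the only delicate points are bookkeeping the sign convention $(-1)^{mn}$ consistently through the Möbius inversion (this is where an error can easily creep in when $m$ is even), and being careful that the product \eqref{eq:P_decomp} is interpreted as a formal identity, since the radius of convergence of $P_{m}$ is $<1$ for $m\ge 3$ so absolute convergence of the product is not available. Neither issue obstructs the algebraic identity of coefficients, which is all that is needed for the corollary.
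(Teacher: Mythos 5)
Your proposal is correct and follows essentially the same route as the paper: combine $\Omega(n\gamma_c)=m\alpha_n/n$ (from Proposition~3.2 together with \eqref{eq:Omega_L_rel}) with the extraction of $\alpha_n$ by taking logarithms of \eqref{eq:P_decomp}, matching powers of $z$, applying M\"obius inversion, and substituting the explicit series \eqref{eq:P_sol} for $\log P_m$. The sign bookkeeping you flag works out exactly as you describe, yielding the factor $(-1)^{md+1}$, and treating \eqref{eq:P_decomp} as a formal identity of coefficients is indeed all that is required.
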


This agrees with the result of Reineke\footnote{Reineke showed (in our notation)  $\Omega(n \gamma_{c}) = \frac{1}{(m -2) n^2} \sum_{d|n} (-1)^{md + 1} \mu(n/d) \binom{(m-1)^2d -1}{d}$.  To translate between results, we use the observation that $\binom{(m-1)^2d}{d} = \frac{(m-1)^2}{m (m-2)} \binom{(m-1)^2d -1}{d}$.}
in the last section of \cite{REINEKE-09}.  A table of the values of $\Omega(n \gamma_{c})$ is
provided in Appendix \ref{app_omega_table} for $1 \leq n,m \leq 7$.  From this explicit result,
we can deduce the large $n$ asymptotics for the non-trivial\footnote{In the case $m =2$, using the identity $\sum_{d|n} \mu(d) = \delta_{n,1}$ in (\ref{eq:Omega_sol}) reproduces the result $\Omega(n \gamma_{c}) = -2 \delta_{n,1}$ of (\ref{eq:Omega_simple}).} case $m \geq 3$.

\begin{proposition} \label{prop:asymp} \
Let $m \geq 3$, then as $n \rightarrow \infty$,
	\begin{equation}
	\Omega(n \gamma_{c}) \sim (-1)^{m n + 1} \left( \frac{1}{m-1} \sqrt{\frac{m}{2\pi (m-2)}} \right)   n^{-5/2} e^{c_{m} n},
	\label{eq:Omega_asymp}
	\end{equation}
		where $c_{m}$ is the constant
	\begin{equation}
		c_{m} = (m-1)^2 \log \left[ (m-1)^2 \right] -m (m-2) \log \left[ m (m-2) \right].  \label{eq:c_m}
	\end{equation}
	\end{proposition}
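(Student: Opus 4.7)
The plan is to work directly from the closed-form expression \eqref{eq:Omega_sol} and isolate the dominant divisor in the Möbius-type sum. Since $\mu(n/d)$ is nonzero only when $n/d$ is squarefree, and the factor $\binom{(m-1)^2 d}{d}$ grows exponentially in $d$, one expects the term $d=n$ (with $\mu(1)=1$) to dominate and all proper divisors $d \le n/2$ to contribute only exponentially suppressed corrections. So the first step is to rewrite
\begin{equation*}
\Omega(n\gamma_c) \;=\; \frac{m}{(m-1)^2 n^2}\,(-1)^{mn+1}\binom{(m-1)^2 n}{n} \;+\; \frac{m}{(m-1)^2 n^2}\sum_{\substack{d\mid n\\ d<n}} (-1)^{md+1}\mu(n/d)\binom{(m-1)^2 d}{d},
\end{equation*}
and estimate the two pieces separately.

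Next, I would apply Stirling's formula $k! \sim \sqrt{2\pi k}\,(k/e)^{k}$ to each of the three factorials in $\binom{(m-1)^2 n}{n} = \bigl((m-1)^2 n\bigr)!\big/\bigl(n!\,((m-1)^2 n - n)!\bigr)$, using the key algebraic identity $(m-1)^2 - 1 = m(m-2)$ to identify the denominator exponent. The $(kn/e)^{kn}$ terms collect into $e^{c_m n}$ with $c_m = (m-1)^2\log[(m-1)^2] - m(m-2)\log[m(m-2)]$, while the $\sqrt{2\pi k}$ prefactors combine to give $\frac{1}{\sqrt{2\pi n}}\sqrt{(m-1)^2/(m(m-2))} = \frac{m-1}{\sqrt{2\pi n\,m(m-2)}}$. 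Multiplying by the prefactor $\tfrac{m}{(m-1)^2 n^2}$ yields exactly $(-1)^{mn+1}\,\tfrac{1}{m-1}\sqrt{\tfrac{m}{2\pi(m-2)}}\,n^{-5/2}e^{c_m n}$, which is the claimed asymptotic.

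It remains to show the divisor-sum remainder is $o$ of the leading term. The key point is that $c_m>0$ for $m\geq 3$: writing $A=(m-1)^2>B=m(m-2)\geq 3$, the function $x\mapsto x\log x$ is strictly increasing on $[1,\infty)$, so $c_m = A\log A - B\log B > 0$. Hence $\binom{(m-1)^2 d}{d} = O(e^{c_m d}\cdot d^{-1/2})$ by Stirling, uniformly in $d$. Bounding $|\mu(n/d)|\le 1$ and summing trivially over divisors $d\le n/2$ (using the elementary divisor bound $\#\{d\mid n\}=O(n^\epsilon)$) gives a total contribution of order $n^{-2-1/2+\epsilon}e^{c_m n/2}$, which is dwarfed by the leading $n^{-5/2}e^{c_m n}$ term by a factor of $e^{-c_m n/2+\epsilon \log n}\to 0$.

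The only real obstacle is the bookkeeping of the Stirling constants — there are three $\sqrt{2\pi\cdot\bullet}$ factors and three $(\bullet/e)^{\bullet}$ factors to combine, and one has to be careful that the $n$-independent square roots collapse to $\tfrac{m-1}{\sqrt{m(m-2)}}$ rather than some cousin; the identity $(m-1)^2-1=m(m-2)$ is the pivot that makes everything align. Once that algebraic step is executed cleanly, the subleading-term bound is a soft elementary estimate and no further input is needed.
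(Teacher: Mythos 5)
Your proposal is correct and follows essentially the same route as the paper's Appendix F: isolate the $d=n$ term of \eqref{eq:Omega_sol}, apply Stirling to $\binom{(m-1)^2 n}{n}$ using $(m-1)^2-1=m(m-2)$, and bound the proper-divisor remainder via the facts that every proper divisor is $\le n/2$ and $c_m>0$. The only cosmetic differences are that you invoke the divisor-count bound $O(n^\epsilon)$ where the paper uses the cruder bound $\le n$, and that you make the positivity of $c_m$ explicit; neither changes the substance.
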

\begin{proof}
Restricting $n$ to be an element of an infinite subsequence of primes, the sum over divisors simplifies and the claimed asymptotics (restricted to this subsequence) follow immediately using Stirling's asymptotics and (\ref{eq:Omega_sol}).  See Appendix \ref{app:proof_asymp} for a full proof.
\end{proof}

\subsection{Herds in the pure $SU(3)$ theory}

Now, finally, let us exhibit some points of the Coulomb branch of the pure $SU(3)$ theory where
$m$-herds actually occur in spectral networks $\CW_\vartheta$.

In the pure $SU(3)$ theory, the curve $C$ is $\mathbb{CP}^1$ with two defects.  It is natural to view it topologically as the cylinder $\mathbb{R} \times S^{1}$.
Moreover, the spectral curve \eqref{eq:su3-spectral-curve} has 4 branch points.  Thus, the pictures of actual spectral networks in
this theory look much like the ``hypothetical'' spectral networks we considered in Figures \ref{fig:1-herd_motivation}, \ref{fig:3-herd_motivation}.

In particular, consider the parameters
\be\label{eq:MNpoint}
u_2 = -3, \quad u_3 = \frac{95}{10}
\ee
(in the notation of \eqref{eq:su3-spectral-curve}.)  At this point, in accordance with the discussion of Section \ref{sec:herds},
we consider two charges $\gamma$, $\gamma'$ supporting BPS hypermultiplets,
represented simply by paths connecting pairs of branch points across the cylinder,
as in the left side of Figure \ref{fig:3-herd_motivation}.
  In particular they have $\langle \gamma, \gamma' \rangle = 3$.
By numerically computing the appropriate contour integrals we find that these charges have
$Z_\gamma = 7.244 - 9.083 i$, $Z_{\gamma'} = 20.980 - 40.148 i$.

Now, our proposal in Section \ref{sec:herds} was that when we have two such hypermultiplets, there will be a wall of marginal stability in
the Coulomb branch when $Z_\gamma$ and $Z_{\gamma'}$ become aligned, and on one side of that wall, the spectral network at the phase
$\vartheta = \arg Z_{\gamma + \gamma'}$ will contain a 3-herd.
So, we plot the spectral network at phase $\vartheta = \arg Z_{\gamma + \gamma'}$, and find
Figure \ref{fig:3-herd-in-su3}.   Comparing with Figure \ref{fig:herds}, we see that
the two-way streets in this network make up a 3-herd as desired.\footnote{In particular, our point \eqref{eq:MNpoint} is on the
side of the wall where the 3-herd exists.  The wall of marginal stability where the 3-herd disappears
can be reached by moving $u_3$ in the negative real direction.}

\begin{figure}[h]
\begin{center}
\includegraphics[width=0.45\textwidth]{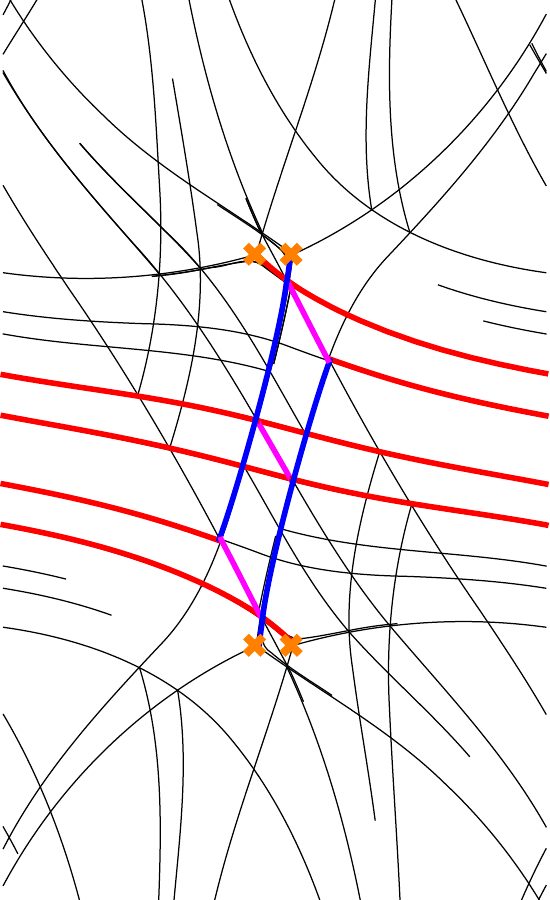}
\caption{The spectral network $\CW_\vartheta$
which occurs in the pure $SU(3)$ theory at the point \eqref{eq:MNpoint} of the Coulomb branch.
The phase $\vartheta$ has been chosen very close to the critical phase $\vartheta = \arg Z_{\gamma + \gamma'}$.
Here we represent the cylinder $C$ as the periodically identified plane, i.e., the left and right sides of the figure
should be identified.  Streets which become two-way
at $\vartheta = \arg Z_{\gamma + \gamma'}$ are shown in thick red, blue
and fuchsia.  We do not show the whole network but only a cutoff version of it, as described in \cite{GMN5}.\label{fig:3-herd-in-su3}}
\end{center}
\end{figure}

Moving $u_3$ in the positive real direction,
we have similarly found a $4$-herd, a $5$-herd and a $6$-herd.  It is
natural to conjecture that one can similarly obtain $m$-herds for any $m$
in this way. Of course, at a fixed point in the Coulomb branch it is
in general possible that there could be $m$-herds for many different
values of $m$ at different values of $\vartheta$.

In any case, the existence of $3$-herds in the pure $SU(3)$ theory is already
enough to show that the analysis of the last few sections is not only a
theoretical exercise:  the wild BPS degeneracies we found there indeed occur
in the $\N=2$ supersymmetric pure $SU(3)$ Yang-Mills theory!

\section{Wild regions for pure $SU(3)$ theory from wall-crossing} \label{sec:wc-wild-SU3}

In the previous section we exhibited an example of a class of spectral networks
that lead to the $m$-wild degeneracies of slope $(1,1)$.
An explicit point on the Coulomb branch of the pure $SU(3)$ theory
which produces such a spectral network for $m=3$ was given in
equation (\ref{eq:MNpoint}) above.

In the present section we start anew, and use wall crossing
and quiver techniques to give an alternative demonstration
that wild degeneracies exist on the Coulomb branch of the pure $SU(3)$ theory.

\subsection{Strong Coupling Regime of the Pure $SU(3)$ Theory} \label{subsec:strong-cplg}

The spectral curve $\Sigma$ of pure $SU(3)$ SYM theory is
\be \label{eq:su3-spectral-curve}
	\lambda^{3} -\frac{u_{2}}{z^{2}}\,\lambda +\left( \frac{1}{z^{2}}+\frac{u_{3}}{z^{3}}+\frac{1}{z^{4}} \right)=0.
\ee
It is a branched three-sheeted covering of the cylinder $C$, with six ramification points.
There are four branch points corresponding to two-cycles of $S_{3}$, and there are also ramifications at the irregular singularities at $0,\infty$, with associated permutations of the sheets given by three-cycles.

In the strong coupling region, i.e. at small values of the moduli $u_2$, $u_3$, the BPS spectrum is finite; so the spectral network evolves in a rather simple fashion.
As a concrete example we choose $u_2=0.7$, $u_3=0.4i$; then varying $\vartheta$ from 0 to $\pi$ we encounter six degenerate networks containing
finite webs, which are depicted in Figure \ref{fig:1}.\\
\begin{figure}[htbp]
\begin{center}
\includegraphics[width=0.26\textwidth]{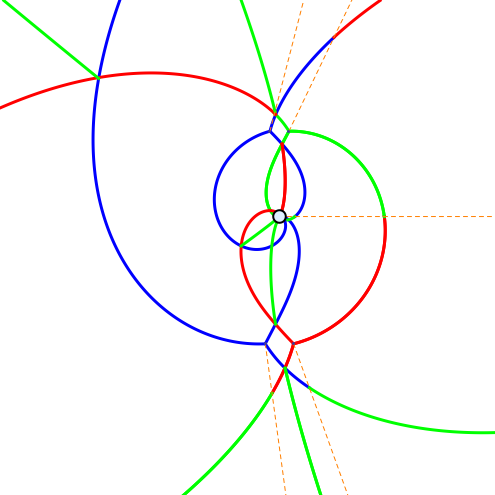}\includegraphics[width=0.26\textwidth]{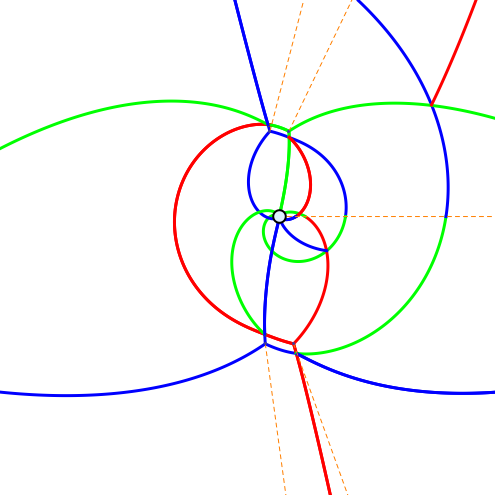}\includegraphics[width=0.26\textwidth]{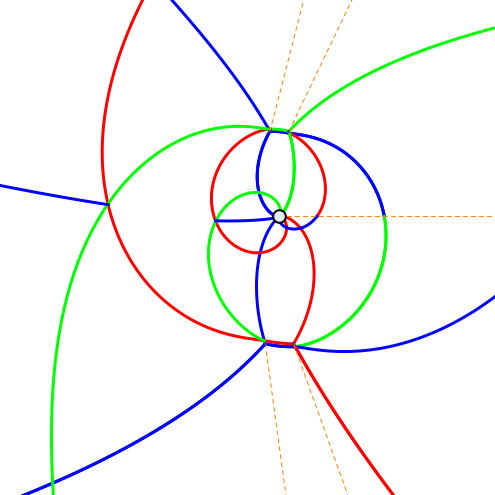}\\
\includegraphics[width=0.26\textwidth]{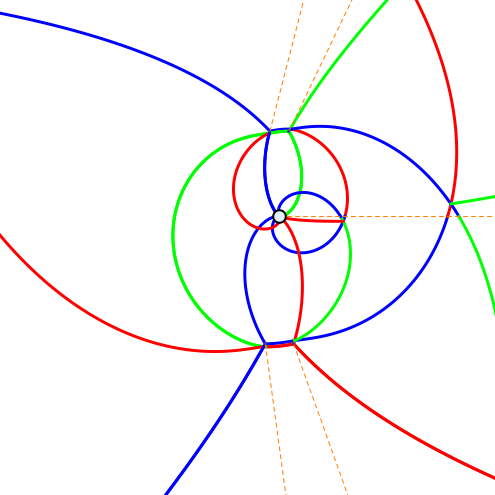}\includegraphics[width=0.26\textwidth]{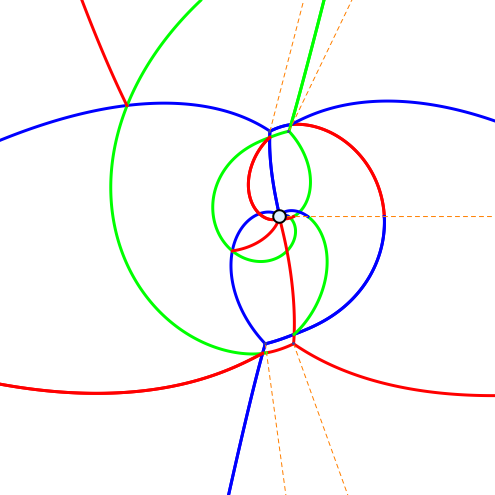}\includegraphics[width=0.26\textwidth]{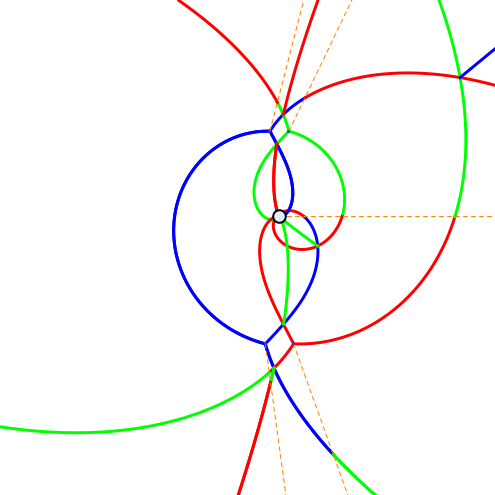}
\caption{The six hypermultiplets in the strong coupling chamber: from the top left, the flips corresponding to $\gamma_{1},\gamma_{2},\gamma_{1}+\gamma_{3},\gamma_{2}+\gamma_{4},\gamma_{3},\gamma_{4}$. $Arg \, Z_{\gamma_{1}}<Arg \, Z_{\gamma_{2}}<Arg \, Z_{\gamma_{3}}<Arg \, Z_{\gamma_{4}}$.
Here we represent the cylinder $C$ as the punctured plane.\label{fig:1}}
\end{center}
\end{figure}
We assign to these cycles the charges $\gamma_1$, $\gamma_2$, $\gamma_2+\gamma_4$, $\gamma_1+\gamma_3$, $\gamma_3$, $\gamma_4$, Figure \ref{fig:1-bis} shows the charge assignments with the basis cycles resolved.\\
\begin{figure}[htbp]
\begin{center}
\includegraphics[width=0.4\textwidth]{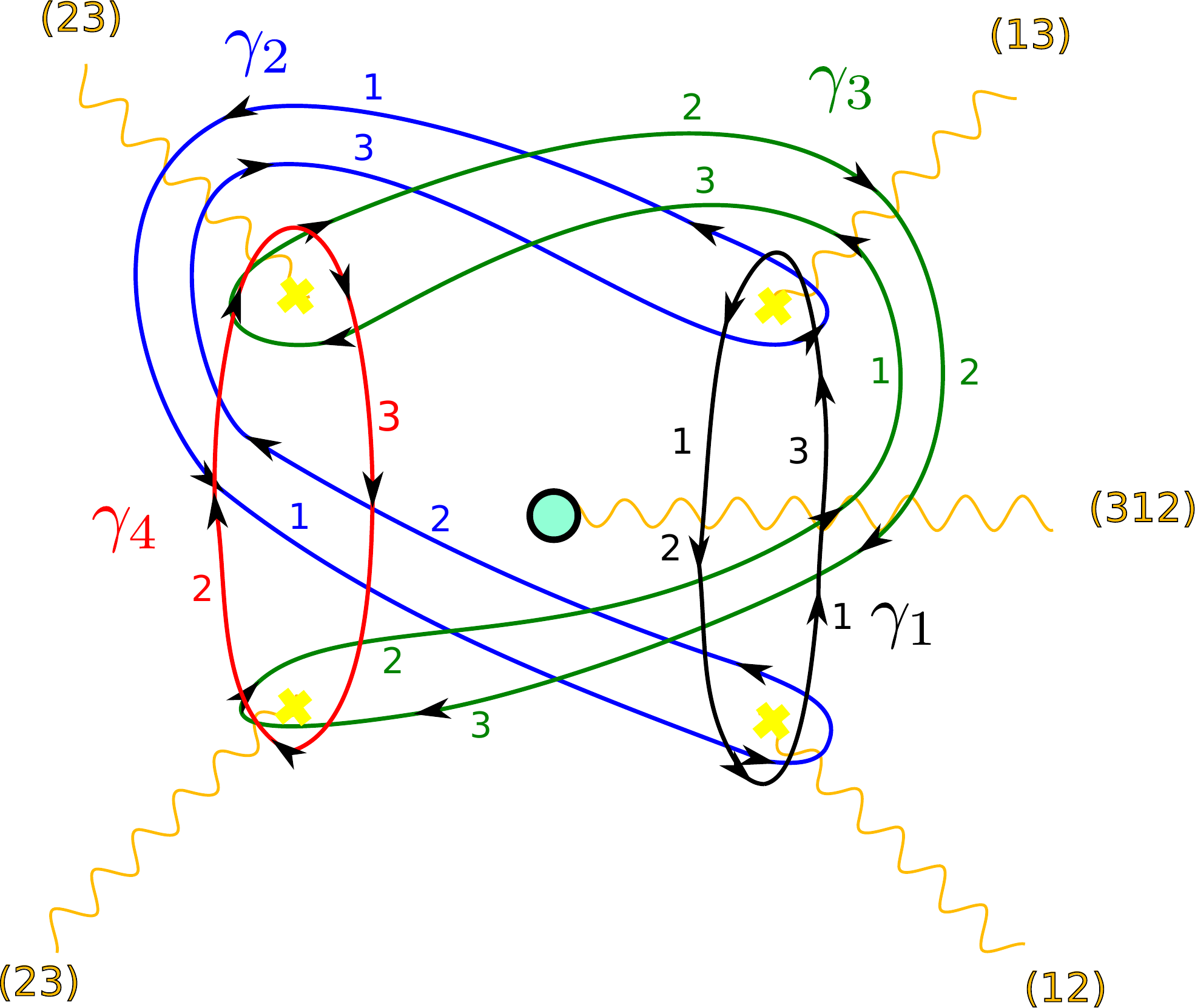}
\caption{The labeling of finite networks. We only show the four basis hypermultiplets $\gamma_{1},\dots,\gamma_{4}$. The trivialization is indicated by the branch cuts (wavy lines, the associated permutations of sheets are also specified), the sheets on which the cycles run are indicated explicitly. Here we represent the cylinder $C$ as the punctured plane.\label{fig:1-bis}}
\end{center}
\end{figure}
The mutual intersections of cycles can be read off Figure \ref{fig:1-bis}, and are summarized by the following pairing matrix $P_{ij}=\langle\gamma_{i},\gamma_{j}\rangle$
\be
P=\left(\begin{array}{cccc}
0  & -2 & 1 & 0 \\
2  & 0 & -2 & 1 \\
-1  & 2 & 0 & -2 \\
0  & -1 & 2 & 0
\end{array}\right).\label{eq:pairing-matrix}
\ee
For a video showing the evolution of the spectral network through an angle of $\pi$, see \cite{video-strong-cplg}.

\subsection{A path on the Coulomb branch}\label{subsec:PathCoulomb}
We now consider a straight path on the Coulomb branch of the pure $SU(3)$ theory, parameterized by
\be
\begin{array}{c}
 u_2(t)=(u_2^{(f)}-u_2^{(i)})t+u_2^{(i)}, \\
 u_3(t)=(u_3^{(f)}-u_3^{(i)})t+u_3^{(i)}, \label{eq:path}
\end{array}
\ee
with $t\in\left[0,1\right]$ and
\be
\begin{array}{c}
 u_{2}^{(i)}=0.7,\quad u_{3}^{(i)}=0.4 i\quad \mbox{(strong coupling chamber)}\\
 u_{2}^{(f)}=0.56-0.75i,\quad u_{3}^{(f)}=2+1.52i\quad \mbox{(wild chamber)}
\end{array}
\ee
As discussed above, the spectrum in the strong coupling chamber is known (see for example \cite{GMN5}) to consist of six hypermultiplets. As we move along our path we cross several walls of marginal stability, with consequent jumps of the BPS spectrum. In order to study the evolution of the BPS spectrum, we must track explicitly the evolution of central charges. Variation of the moduli also induces changes in the geometry of the Seiberg-Witten curve $\Sigma$, therefore in computing the central charges at different points one must take care of deforming the cycles in a way compatible with the flat parallel transport of the local system $\widehat{\Gamma} \to \CB^{*}$.
Starting from the point studied in Section \ref{subsec:strong-cplg}, the evolution of branch points can be tracked on $C$, as shown in Figure \ref{fig:BPflow}.
\begin{figure}[htbp]
\begin{center}
\includegraphics[width=0.26\textwidth]{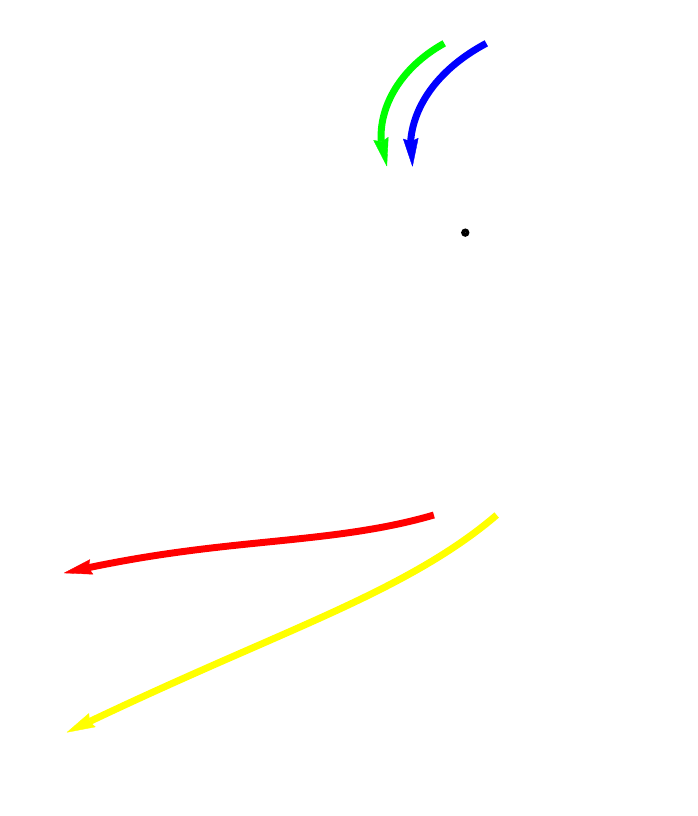}
\caption{The picture shows the projection of the Seiberg-Witten curve on $C$. The four arrows show the progression of the four branch points as we vary $u_{2,3}$ along the path of equation (\ref{eq:path}). The black dot is the singularity at $z=0$. The central charges have been computed numerically using Mathematica
and, as a check, they evolve smoothly along the path (see \cite{video-charge-motion}). \label{fig:BPflow}}
\end{center}
\end{figure}

\subsection{Cohorts in pure $SU(3)$}\label{sec:su3coh}
As the moduli cross walls of marginal stability, the BPS spectrum jumps according to a regular pattern. At a wall MS$(\gamma,\gamma')$ for two populated hypermultiplets, with $\langle\gamma,\gamma'\rangle=m>0$, the KS wall crossing formula predicts
\be\label{eq:cohWCF}
	\CK_{\gamma'}\CK_{\gamma} = : \prod^{
}_{a,b\geq 0}\CK^{\Omega(a \gamma+b\gamma')}_{a \gamma+b\gamma'} :
\ee
where the normal ordering symbols $:\ \ :$ on the right hand side indicate that factors are ordered according to the phases of central charges, phase-ordering on the right hand side is the opposite of that on the left-hand side.
We refer to the spectrum on the right hand side as the \emph{cohort} generated by $\gamma,\gamma'$, and will occasionally denote it by ${\cal C}_{m}(\gamma,\gamma')$. An important fact to note about cohorts, following from the linearity of the central charge homomorphism, is that
\be
	\arg Z_{\gamma'} < \arg Z_{a \gamma+ b \gamma'} < \arg Z_{\gamma},\qquad \forall a,b\geq 0 \label{eq:coh-boundaries}
\ee
for moduli corresponding to the right hand side of (\ref{eq:cohWCF}).

Quite generally, the wall-crossing of two hypermultiplet states with pairing $m$ can be analyzed in terms of the corresponding $m$-Kronecker quiver (see \cite{DENEF,CV}), from this perspective the degeneracies of an $m$-cohort correspond to Euler characteristics of moduli spaces of (semi)stable quiver representations.

Cohort structures ${\cal C}_{m}$ with $m=1,2$ are known exactly. Examples of such cohorts have been encountered a number of times in the literature \cite{KS,GMN1,GMN2,GMN3,GROSS}, and are common in $A_{1}$ theories of class ${\cal S}$. For later convenience, we recall the structure of the $m=2$ cohort in figure \ref{fig:m_2-cohort}.

\begin{figure}[htbp]
\begin{center}
\includegraphics[width=0.40\textwidth]{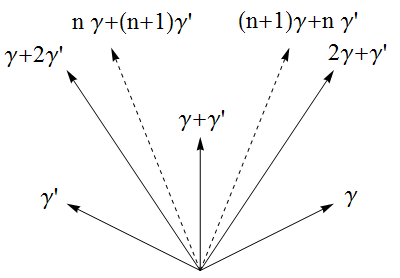}
\caption{The populated BPS rays of the $m=2$ cohort (a schematic depiction of central charges in the complex plane). The state with charge $\gamma_{}+\gamma'_{}$ is a BPS vectormultiplet ($\Omega=-2$), surrounded by two infinite towers of hypermultiplets ($\Omega=1$), represented by dashed arrows.\label{fig:m_2-cohort}}
\end{center}
\end{figure}

As we start moving along our path on the Coulomb branch, from $t=0$ to $t=1$, several cohorts are created. The first wall of marginal stability is ${\rm MS}(\gamma_{1}+\gamma_{3},\gamma_{2}+\gamma_{4})$, with $\langle \gamma_{1}+\gamma_{3},\gamma_{2}+\gamma_{4} \rangle = 2$, thus a ${\cal C}_{2}$ cohort is generated. As we proceed along the path, other BPS states undergo wall-crossing, generating  other ${\cal C}_{2}$ cohorts. As shown in Fig.~\ref{fig:spectrum-evo}, first $\gamma_{1}$ generates a cohort with $\gamma_{2}$, then $\gamma_{3},\gamma_{4}$ generate a similar cohort, finally another $m=2$ cohort is generated by wall crossing of $\gamma_{1}$ and $\gamma_{2}+\gamma_{4}$. At this point, \emph{i.e.} within a chamber around $t=0.95$, the spectrum can be schematically summarized as the union of four ${\cal C}_{2}$ cohorts
\begin{align}
	{\cal C}_{2}(\gamma_{2}+\gamma_{4},\gamma_{1}+\gamma_{3}) \cup %
	{\cal C}_{2}(\gamma_{2},\gamma_{1}) \cup %
	{\cal C}_{2}(\gamma_{4},\gamma_{3}) \cup %
	{\cal C}_{2}(\gamma_{1},\gamma_{2}+\gamma_{4}) \label{eq:cohorts}
\end{align}
consisting of four vectormultiplets, and infinite towers of hypermultiplets.

\begin{figure}[htbp]
\begin{center}
\includegraphics[width=0.28\textwidth]{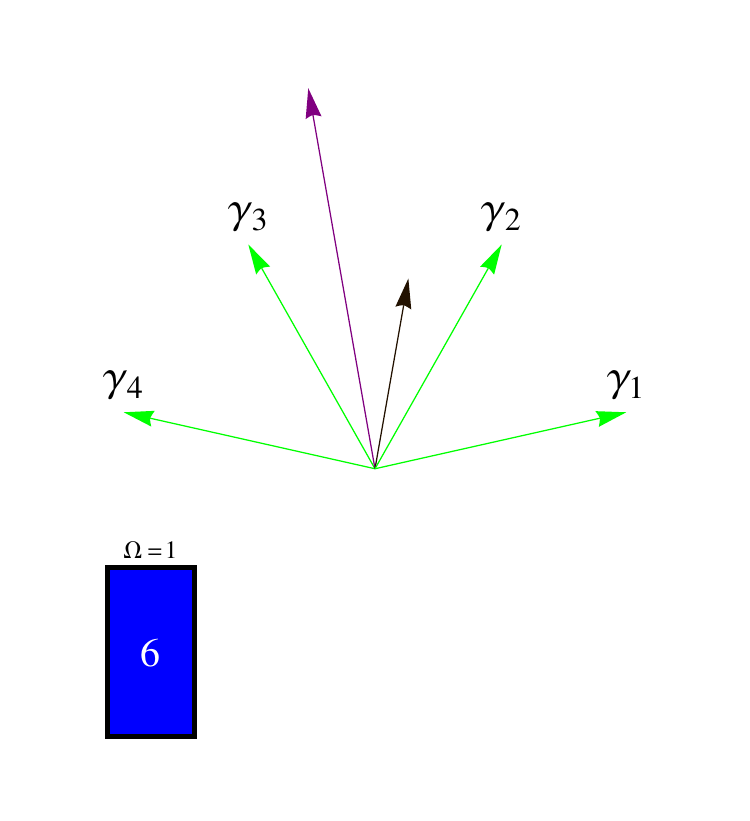}\includegraphics[width=0.28\textwidth]{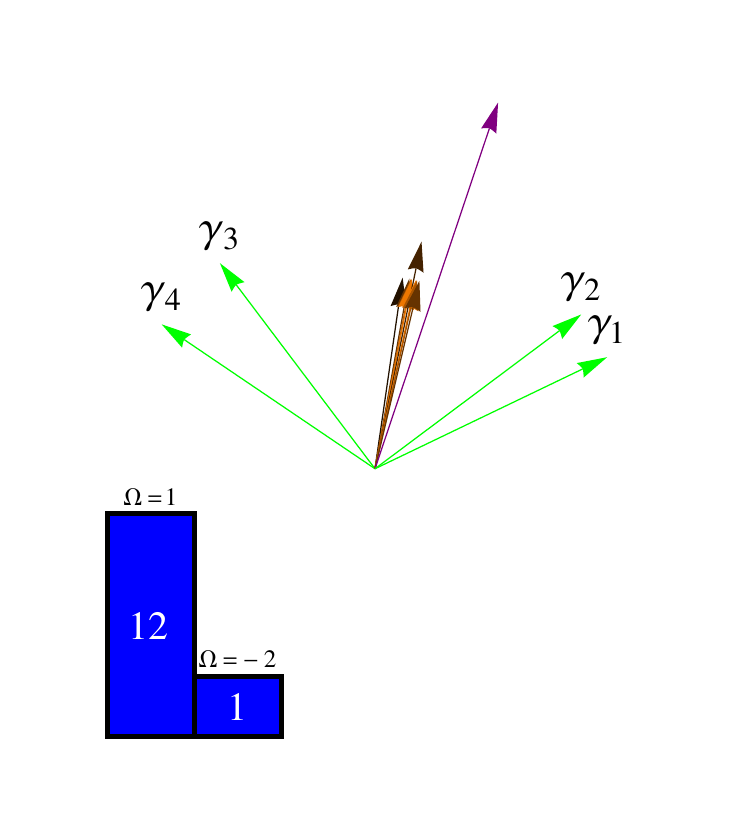}\includegraphics[width=0.28\textwidth]{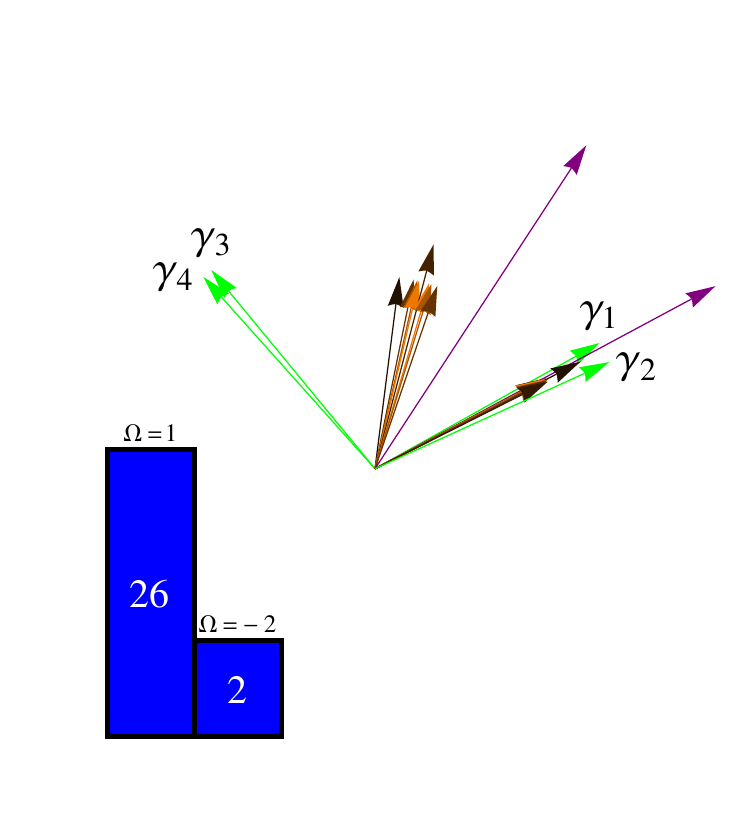}\\
\includegraphics[width=0.28\textwidth]{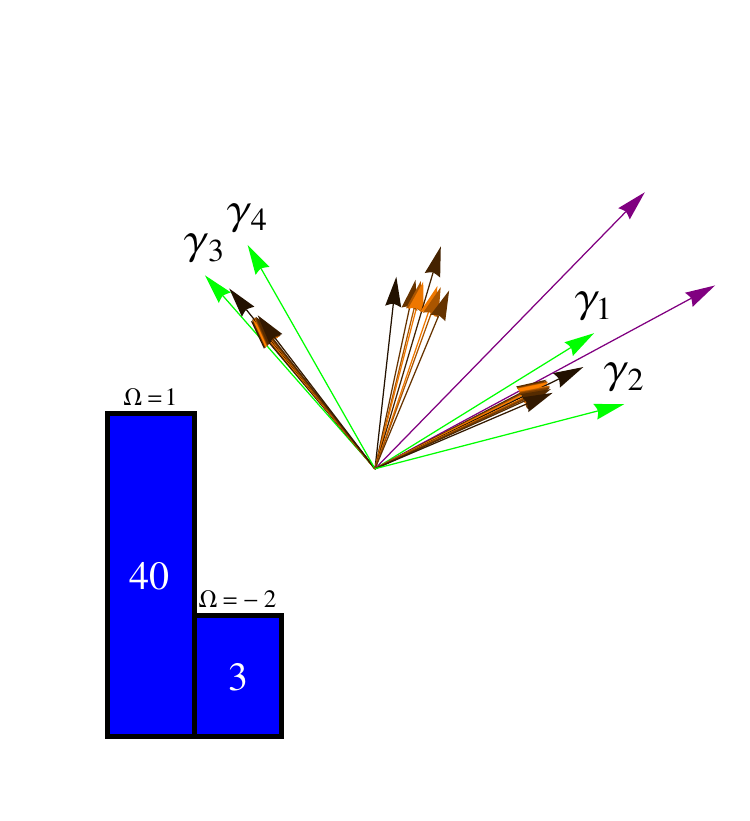}\includegraphics[width=0.28\textwidth]{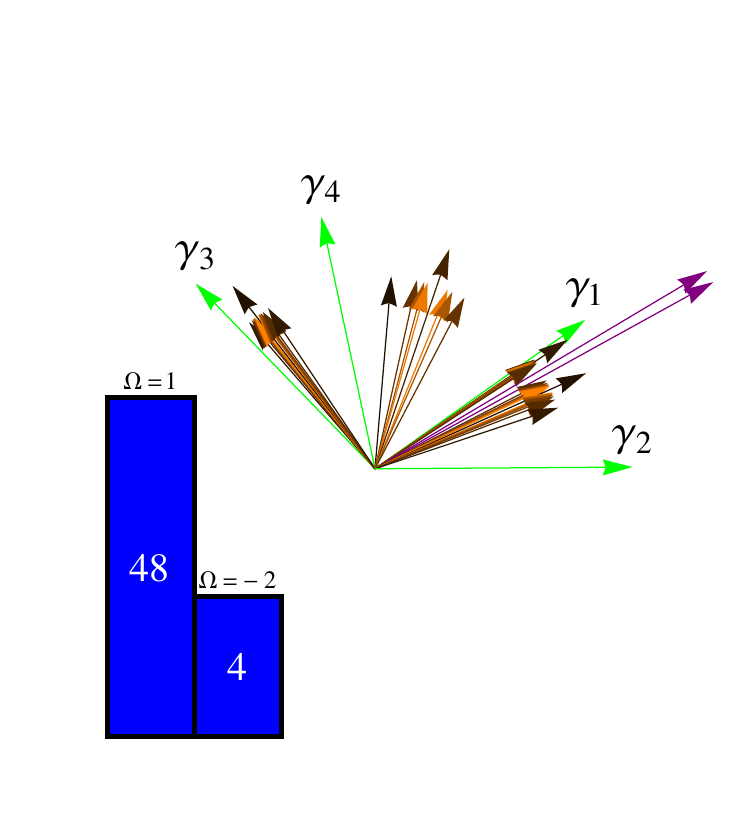}\includegraphics[width=0.28\textwidth]{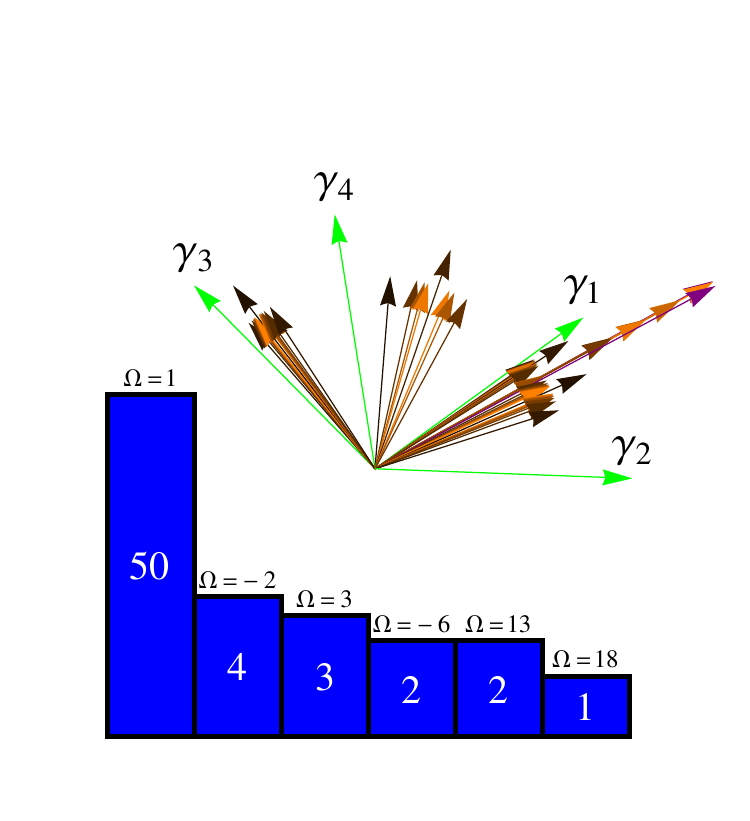}
\caption{The evolution of the spectrum is illustrated. Green arrows represent the basis hypermultiplets, the purple arrows are $\gamma_{2}+\gamma_{4}$ and $2\gamma_{1}+\gamma_{2}$, the two states that generate the $m=3$ cohort. For other charges, increasing length denotes higher $|\Omega|$ and lighter shades denote larger charges. First picture: the strong coupling chamber. Second picture: the states $\gamma_{1}+\gamma_{3}$ and $\gamma_{2}+\gamma_{4}$ have crossed and created a ${\cal C}_2$ cohort. Third picture: $\gamma_{2}$ and $\gamma_{1}$ cross and create another cohort. Fourth picture: the cohort generated by $\gamma_{3},\gamma_{4}$. Fifth picture: $\gamma_{2}+\gamma_{4}$ and $\gamma_{1}$ have crossed and created a cohort. In the sixth picture $\gamma_{2}+\gamma_{4}$ and $2\gamma_{1}+\gamma_{2}$ have crossed, generating wild degeneracies.\label{fig:spectrum-evo}. For a video showing the full evolution of the spectrum along our path, see \cite{video-spectrum-charges}.}

\end{center}
\end{figure}

Proceeding further along our path, we encounter another wall of marginal stability: $\gamma_{2}+\gamma_{4}$ undergoes wall-crossing with $2\gamma_{1}+\gamma_{2}$ generating a new cohort with $m=3$. This phenomenon has not been studied before, and deserves a detailed analysis. We anticipate here that this cohort contains distinctive new features, such as a wealth of higher spin states and a cone of densely populated BPS rays.

It is worth stressing that merely finding a point on the Coulomb branch where $Z_{\gamma_{2}+\gamma_{4}}$ approaches $Z_{2\gamma_{1}+\gamma_{2}}$ is hardly sufficient to claim that such wall-crossing happens. In addition one must make sure that such rays are populated. This is certainly the case in our example. Another important requirement is the absence of populated rays between $Z_{\gamma_{2}+\gamma_{4}}$ and $Z_{2\gamma_{1}+\gamma_{2}}$, as we approach their mutual wall of marginal stability. We claim that there aren't any, based on two independent facts. First, at values of the moduli just before $MS(\gamma_{2}+\gamma_{4},2\gamma_{1}+\gamma_{2})$, the spectral network shows simple, smooth evolution for  $\arg Z_{2\gamma_{1}+\gamma_{2}}<\vartheta<\arg Z_{\gamma_{2}+\gamma_{4}}$, see \cite{video-before-wall-focus}. Second, our explicit path on the Coulomb branch -- together with property (\ref{eq:coh-boundaries}) of cohorts -- guarantees that all boundstates created so far fall outside of the cone
bounded by the central charges of $2\gamma_{1}+\gamma_{2},\,\gamma_{2}+\gamma_{4}$: indeed if a populated boundstate were between $\gamma_{2}+\gamma_{4}$ and $2\gamma_{1}+\gamma_{2}$, it would have to be one of the following
\begin{itemize}
\item a boundstate of $2\gamma_{1}+\gamma_{2}$ with a charge counterclockwise of $\gamma_{2}+\gamma_{4}$
\item a boundstate of $\gamma_{2}+\gamma_{4}$ with a charge clockwise of $2\gamma_{1}+\gamma_{2}$
\item a boundstate of two charges lying respectively counterclockwise of $\gamma_{2}+\gamma_{4}$ and clockwise of $2\gamma_{1}+\gamma_{2}$
\item a boundstate due to one of the antiparticles
\end{itemize}
All these possibilities are clearly ruled out by our explicit choice of path. Our analysis relies on the numerical evaluation of central charges at several points on the Coulomb branch, video \cite{video-charge-motion} shows the smooth evolution of central charges of basis hypermultiplets along the path, ensuring that integration contours have been adapted suitably. Another important check is the following: at fixed $u_{2},\,u_{3}$ we tune the spectral network to the phase of central charges (as predicted numerically), and we check that there are indeed degenerate networks.

\subsection{Wall-crossings with intersections $m>3$}\label{subsec:higher-m}
So far we have encountered an MS wall of two hypermultiplets with intersection pairing $3$, but there is nothing special about $m=3$. The path proposed in (\ref{eq:path}) can be extended through walls of marginal stability with $m=4,5$, and higher. The strategy is simply to look for a direction on the Coulomb branch, along which the ray $\gamma_{2}+\gamma_{4}$ sweeps across the infinite tower of hypermultiplets with charges $(n+1)\gamma_{1}+n \gamma_{2}$.

For example, moving along a straight line from $(u_{2}^{(f)},u_{3}^{(f)})$ to
\be
u_{2}^{(4)} =0.56-0.75 i , \qquad u_{3}^{(4)} = 2.00 +1.99 i
\ee
induces wall-crossing of $\gamma_{2}+\gamma_{4}$ with $3\gamma_{1}+2\gamma_{2}$, with intersection $\langle\gamma_{2}+\gamma_{4},3\gamma_{1}+2\gamma_{2}\rangle = 4$. In this chamber the spectrum gains a new $m=4$ cohort, described by the $4$-Kronecker quiver.

Proceeding further, along a straight segment, to
\be
u_{2}^{(5)} =0.56-0.75 i , \qquad u_{3}^{(5)} = 2.00 +2.52 i
\ee
we cross the marginal stability wall of $\gamma_{2}+\gamma_{4}$ and $4\gamma_{1}+3\gamma_{2}$, with intersection $\langle\gamma_{2}+\gamma_{4},4\gamma_{1}+3\gamma_{2}\rangle = 5$ generating an $m=5$ cohort.

In the same spirit, we have checked numerically that there is a path along which
$\gamma_{2}+\gamma_{4}$ crosses all hypermultiplets with charges $(m-1)\gamma_{1}+(m-2)\gamma_{2}$, with pairings
\be
   \langle \gamma_{2}+\gamma_{4},(m-1)\gamma_{1}+(m-2)\gamma_{2} \rangle = m
\ee
hence generating an infinite tower of cohorts. The situation gets very complicated, as these cohorts will widen and start overlapping with each other, inducing further wild wall crossing.\footnote{As explained in the next section, the spectrum is best studied via the \emph{spectrum generator} technique introduced in \cite{GMN2}. This technique is straightforwardly applicable whenever comparing two points on the Coulomb branch, such that the lattice basis vectors have corresponding central charges all contained within a half-plane. When instead one or more of the central charges exit the half-plane, one needs to account for that by suitably modifying the spectrum generator. While moving from strong coupling into these \emph{wilder} regions, we actually incur in such a situation.}. It is worth stressing that, by the same reasoning outlined for the wall-crossing of $\gamma_{2}+\gamma_{4}$ with $2\gamma_{1}+\gamma_{2}$, there are no populated states between $\gamma_{2}+\gamma_{4}$ and $(m-1)\gamma_{1}+(m-2)\gamma_{2}$ 
immediately before the point where they cross.  This crucial fact guarantees that in this 
region of the Coulomb branch $m$-cohorts are generated, for arbitrarily high $m$.

Finally, we remark that a natural question arises as to whether 
analogous wall-crossings happen where the integer $m$ is negative. 
In fact, there is a simple physical argument that such wall-crossings cannot 
happen on Coulomb branches of physical theories, it goes as follows.
Let us consider two charges $\gamma_{1},\,\gamma_{2}$ with 
$\langle\gamma_{1},\gamma_{2}\rangle<0$; we would like to investigate
whether there could be a chamber of the Coulomb branch, bounded by 
$MS(\gamma_{1},\gamma_{2})$, where 
\begin{itemize}
\item ${\rm{arg}}Z_{\gamma_{2}}>{\rm{arg}}Z_{\gamma_{1}}$
\item $\Omega(\gamma)=1$ for $\gamma\in\{\pm\gamma_{1},\pm\gamma_{2}\}$
\item $\Omega(\gamma)=0$ for all other combinations $\gamma=a\gamma_{1}+b\gamma_{2}$.
\end{itemize}
If these conditions were realized, we would be in a situation in which the spectrum 
generator (defined below eq. (\ref{eq:phase-ordered-product})) contains a factor 
$\CK_{\gamma_{2}}\CK_{\gamma_{1}}$, and we stress that there would be
no other $\CK$ factors between $\CK_{\gamma_{2}}$ and $\CK_{\gamma_{1}}$.\\
We claim that this cannot happen: under sufficiently general conditions, 
near a wall $MS(\gamma_{1},\gamma_{2})$ we expect Denef's multicenter equations 
(for the case under consideration, they are reported below in (\ref{eq:Denef})) to provide a 
reliable description of the boundstates.
It is immediately evident from such description  that, in the case of negative 
$\langle\gamma_{1},\gamma_{2}\rangle=m$, on the side of $MS(\gamma_{1},\gamma_{2})$ 
where ${\rm{arg}}Z_{\gamma_{2}}>{\rm{arg}}Z_{\gamma_{1}}$, there will be stable 
boundstates of $\gamma_{1}$ with $\gamma_{2}$ populating rays between those of 
$Z_{\gamma_{1}}$ and $Z_{\gamma_{2}}$. In particular, inside the spectrum generator, 
the factors $\CK_{\gamma_{2}}$ and $\CK_{\gamma_{1}}$ are \emph{necessarily} separated 
by other factors $\CK_{a\gamma_{1}+b\gamma_{2}}$, for $a,b> 0$, violating the conditions
formulated above.

Nevertheless, it makes sense to ask what the prediction of the KSWCF would be. 
To learn something interesting, it is actually sufficient to consider the motivic version of the primitive 
WCF (see \cite{Diaconescu-Moore}). From such formula, the protected spin character 
(see appendix \ref{app:PSC-tables}) associated to $\gamma_{1}+\gamma_{2}$ has the 
simple expression
\be
	\Omega(\gamma_{1}+\gamma_{2};y):=Tr_{\mathfrak{h}_{m}} (y)^{2J_{3}}(-y)^{2I_{3}} =\frac{y^{m}-y^{-m}}{y-y^{-1}}
\ee
corresponding (not uniquely)\footnote{Albeit necessarily involving exotic representations.} to the following exotic representations of $so(3)\oplus su(2)_{R}$
\be
	\mathfrak{h}_{m} = %
	\left\{\begin{array}{lr} 
	\left(\frac{1}{2},\frac{1}{2}\right)\oplus(1,0) & m=-1 \\
	\left(0,\frac{1}{2}\right) & m=-2 \\
	\left( \frac{-m-2}{2},\frac{1}{2} \right) \oplus  \left( \frac{-m-3}{2},0 \right) & m\leq-3
	\end{array}\right..
\ee
Since the no-exotics theorem is in fact fairly well established for pure $SU(K)$ gauge theories 
\cite{DIACONESCU}, this further supports the argument that such wall-crossings cannot occur on the Coulomb branch.

\section{Some Numerical Checks on the  $m=3$ Wild Spectrum }
\label{sec:DegsFSWCF}

The discussion of Section \ref{sec:wc-wild-SU3} is sufficient to
prove that there are wild degeneracies on the Coulomb branch
of the pure $SU(3)$ theory. However, since this phenomenon is
somewhat novel, we have checked the results using
the ``spectrum generator'' in some relevant regions of the
Coulomb branch.  This section explains those checks.

\subsection{The spectrum generator technique} \label{sec:specgen}

According to the KSWCF, the phase-ordered product
\be\label{eq:phase-ordered-product}
	A(\sphericalangle)=\ : \prod_{{{\gamma}},\,{{\rm  arg}Z_{\gamma} \in \sphericalangle}} \CK_{\gamma}^{\Omega(\gamma)} :
\ee
is invariant across walls of marginal stability provided no occupied BPS rays cross into or out
of the angular sector $\sphericalangle$. Considering an angle of $\pi$ corresponds to a choice of the ``half plane of particles''. Once this choice is made, $A(\pi)$ is called a \footnote{Several equivalent choices are related by how one chooses the half-plane in the complex plane of central charges.} \emph{spectrum generator} and denoted $\mathbb{S}$ \cite{GMN2}.

The idea of the ``spectrum generator technique'' is that if - through some means or other -   one can compute $A(\pi)$, then, by factorization one
can deduce the spectrum (after computing the phase ordering of the $Z_\gamma$ at that point).  For example
in \cite{GMN2} an algorithm is given for computing $A(\pi)$ without an \emph{a priori} knowledge of the spectrum.
Here our strategy will be a little different. We will derive the spectrum generator in the strong coupling chamber, where the spectrum can
be easily read off from the spectral network or from quiver techniques. We then use wall-crossing to argue that $A(\pi)$
is unchanged along a particular path in the Coulomb branch (described in Section \ref{sec:wc-wild-SU3})
 to the wild region. Then we factorize the spectrum generator at points along that path.

An effective technique for factorizing $\mathbb{S}$ is the following. Let $\{\gamma_{i}\}_{i=1,\dots,k}$ be a basis for the lattice of charges $\Gamma$, and let $\gamma =\sum{a_{i}\gamma_{i}}$, with $a_{i}>0$. Define the height $|\gamma|:=\sum_{i} a_{i}$, and $\mathbb{S}^{(r)}=: \prod_{\gamma, |\gamma|\leq r} \CK_{\gamma}^{\Omega(\gamma)}  :$\footnote{Recall that the ordering depends crucially on the position $u$ on the Coulomb branch, hence we should really write $\mathbb{S}^{(r)}(u)$. To lighten the notation we do not indicate the $u$-dependence.}.
The full spectrum generator $\mathbb{S}$ can then be factorized by studying its action on the basis formal variables\footnote{i.e., it is sufficient to work with formal variables corresponding to a choice of simple roots for the lattice of charges. The choice of simple roots must be consistent with the choice of half-plane that comes with the spectrum generator.} $Y_{\gamma_{i}}$ for increasing values of $r$, by employing
\be \label{eq:decomposition-technique}
Y_{-\gamma_{i}}(\mathbb{S}-{\tilde{\mathbb{S}}^{(r)})}Y_{\gamma_{i}}=-\sum\lm_{|\gamma'|=r+1} \langle \gamma_{i},\gamma' \rangle\Omega(\gamma') Y_{\gamma'}+\ldots
\ee
where $\tilde{\mathbb{S}}$ represents the factorization of the spectrum generator under study. The ellipses contain terms with $Y_{\gamma}, \lvert \gamma \rvert > r+1$.

\subsection{Factorizing the spectrum generator  } \label{sec:wild-SU3}

The spectrum in the strong coupling region can be obtained via spectral network techniques, as discussed in Section \ref{subsec:strong-cplg}.
According to the results presented there, the spectrum generator is
\be
	\mathbb{S} = \CK_{\gamma_{4}} \CK_{\gamma_{3}}  \CK_{\gamma_{2}+\gamma_{4}}\CK_{\gamma_{1}+\gamma_{3}} \CK_{\gamma_{2}} \CK_{\gamma_{1}},
\ee
in agreement with \cite{GMN5,CV}.

We now fix a point on our path
\be
 u_{2}= 0.56 - 0.73 i,\quad u_{3}=1.94 + 1.49 i,
\ee
corresponding to the situation exhibited in (\ref{eq:cohorts}) immediately before the wall ${\rm MS}(\gamma_{2}+\gamma_{4},2\gamma_{1}+\gamma_{2})$.  The central charges corresponding to the simple roots are
\be
\begin{array}{ll}
 Z_{\gamma_{1}}=8.42972 + 6.00549 i &  Z_{\gamma_{2}}= 4.83278 - 0.0226871 i\\
 Z_{\gamma_{3}}=-7.30679 +  7.50651 i &  Z_{\gamma_{4}}= -0.504898 + 2.53401 i\, ,
\end{array}
\ee
the factorization of the spectrum generator up to $|\gamma|=21$ reads\footnote{Color code: The
 factors in blue come from the hypermultiplets of
    the strong coupling chamber. The factors in red come from  vectormultiplets. The remaining
    factors in black are hypermultiplets created by the wall-crossing from the strong coupling chamber.}
\be
\begin{split}
& {\color{Blue}{\CK_{\gamma _3}}}\CK_{2 \gamma _3+\gamma _4}\CK_{3 \gamma _3+2 \gamma _4}\CK_{4 \gamma _3+3 \gamma _4}\CK_{5 \gamma _3+4 \gamma _4}\CK_{6 \gamma _3+5 \gamma _4}\CK_{7 \gamma _3+6 \gamma _4}\CK_{8 \gamma _3+7 \gamma _4}\CK_{9 \gamma _3+8 \gamma _4} \\
& \CK_{10 \gamma _3+9 \gamma _4}\CK_{11 \gamma _3+10 \gamma _4}{\color{Red}{\CK_{\gamma _3+\gamma _4}^{-2}}}\CK_{10 \gamma _3+11 \gamma _4}\CK_{9 \gamma _3+10 \gamma _4}\CK_{8 \gamma _3+9 \gamma _4}\CK_{7 \gamma _3+8 \gamma _4}\CK_{6 \gamma _3+7 \gamma _4} \\
& \CK_{5 \gamma _3+6 \gamma _4}\CK_{4 \gamma _3+5 \gamma _4}\CK_{3 \gamma _3+4 \gamma _4}\CK_{2 \gamma _3+3 \gamma _4}\CK_{\gamma _3+2 \gamma _4}{\color{Blue}{\CK_{\gamma _4}}\color{Blue}{\CK_{\gamma _1+\gamma _3}}}\CK_{2 \gamma _1+\gamma _2+2 \gamma _3+\gamma _4} \\
& \CK_{3 \gamma _1+2 \gamma _2+3 \gamma _3+2 \gamma _4}\CK_{4 \gamma _1+3 \gamma _2+4 \gamma _3+3 \gamma _4}\CK_{5 \gamma _1+4 \gamma _2+5 \gamma _3+4 \gamma _4}{\color{Red}{\CK_{\gamma _1+\gamma _2+\gamma _3+\gamma _4}^{-2}}}\CK_{4 \gamma _1+5 \gamma _2+4 \gamma _3+5 \gamma _4} \\
& \CK_{3 \gamma _1+4 \gamma _2+3 \gamma _3+4 \gamma _4}\CK_{2 \gamma _1+3 \gamma _2+2 \gamma _3+3 \gamma _4}\CK_{\gamma _1+2 \gamma _2+\gamma _3+2 \gamma _4}{\color{Blue}{\CK_{\gamma _1}}}\CK_{2 \gamma _1+\gamma _2+\gamma _4}\CK_{3 \gamma _1+2 \gamma _2+2 \gamma _4} \\
& \CK_{4 \gamma _1+3 \gamma _2+3 \gamma _4}\CK_{5 \gamma _1+4 \gamma _2+4 \gamma _4}\CK_{6 \gamma _1+5 \gamma _2+5 \gamma _4}\CK_{7 \gamma _1+6 \gamma _2+6 \gamma _4}{\color{Red}{\CK_{\gamma _1+\gamma _2+\gamma _4}^{-2}}}\CK_{6 \gamma _1+7 \gamma _2+7 \gamma _4} \\
& \CK_{5 \gamma _1+6 \gamma _2+6 \gamma _4}\CK_{4 \gamma _1+5 \gamma _2+5 \gamma _4}\CK_{3 \gamma _1+4 \gamma _2+4 \gamma _4}\CK_{2 \gamma _1+3 \gamma _2+3 \gamma _4}\CK_{\gamma _1+2 \gamma _2+2 \gamma _4}{\color{Blue}{\CK_{\gamma _2+\gamma _4} }}\\
& \CK_{2 \gamma _1+\gamma _2}\CK_{3 \gamma _1+2 \gamma _2}\CK_{4 \gamma _1+3 \gamma _2}\CK_{5 \gamma _1+4 \gamma _2} \CK_{6 \gamma _1+5 \gamma _2}\CK_{7 \gamma _1+6 \gamma _2}\CK_{8 \gamma _1+7 \gamma _2}\CK_{9 \gamma _1+8 \gamma _2} \\
& \CK_{10 \gamma _1+9 \gamma _2}\CK_{11 \gamma _1+10 \gamma _2}{\color{Red}{\CK_{\gamma _1+\gamma _2}^{-2}}}\CK_{10 \gamma _1+11 \gamma _2}\CK_{9 \gamma _1+10 \gamma _2}\CK_{8 \gamma _1+9 \gamma _2}\CK_{7 \gamma _1+8 \gamma _2}\CK_{6 \gamma _1+7 \gamma _2} \\
& \CK_{5 \gamma _1+6 \gamma _2}\CK_{4 \gamma _1+5 \gamma _2}\CK_{3 \gamma _1+4 \gamma _2}\CK_{2 \gamma _1+3 \gamma _2}\CK_{\gamma _1+2 \gamma _2}{\color{Blue}{\CK_{\gamma _2}}}\label{eq:before-wall}
\end{split}
\ee
The spectrum exhibits four $m=2$ cohorts, as expected from the discussion of Section \ref{sec:su3coh}: they include four vectormultiplets (with $\Omega=-2$), accompanied by infinite towers of hypermultiplets.

On the other side of the $m=3$ wall, at
\be
 u_{2}= 0.56 - 0.75 i,\, u_{3}=2.00 + 1.52 i,
\ee
central charges read
\be
\begin{array}{ll}
Z_{\gamma_{1}}=8.52337 + 6.18454 i &  Z_{\gamma_{2}}= 4.89813 - 0.18347  i\\
Z_{\gamma_{3}}=-7.43876 +  7.53531 i &  Z_{\gamma_{4}}= -0.410809 + 2.59321 i .
\end{array}
\ee
The spectrum generator, up to $|\gamma|=21$, is
\be
\begin{split}
& {\color{Blue}{\CK_{\gamma _3}}}\CK_{2 \gamma _3+\gamma _4}\CK_{3 \gamma _3+2 \gamma _4}\CK_{4 \gamma _3+3 \gamma _4}\CK_{5 \gamma _3+4 \gamma _4}\CK_{6 \gamma _3+5 \gamma _4}\CK_{7 \gamma _3+6 \gamma _4}\CK_{8 \gamma _3+7 \gamma _4}\CK_{9 \gamma _3+8 \gamma _4}\\
& \CK_{10 \gamma _3+9 \gamma _4}\CK_{11 \gamma _3+10 \gamma _4}{\color{Red}{\CK_{\gamma _3+\gamma _4}^{-2}}}\CK_{10 \gamma _3+11 \gamma _4}\CK_{9 \gamma _3+10 \gamma _4}\CK_{8 \gamma _3+9 \gamma _4}\CK_{7 \gamma _3+8 \gamma _4}\CK_{6 \gamma _3+7 \gamma _4}\\
& \CK_{5 \gamma _3+6 \gamma _4}\CK_{4 \gamma _3+5 \gamma _4}\CK_{3 \gamma _3+4 \gamma _4}\CK_{2 \gamma _3+3 \gamma _4}\CK_{\gamma _3+2 \gamma _4}{\color{Blue}{\CK_{\gamma _4}\CK_{\gamma _1+\gamma _3}}}\CK_{2 \gamma _1+\gamma _2+2 \gamma _3+\gamma _4}\\
& \CK_{3 \gamma _1+2 \gamma _2+3 \gamma _3+2 \gamma _4}\CK_{4 \gamma _1+3 \gamma _2+4 \gamma _3+3 \gamma _4}\CK_{5 \gamma _1+4 \gamma _2+5 \gamma _3+4 \gamma _4}{\color{Red}{\CK_{\gamma _1+\gamma _2+\gamma _3+\gamma _4}^{-2}}}\CK_{4 \gamma _1+5 \gamma _2+4 \gamma _3+5 \gamma _4}\\
& \CK_{3 \gamma _1+4 \gamma _2+3 \gamma _3+4 \gamma _4}\CK_{2 \gamma _1+3 \gamma _2+2 \gamma _3+3 \gamma _4}\CK_{\gamma _1+2 \gamma _2+\gamma _3+2 \gamma _4}{\color{Blue}{\CK_{\gamma _1}}}\CK_{2 \gamma _1+\gamma _2+\gamma _4}\CK_{3 \gamma _1+2 \gamma _2+2 \gamma _4}\\
& \CK_{4 \gamma _1+3 \gamma _2+3 \gamma _4}\CK_{5 \gamma _1+4 \gamma _2+4 \gamma _4}\CK_{6 \gamma _1+5 \gamma _2+5 \gamma _4}\CK_{7 \gamma _1+6 \gamma _2+6 \gamma _4}{\color{Red}{\CK_{\gamma _1+\gamma _2+\gamma _4}^{-2}}}\CK_{6 \gamma _1+7 \gamma _2+7 \gamma _4}\\
& \CK_{5 \gamma _1+6 \gamma _2+6 \gamma _4}\CK_{4 \gamma _1+5 \gamma _2+5 \gamma _4}\CK_{3 \gamma _1+4 \gamma _2+4 \gamma _4}\CK_{2 \gamma _1+3 \gamma _2+3 \gamma _4}\CK_{\gamma _1+2 \gamma _2+2 \gamma _4}{\color{OliveGreen}{\CK_{2 \gamma _1+\gamma _2}}}\\
& {\color{OliveGreen}{\CK_{6 \gamma _1+4 \gamma _2+\gamma _4}\CK_{10 \gamma _1+7 \gamma _2+2 \gamma _4}^3\CK_{4 \gamma _1+3 \gamma _2+\gamma _4}^3{\CK_{8 \gamma _1+6 \gamma _2+2 \gamma _4}^{-6}}\CK_{10 \gamma _1+8 \gamma _2+3 \gamma _4}^{68}\CK_{6 \gamma _1+5 \gamma _2+2 \gamma _4}^{13}}}\\
& {\color{OliveGreen}{\CK_{8 \gamma _1+7 \gamma _2+3 \gamma _4}^{68} \CK_{6 \gamma _1+6 \gamma _2+3 \gamma _4}^{18}\CK_{2 \gamma _1+2 \gamma _2+\gamma _4}^3{\CK_{4 \gamma _1+4 \gamma _2+2 \gamma _4}^{-6}}{\CK_{8 \gamma _1+8 \gamma _2+4 \gamma _4}^{-84}}\CK_{6 \gamma _1+7 \gamma _2+4 \gamma _4}^{68} }}\\
& {\color{OliveGreen}{\CK_{4 \gamma _1+5 \gamma _2+3 \gamma _4}^{13}\CK_{6 \gamma _1+8 \gamma _2+5 \gamma _4}^{68}\CK_{6 \gamma _1+9 \gamma _2+6 \gamma _4}^{18}\CK_{2 \gamma _1+3 \gamma _2+2 \gamma _4}^3{\CK_{4 \gamma _1+6 \gamma _2+4 \gamma _4}^{-6}} }}\\
& {\color{OliveGreen}{\CK_{4 \gamma _1+7 \gamma _2+5 \gamma _4}^3\CK_{2 \gamma _1+4 \gamma _2+3 \gamma _4}}}{\color{Blue}{\CK_{\gamma _2+\gamma _4}}}\CK_{3 \gamma _1+2 \gamma _2}\CK_{4 \gamma _1+3 \gamma _2}\CK_{5 \gamma _1+4 \gamma _2}\CK_{6 \gamma _1+5 \gamma _2}\CK_{7 \gamma _1+6 \gamma _2} \\
& \CK_{8 \gamma _1+7 \gamma _2}\CK_{9 \gamma _1+8 \gamma _2}\CK_{10 \gamma _1+9 \gamma _2}\CK_{11 \gamma _1+10 \gamma _2}{\color{Red}{\CK_{\gamma _1+\gamma _2}^{-2}}}\CK_{10 \gamma _1+11 \gamma _2}\CK_{9 \gamma _1+10 \gamma _2}\CK_{8 \gamma _1+9 \gamma _2} \\
& \CK_{7 \gamma _1+8 \gamma _2}\CK_{6 \gamma _1+7 \gamma _2}\CK_{5 \gamma _1+6 \gamma _2}\CK_{4 \gamma _1+5 \gamma _2}\CK_{3 \gamma _1+4 \gamma _2}\CK_{2 \gamma _1+3 \gamma _2}\CK_{\gamma _1+2 \gamma _2}{{\color{Blue}{\CK_{\gamma _2}}}},
\end{split}\label{eq:after-wall}
\ee
where $\CK$-factors in green are those of the newborn $m=3$ cohort. Notice the large values of $\Omega$.

Both formulae (\ref{eq:before-wall}), (\ref{eq:after-wall}) can be recast in more suggestive forms by adopting the notation\footnote{We adopt the following conventions: a product of noncommutative factors $\prod_{k \nearrow a}^{b}$ indicates that values of $k$ increase from left to right between $a$ and $b$, while $\prod_{k \searrow a}^{b}$ denotes decreasing values of $k$ from left to right.}
\be
	\Pi^{(n,m)}(a,b):=\left( \prod_{k\nearrow n}^{\infty}\CK_{(k+1) a+ k b} \right) \,\CK_{a+b}^{-2} \, \left(\prod_{\ell\searrow m}^{\infty}\CK_{\ell a+ (\ell+1) b}\right)
\ee
Expression (\ref{eq:before-wall}) is then simply the truncation to $|\gamma|=21$ of (cf. (\ref{eq:cohorts}))
\be
	\Pi^{(0,0)}(\gamma_{3},\gamma_{4})\ \Pi^{(0,1)}(\gamma_{1}+\gamma_{3}, \gamma_{2}+\gamma_{4}) \ \Pi^{(0,0)}(\gamma_{1},\gamma_{2}+\gamma_{4}) \ \Pi^{(1,0)}(\gamma_{1},\gamma_{2})
\ee
Similarly, for (\ref{eq:after-wall}) we have
\be
\begin{split}
	& \Pi^{(0,0)}(\gamma_{3},\gamma_{4})\ \Pi^{(0,1)}(\gamma_{1}+\gamma_{3}, \gamma_{2}+\gamma_{4}) \ \Pi^{(0,1)}(\gamma_{1},\gamma_{2}+\gamma_{4}) \\
	& \qquad \qquad  \Xi(2\gamma_{1}+\gamma_{2},\gamma_{2}+\gamma_{4})\ \Pi^{(2,0)}(\gamma_{1},\gamma_{2})
\end{split}
\ee
where $\Xi(2\gamma_{1}+\gamma_{2},\gamma_{2}+\gamma_{4})$ represents the contribution from the full ${\cal C}_{3}(2\gamma_{1}+\gamma_{2},\gamma_{2}+\gamma_{4})$ cohort, which we now analyze in greater detail.

\subsection{Exponential growth of the BPS degeneracies }\label{subsec:3kron}

We now focus on the part of BPS spectrum within the cohort ${\cal C}_{3}(\gamma_{2}+\gamma_{4},2\gamma_{1}+\gamma_{2})$. Let
\be
\CK_{(a,b)}\equiv \CK_{a(2\gamma_{1}+\gamma_{2})+b(\gamma_{2}+\gamma_{4})} , \qquad {a,b\in\mathbb{Z}},
\ee
then, up to $a+b=15$, $\Xi(2\gamma_{1}+\gamma_{2},\gamma_{2}+\gamma_{4})$ reads
\be
\begin{split}
&\CK_{(1,0)}\CK_{(3,1)}\CK_{(8,3)}\CK^{-6}_{(10,4)}\CK^3_{(5,2)}\CK^{13}_{(7,3)}\CK^{68}_{(9,4)}\CK^{465}_{(10,5)} \CK^{-84}_{(8,4)}\CK^{18}_{(6,3)}\\
&\quad \CK^{-6}_{(4,2)} \CK^3_{(2,1)}\CK^{2530}_{(9,5)}\CK^{399}_{(7,4)}\CK^{68}_{(5,3)}\CK^{4242}_{(8,5)}\CK^{34227}_{(9,6)}\CK^{-478}_{(6,4)}\CK^{13}_{(3,2)}\CK^{4242}_{(7,5)}\\
&\qquad \CK^{-32050}_{(8,6)}\CK^{68}_{(4,3)}\CK^{399}_{(5,4)}\CK^{2530}_{(6,5)}\CK^{16965}_{(7,6)} \CK^{118668}_{(8,7)}\CK^{18123}_{(7,7)}\CK^{-2808}_{(6,6)}\CK^{465}_{(5,5)}\\
&\qquad \quad \CK^{-84}_{(4,4)}\CK^{18}_{(3,3)}\CK^{-6}_{(2,2)}\CK^3_{(1,1)} \CK^{118668}_{(7,8)}\CK^{16965}_{(6,7)}\CK^{2530}_{(5,6)}\CK^{399}_{(4,5)} \CK^{-32050}_{(6,8)}\CK^{68}_{(3,4)}\\
&\qquad \qquad\CK^{4242}_{(5,7)}\CK^{34227}_{(6,9)}\CK^{-478}_{(4,6)}\CK^{13}_{(2,3)}\CK^{4242}_{(5,8)}\CK^{68}_{(3,5)}\CK^{399}_{(4,7)}\CK^{2530}_{(5,9)}\CK^{465}_{(5,10)}\CK^{-84}_{(4,8)}\\
&\qquad \qquad\quad\CK^{18}_{(3,6)}\CK^{-6}_{(2,4)}\CK^3_{(1,2)}\CK^{68}_{(4,9)}\CK^{13}_{(3,7)}\CK^{-6}_{(4,10)}\CK^3_{(2,5)}\CK_{(3,8)}\CK_{(1,3)}\CK_{(0,1)}\label{eq:C3-spectrum}
\end{split}
\ee
The BPS degeneracies appearing in (\ref{eq:C3-spectrum}) look rather \emph{wild} at first sight. One way of looking at them is to consider sequences of charges $(a_{0}+n a,b_{0}+n b)$ approaching different ``slopes'' $a/b$ for $n\to \infty$, and study the asymptotics of $\Omega$ for large $n$. As illustrated in figure \ref{fig:logOmega}, the BPS index grows exponentially with $n$, the asymptotic exponential behavior depends entirely on $a/b$ and not on $a_{0}, b_{0}$.

\begin{figure}[htbp]
\begin{center}
\includegraphics[width=0.44\textwidth]{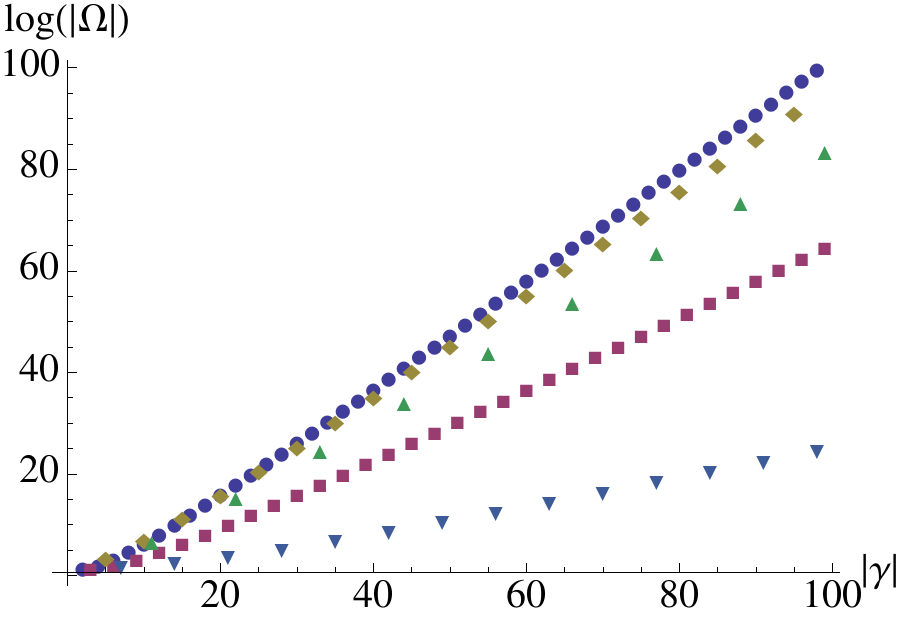} \includegraphics[width=0.44\textwidth]{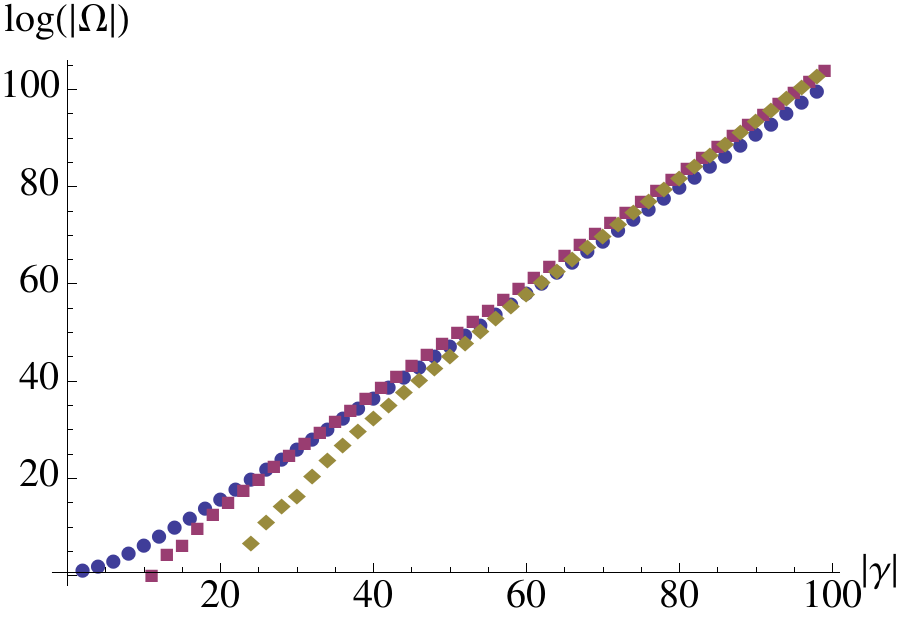}
\caption{Left: values of $\log\Omega( a n,b n)$ for several slopes $a/b$: $1$ (circles), $3/2$ (diamonds), $7/4$ (up-triangles), $2$ (squares), $5/2$ (down-triangles). Right: sequences of type $(a_{0}+ a n, b_{0}+ b n)$ have the same asymptotics; here we show $a=b=1$ with $a_{0}-b_{0} = 0,5,10$.\label{fig:logOmega}}
\end{center}
\end{figure}

According to the positivity conjecture discussed below equation (\ref{eq:spin-splitting}), BPS indices count dimensions of Hilbert subspaces, as stated more precisely in (\ref{eq:positivity-consequence}). Such exponential growth in the number of states may seem surprising in the context of a gauge theory.  We will return to the physical implications below, in Section \ref{sec:physical-estimates}.

\section{Relation to quivers}\label{subsec:RelateQuivers}

In addition to spectral networks, one
 alternative route to the BPS spectrum is the dual description in terms of quiver
quantum mechanics \cite{DENEF,Denef-Moore,CV}.
The problem of counting BPS states gets mapped into that of counting cohomology classes of moduli spaces of quiver representations. These classes are organized into Lefschetz multiplets, which correspond to the $\mathfrak{so}(3)$ multiplets. The PSC $\Omega(\gamma,u;y)$ is then given by the Poincar\'e polynomial associated to a certain quiver representation.

The basic observation here is that an isolated wall-crossing of hypermultiplets with
charges $\gamma, \gamma'$ such that $\langle \gamma, \gamma' \rangle = m$ will produce
the spectrum of the Kronecker $m$-quiver in the wild stability region.

\subsection{Derivation of the Kronecker $m$-quivers from the strong coupling regime}

Here we briefly describe how the quiver description fits in our study of the BPS spectrum of this theory. We start in the strong coupling chamber: we choose a half-plane as shown in the first frame of figure \ref{fig:subquiver}, the corresponding BPS quiver is shown in the second frame of the same figure. As we move along the path (\ref{eq:path}), we come to the situation shown in the third frame of figure \ref{fig:subquiver}: three MS walls have been crossed, and the corresponding $m=2$ cohorts are indicated (this corresponds to the situation shown in the fifth frame of figure \ref{fig:spectrum-evo} above.). Note that no walls of the second kind\footnote{In the physics literature, a wall of the second kind is, roughly speaking, the locus on the moduli space where the central charge of a populated state \emph{exits} the half-plane associated with the quiver under study. When this happens, the quiver description changes by a mutation, for more details, see \cite{CV}.} have been crossed, hence the same BPS
quiver is still valid.\\
\begin{figure}[h!]
        \centering
                \includegraphics[width=0.27\textwidth]{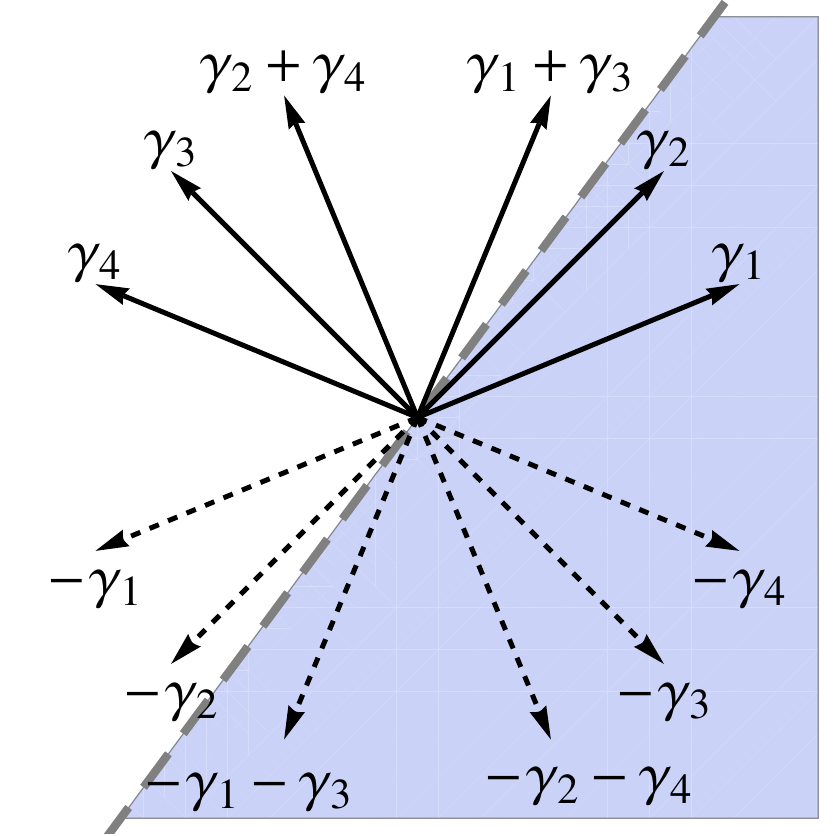} \hfill \includegraphics[width=0.25\textwidth]{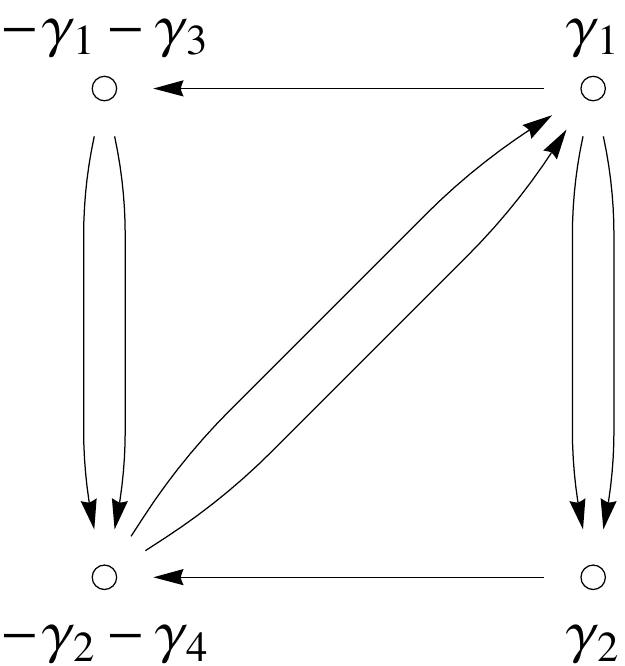} \hfill \includegraphics[width=0.27\textwidth]{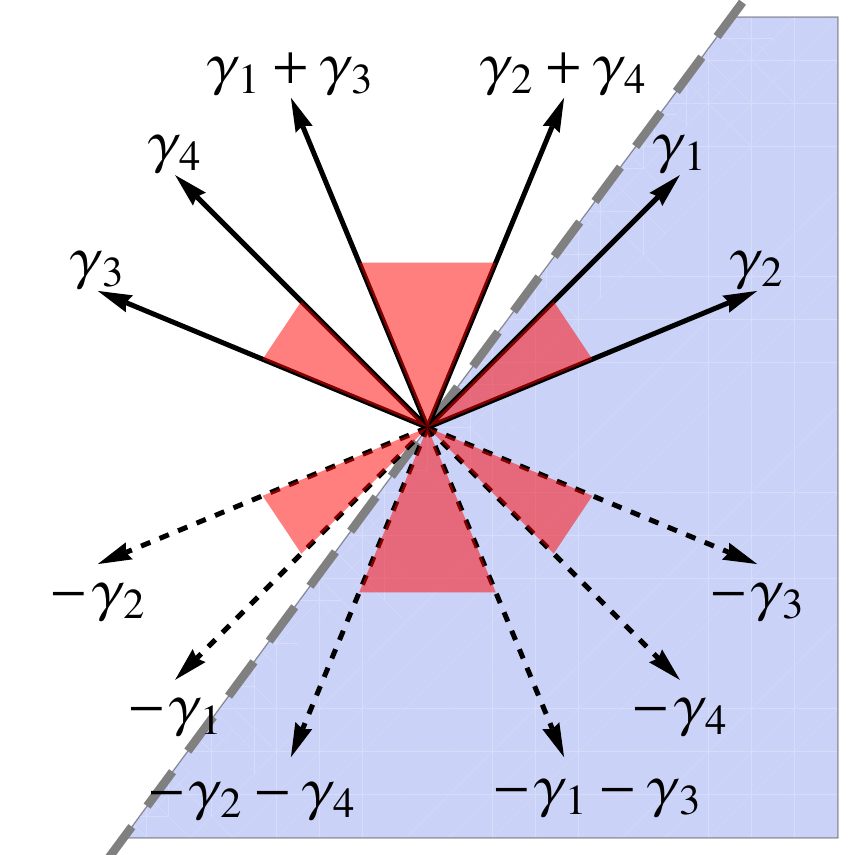}
        \caption{Left: the disposition of charges and choice of half plane in the strong coupling chamber. The depiction of the central charges is schematic. Center: the quiver at strong coupling. Right: central charges and cohorts after crossing the first three MS walls along our path.\label{fig:subquiver}}
\end{figure}
Now, while keeping the moduli fixed, we rotate the half-plane clockwise inducing a mutation on the quiver, as shown in the first two frames of figure \ref{fig:subquiver-bis}. We then proceed a little further along our path on ${\cal B}$, until we cross the wall $MS(\gamma_{1},\gamma_{2}+\gamma_{4})$, again this does not involve crossing walls of the second kind, and the same quiver is still valid. The charge disposition and cohorts are shown in the third frame of figure \ref{fig:subquiver-bis}.\\
\begin{figure}[h!]
        \centering
                \includegraphics[width=0.27\textwidth]{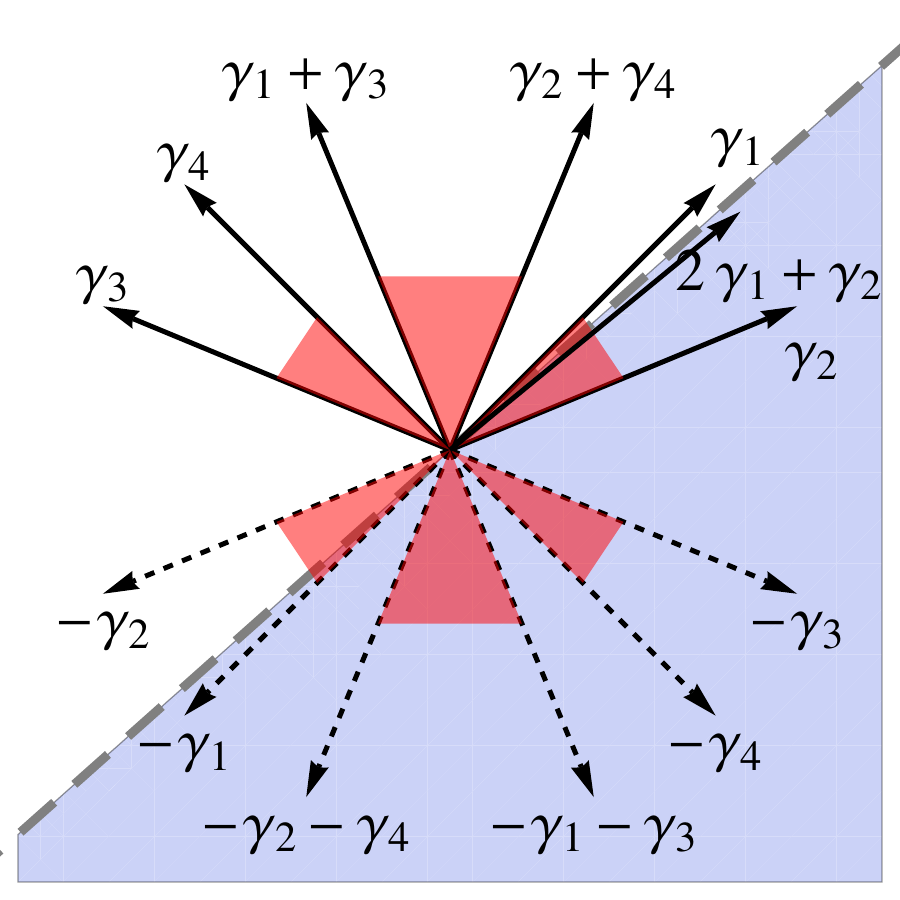} \hfill \includegraphics[width=0.25\textwidth]{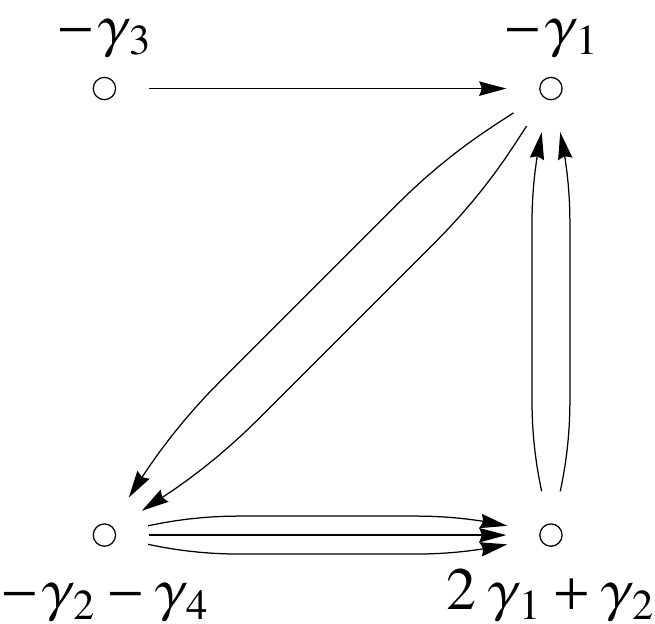} \hfill \includegraphics[width=0.27\textwidth]{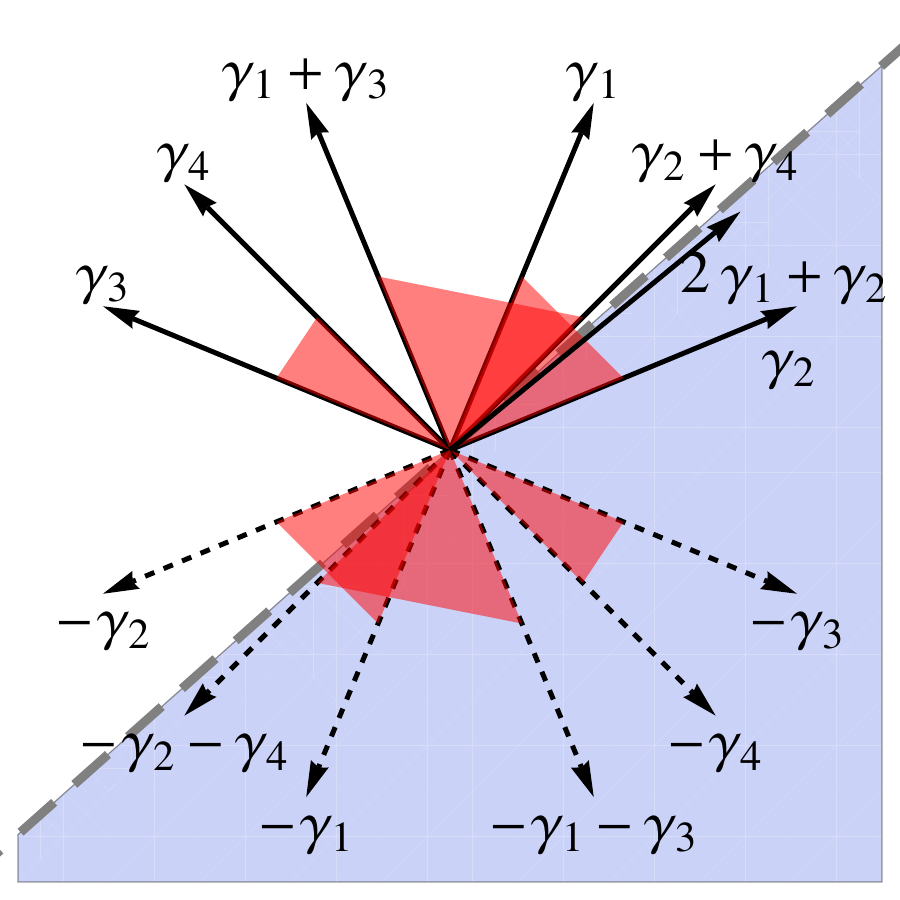}
        \caption{Left: a clockwise rotation of the half-plane past the ray $Z_{\gamma_{1}}$. Center: the corresponding BPS quiver. Right: after proceeding further on ${\cal B}$ we cross $MS(\gamma_{1},\gamma_{2}+\gamma_{4})$\label{fig:subquiver-bis}}
\end{figure}
Finally, we rotate the half-plane counterclockwise, as shown in figure \ref{fig:subquiver-tris}, inducing an inverse mutation on the node $-\gamma_{2}-\gamma_{4}$, which results in the desired BPS quiver.\\
\begin{figure}[h!]
        \centering
                \hfill \includegraphics[width=0.27\textwidth]{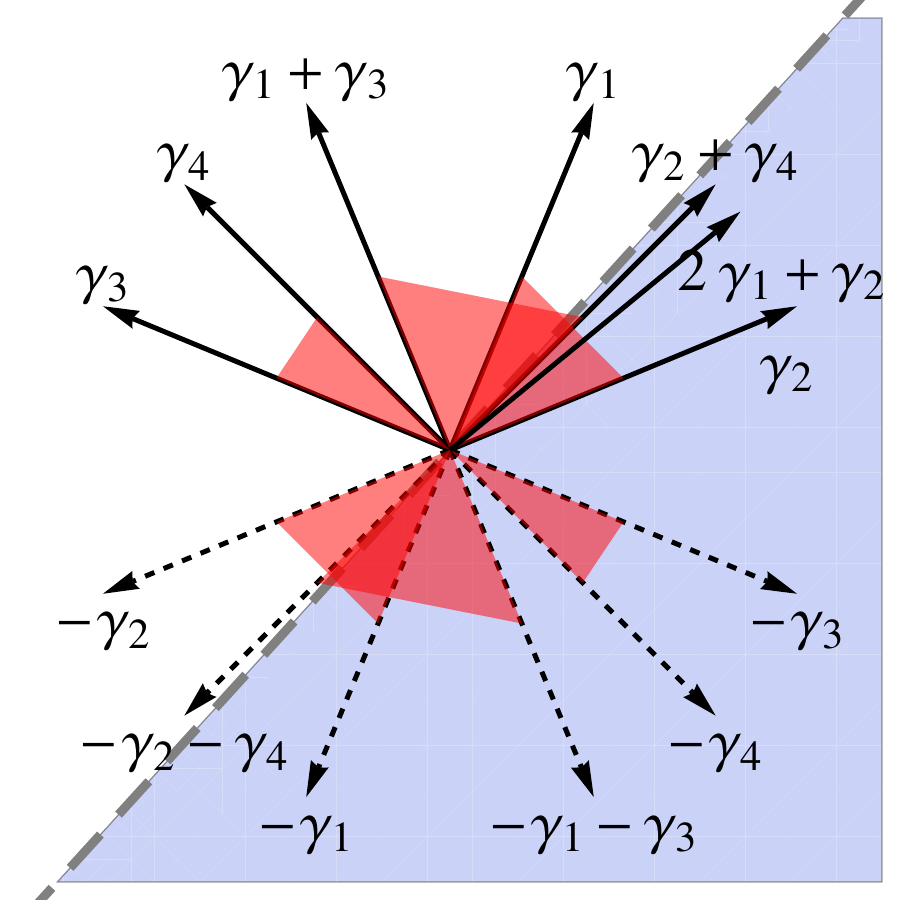} \hfill \includegraphics[width=0.29\textwidth]{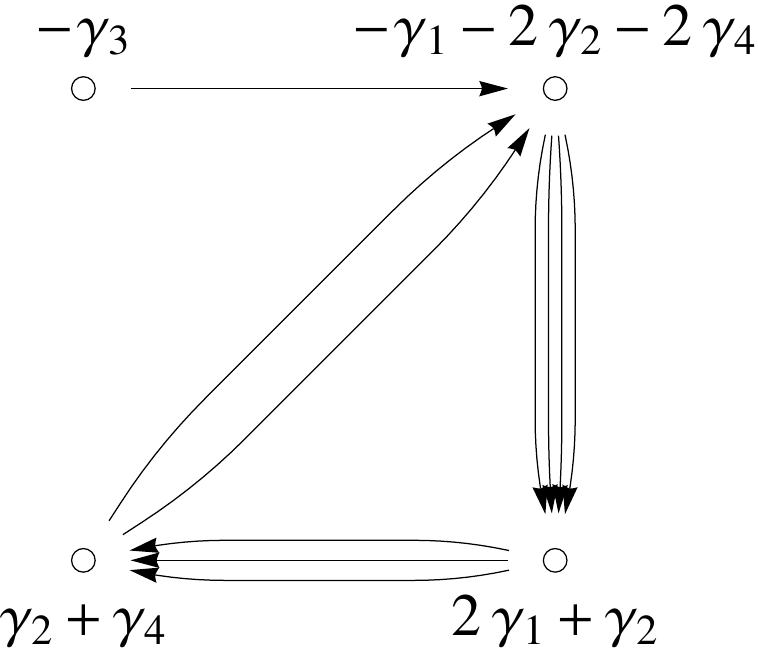} \hfill
        \caption{Left: a counterclockwise rotation past $Z_{\gamma_{2}+\gamma_{4}}$. Right: the corresponding BPS quiver.\label{fig:subquiver-tris}}
\end{figure}
The two lower nodes of the quiver we just obtained manifestly exhibit the $3$-Kronecker quiver involved in wild wall-crossing as a subquiver. In particular, it offers a convenient starting point for studying stable quiver representations on both sides of $MS(\gamma_{2}+\gamma_{4},2\gamma_{1}+\gamma_{2})$: states with charge $a(\gamma_{2}+\gamma_{4})+b(2\gamma_{1}+\gamma_{2})$ correspond to particularly simple dimension vectors, in which the two upper nodes decouple leaving the pure $3$-Kronecker quiver. We will not pursue the stability analysis in this paper, let us stress however that, since we have been working with stability parameters constrained by special geometry on the Coulomb branch (as opposed to working in $\mathbb{C}^{4}$), it should be possible to perform such analysis on both sides of the above-mentioned MS wall, thus recovering the related wild degeneracies.\\

The above construction generalizes easily to higher $m$. Consider indeed the situation in frame three of Figure \ref{fig:subquiver-bis}: here one could rotate the half-plane clockwise up until crossing the ray of $\gamma^{(j+1,j)}:=(j+1)\gamma_{1}+j\gamma_{2}$, resulting in a sequence of mutations leading to the quiver of Figure \ref{fig:subquiver-higher-m}. Then, without crossing walls of the second kind, one can move on ${\cal B}$ on a continuation of our path, as discussed in Section \ref{subsec:higher-m}, until getting past $MS((j+1)\gamma_{1}+j\gamma_{2},\gamma_{2}+\gamma_{4})$, the same quiver description still holds.\\
\begin{figure}[h!]
        \centering
                 \hfill \includegraphics[width=0.27\textwidth]{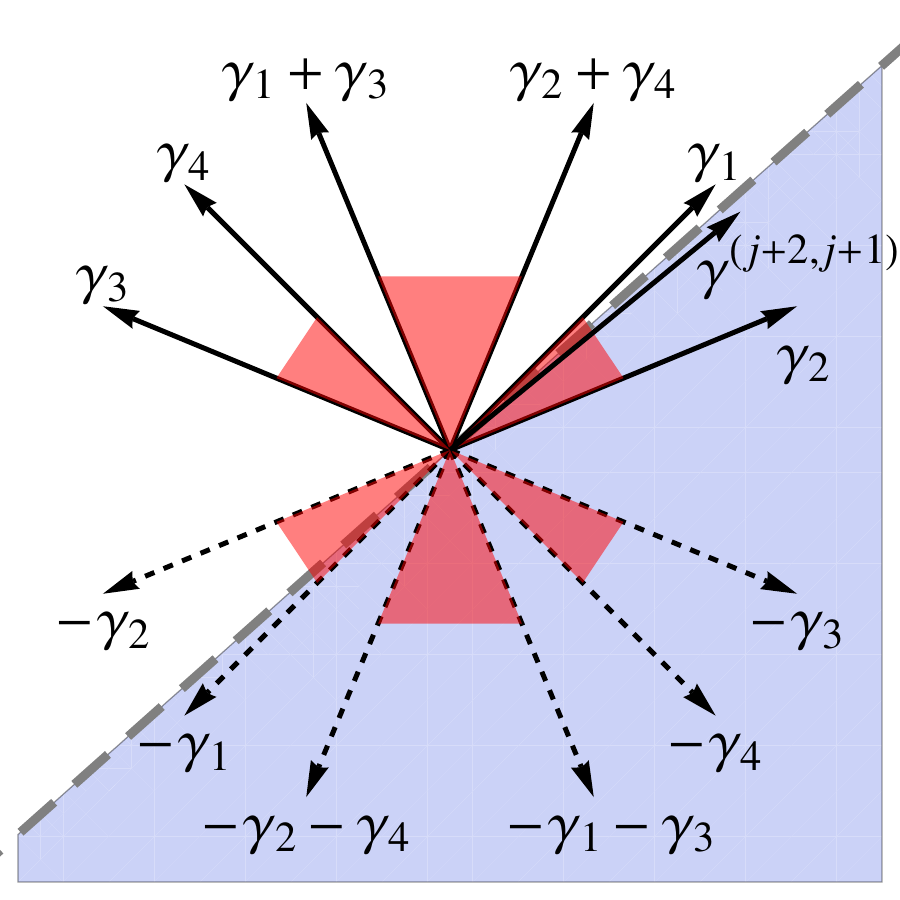} \hfill  \includegraphics[width=0.27\textwidth]{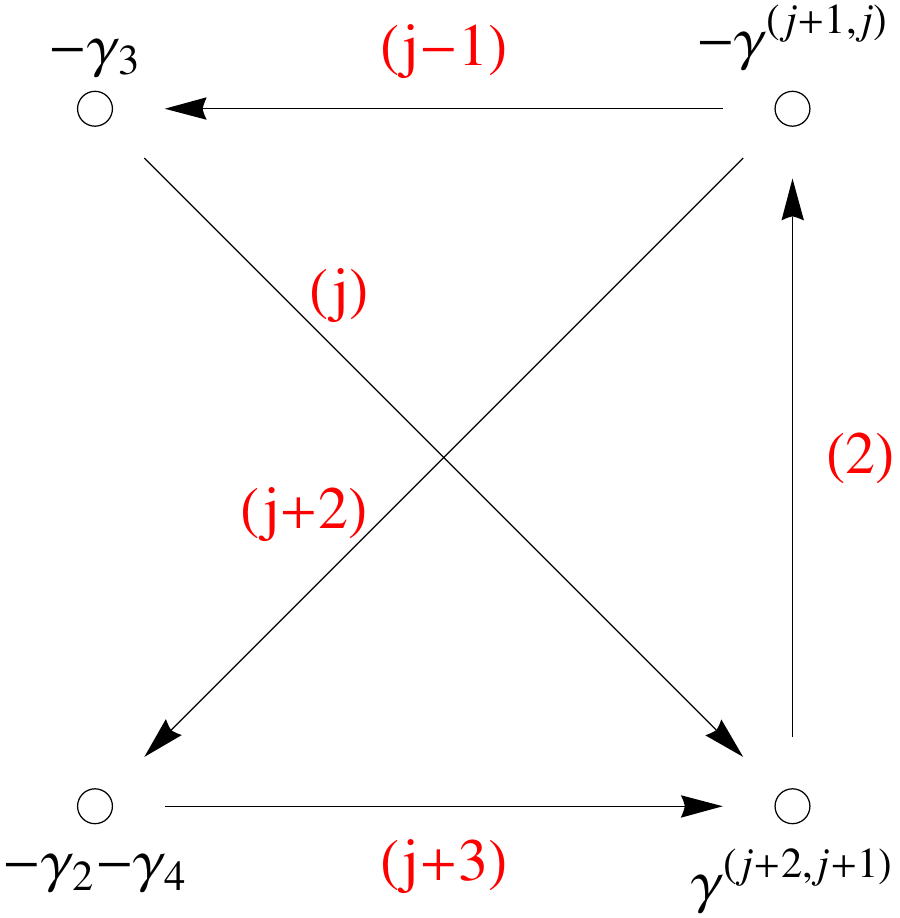}  \hfill \includegraphics[width=0.25\textwidth]{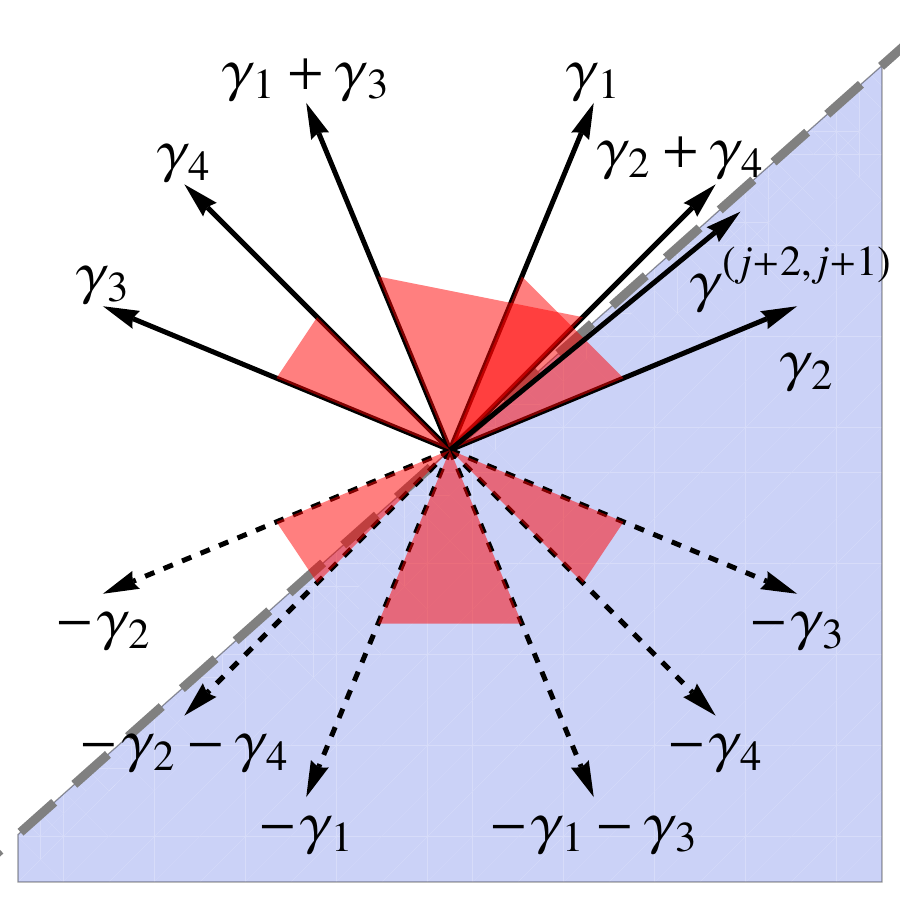}
        \caption{Left: a clockwise rotation of the half-plane past the ray $Z_{(j+1)\gamma_{1}+j\gamma_{2}}$. Center: the corresponding BPS quiver, arrow multiplicities are indicated in red. Right: after proceeding further on ${\cal B}$ we cross $MS((j+1) \gamma_{1}+j\gamma_{2},\gamma_{2}+\gamma_{4})$\label{fig:subquiver-higher-m}}
\end{figure}
At this point, a counterclockwise rotation of the half-plane, corresponding to an inverse mutation on $-\gamma_{2}-\gamma_{4}$ yields the quiver given in Figure \ref{fig:subquiver-higher-m-bis}. Again the two lower nodes exhibit the Kronecker subquiver of interest.\\
\begin{figure}[h!]
        \centering
                \hfill \includegraphics[width=0.29\textwidth]{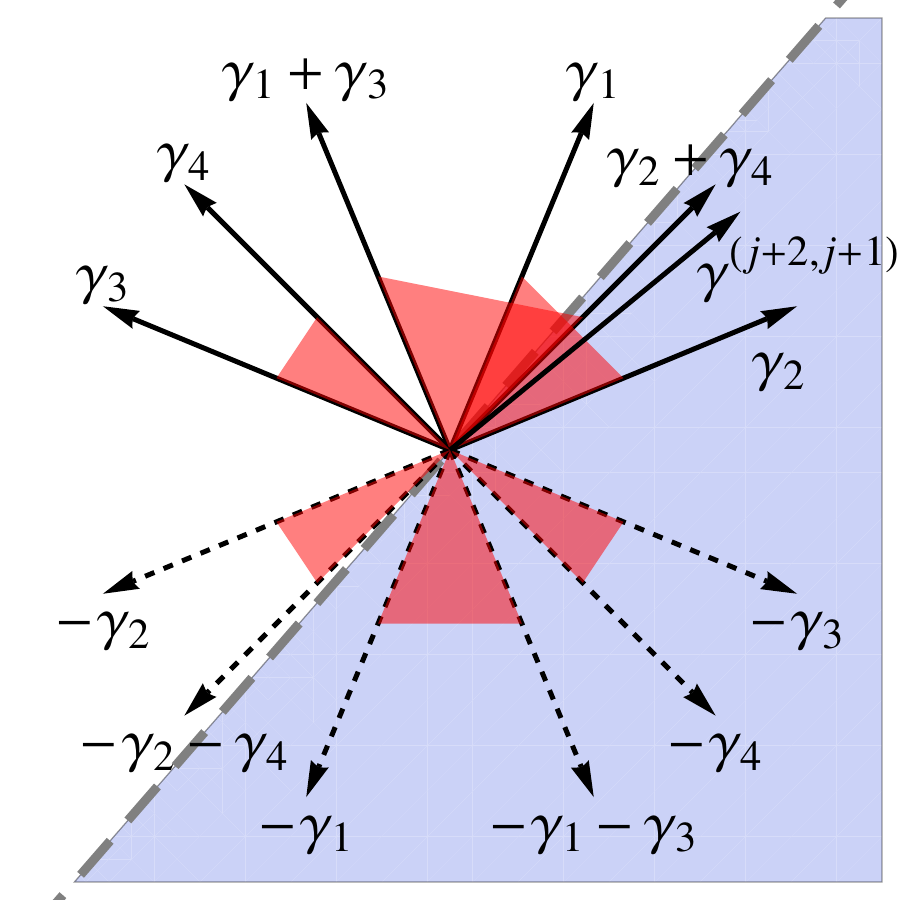} \hfill \includegraphics[width=0.31\textwidth]{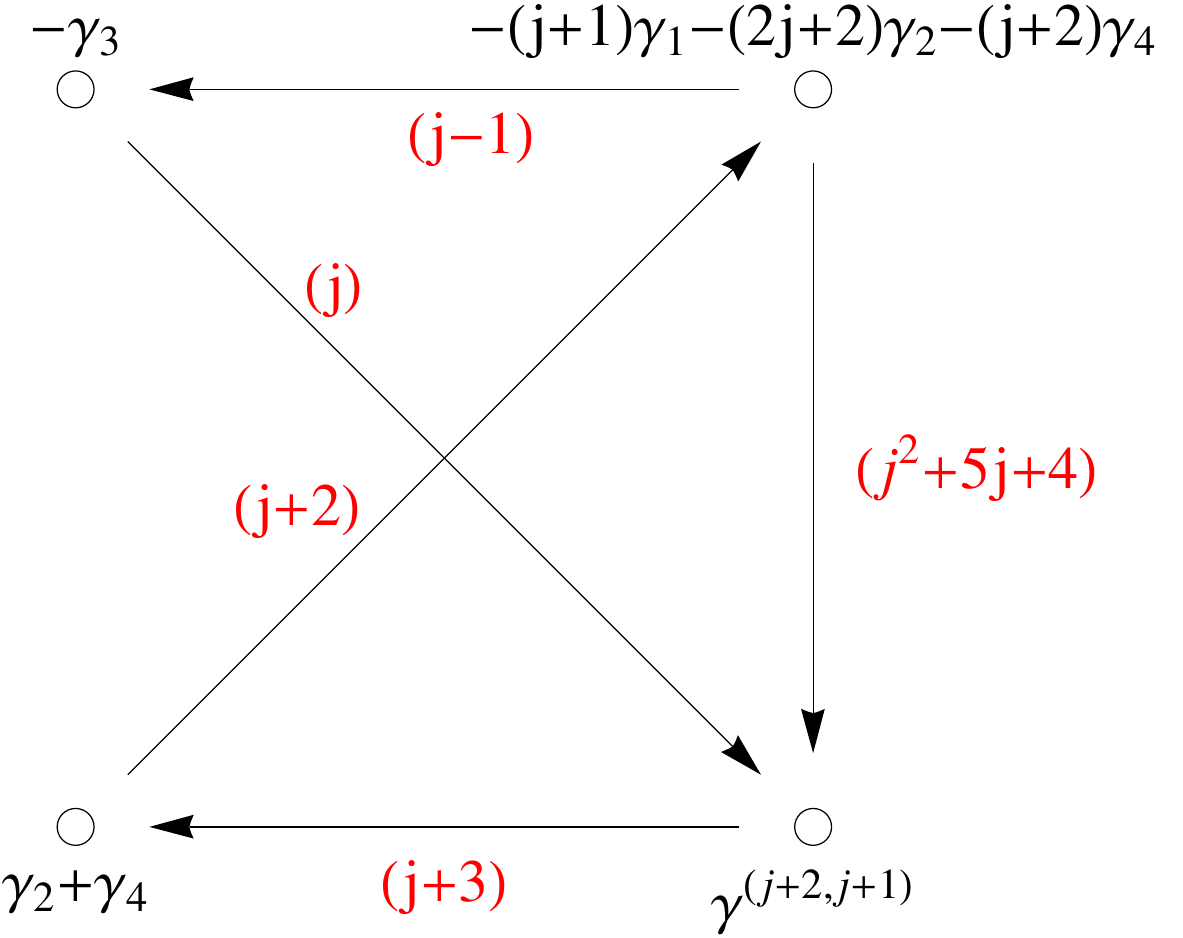} \hfill
        \caption{Left: a counter-clockwise rotation of the half-plane past the ray $Z_{\gamma_{2}+\gamma_{4}}$. Right: the corresponding BPS quiver, with arrow multiplicities indicated in red.\label{fig:subquiver-higher-m-bis}}
\end{figure}

\subsection{A nontrivial symmetry of BPS degeneracies} \label{sec:mocm}

One very nice application of the quiver approach is that it reveals
an intriguing symmetry of BPS degeneracies which would be very hard
to discover using spectral networks.

Our previous analysis of the ${\cal C}_{3}$ spectrum has focused on sequences of states $(n a+a_{0}) \gamma_{1}+(n b+b_{0}) \gamma_{2}$ with fixed slope  $a/b$ as $n\to \infty$. In this section we will instead consider sequences of states with the same BPS index.

In full generality, given two hypermultiplets with charges $\gamma,\gamma'$ such that $\langle\gamma,\gamma'\rangle=m>0$, we know already from the semi-primitive WCF that, across the wall $MS(\gamma,\gamma')$, a new hypermultiplet of charge $\gamma+m\gamma'$ will be a stable boundstate. The constituents $\gamma,\gamma'$, as well as their CPT conjugates will also be stable. Now, note that $\langle - \gamma',\gamma+m\gamma'\rangle = m$, moreover we have the following relation between stability parameters
\be
	{\rm sign}\,\left({\rm Im}\frac{Z_{\gamma}}{Z_{\gamma'}}\right) \, \equiv \, {\rm sign}\,\left({\rm Im}\frac{Z_{-\gamma'}}{Z_{\gamma+m\gamma'}}\right).
\ee
Thus, any boundstate of $\gamma,\gamma'$ can \emph{equivalently} be described as a boundstate of $-\gamma',\gamma+m\gamma'$. Such change of \emph{simple roots} for the $K(m)$ quiver simply corresponds\footnote{In the mathematics literature this correspondence is a known isomorphism among Kronecker moduli spaces, see for example \cite{WEIST12}, Remark 3.2.} to a change of duality frame by
\be
	g_{m}=\left(\begin{array}{cc}
	0 & 1 \\
	-1 & m
	\end{array}\right) \quad \in Sp(2,\mathbb{Z})
\ee
in a basis where $\gamma,\gamma'$ are represented by column vectors $(1,0),(0,1)$ respectively.
That is, there is a mutation of the quiver corresponding to the change of basis $g_m$.
Since this is detectable by the semiprimitive wall crossing formula there should
be a halo interpretation, to which we return in Section \S \ref{sec:OpenProblems}, Remark 4.

The above is essentially an observation of \cite{WEIST12} and it immediately implies
some remarkable identities for BPS indices. The group
\be
	{\cal R}=\langle h,h' | h^{2}=1, {h'}^{2}=1 \rangle \, = \, \mathbb{Z}_{2} \star \mathbb{Z}_{2}
\ee
  has an action on $\mathbb{Z}\gamma\oplus\mathbb{Z}\gamma'$ by
\be \label{eq:Z2-generators}
	h=\left(\begin{array}{cc}
	0 & 1 \\
	1 & 0
	\end{array}\right) ,\quad %
	h'=\left(\begin{array}{cc}
	-1 & m \\
	0 & 1
	\end{array}\right),\quad %
	g_{m}= h\, h',
\ee
then the BPS indices must have the   symmetry:
\be
	\Omega(g\cdot\gamma) = \Omega(\gamma),\quad\forall g\in{\cal R}.  \label{eq:g3orbits}
\ee
In other words, the spectrum can be organized into orbits of ${\cal R}$.

{\bf Remarks}
\begin{itemize}
\item The identity (\ref{eq:g3orbits}) extends to the protected spin character
\be
	\Omega(g \cdot\gamma;y) = \Omega(\gamma;y).
\ee
\item Consider for example $m=3$, we call the \emph{slope} of $(a,b)$ the ratio $a/b$. The eigenvalues of $g_{3}$ are
\be
	\xi_{\pm} = \frac{3\pm \sqrt 5}{2}
\ee
corresponding to the slopes limiting the cone of dense states of Fig.~\ref{fig:cone}. All $g_{3}$ orbits are confined to lie either inside or outside of the cone, and asymptote to the limiting rays.
\item The only orbits falling outside of the cone are those of ``pure'' hypermultiplets i.e. states with $\Omega=1$. All the other orbits are contained within the cone.
\item In the pure $SU(2)$ theory the limiting rays of the $g_{2}$ cone collapse into a single ray, which coincides with the slope of the gauge boson. In that context, the $g_{2}$ action has an interpretation in terms of a half-turn around the strong coupling chamber, combined with the residual $R$-symmetry, in a similar spirit to the approach of \cite{BILAL}. One can check that, in a suitable duality frame $g_{2}$ is a square root of the monodromy at infinity, up to an overall factor.

For the $m=1$ Kronecker quiver, the $g_{1}$ action simply recovers the whole spectrum.

\end{itemize}

\subsection{Asymptotics of BPS degeneracies}\label{sec:asymptotics}

For physical reasons we are often interested in the asymptotics of BPS degeneracies
for large charges.
There is no known simple closed formula for the degeneracies $\Omega(a\gamma_{1}+b\gamma_{2})$ of the $3$-Kronecker quiver. In this section we discuss some aspects of the large $a,b$ asymptotics.

The Poincar\'e polynomial for quivers without closed loops has been found explicitly in a closed form, as a sum over constrained partitions of corresponding quiver dimension vectors \cite{REINEKE}.
Unfortunately Reineke's formula does not lend itself well to an evaluation of the large charge asymptotics. On the other hand, use of localization techniques allows one to estimate asymptotic behavior of the Euler characteristic for moduli spaces of $m$-Kronecker quiver representations \cite{WEIST}.

Weist conjectured the following. Consider a state $N\gamma+M\gamma'$ with $\langle \gamma,\gamma'\rangle=m$,
 in a wild region of the Coulomb branch. The  corresponding BPS index equals the Euler characteristic of the moduli space of the quiver with $m$ arrows between two nodes with spaces $\mathbb{C}^N$ and $\mathbb{C}^M$ in
  a wild region of stability parameters. Now consider a sequence of dimension vectors $N=a n+a_0$, $M=b n+b_0$, with $a,b,a_{0},b_{0}$ fixed. Weist conjectured that the asymptotic behavior of the Euler characteristic has the form
\be
\begin{split}
 & \log \vert \Omega(N \gamma + M \gamma')\vert \mathop{\sim}_{n\to\infty} n \, C_{a,b}(m)  \\
 & C_{a,b}(m) = \frac{\sqrt{m a b-a^2-b^2}}{\sqrt{m-2}}\left[(m-1)^2\log(m-1)^2-(m^2-2m)\log\left(m^2-2m\right)\right]\label{eq:Weist}
\end{split}.
\ee
Note that $C_{1,1}(m)= c_{m}$ of equation (\ref{eq:c_m}).

\subsection{Numerical check of Weist's conjecture}
In section \ref{sec:DegsFSWCF} we obtained BPS degeneracies by using an algorithmic approach, based on the KSWCF (\ref{eq:cohWCF}).
The results are in agreement with \cite{GROSS}: in particular we found a sequence of degeneracies of slope 1 behaving as predicted by Reineke in \cite{REINEKE-09}, as well as a highly populated -- suggesting dense -- cone of ``wild'' BPS states in the complex $Z_{\gamma}$-plane. The region outside such cone is populated by hypermultiplets only, falling in sequences approaching the boundaries of the cone, as shown in figure \ref{fig:cone}.

\begin{figure}[htbp]
\begin{center}
\includegraphics[width=0.37\textwidth]{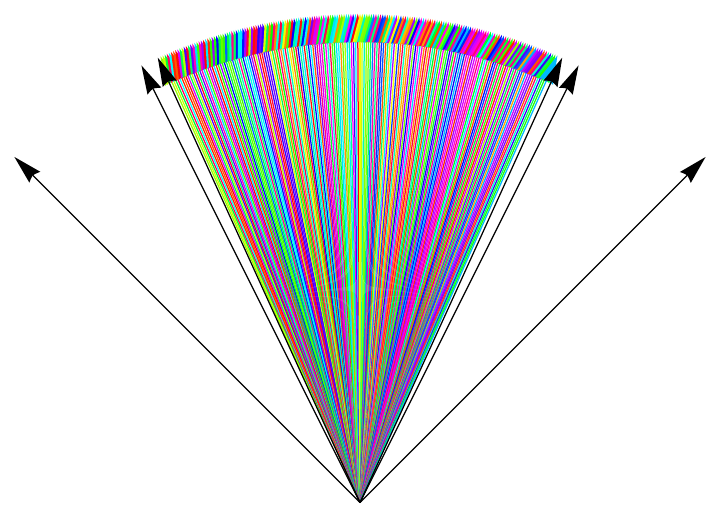}
\caption{Schematic picture of BPS states charges for 3-Kronecker quiver. A dense cone is bounded by rays of slopes $a_{}/b_{}=(3\pm \sqrt{5})/2$. Only hypermultiplets fall out of the cone.\label{fig:cone}}
\end{center}
\end{figure}

Let $\gamma_{a,b}(n)=(n a+a_0)\gamma_1+(n b+b_0)\gamma_2$. Denoting by
\be
\eta_{a,b}:= \log\vert \Omega(\gamma_{a,b}(n)) \vert
\sqrt{\frac{m-2}{{m a_{}b_{}-a_{}^2-b_{}^2}}} ,  \label{eq:eta-coeff}
\ee
Weist's conjecture says that $\eta_{a b}/ n \rightarrow c_m$, $\forall \gamma \in$ dense cone as $n$ grows ($c_{m}$ is defined by formula (\ref{eq:c_m})). In Fig.~ \ref{fig:logOmega} we already noticed this kind of behavior, to some extent. In order to establish a more precise match between our data and Weist's conjecture, it is convenient to plot the behavior of $\eta_{a,b}/n$ versus the $|\gamma|$ filtration level, as in Fig. \ref{fig:weist}.

\begin{figure}[htbp]
\begin{center}
\includegraphics[width=0.53\textwidth]{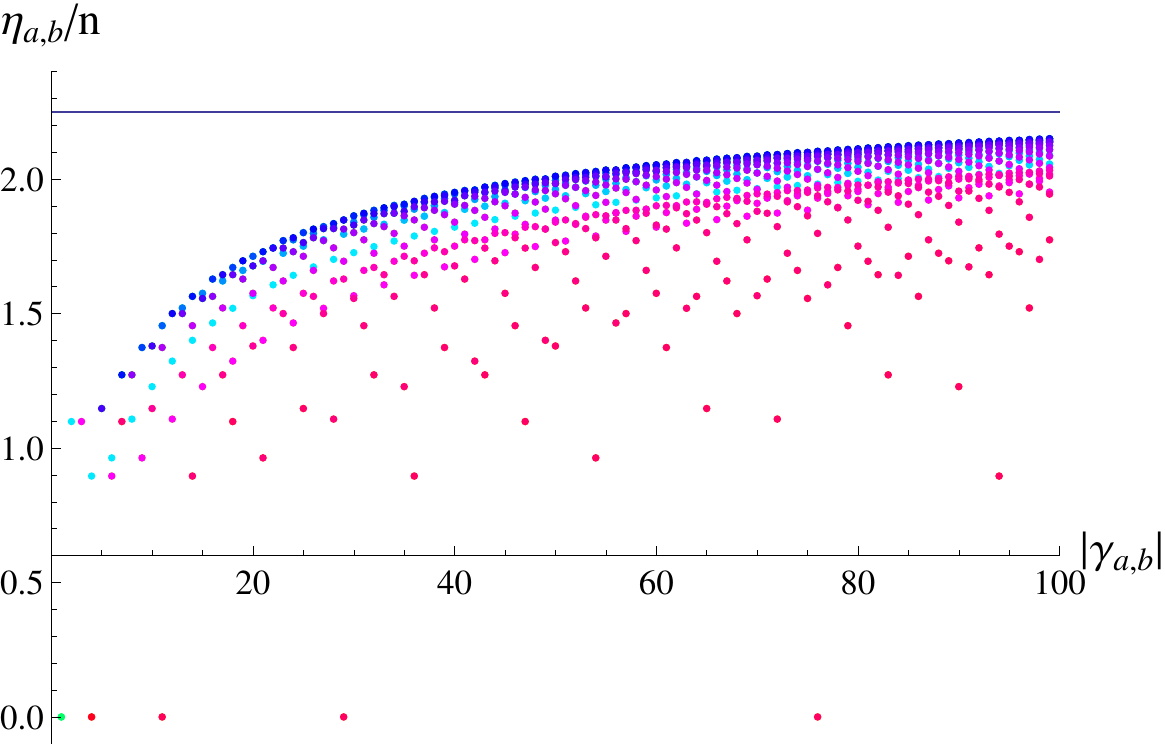}
\caption{The data shown is for $m=3$ with $(a_0,b_0)=(0.0)$. The straight horizontal line represents the Weist coefficient $c_{3}=4\log 4-3\log 3$. For generic values of $(a,b)$ the BPS degeneracies indeed approach the Weist asymptotics at large $|\gamma|$.\label{fig:weist}}
\end{center}
\end{figure}

Different colors depict different slopes from the red for $a_{}\gg b_{}$ or $a_{}\ll b_{}$ to the blue for $a_{}\sim b_{}$. As the graph
shows, the speed of convergence actually depends on the slope,
so the degeneracies for BPS states nearest to the cone boundaries approach Weist's asymptotics in the worst way.
Note that there are some charges that do not obey the general asymptotic behavior. These give the horizontal data points at
the bottom of Fig.~\ref{fig:weist}. These charges indeed lie outside the dense cone.
\footnote{Note that, because of $g_m$ symmetry, the figure would look rather different
if we plotted the degeneracies as a function of $n$ using $\gamma_{a,b}(n)$.}

\section{Physical estimates and expectations}\label{sec:physical-estimates}

\subsection{An apparent paradox}\label{subsec:ApparentParadox}

In this section we first present a physical argument which seems to lead
to a very general bound on the behavior of the BPS index in any supersymmetric field theory.
The purported bound, however, is explicitly violated by the ``wild'' degeneracies we have just
found in the pure $SU(3)$ theory.  Thus, na\"{i}vely, there is a paradox.  We
first explain the paradox in more detail, and then explain how
this paradox is resolved.\footnote{We thank T. Banks and S. Shenker for crucial remarks on this matter.}

At very large energy our effective theory should approach a UV conformal fixed point.
So consider a $d$-dimensional CFT put in a box of volume $V$ and heated up to temperature $T$.
Since we have only two dimensionful parameters and we assume the energy and the
entropy of the system to be extensive quantities, simple dimensional analysis is
enough to predict their form up to dimensionless constants (which will depend on the theory):
\be
\begin{split}
E(T,V)=\alpha V T^{d},\\
S(T,V)=\beta V T^{d-1}.
\end{split}
\ee
Eliminating the temperature dependence we derive the scaling of the entropy with the energy:
\be\label{eq:entropy}
S(E,V)=\kappa V^{1/d} E^{(d-1)/d}.
\ee
This provides an estimate for the behavior of the number of microstates of energy $E$ supported in a volume $V$,
and gives the correct asymptotic dependence for $E \to \infty$.

In order to excite massive states we can increase the temperature, thus taking into account heavier BPS states.
The BPS index, being a signed sum over the states in the theory, cannot exceed the overall number of states.
\footnote{  In fact, the  data for the
 Kronecker $m$-quiver suggest that in this case all the summands contributing to
 the BPS index have the same sign, so the BPS index actually counts the   number of states up to an overall sign.}
 Thus we come to the following chain of inequalities (here we take $d=4$ and set $E=\vert Z_\gamma \vert$):
\be
\vert \Omega(\gamma) \vert  = \left| \mathop{\rm Tr}\nolimits_{{\mathfrak{h}}^{\gamma}}(-1)^{2J_3}\right|\leq
\frac{1}{4}   \mathop{\rm Tr}\nolimits_{{\cal H}_{BPS,\gamma}} 1
  \leq \frac{1}{4} \mathop{\rm Tr}\nolimits_{{\cal H},E} 1
  =\frac{1}{4}  e^{S(E)}\sim e^{\kappa V^{\frac{1}{4}} E^{\frac{3}{4}}} \label{eq:FT-bound},
\ee
where the last estimate assumes large $E$. Thus the observed behavior
$\log|\Omega(\gamma)|\sim E$ for large $\gamma$ in the pure $SU(3)$ theory  seems to give a contradiction.

The resolution of this paradox comes from taking into account the fact that our bound applies only to the theory in a finite volume. If the size of BPS states becomes large enough and they do not fit into the box of finite volume, then they do not contribute to the na\"{i}ve counting of degrees of freedom. So we should instead consider a ``truncated BPS index'' $\check \Omega_V$, counting only the states which fit into a box of size $V$; we should expect this index to satisfy the inequality
\be
\vert \check\Omega_V \vert =\vert \mathop{\rm Tr}\nolimits_{{\cal H}_{BPS},M=|E|,R\leq V^{\frac{1}{3}}}(-1)^{2J_3} \vert \lesssim  e^{\kappa V^{\frac{1}{4}} E^{\frac{3}{4}}}
\ee
with $R$ the average size of a BPS state.

The rest of this section is devoted to arguing that the above scenario
is indeed correct. We will use the semiclassical picture of BPS states
given by the Denef equations, reviewed in Section \ref{subsec:DenefEquations},
to give a lower bound for the average size of the semiclassical BPS states.
 The resolution of the paradox is spelled out
in some more detail in Section \ref{subsec:ResolveParadox}.
We give some supporting evidence for the validity of the use of the Denef
equations for describing the exponentially large number of BPS states
in Section \ref{subsec:QC-QQ}.

\subsection{Denef equations}\label{subsec:DenefEquations}

In order to estimate the size of the BPS states arising in the theory, we refer
to the interpretation \cite{Denef-Moore} of those BPS states that arise from
wall-crossing as multi-centered   solutions similar to those arising in
 ${\cal N}=2$ supergravity \cite{DENEF}. We assume Denef's multicentered
picture has a good $\alpha'\to0$ limit and can be applied to field theory.
Suppose we have a set of elementary BPS states with charges
$\{\gamma_A\}_{A=1}^{n}$ placed at corresponding points
$\{r_A\}_{A=1}^{n}$ of $\mathbb{R}^3$.  This configuration is again BPS
only if the following set of equations is satisfied:
\be
\sum\lm_{\substack{B=1\\ B\neq A}}^n \frac{\langle \gamma_A,\gamma_B\rangle}{|r_A-r_B|}
= 2{\rm Im}(e^{-i\vartheta} Z_{\gamma_A}), \label{eq:denef-eqs}
\ee
where $\vartheta=\mathop{\rm arg} \sum\lm_{A=1}^n Z_{\gamma_A}$.

Now let us consider a BPS state of total charge $M\gamma_1 + N \gamma_2$,
with $\langle \gamma_1, \gamma_2 \rangle =m$.   Let   us,
\emph{for the moment}, suppose that the dominant contribution to the
entropy comes from a boundstate of $M$ elementary constituents of
charge $\gamma_1$ and $N$ elementary constituents of charge $\gamma_2$.

In the case where the charges are of the form
\be\label{eq:MaxDecomp}
\{  \underbrace{\gamma_1, \dots, \gamma_1}_{M}, \underbrace{\gamma_2, \dots, \gamma_2}_{N}\}
\ee
the equations simplify to
\be\label{eq:Denef}
\begin{split}
\sum_{a=1}^N\frac{m}{r_{ia}} & = \kappa_{1}:=2{\rm Im}(e^{-i \vartheta} Z_{\gamma_1}) , \qquad 1\leq i \leq M  \\
\sum_{i=1}^M\frac{-m}{r_{ia}} & = \kappa_{2}:=2{\rm Im}(e^{-i \vartheta} Z_{\gamma_2}) , \qquad 1\leq a \leq N  \\
\end{split}
\ee
We can view the index  $a$ as running  over ``electrons'' and $i$ over ``magnetic monopoles,'' in
an appropriate duality frame.

Now we are interested in the size of the boundstate. Therefore we consider the sum over
the first equation in \eqref{eq:Denef}. (Doing the analogous sum over the second equation
produces an equivalent result.) The result is that
\be\label{eq:SumDenef}
\sum_{i,a} \frac{1}{r_{ia} } =  \frac{NM}{\vert M Z_{\gamma_1} + N Z_{\gamma_2} \vert}
\left( \frac{2 {\rm Im}(\bar Z_{\gamma_2} Z_{\gamma_1} )}{m}\right)
\ee
We can rewrite this equation nicely in terms of the \emph{harmonic average}
of the distances $r_{ia}$:
\be\label{eq:HarmonicIdentity}
\langle r_{ia} \rangle_h= \left( \frac{m}{2 {\rm Im}(\bar Z_{\gamma_2} Z_{\gamma_1} )}\right)
\vert M Z_{\gamma_1} + N Z_{\gamma_2} \vert.
\ee
On the other hand, we can use the well-known inequality
that the harmonic average is a lower bound for the
ordinary average,  $\langle r_{ia} \rangle_h\leq \langle r_{ia} \rangle$,
to conclude that
\be\label{eq:LowerBound}
\left( \frac{m}{2 {\rm Im}(\bar Z_{\gamma_2} Z_{\gamma_1} )}\right)
\vert M Z_{\gamma_1} + N Z_{\gamma_2} \vert \leq \langle r_{ia} \rangle.
\ee
Equation \eqref{eq:LowerBound} is a key result. It shows that if $N$ or $M$ goes
to infinity then the size of the average BPS molecule grows linearly with $N$ or $M$, respectively.

We have shown that boundstates of total charge $M\gamma_1 + N \gamma_2$
with constituents \eqref{eq:MaxDecomp} become large when $N,M \to \infty$.
However, other partitions of $N,M$ can and do lead to BPS boundstates.
In general, given a pair of partitions
\be\label{eq:OtherPartitions}
M=\sum\lm_{j=1}^{M} l_j j,\quad N=\sum\lm_{k=1}^N s_k k
\ee
there can be other boundstates where there are $l_j$ centers
of charge $j \gamma_1$ and $s_k$ centers of charge $k \gamma_2$.
In order to deal with these cases, let us introduce, for any set of
charges $\{ \gamma_A \}$, the moduli space ${\cal M}(\{ \gamma_A \} ) $   of
solutions   to the Denef equations (\ref{eq:denef-eqs}). If there
are $n$ centers it is a subspace of $\IR^{3n}$. Note that
the moduli space for charges
\be\label{eq:gencd}
\{\underbrace{\gamma_1, \dots, \gamma_1}_{l_1},
\underbrace{2\gamma_1, \dots, 2\gamma_1}_{l_2}, \dots,
\underbrace{ \gamma_2, \dots, \gamma_2}_{s_1},
\underbrace{2\gamma_2, \dots , 2\gamma_2}_{s_2}, \dots   \}
\ee
is in fact a subspace of the moduli space for \eqref{eq:MaxDecomp}, where certain collections
of centers $r_i$ and $r_a$ have (separately)  collided. Nevertheless, the
identity \eqref{eq:HarmonicIdentity} applies uniformly throughout the moduli space
and hence applies to all possible partitions. As an extreme example,
the moduli space $\CM(\{ M \gamma_1, N \gamma_2 \})\cong \IR^3 \times S^2$, where the
$\IR^3$ is the center of mass and the $S^2$ has a radius
\be
R_{12} = \left( \frac{m}{2 {\rm Im}(\bar Z_{\gamma_2} Z_{\gamma_1} )}\right)
\vert M Z_{\gamma_1} + N Z_{\gamma_2} \vert.
\ee
In any case, we can conclude that for any partition of charges
such as \eqref{eq:gencd} the average BPS state has a size bounded
below by a linear expression in $N$ and $M$. We call these large
semiclassical BPS states \emph{BPS giants}.

\subsection{Resolution and Revised Bound}\label{subsec:ResolveParadox}

The giant BPS states resolve the paradox explained in Section \ref{subsec:ApparentParadox} above.
Indeed  we can adapt the bound (\ref{eq:FT-bound}) by adjusting the volume of the box $V$ in such a way that states of mass $E$ fit in a volume
$V_E:= R_E^3:=E^3$. From our estimate of the sizes of BPS molecules
we know that the average size indeed scales linearly with $E$. Therefore
the new bound is
\be
\log \vert\Omega(E)\vert \sim \alpha E \lesssim  \kappa E^{{3}/{4} } V_E^{{1}/{4}} \sim  \kappa' E^{{3}/{2}} \label{eq:paradox-resolved}
\ee
and is indeed satisfied.

In equation \eqref{eq:paradox-resolved} $\kappa'$ is a dimensionful constant, it scales as
$\kappa'\sim({\rm length})^{\frac{3}{2}}$. Let us give a
physical interpretation for this scale. If we consider a
sequence of charges $N(a\gamma_1+b \gamma_2)$, with
$N\to \infty$ and $\gamma_{p}:=a\gamma_1+b \gamma_2$ primitive,
 then the size of an average BPS molecule
behaves as $R\sim r_0 N$, where $r_0$ is the  size of a state
with   charge  $\gamma_{p}$.
The energy behaves as $E = |Z_{0}| N$, where $Z_{0}$ is a central charge of
the state with charge $\gamma_{p}$. Thus we can
give a formula accounting for the scaling dimension of $\kappa'$ in (\ref{eq:paradox-resolved}) by using
\be
 V_{E} =  R_E^{3}\sim (r_{0} N)^{3} \sim (r_{0} E / |Z_{0}|)^{3}
\ee
to deduce
\be
\begin{split}
&  E^{3/4}V_E^{1/4}\sim   \left(\frac{r_0}{|Z_0|}\right)^{3/4} E^{3/2}, \\
& \Rightarrow \quad \kappa'\sim   \left(\frac{r_0}{|Z_0|}\right)^{3/4}.
\end{split}
\ee

We remark that

\begin{enumerate}

\item The length scale $ \left(r_0/|Z_0|\right)^{1/2}$ is a function
of the moduli, since both $r_0$ and $Z_0$ are functions of the moduli.

\item The coefficient $\alpha$ in (\ref{eq:paradox-resolved}) is
\be
	\alpha=\frac{C_{a,b}(m)}{|Z(a\gamma_{1}+b\gamma_{2})|}
\ee
for the series of charges above eq. (\ref{eq:eta-coeff}). As we noted in Section \ref{subsec:higher-m} there are points on the Coulomb branch with arbitrarily high $m$ and
\be
	C_{a,b}(m)\sim\sqrt{ab}\log m^{2}
\ee
for large $m$. Hence, somewhat surprisingly, the coefficient of the logarithmic growth is unbounded on the Coulomb branch.

\end{enumerate}

\subsection{Discussion of validity of the semiclassical picture }\label{subsec:QC-QQ}

 In this section we will address
the question of how reliable the semiclassical approximation is.
We will review some supporting evidence for the reliability of the
semiclassical pictures
based on the relation of an  exact result for BPS degeneracies
 $\Omega$ to certain symplectic volumes.

As a side remark we note that numerical data for the $3$-Kronecker quiver strongly suggest (cf. the discussion about positivity below (\ref{eq:spin-splitting}))
 that the BPS index actually measures the number of states
\be
|\Omega(\gamma)|=\mathop{\rm Tr}\nolimits_{\mathfrak{h}_{\gamma}} 1. \label{eq:positivity-consequence}
\ee
This relation is not essential to our argument but it does nicely
simplify the considerations.

Let us recall the symplectic structure on Denef moduli space $\CM(\{\gamma_A \})$.
Overall translation acts on this space and the reduced
space $\overline{{\cal M}}(\{\gamma_A\})= {\cal M}(\{\gamma_A\})/\mathbb{R}^{3}$
  is generically $2n-2$ dimensional. Moreover, the
reduced space admits a symplectic form \cite{DEBOER}:
\be
\omega=\frac{1}{4}\sum\lm_{i<j}\langle\gamma_i,\gamma_j\rangle
\frac{\epsilon_{abc} dr_{ij}^a\wedge dr_{ij}^b r_{ij}^c}{r_{ij}^3}.
\ee
In the semiclassical approximation we identify a subspace of the space of
BPS states with a set of BPS field configurations. We expect that
the dimension of a
subspace corresponding to a charge decomposition  can be estimated, in the semiclassical approximation,
by the symplectic volume
\be
{\rm Vol}(\{\gamma_A\}):=\frac{1}{(n-1)!}\int\lm_{\overline{ \cal M} }\left(\frac{\omega}{2\pi}\right)^{n-1} . \label{eq:sym-vol}
\ee
where $n$ is the number of centers.

Now, thanks to a result of Manschot, Pioline, and Sen \cite{MANSCHOT, Manschot:2011xc},
 in the example of the $m$-Kronecker quiver the protected spin character in the wild chamber
can in fact be expressed exactly as a sum over two partitions
\eqref{eq:OtherPartitions}
so that
\be
\begin{split}
\label{eq:MPS-WCFy}
& \Omega(M\gamma_1 + N \gamma_2;y) = \\
& \qquad\qquad =\sum\lm_{\{l_j\},\{s_k \}} g_{\rm ref}(\{l_j\},\{s_k\};y)\prod\lm_{j,k}\frac{1}{l_j!j^{l_j}s_k!k^{s_k}}\left(\frac{y-y^{-1}}{y^j-y^{-j}}\right)^{l_j}\left(\frac{y-y^{-1}}{y^k-y^{-k}}\right)^{s_k}
\end{split}
\ee
where $g_{\rm ref}$ refers to an equivariant Dirac index on the
space of solutions to   Denef's equations with distinguishable centers described by
charge partitions $\{l_j\}$, $\{s_k \}$. If we specialize to the index at $y=1$
\footnote{In the conventions of \cite{MANSCHOT} we take $y\to 1$ rather than $y\to -1$ to get the index.}
then  $g_{\rm ref}$ has a very nice interpretation as the symplectic volume (\ref{eq:sym-vol})
of the moduli space of solutions to Denef's equations (up to a sign):
\begin{eqnarray}\label{eq:MPS-WCF}
\Omega(M\gamma_1 + N \gamma_2)=\sum\lm_{\{l_j\},\{s_k \}}(-1)^{m M N+1-\sum\lm_j l_j-\sum\lm_k s_k}
{\rm Vol}(\{l_j\},\{s_k\})\prod\lm_{j,k}\frac{1}{l_j!j^{2 l_j}s_k!k^{2 s_k}}.
\end{eqnarray}
where   ${\rm Vol}(\{l_j\},\{s_k\})$  is \eqref{eq:sym-vol} for
the charges \eqref{eq:gencd}.

We will take this relation of the exact number of BPS states to symplectic volumes
as sufficient evidence for the validity of our resolution. There are, however,
some further interesting aspects of this formula which we will comment on
in the following Sections \ref{sec:halo} and \ref{sec:maximalpart} below.

\subsubsection{A toy example: the Hall halo} \label{sec:halo}

A very nice exactly solvable example of BPS configurations is provided by the   Hall halo of \cite{DENEF}.
 Consider a configuration of $N$ electric particles and a single
 magnetic monopole of charge $m$. This corresponds to the case $(M,N)=(1,N)$ in
 the notation above.  In this case the equations
 \eqref{eq:Denef} imply that the $N$ electric particles all
 lie on a single sphere centered on the magnetic particle and of
 radius:
\be\label{eq:Hall}
R_{12} = \left( \frac{m}{2 {\rm Im}(\bar Z_{\gamma_2} Z_{\gamma_1} )}\right)
\vert   Z_{\gamma_1} + N Z_{\gamma_2} \vert.
\ee
Now, in this case Denef argued that to get the spin character
 we  can just apply the usual quantum mechanics of
Landau levels on a sphere with a magnetic monopole inside. Counting
the  ground states gives the corresponding protected
spin character   \cite{SOLIDSTATE}
\be\label{eq:LL-DEG}
\Omega(y)=(-y)^{-(m-N)N}\frac{\prod\lm_{j=1}^m(1-y^{2j})}{\prod\lm_{j=1}^N(1-y^{2j})\prod\lm_{j=1}^{m-N}(1-y^{2j})},
\ee
in perfect agreement with Reineke's general formula (see eq. (5.3) of \cite{DENEF}).

There are two interesting lessons we can draw from \eqref{eq:LL-DEG}:

\begin{enumerate}

\item First, naive physical intuition suggests that the large size of BPS states
is due to large angular momentum. This example shows that in fact this is not
necessarily the case.   In this case the size of the configuration is given by
formula (\ref{eq:Hall}).  Nevertheless this configuration contains representations of
many different spins.

\item
Second, we can derive the number of states in a multiplet by taking $y\rightarrow -1$.
Then $\Omega=\frac{m!}{N!(m-N)!}$.
In the limit $N\ll m$ the number of allowed states is much
greater than the number of populated states, so quantum statistics
does not play an important role, and the semiclassical approximation should work. Indeed,
\be\label{eq:LeadingTerm}
\Omega=\frac{m!}{N!(m-N)!}\mathop{\sim}_{N\ll m} \frac{ m^N}{N!}+ \cdots
\ee
This confirms the semiclassical expectation that the number of states should be
counted by the symplectic volume since the volume is proportional to $m^{N}$.
Note however that, for fixed $N$ the binomial coefficient is really a polynomial in $m$
and \eqref{eq:LeadingTerm} is only the leading term at large $m$. Since $1/m$ plays
 the role of $\hbar$ we   can interpret the subleading
terms as quantum corrections to the naive semiclassical reasoning.

\end{enumerate}

\subsubsection{Estimating the contribution of the maximal partition} \label{sec:maximalpart}

Let us consider the contribution to the BPS degeneracy of the maximal
partition \eqref{eq:MaxDecomp} in the formula \eqref{eq:MPS-WCF}. The symplectic volume for this partition is
\be
	\text{Vol}((N,M),\kappa_{1},\kappa_{2}, m):=\frac{1}{(N+M-1)!}\int _{\overline\CM} \left(\frac{\omega}{2\pi}\right)^{{N+M-1}}
\ee
where we used the fact that there are $n = N+M$ centers. We would like to estimate this volume when $N,M$ become large.

Rescaling both $\kappa_{1,2}$ in \eqref{eq:Denef} by $\lambda\in \IR$
together with $r_{ij}\mapsto r_{ij}\lambda^{-1}$
shows that solutions for rescaled values of  $\kappa_{1,2}$
are obtained by simply rescaling the distances. Therefore the ratio
$$H((N,M),\kappa_{1}/\kappa_{2}):=\text{Vol}((N,M),\kappa_{1},\kappa_{2}, m)/m^{N+M-1}$$
only depends on the ratio $\kappa_{1}/\kappa_{2}$ and on $N,M$.
For simplicity, let us specialize to   $M=N-1$.  in the limit $N\to \infty$ we have
\be
\begin{split}
	& \lim_{N\to\infty}\frac{1}{N}\log\left( \text{Vol}((N-1,N),\kappa_{1},\kappa_{2}, m) \right)\nn \\
	& \sim   \log m^2 + F\left(\frac{\kappa_{1}}{\kappa_{2}}\right).
\end{split}
\ee
Note that the second piece is independent of $m$.

There are two important lessons we can draw from this computation:

\begin{enumerate}

\item
This behavior nicely coincides with the Weist coefficient, but only in
the large $m$ limit when:
\be
C_{1,1}(m)\sim\log m^{2} + {\cal O}(m^{-1})
\ee
The fact that we must take $m\to \infty$
 is not terribly surprising in view of the Hall halo example
discussed above.

\item  It is interesting to note that at finite values of $m$ the maximal partition
does \emph{not} fully account for the exponential growth coefficient,
even in the large charge regime.  Indeed, as pointed out in \cite{WEIST12} we should take into account many other  partitions to derive even the leading asymptotic behavior of the BPS index. One important (and subtle) aspect of \eqref{eq:MPS-WCF}
is that the different symplectic volumes are weighted with \emph{signs}.
This might imply some subtlety in applying the semiclassical pictures we have used,
and should be understood better. In the meantime, as we discuss further in Remark 5
of Section \ref{sec:OpenProblems}: in the formula \eqref{eq:MPS-WCF}, considering the case where
the BPS ray lies in the dense cone, there can be
striking cancelations between volumes of different partitions.

\end{enumerate}

\section{Spectral Moonshine}\label{sec:moonshine}

In the course of these investigations we noticed some
unusual and very intriguing features in our data.
We mention these here, leaving a deeper analysis and
conceptual understanding of these features to future work.

\subsection{Scaling behavior of the spin degeneracies}

An interesting pattern of the spectrum
emerges when we consider the distribution
of spin multiplets within $\mathcal{H}_{\gamma}^{\rm BPS}$, the subspace of BPS states with gauge charge $\gamma$.
For the definitions of the protected spin character and the spin decompositions see
Appendix \ref{app:PSC-tables}.

Let $\delta_{\gamma}(j)$ be the number of times a spin-$j$
multiplet\footnote{Meaning a representation  $\rho_{hh}\otimes(j,0)$ of $\mathfrak{so}(3)\oplus\mathfrak{su}(2)_{R}$.}
 occurs within $\mathcal{H}_{\gamma}^{\rm BPS}$, as in (\ref{eq:spin-decomposition}). Numerical
 data suggests that
 \emph{all states within the dense cone} exhibit a common $\delta$-distribution, as
 shown in Fig.~\ref{fig:mb} (the data are in Appendix \ref{app:PSC-tables}).\\
\begin{figure}[htbp]
\begin{center}
\includegraphics[width=0.44\textwidth]{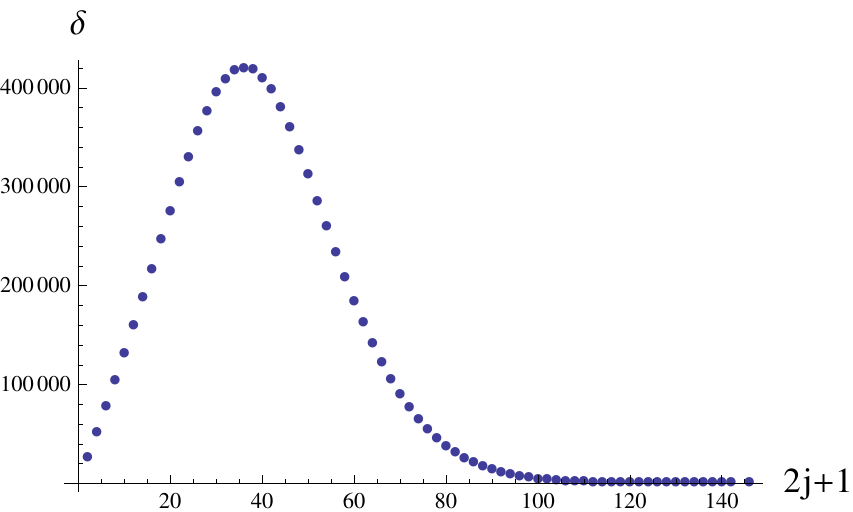} \includegraphics[width=0.44\textwidth]{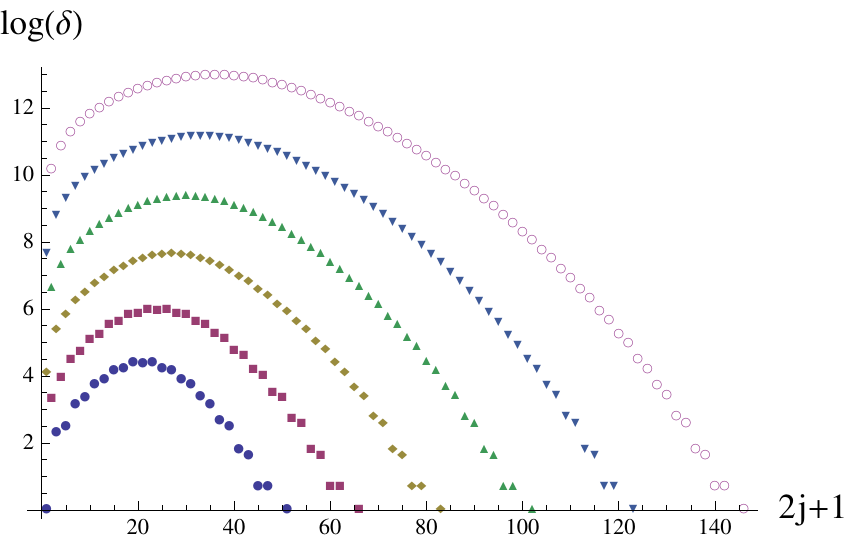}
\caption{On the left: the distribution $\delta(j)$ for $\gamma=12(\gamma_{1}+\gamma_{2})$. On the right: the same distribution for several states $\gamma=n(\gamma_{1}+\gamma_{2})$. This feature extends to other slopes as well, indeed all states within the dense cone exhibit such ``Poisson'' behavior.\label{fig:mb}}
\end{center}
\end{figure}\\

More precisely, letting $\gamma_{n}$ denote the sequence of charges $(na+a_{0})\gamma_{1}+(nb+b_{0})\gamma_{2}$,
data collected by computer experiments strongly suggest that there are
functions $\kappa_1, \kappa_2,\kappa_3,\rho,\alpha$%
\footnote{The $\kappa_{1},\kappa_{2}$ employed here have nothing to do with those of section \ref{sec:maximalpart}.} of $a,b,a_0,b_0$ such that,
if we define $j_n(s)$ by
\be
s = (2j_n(s)+1)/(\rho \vert \gamma_n \vert),
\ee
then the limit
\be\label{eq:scalingfunction}
u(s):=\lim_{n\to\infty} \, \kappa_3 |\gamma_{n}|^{-\kappa_{1}} e^{-\kappa_{2} \, |\gamma_n|} \delta_{\gamma_n}\left( j_n(s) \right).
\ee
exists and is given by
\be\label{eq:Poisson}
u(s) = s^{\alpha} e^{-\alpha (s-1)}
\ee
(Recall that $\vert \gamma_n \vert = n(a+b) + a_0 + b_0 $).
The numerical evidence further suggests that
for $m=3$, $\alpha\approx 2$, regardless of the slope\footnote{This estimate is based on data collected for $|\gamma|<30$.}.

If we assume that the above scaling law holds and the limiting behavior
to the scaling function is sufficiently rapid, then one can relate
the parameters $\kappa_1, \kappa_2$ of the scaling law to the
leading terms in the $n\to \infty$ asymptotic expansion
\be
\log \vert \Omega(\gamma_{n})\vert   \sim   \kappa_2 \vert \gamma_n \vert  +(\kappa_{1}+2) \log  \vert \gamma_n \vert + {\cal O}(1)
\ee
where ${\cal O}(1)$ has a finite limit as $n\to \infty$. Indeed, comparing with
the Weist asymptotics (\ref{eq:Weist}) we learn that $(a+b) \kappa_2 = C_{a,b}(m)$.
Similarly, comparing with known asymptotics of $\Omega(\gamma_n)$ we can learn
about the $a,b,a_0,b_0$ dependence of $\kappa_1$.

%then we can rederive the
%exponential growth of the BPS degeneracies and also derive subleading
%corrections to the Weist asymptotic formula.
%Using the Euler-Maclaurin formula we have
%\be
%\begin{split}
%	|\Omega(\gamma) |&= \sum_{j}\delta_{\gamma}(j) (2j+1) \\
%%	& =  \int_{0}^{\infty}ds\, \frac{\rho |\gamma|}{2} \,\cdot\, u(s) \, |\gamma|^{\kappa_{1}} e^{\kappa_{2}|\gamma|} \,\cdot\, \rho |\gamma| s  + {\cal{O}}\left(1/n \right) \\
%	& = \frac{1}{2} \rho^{2} |\gamma|^{\kappa_{1}+2} e^{\kappa_{2}|\gamma|+\alpha}\alpha^{-\alpha-2}\Gamma(\alpha+2)
%+ {\cal{O}}\left(1/n \right)\\
%\end{split}
%\ee
%
%and hence the scaling law implies that

Regardless of the validity of the scaling law,  it is
worthwhile stressing that the sub-leading behavior of $\log \vert \Omega(\gamma_{n})\vert$
 is interesting in its
own right. It is often assumed that, at large $n$, the BPS index is a
continuous function of the slope $a/b$, however -- somewhat surprisingly --
the sub-leading correction exhibits a dependence on $a_{0},b_{0}$ too. To see
this, consider two different sequences approaching slope $1$, namely $\gamma^{(n)}=(n,n)$
and $\tilde\gamma^{(n)}=(n,n+1)$, we have
\be \label{eq:subleading-discontinuity}
	\begin{split}
	& \log  \vert \Omega(\gamma_{n})\vert = n C_{1,1}(m) -\frac{5}{2} \log n + {\cal O}(1)\\
	& \log  \vert \Omega(\gamma_{n})\vert = n C_{1,1}(m) -2 \log n + {\cal O}(1) ,
	\end{split}
\ee
where we used the known result\footnote{Cf theorem 6.6 of \cite{WEIST}}
\be
	\Omega(n,n-1) = \frac{1}{(3n+2)(2n+1)}{{4n+2}\choose{n+1}}.
\ee
The subleading dependence on $a/b,a_{0},b_{0}$ exhibited in (\ref{eq:subleading-discontinuity}) also occurs at the other slopes  in the same $g_{m}$ orbit.

\subsection{Partitions and relation to modular functions}

Interesting features of the pattern of spin decompositions lie in the tail of the distribution. First of all, for certain sequences $\gamma^{(\alpha)},\,\alpha=1,2,\ldots$ such that $|\gamma^{(\alpha)}|\stackrel{\alpha\to\infty}{\longrightarrow}\infty$, we observe the asymptotic behavior of $J_{\rm max}({\gamma}) := {\rm max}\{j | \delta_{\gamma}(j)\neq 0\}$, in particular
\be
\begin{split}
& J_{\rm max} ({(n,n)})=\frac{n^2+1}{2} , \qquad J_{\rm max} (({n+1,n}))=\frac{n^2+n}{2} .
\end{split}
\ee
We can compare this behavior, as well as that of all other sequences in our data, with a prediction deriving from Kac's theorem (see e.g. \cite{REINEKE08}) about the dimensionality of the quiver varieties.
More precisely, for the Kronecker $m$-quiver $K(m)$, Kac's theorem asserts that the dimension of the quiver variety $M_{(a,b)}(K(m))$ for indecomposable representations with dimension vectors $\gamma=(a,b)$ is
\be
	{\rm dim} M_{(a,b)}(K(m)) = m a b -a^{2} -b^{2}+1,
\ee
therefore, noting that the Lefschez multiplet of maximal spin must be
\be
	j_{{\rm max}}(\gamma=(a,b)) = \frac{1}{2} {\rm dim} M_{(a,b)}(K(m))
\ee
we find, as a nice check, that our data agrees with this prediction.

Now, while the overall size and shape of the distribution vary with the charge,
 the degeneracies $\delta_\gamma(j)$ on the tail of the distribution \emph{stabilize} to a common pattern
\be
	\begin{array}{c|l}
	\gamma & \delta{(J_{\rm max})}, \, \delta{(J_{\rm max}-1)},\,\dots \\
	\hline
	4\gamma_1+3\gamma_2 &  1,0,2,2,3,2,2,0,\ldots \\
	7\gamma_1+6\gamma_2 &  1,0,2,2,5,6,13,14,\ldots \\
	8\gamma_1+6\gamma_2 &  1,0,2,2,5,6,13,16,\ldots \\
	8\gamma_1+7\gamma_2 &  1,0,2,2,5,6,13,16,\ldots \label{eq:tails}
	\end{array}
\ee
As (\ref{eq:tails}) shows, the length of the ``saturated'' subsequence 1,0,2,2,5,6,13,16,$\ldots$ increases with $|\gamma|$. Overall, the tail behavior seems to stabilize to the sequence generated by
\be
g(\xi)=\prod\lm_{m=2}^{\infty}(1-\xi^m)^{-2}=1+0\xi+2\xi^2+2\xi^3+5\xi^4+6\xi^5+13\xi^6+\ldots .\label{eq:partitions}
\ee
A slight modification yields the generating function which coefficients are the incremental sum of those in $g(\xi)$
\be
\tilde g(\xi)=\frac{\prod\lm_{m=2}^{\infty}(1-\xi^m)^{-2}}{(1-\xi)}=1+1\xi+3\xi^2+5\xi^3+10\xi^4+16\xi^5+29\xi^6+\ldots , \label{eq:drezet}
\ee
generating the number of planar partitions with at most two rows of the corresponding size, some examples are\\

\begin{tabular}{|c|c|c|}
  \hline
  No boxes &  & 1 empty partition \\
  \hline
  1 box & $1$ & 1 partition \\
  \hline
  2 boxes & $\begin{array}{cc}
               1 & 1
             \end{array}
  $, $2$, $\begin{array}{c}
            1 \\
            1
          \end{array}$
   & 3 partitions \\
   \hline
  3 boxes & $3$, $\begin{array}{ccc}
                   1 & 1 & 1
                 \end{array}$, $\begin{array}{cc}
                                 2 & 1
                               \end{array}$, $\begin{array}{c}
                                                2 \\
                                                1
                                              \end{array}
                               $, $\begin{array}{cc}
                                     1 & 1 \\
                                     1 &
                                   \end{array}
                               $
   & 5 partitions \\
  \hline
\end{tabular}\\

This analogy suggests that the stabilized $\delta_\gamma(j)$ distribution counts some number of \emph{constrained} partitions, only deviating from (\ref{eq:partitions}) at higher orders in $\xi$. This hypothesis is reminiscent of results of \cite{MANSCHOT, REINEKE}.

Of course, $g(\xi)$ is also closely related to the Dedekind $\eta$-function. It is quite curious that the BPS degeneracies have some relation to modular functions. This has been long expected in supergravity \cite{OSV,Denef-Moore,Cheng-et-al} but the appearance in field theory is novel.

In fact, the above   connection to the Dedekind $\eta$ was noted before our work by Reineke \cite{REINEKE03},
who offers a mathematical explanation. But the physical import of this strange behavior remains
mysterious.

\section{Open Problems}\label{sec:OpenProblems}

In conclusion we would like to mention a few open problems
and questions raised by the current work.

\begin{enumerate}

\item It is natural to guess that wild degeneracies will be a common
feature among higher rank theories of class $S$. Strictly speaking,
the only examples we have given are for gauge group $SU(3)$, but we
fully expect that the phenomenon will persist for $SU(K)$ with $K>3$.
This is strongly suggested by the quiver analysis of Section
\ref{subsec:RelateQuivers},  but a fully rigorous proof would
require that one demonstrate that
the path exhibited in the moduli space of stability
parameters of the the Fiol quiver, which leads to wild wall crossing
for $K>3$, actually can be chosen in the moduli space of physical
stability parameters. (While not fully mathematically rigorous,
a compelling physical argument that this is indeed the case is that
we could consider a hierarchy of symmetry breaking where
$SU(K)$ is much more strongly broken to $SU(3) \times U(1)^{K-3}$ than the
$SU(3)$ is broken to $U(1)^2$.)

\item Another open problem along similar lines is
 how the presence of, say, matter multiplets
affects the existence of wild degeneracies.

\item It should be noted that the explicit point on the Coulomb
branch illustrated in  Figure \ref{fig:3-herd-in-su3} is in fact
different from the region explored in Section \ref{subsec:PathCoulomb}.
Nevertheless, using the techniques of Appendix \ref{app:herd-appendix}  we have
checked that the same crucial algebraic equation \eqref{eq:P}
governing the street factors of herds indeed appears in the
spectral networks that arise in this region. These networks are
very similar to but not quite the same as the $m$-herds. One might
ask for a succinct test to see whether a degenerate spectral
network leads to $m$-wild degeneracies.

\item  It would be nice to understand better the
 physics of the curious invariance of the BPS degeneracies under the transformation
by the $g_m$ matrix discussed in Section \ref{sec:mocm} above. To the extent
that the relation to quivers is physical, a physical understanding is indeed
provided by the arguments in Section \ref{subsec:RelateQuivers}. However,
we would like to suggest an alternative interpretation using the
 halo picture of BPS states. If we consider a core particle $\gamma$
with halo particles of charge $\gamma'$ then the replacement of the hypermultiplet of charge
$\gamma$
for the hypermultiplet $\gamma + m \gamma'$ is simply flipping the Fermi sea of the
halo Fock space. (See, e.g. Section 3.5 of  \cite{Andriyash:2010yf}  for a similar transformation.)
Perhaps then a physical derivation of the symmetry could proceed by using Fermi flips
to establish
such a symmetry for framed BPS states and then using recursion relations between framed and
unframed BPS states to deduce it for general degeneracies. This symmetry also raises the
interesting possibility that the mutation method for determining BPS degeneracies can be
extended to higher spin states.

 \item
The $g_{m}$ symmetry of Kronecker quivers makes a surprising prediction about
two well-known formulae: Reineke's formula for Poincar\'e polynomials of quiver
varieties \cite{REINEKE}, and the Manschot-Pioline-Sen wall-crossing formula \cite{MANSCHOT,Manschot:2011xc}.
These formulae involve sums over certain   partitions.
For certain charges, there is rather extensive cancelation between terms in these formulae
implied by  the $g_{m}$ symmetry of the BPS degeneracies. Since the
individual terms in the sum in the MPS formulae have a simple geometrical interpretation
\cite{Manschot:2011xc}    the $g_m$ symmetry together with the MPS formula
imply nontrivial identities on equivariant Dirac indices.
For a simple and dramatic example we can choose $m=3$ and note that that $(1,1)$ has a very simple PSC, but
\be
(g_{3})^{k}\cdot{1 \choose 1} = {{F_{2k-1}} \choose {F_{2k+1}}}
\ee
(where $F_n$ is the $n^{\rm th}$ Fibonacci number)   involves  arbitrarily large charges.
Clearly there are many terms in the MPS formula (\ref{eq:MPS-WCF}) and,
as we just said, their coefficients have a beautiful geometrical interpretation as equivariant
indices of Dirac operators on the Denef moduli spaces. So the
identity\footnote{We use the notation $[n]_{y}:=\frac{y^{n}-y^{-n}}{y-y^{-1}}$}
\be
\Omega(  (F_{2k-1}, F_{2k+1})  ;y)  =  \Omega(  (1,1)  ;y) = [3]_y
\ee
 is a very remarkable set of identities  for these   indices.
It would be interesting to understand better these identities
(and their analogues for $m>3$)   from a geometrical point of view.

\item
Returning to the key algebraic equation \eqref{eq:P}, a natural question is whether
there is a physical interpretation of the other roots of this equation. We
expect that there will be. For example, choose a small path $\wp$ crossing a $c$-street
in an $m$-herd. The corresponding supersymmetric interface has a vev when
wrapped on the circle in $\IR^3 \times S^1$ given by
\be
\langle L_\zeta(\wp)\rangle_m =
\begin{pmatrix} q(m,\zeta) &  0 & 0 \\ 0 & 1 & 0 \\ 0 & 0 & q(m,\zeta) \\ \end{pmatrix}
\ee
where $m$ is a point in Hitchin moduli space ${\cal M}$, $\zeta \in {\mathbb{C}}^*$
has phase $\arg \zeta = \arg Z(\gamma + \gamma')$,
and $q(m,\zeta) = Q(c)\vert_{X_{\gamma_c} \rightarrow {\cal Y}_{\gamma_c}}$,
where ${\cal Y}_{\gamma_c}$ is a function on the twistor space of the Hitchin moduli
space ${\cal M}$ constructed in \cite{GMN1,GMN5}.  It therefore makes sense to ask
about the physical meaning of the analytic behavior of $\langle L_\zeta(\wp)\rangle$,
and this might well involve the other roots of \eqref{eq:P}. Exploring
this point further is beyond the scope of this paper.

\item A closely related point to the previous one is that the exponential growth of
$\Omega$ for certain charges implies a similar exponential growth for $\mu$ and
therefore for $\FOmega$. We expect this will have important implications for the
construction of hyperkahler metrics of associated Hitchin systems proposed
in \cite{GMN1} and for the
definition of the nonabelianization map of \cite{GMN5,GMN6}. Again, we leave this
important point for future work.

\end{enumerate}

\appendix
\section{Protected spin characters of the $3$-Kronecker quiver}\label{app:PSC-tables}

In this subsection we discuss some data for the ``refined BPS degeneracies,'' or, more
properly, the ``protected spin character.''  First, we recall some definitions. Then
we present the data.

\subsection{Spin decompositions}

Short irreducible representations of the ${\cal N}=2$ superalgebra take the general form \cite{WESS-BAGGER,MOORE_PITP}
\be
\rho_{hh}\otimes\mathfrak{h}
\ee
where $\rho_{hh}=(1/2,0)\oplus(0,1/2)$ is the \emph{half-hypermultiplet} representation of $\mathfrak{so}(3)\oplus\mathfrak{su}(2)_{R}$, and $\mathfrak{h}$ is the \emph{Clifford vacuum}. It has been shown recently in \cite{DIACONESCU} that the Clifford vacuum is actually a singlet of $\mathfrak{su}(2)_{R}$, this fact had been previously known as the \emph{no-exotics conjecture}.

In order to extract information about the spin decomposition of the BPS index, we study a refinement known as the \emph{protected spin character} (see e.g. \cite{GMN3,MOORE_FELIX})
\be
\begin{split}
\Omega (\gamma,u;y)&=\Tr_{\mathfrak{h}_{\gamma}}y^{2J_{3}}(-y)^{I_{3}}  \\
 & = \sum_{m}^{} a_{m}(\gamma,u) \,(-y)^{m}, \label{eq:PSC-def}
\end{split}
\ee
where $J_{3},I_{3}$ are Cartan elements of $\mathfrak{so}(3),\,\mathfrak{su}(2)_{R}$ respectively, and the last line defines the coefficients $a_{m}$. The PSC reduces to the BPS index in the limit $y\to -1$.

For a given charge $\gamma$, $\mathfrak{h}_{\gamma}$ has an isotypical decomposition into $\mathfrak{so}(3)$ reps:
\be
	{\mathfrak h}_{\gamma} = \bigoplus_{j} \Big( D_{j} \otimes (j,0) \Big) \label{eq:spin-decomposition}
\ee
where the degeneracy space $D_{j}$ is a complex vector space of dimension $\delta_{\gamma}(j)$. Therefore
\be
	\Omega(\gamma) = \sum_{j} (-1)^{2j}\delta_{\gamma}(j) (2j+1). \label{eq:spin-splitting}
\ee
The numerical evidence given below suggests that the degeneracies $\delta_{\gamma}(j)$ for the $3$-Kronecker quiver satisfy the following property: for fixed $\gamma$, $\delta_{\gamma}(j)\neq 0$ only for $2j$ of a definite parity.\footnote{Indeed the data suggests that $2j$ must be odd for $\gamma=a\gamma_{1}+b\gamma_{2}$ with $a,b$ both even and $2j$ must be even otherwise.}
For such spin degeneracies note that
\be
	(-1)^{2j}\Omega(\gamma) = {\rm{dim}} {\cal H}_{\gamma}.
\ee
Of course, knowing $\Omega(\gamma)$ does not determine the isotypical decomposition. In order to determine that we need to employ a generalization of the KSWCF known in the physics literature as the ``motivic'' KSWCF \cite{KS,KS_MOTIVIC_I,KS_MOTIVIC_II,DIMOFTE-GUKOV-09,GMN3}. We introduce a set of  non-commutative formal variables obeying
\be
\hat Y_{\gamma} \hat Y_{\gamma'}=y^{\langle\gamma,\gamma'\rangle}\hat Y_{\gamma+\gamma'},\qquad\forall\gamma,\gamma'\in\Gamma \, .
\ee
The generalization of (\ref{eq:formal-K}) is then (for details, see \cite{GMN3})
\be
\begin{split}
& \hat\CK^{\Omega(\gamma;y)}_{\gamma_{}}:\, \hat Y_{\gamma_{0}}\mapsto \hat Y_{\gamma_{0}} \  \prod\lm_{m\in\mathbb{Z}}\left(\Phi_{\langle\gamma_{0},\gamma_{}\rangle}((-y)^m \hat Y_{\gamma_{}})\right)^{a_m\, (\text{sign}\langle\gamma_{0},\gamma_{}\rangle)}
\end{split}
\ee
where the $a_{m}$ are defined according to (\ref{eq:PSC-def}), and
\be
	\Phi_n (\xi):=\prod\lm_{s=1}^{|n|}(1+y^{-\mathop{\rm sign}(n)(2s-1)}\xi).
\ee

Let us now apply this formalism to the case at hand, namely the $3$-Kronecker quiver. The motivic version of the wall crossing identity is
\be
	\hat\CK_{\gamma_{2}}\hat\CK_{\gamma_{1}}=\, : \prod_{\mathop{}^{a\gamma_{1}+b\gamma_{2}}_{\ a,b\geq0}} \hat \CK_{\gamma_{}}^{\Omega(a\gamma_{1}+b\gamma_{2};y)}  : \label{eq:motivic-WCF}
\ee
The RHS admits a \emph{unique}
decomposition with the required charge orderings
and hence this equation fully determines the $\Omega(a\gamma_{1}+b\gamma_{2};y)$.

In practical terms the protected spin characters can be extracted from this formula as follows.
 First,   acting with the operator on the LHS of (\ref{eq:motivic-WCF}) on the formal variable $\hat Y_{\gamma_{1}}$, yields
 \footnote{As a side note, this implies that a line defect with charge $\gamma_{1}$
 would support halo configurations of vanilla hypermultiplets, with overall halo charges
 $\gamma_{h}=k\gamma_{2},\,k=0,1,2,3$. The $\mathfrak{so}(3)$ representations of
 the respective framed BPS states would have spin $j=0,1,1,0$ (see \cite{GMN3,MOORE_FELIX}).
  These can also be thought of as the Hall-halo configurations of \cite{DENEF}.}
\be
	\hat Y_{\gamma_{1}}+\left(y^{-2}+1+y^2\right) \hat Y_{\gamma_{1}+\gamma_{2}}+\left(y^{-2}+1+y^2\right) \hat Y_{\gamma_{1}+2\gamma_{2}}+\hat Y_{\gamma_{1}+3\gamma_{2}} .
\ee
with a similar formula for the action on $\hat Y_{\gamma_{2}}$.
Then, we apply an inductive procedure
directly analogous to that used in  (\ref{eq:decomposition-technique}) for the
ordinary BPS indices.

We report the resulting PSCs in \ref{app:PSC-tables-data}, for charges up to $a+b\leq 15$ .

\subsection{The data}\label{app:PSC-tables-data}

The following tables report the content of BPS boundstates corresponding to the $3$-Kronecker quiver, only a quarter of the spectrum is given\footnote{I.e. one half of the \emph{particle} spectrum, namely dimension vectors with non-negative entries.} , the rest is determined by symmetry. For convenience, boundstates are ordered according to the phase of the central charge. Here $j$ labels the $\mathfrak{so}(3)$ irrep of the Clifford vacuum, while $\delta$ counts the number of occurrences of such irreps.\\

\begin{table}[h!]
\centering
\begin{tabular}{ccc}
\begin{minipage}[t]{.3\textwidth}
\small
\be
\begin{array}{l|cc}
\gamma & j & \delta \\
\hline \hline
\gamma_{1} & 0 & 1 \\
\hline
3\gamma_{1}+\gamma_{2} & 0 & 1 \\
\hline
8\gamma_{1}+3\gamma_{2} & 0 & 1 \\
\hline
10\gamma_{1}+4\gamma_{2} & 5/2 & 1 \\
\hline
5\gamma_{1}+2\gamma_{2} & 1 & 1 \\
\hline
7\gamma_{1}+3\gamma_{2} & 3 & 1 \\
 & 5 & 1 \\
\hline
9\gamma_{1}+4\gamma_{2} & 0 & 2 \\
& 1 & 2 \\
& 2 & 3 \\
& 3 & 2 \\
& 4 & 2 \\
& 6 & 1 \\
\hline
2\gamma_{1}+\gamma_{2} & 1 & 1 \\
\hline
4\gamma_{1}+2\gamma_{2} & 5/2 & 1 \\
\hline
6\gamma_{1}+3\gamma_{2} & 3 & 1 \\
 & 5 & 1 \\
\hline
8\gamma_{1}+4\gamma_{2} & 5/2 & 1 \\
& 9/2 & 2 \\
& 11/2 & 1 \\
& 13/2 & 2 \\
& 17/2 & 1 \\
\hline
10\gamma_{1}+5\gamma_{2} & 1 & 1 \\
& 3 & 2 \\
& 4 & 2 \\
& 5 & 4 \\
& 6 & 4 \\
& 7 & 5 \\
& 8 & 4 \\
& 9 & 4 \\
& 10 & 2 \\
& 11 & 2 \\
& 13 & 1 \\
\end{array} %
\nonumber
\ee
\end{minipage}%
&
\begin{minipage}[t]{.3\textwidth}
\small
\be
\begin{array}{l|cc}
\gamma & j & \delta \\
\hline \hline
9\gamma_{1}+5\gamma_{2} & 0 & 7 \\
& 1 & 25 \\
& 2 & 30 \\
& 3 & 38 \\
& 4 & 32 \\
& 5 & 31 \\
& 6 & 23 \\
& 7 & 21 \\
& 8 & 12 \\
& 9 & 11 \\
& 10 & 6 \\
& 11 & 5 \\
& 12 & 2 \\
& 13 & 2 \\
& 15 & 1 \\
\hline
7\gamma_{1}+4\gamma_{2} & 0 & 5 \\
 & 1 & 5 \\
 & 2 & 11 \\
 & 3 & 7 \\
 & 4 & 9 \\
 & 5 & 4 \\
 & 6 & 5 \\
 & 7 & 2 \\
 & 8 & 2 \\
 & 10 & 1 \\
\hline
5\gamma_{1}+3\gamma_{2} & 0 & 2 \\
 & 1 & 2 \\
 & 2 & 3 \\
 & 3 & 2 \\
 & 4 & 2 \\
 & 6 & 1 \\
\end{array} %
\nonumber
\ee
\end{minipage}%
&
\begin{minipage}[t]{.3\textwidth}
\small
\be
\begin{array}{l|cc}
\gamma & j & \delta \\
\hline \hline
8\gamma_{1}+5\gamma_{2} & 0 & 17 \\
& 1 & 32 \\
& 2 & 55 \\
& 3 & 55 \\
& 4 & 61 \\
& 5 & 48 \\
& 6 & 44 \\
& 7 & 30 \\
& 8 & 25 \\
& 9 & 15 \\
& 10 & 12 \\
& 11 & 6 \\
& 12 & 5 \\
& 13 & 2 \\
& 14 & 2 \\
& 16 & 1 \\
\hline
3\gamma_{1}+2\gamma_{2} & 1 & 2 \\
 & 3 & 1 \\
\hline
6 \gamma_{1}+4\gamma_{2} & 1/2 & 4 \\
 & 3/2 & 7 \\
  & 5/2 & 11 \\
   & 7/2 & 7 \\
    & 9/2 & 10 \\
     & 11/2 & 5 \\
      & 13/2 & 5 \\
       & 15/2 & 2 \\
        & 17/2 & 2 \\
         & 21/2 & 1 \\
\end{array} %
\nonumber
\ee
\end{minipage}%
\end{tabular}
\hfill
\end{table}

\clearpage

\begin{table}[h!]
\centering
\begin{tabular}{ccc}
\begin{minipage}[t]{.3\textwidth}
\small
\be
\begin{array}{l|cc}
\gamma & j & \delta \\
\hline \hline
9\gamma_{1}+6\gamma_{2} & 0 & 31 \\
& 1 & 125 \\
& 2 & 173 \\
& 3 & 241 \\
& 4 & 251 \\
& 5 & 279 \\
& 6 & 255 \\
& 7 & 244 \\
& 8 & 201 \\
& 9 & 177 \\
& 10 & 129 \\
& 11 & 109 \\
& 12 & 74 \\
& 13 & 58 \\
& 14 & 37 \\
& 15 & 29 \\
& 16 & 15 \\
& 17 & 13 \\
& 18 & 16 \\
& 19 & 5 \\
& 20 & 2 \\
& 21 & 2 \\
& 23 & 1 \\
\hline
7\gamma_{1}+5\gamma_{2} & 0 & 17 \\
& 1 & 32 \\
& 2 & 55 \\
& 3 & 55 \\
& 4 & 61 \\
& 5 & 48 \\
& 6 & 44 \\
& 7 & 30 \\
& 8 & 25 \\
& 9 & 15 \\
& 10 & 12 \\
& 11 & 6 \\
& 12 & 5 \\
& 13 & 2 \\
& 14 & 2 \\
& 16 & 1 \\
\end{array} %
\nonumber
\ee
\end{minipage}%
&
\begin{minipage}[t]{.3\textwidth}
\small
\be
\begin{array}{l|cc}
\gamma & j & \delta \\
\hline \hline
4\gamma_{1}+3\gamma_{2} & 0 & 2 \\
 & 1 & 2 \\
 & 2 & 3 \\
 & 3 & 2 \\
 & 4 & 2 \\
 & 6 & 1 \\
\hline
8\gamma_{1}+6\gamma_{2} & 1/2 & 94 \\
& 3/2 & 171 \\
& 5/2 & 242 \\
& 7/2 & 263 \\
& 9/2 & 291 \\
& 11/2 & 263 \\
& 13/2 & 252 \\
& 15/2 & 203 \\
& 17/2 & 179 \\
& 19/2 & 128 \\
& 21/2 & 109 \\
& 23/2 & 71 \\
& 25/2 & 58 \\
& 27/2 & 35 \\
& 29/2 & 29\\
& 31/2 & 15 \\
& 33/2 & 13 \\
& 35/2 & 6 \\
& 37/2 & 5 \\
& 39/2 & 2 \\
& 41/2 & 2 \\
& 45/2 & 1 \\
\hline
5\gamma_{1}+4\gamma_{2} & 0 & 5 \\
 & 1 & 5 \\
 & 2 & 11 \\
 & 3 & 7 \\
 & 4 & 9 \\
 & 5 & 4 \\
 & 6 & 5 \\
 & 7 & 2 \\
 & 8 & 2 \\
 & 10 & 1 \\
\end{array}  %
\nonumber
\ee
\end{minipage}%
&
\begin{minipage}[t]{.3\textwidth}
\small
\be
\begin{array}{l|cc}
\gamma & j & \delta \\
\hline
\hline
6\gamma_{1}+5\gamma_{2} & 0 & 7 \\
 & 1 & 25 \\
 & 2 & 30 \\
 & 3 & 38 \\
 & 4 & 32 \\
 & 5 & 31 \\
 & 6 & 23 \\
 & 7 & 21 \\
 & 8 & 12 \\
 & 9 & 11 \\
 & 10 & 6 \\
 & 11 & 5 \\
 & 12 & 2 \\
 & 13 & 2 \\
 & 15 & 1 \\
\hline
7\gamma_{1}+6\gamma_{2} & 0 & 23 \\
& 1 & 95 \\
& 2 & 119 \\
& 3 & 160 \\
& 4 & 150 \\
& 5 & 157 \\
& 6 & 131 \\
& 7 & 124 \\
& 8 & 91 \\
& 9 & 83 \\
& 10 & 57 \\
& 11 & 49 \\
& 12 & 31 \\
& 13 & 26 \\
& 14 & 14 \\
& 15 & 13 \\
& 16 & 6 \\
& 17 & 5 \\
& 18 & 2 \\
& 19 & 2 \\
& 21 & 1 \\
\end{array} %
\nonumber
\ee
\end{minipage}%
\end{tabular}
\hfill
\end{table}
\clearpage

\begin{table}[h!]
\centering
\begin{tabular}{ccc}
\begin{minipage}[t]{.3\textwidth}
\small
\be
\begin{array}{l|cc}
\gamma & j & \delta \\
\hline \hline
8\gamma_{1}+7\gamma_{2} & 0 & 135 \\
& 1 & 353 \\
& 2 & 562 \\
& 3 & 677 \\
& 4 & 765 \\
& 5 & 762 \\
& 6 & 752 \\
& 7 & 679 \\
& 8 & 619 \\
& 9 & 522 \\
& 10 & 455 \\
& 11 & 363 \\
& 12 & 304 \\
& 13 & 231 \\
& 14 & 188 \\
& 15 & 135 \\
& 16 & 109 \\
& 17 & 73 \\
& 18 & 57 \\
& 19 & 36 \\
& 20 & 28 \\
& 21 & 16 \\
& 22 & 13 \\
& 23 & 6 \\
& 24 & 5 \\
& 25 & 2 \\
& 26 & 2 \\
& 28 & 1 \\
\end{array}  %
\nonumber
\ee
\end{minipage}%
&
\begin{minipage}[t]{.3\textwidth}
\small
\be
\begin{array}{l|cc}
\gamma & j & \delta \\
\hline
\hline
\gamma_{1}+\gamma_{2} & 1 & 1 \\
\hline
2\gamma_{1}+2\gamma_{2} & 5/2 & 1 \\
\hline
3\gamma_{1}+3\gamma_{2} & 3 & 1 \\
 & 5 & 1 \\
\hline
4\gamma_{1}+4\gamma_{2} & 5/2 & 1 \\
 & 9/2 & 2 \\
 & 11/2 & 1 \\
 & 13/2 & 2 \\
 & 17/2 & 1 \\
\hline
 5\gamma_{1}+5\gamma_{2} & 1 & 1 \\
 & 3 & 2 \\
 & 4 & 2 \\
 & 5 & 4 \\
 & 6 & 4 \\
  & 7 & 5 \\
   & 8 & 4 \\
    & 9 & 4 \\
     & 10 & 2 \\
      & 11 & 2 \\
       & 13 & 1 \\
\hline
6\gamma_{1}+6\gamma_{2} & 1/2 & 1 \\
& 3/2 & 2 \\
& 5/2 & 5 \\
& 7/2 & 5 \\
& 9/2 & 11 \\
& 11/2 & 9 \\
& 13/2 & 18 \\
& 15/2 & 15 \\
& 17/2 & 20 \\
& 19/2 & 15 \\
& 21/2 & 18 \\
& 23/2 & 9 \\
& 25/2 & 11 \\
& 27/2 & 5 \\
& 29/2 & 5\\
& 31/2 & 2 \\
& 33/2 & 2 \\
& 37/2 & 1 \\
\end{array}  %
\nonumber
\ee
\end{minipage}%
&
\begin{minipage}[t]{.3\textwidth}
\small
\be
\begin{array}{l|cc}
\gamma & j & \delta \\
\hline
\hline
7\gamma_{1}+7\gamma_{2} & 0 & 1 \\
& 1 & 10 \\
& 2 & 12 \\
& 3 & 23 \\
& 4 & 28 \\
& 5 & 41 \\
& 6 & 48 \\
& 7 & 63 \\
& 8 & 68 \\
& 9 & 79 \\
& 10 & 77 \\
& 11 & 79 \\
& 12 & 68 \\
& 13 & 63 \\
& 14 & 48 \\
& 15 & 41 \\
& 16 & 29 \\
& 17 & 23 \\
& 18 & 14 \\
& 19 & 12 \\
& 20 & 6 \\
& 21 & 5 \\
& 22 & 2 \\
& 23 & 2 \\
& 25 & 1 \\
\end{array} %
\nonumber
\ee
\end{minipage}
\end{tabular}
\hfill
\label{tab:etcetc}
\end{table}

\section{The Six-Way Junction} \label{app:six-way}
\begin{figure}
	\begin{center}
		 \includegraphics[scale=0.5]{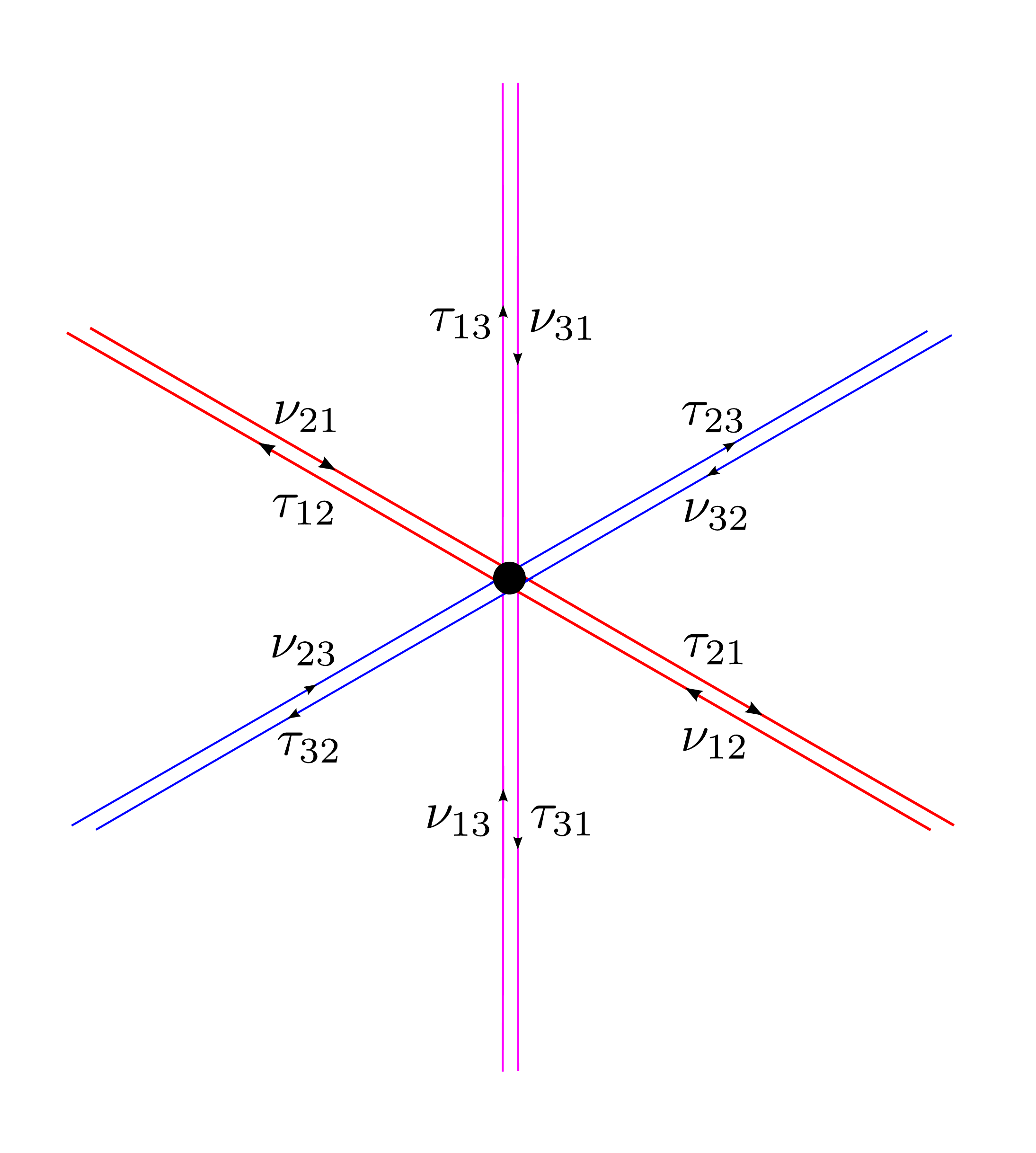}
		\caption{A six-way junction.  Two-way streets are resolved into one-way constituent streets using the British resolution.  Streets of type 12 are red, type 23 are blue, and type 13 are fuchsia.  A soliton generating function attached to a (one-way constituent) street is shown adjacent to its respective street.  Subscripts on the soliton generating functions are ordered pairs $ij \in \{1,2,3\}^2$ denoting the type of solitons that the generating function ``counts". \label{fig:six-way}}
	\end{center}
\end{figure}

For reference, we present some basic conditions on soliton generating functions as enforced by the homotopy invariance of the framed 2D-4D generating functions $F(\wp,\vartheta)$.  First, using the convention described in Section \ref{sec:degen_net}, we assign every two-way street an orientation.  If the network in question is degenerate, we resolve all two-way streets into ``constituent one-way streets" using the \textit{British resolution}: let $p$ be a two-way street; using the orientation on $p$, we resolve $p$ into two one-way streets running in opposite directions, infinitesimally displaced from one another, and such that the street pointing along the orientation of $p$ is to the \textit{left} of the street running against the orientation.  If $p$ is a two-way street of type $ij$ (i.e. composed of coincident streets of type $ij$ and type $ji$), then (after resolving) the street on the left is of type $ij$ and the street on the right is of type $ji$.

Just as with Kirchoff's circuit laws it is most convenient to express our equations locally around each joint (or branch point).  Hence, rather than expressing them in terms of the street-dependent $\Upsilon/\Delta$ notation introduced in (\ref{eq:upsilon_def})-(\ref{eq:delta_def}), we will temporarily adopt a joint-dependent notation.

\begin{definition}
	Let $v \in C$ be a joint or branch point, then $\tau_{ij}$ will denote the soliton generating function attached to a constituent one-way street of type $ij$ running \textit{out} of $v$, and $\nu_{ij}$ will denote the soliton generating function attached to a constituent one-way street of type $ij$ running \textit{into} $v$.
\end{definition}

In a full spectral network, the joint dependent $\tau,\, \nu$ notation can become redundant; so we will eventually revert back to the $\Upsilon/\Delta$ notation in Appendix \ref{app:herd-appendix}.

To define products of soliton generating functions properly we introduce the following.

\begin{definition}
	Let $\eta$ be a formal variable that acts on each formal variable $X_{a}$ in the homology path algebra via
	\begin{align*}
		\eta X_{a} &= X_{a^{\mathrm{tw}}},
	\end{align*}
	where, at the level of 1-chains, $a^{\mathrm{tw}}$ is the 1-chain produced by inserting a half-twist along the circle fiber of $\twid{\Sigma} \rightarrow \Sigma$ at some point\footnote{Up to homotopy (rel endpoints) the insertion point does not matter; hence, it is irrelevant for relative homology.} along $a$.
\end{definition}

\begin{remark}
	It is immediate that $\forall G \in \CA$
	\begin{align*}
		\eta^2 G = X_{H} G = - G.
	\end{align*}
\end{remark}

We now consider a general type of joint, that can occur for a spectral network subordinate to a branched cover with $K \geq 3$ sheets, where six (possibly two-way) streets meet.  The situation is shown in Fig.~\ref{fig:six-way}: the (relevant) sheets of the cover are labeled from $1$ to $3$, and the soliton generating functions attached to a constituent one-way street (under the British resolution of all possible two-way streets) are shown adjacent to their corresponding sheet.  Using homotopy invariance of $F(\wp,\vartheta)$, one arrives at the six-way junction equations:\footnote{In \cite{GMN5} these equations were erroneously written without the factors $\eta$, $\eta^{-1}$.}
\begin{equation}
	\begin{array}{lr}
	\begin{aligned}
	\tau_{12} &= \nu_{12} + \eta \tau_{13} \nu_{32}, \\
	\tau_{23} &= \nu_{23} + \eta \tau_{21} \nu_{13}, \\
	\tau_{31} &= \nu_{31} + \eta \tau_{32} \nu_{21},
	\end{aligned} &
	 \begin{aligned}
	\tau_{21} &= \nu_{21} + \eta^{-1} \nu_{23} \tau_{31}, \\
	\tau_{32} &= \nu_{32} + \eta^{-1} \nu_{31} \tau_{12}, \\
	\tau_{13} &= \nu_{13} + \eta^{-1} \nu_{12} \tau_{23}.
	\end{aligned}
\end{array}
\label{eq:6way}
\end{equation}

At a branch point of type $ij$, we will assume that there is at most one two-way street, of type $ij$, emanating from the branch point; on this two-way street we will take
\begin{align*}
	\tau_{ij} &= X_{a_{ij}}
\end{align*}
where $a_{ij}$ is the charge of a simpleton.\footnote{The coefficient of $\mu(a_{ij}) = 1$ in front of $X_{a_{ij}}$ is a result of the soliton input data (\ref{eq:simpleton_input}).}  As described at the end of Section \ref{sec:wtheta}, fixing a point $z$ near the branch point, such a simpleton is represented by a path which runs from the lift of $z$ on sheet $i$ to the lift of $z$ on sheet $j$.  In \cite{GMN5} one can find a more general rule accommodating the situation of three two-way streets emanating from the branch point; however, we will not need this generalized rule for $m$-herds.

\section{$m$-Herds in Detail}\label{app:herd-appendix}

\subsection{Notational Definitions} \label{app:herd_notation}
We will consider four distinct branch points of a branched cover $\Sigma \rightarrow C$ of
degree $K \geq 3$.  On any local region on $C'$, where the cover may be trivialized, only
three sheets will be relevant and we will label the relevant sheets from $1$ to $3$.  Label the
branch points from 1 to 4 such that branch points 1 and 3 are branch points of type 12, while
 branch points 2 and 4 are branch points of type 23.  For each branch point $i \in \{1, \cdots, 4\}$ we
 will choose a simpleton (cf. the end of Section \ref{sec:wtheta}) $s_{i}$ with endpoints on distinct
 lifts of some $z_{i} \in C'$ close to the $i$th branch point.  $s_{1}$ and $s_{2}$ will be simpletons of
  type $12$ and $23$, respectively, while $s_{3}$ and $s_{4}$ will be of type $21$ and $32$, respectively.
 We denote the charges of these simpletons by
\begin{equation}
	\begin{aligned}
	a_{*} &= [s_{1}] \in \Gamma_{12}(z_{1},z_{1})\\
	b_{*} &= [s_{2}] \in \Gamma_{23}(z_{2},z_{2})\\
	\conj{a}_{*} &= [s_{3}] \in \Gamma_{21}(z_{3},z_{3})\\
	\conj{b}_{*} &= [s_{4}] \in \Gamma_{32}(z_{4},z_{4}).
	\end{aligned}
	\label{eq:charge_def}
\end{equation}
More often, however, computations are performed in the ``$\mathbb{Z}/2\mathbb{Z}$-extended" sets $\twid{\Gamma}(\twid{z},-\twid{z}), \, \twid{z} \in \twid{C'}$ where we define
\begin{equation}
	\begin{aligned}
	a &= [\widehat{s_{1}}] \in \twid{\Gamma}_{12}(\twid{z}_{1}, -\twid{z}_{1})\\
	b &= [\widehat{s_{2}}] \in \twid{\Gamma}_{23}(\twid{z}_{2}, -\twid{z}_{2})\\
	\conj{a} &= [\widehat{s_{3}}] \in \twid{\Gamma}_{21}(\twid{z}_{3}, -\twid{z}_{3})\\
	\conj{b} &= [\widehat{s_{4}}] \in \twid{\Gamma}_{32}(\twid{z}_{4}, -\twid{z}_{4}).
	\end{aligned}
	\label{eq:lifted_charge_def}
\end{equation}
where $\widehat{\left( \cdot \right)}$ denotes the \textit{tangent framing lift} (first discussed in Section \ref{sec:comp_Omega}) and the $\twid{z}_{i} \in \twid{C}$ are the unit tangent vectors at the starting points of the tangent framing lifts.

In a slight abuse of notation, horse streets\footnote{See the definition in Section \ref{sec:herds}.} (which may be two-way), will
be denoted by decorated latin letters: $a_{i}, \, \conj{a}_{i}$ are streets
of type $12$, $b_{i}, \, \conj{b}_{i}$
are of type $23$, and $c$ is
of type $13$.  The subscripts, denoted by $i \in \{1,2,3 \}$, denote which street is in question and the use of overlines are just a notational exploit of the duality operation described below in \ref{sec_duality}.

Furthermore, in contrast with the ``joint-dependent" $\tau,\, \nu$ notation of Appendix \ref{app:six-way}, we will (more naturally) denote soliton generating functions\footnote{We refer to Section \ref{sec:form_var} for the detailed definitions of generating functions and formal variables.} ``streetwise." 	The point $z \in p \subset C$ in the definition of soliton generating functions will be dropped for notational convenience.  As mentioned in a remark at the end of Section \ref{sec:form_var}: for any $z,z' \in p$ the generating functions $\Upsilon_{z}(p)$ and $\Upsilon_{z'}(p)$ are related by parallel transport (similarly for $\Delta_{z}(p)$ and $\Delta_{z'}(p)$).

For the sake of readability, we will modify our notation slightly from Section \ref{sec:form_var} and write streets as subscripts.

\begin{definition}\
	 Let $p$ be a street, the generating function of solitons on $p$ which \textit{agree} with the orientation of $p$ is denoted $\Upsilon_{p}$, the generating function of solitons which \textit{disagree} with the orientation of $p$ is denoted $\Delta_{p}$.  In all figures in this paper streets are oriented in an upward direction (upsilon is for ``up" and delta is for ``down").
\end{definition}

We now wish to associate the street factor (a generating function) $Q_p$ to each street $p$.  To do so, it is convenient to pass through the definition of a closely related auxiliary function.

\begin{definition}
		\item For each street $p$, we define the function
			\begin{equation}
			\mathcal{Q}_p := 1 + \Upsilon_{p} \Delta_{p} \in \CA_{C}
			\label{eq:Q_def}
			\end{equation}
\end{definition}

To produce a formal series in the $X_{\gamma},\, \gamma \in \Gamma$, we use the ``basepoint-forgetting" closure map.

\begin{definition}
	\begin{equation*}
		Q_p := \cl \left[ \mathcal{Q}_{p} \right] \in \formgamma.
	\end{equation*}
\end{definition}

We now make some important technical remarks about the use of $\mathcal{Q}_{p}$ vs. $Q_p$.

\begin{remark}[Remarks]\
	If $p$ is a street of type $ij$, then $\mathcal{Q}_{p}$ is a formal series in formal variables over $\Gamma_{ii}$.  In particular, let $a \in \Gamma_{kl}$, then this means
	\begin{align*}
		\mathcal{Q}_{p} X_{a} &=
		\left\{
		\begin{array}{lr}
			0 & \text{if $k \neq i$} \\
			Q_p X_{a} = X_{a} Q_p & \text{if $k = i$}
		\end{array}
		\right.,\\
		X_a \mathcal{Q}_{p} &=
		\left\{
		\begin{array}{lr}
			0 & \text{if $l \neq i$} \\
			X_{a} Q_p = Q_p X_{a}  & \text{if $l = i$}
		\end{array}.
		\right.
	\end{align*}
	Hence, if the (left or right) action of $\mathcal{Q}_p$ on a soliton function of type $k l$ is nonvanishing, then it can be replaced with the (commutative) action of $Q_p$.  In the following derivations, the action of $\mathcal{Q}_p$ happens to be always nonvanishing; hence, it will almost always be replaced by $Q_p$, except in cases where we resist such replacements for the sake of precision and pedagogy.
\end{remark}

\begin{definition}[Terminology]
	Occassionally we will use the term \textit{spectral data} to refer to the collection of soliton generating functions, street factors, and the functions $\mathcal{Q}_{p}$ supported on a particular collection of streets.
\end{definition}

\subsection{Duality} \label{sec_duality}
As an oriented graph embedded in a disk, Fig.~\ref{fig:horse} is invariant under an involution given by rotating the diagram 180 degrees, and reversing all orientations; we denote this involution on streets $p$ via an overline
\begin{equation}
	p \mapsto \conj{p},
\end{equation}
for $p \in \{a_{i}, \conj{a_{i}}, b_{i}, \conj{b_{i}},c: i=1,2,3 \}$.  As the terminology suggests, this involution satisfies $\conj{\conj{p}}=p$ for every street $p$ and $\conj{c}=c$.  We claim that this geometric involution actually induces a duality operation on all spectral data, i.e. generating functions.  In particular, on any equations involving soliton generating functions, the replacements
\begin{equation}
	\begin{aligned}
	\Upsilon_{p} & \leftrightarrow  \Delta_{\conj{p}}\\
	\eta & \leftrightarrow  \eta^{-1},
	\end{aligned}
\end{equation}
with all products taken in reverse order, will also yield a valid equation.  This claim can be verified by brute-force checking.  Note, in particular, applying the duality operation to the definition of $\mathcal{Q}_p$ in (\ref{eq:Q_def}) will yield $\mathcal{Q}_{\conj{p}}$.

\subsection{The Horse as a Machine} \label{app:horse_machine}

Recall the definition of a horse is given as a condition on the subset of two-way streets of a spectral network in an open disk region (see Section \ref{sec:herds}).  For convenience we restate the definition.

\begin{definition}[Definitions]\
	\begin{enumerate}
	 \item A \textit{horse street} $p \in \{a_{1},a_{2},a_{3},b_{1},b_{2},b_{3},c,\conj{a_{1}},\conj{a_{2}},\conj{a_{3}},\conj{b_{1}}, \conj{b_{2}},\conj{b_{3}}\}$ is one of the streets of Fig.~\ref{fig:horse} (left frame).

	 \item Let $N$ be a spectral network (subordinate to some branched cover $\Sigma \rightarrow C$) and $U \subset C'$ be an open disk region.  Then $U \cap N$ is a horse if a subset of its streets can be identified with Fig.~\ref{fig:horse} in a way such that
	 	\begin{enumerate}
	 		\item Every two-way street is a horse street.

	 		\item There is always a two-way street identified with the street labeled $c$.
	 	\end{enumerate}
	\end{enumerate}
\end{definition}

It may happen, however, that on a horse there are ``background" non-horse streets that cannot be identified with those of Fig.~\ref{fig:horse}; by definition, these are one-way streets.  The following claim ensures that the computation of soliton generating functions on the streets of a horse are independent of the details of the non-horse streets.

\begin{figure}
	\begin{center}
		 \includegraphics[scale=0.4]{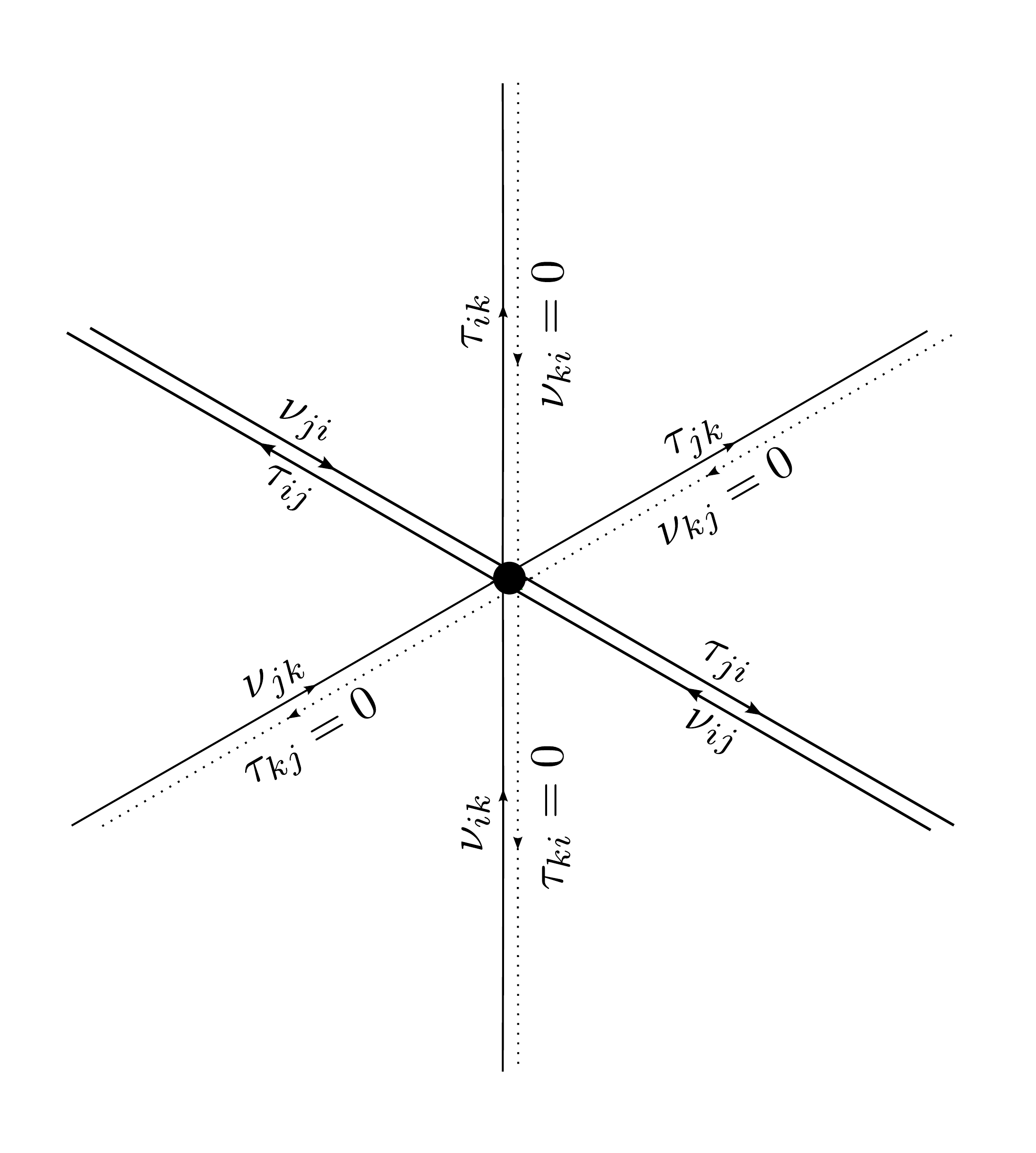}
		\caption{The most general type of joint where non-horse streets of class $(A)$ meets a horse street (which may be two-way).  As in Fig.~\ref{fig:six-way}, streets are resolved into one-way constituents using the British resolution.  Soliton generating functions vanish on the dotted streets. The labels $i,j,k$ are a permutation of the sheets $1,2,3$. \label{fig:background_joint}}
	\end{center}
\end{figure}

\begin{claim}
	 The equations for soliton generating functions on horse streets, induced by (\ref{eq:6way}), close on themselves.  I.e., the equations for the soliton generating functions on a given horse street can be written entirely in terms of the soliton generating functions on horse streets.
\end{claim}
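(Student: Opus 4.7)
The plan is to prove the claim by a local analysis at every joint at which a horse street appears, using the six-way junction equations (\ref{eq:6way}) to algebraically eliminate all non-horse generating functions. First I would classify joints that can contain horse streets into three types: (i) joints at which only horse streets meet, (ii) the four branch points at the corners of the horse, and (iii) joints at which one or more non-horse (background) streets meet horse streets. For type (i) the equations (\ref{eq:6way}) already involve only horse data. For type (ii) the simpleton input (\ref{eq:simpleton_input}) combined with the branch-point resolution of Appendix \ref{app:six-way} provides a closed boundary condition that feeds only into horse streets, so no non-horse data enters.

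The essential work is type (iii), modeled on the generic configuration in Figure \ref{fig:background_joint}. The key observation is that, by definition, every non-horse street is one-way, so for each such street exactly one of $\Upsilon_p$ and $\Delta_p$ is zero; in the figure these are the "dotted" constituent streets. Substituting these vanishing soliton generating functions into (\ref{eq:6way}) collapses several of the six equations to trivial identities and reduces the remaining ones to a small linear system in which the single non-vanishing background generating function appears linearly. The dual pair of equations at the same joint (cf. Section \ref{sec_duality}) then provides an independent relation which I would use to solve for that background generating function in terms of horse soliton generating functions; substituting back yields a relation purely among horse data.

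The main obstacle is ensuring that this elimination is globally consistent, because a background street can chain through several one-way joints before reconnecting to the horse subgraph. I would handle this by induction on the combinatorial distance along the background network from the nearest horse street or branch point. At each step the inductive hypothesis provides all soliton generating functions on the background street's ``far side'' in terms of horse data, and a single application of (\ref{eq:6way}) promotes this to the next joint. Termination is guaranteed because the background streets are entirely one-way, so they must originate at simpletons (whose generating functions are monomials, cf.\ (\ref{eq:simpleton_input})) rather than form closed loops of two-way data. A finite check of the admissible sheet-label assignments $i,j,k \in \{1,2,3\}$ at each such joint then confirms that the eliminated terms always pair up across the dual six-way equations, so the resulting system of equations for $\Upsilon_p, \Delta_p$ on horse streets is self-contained, as claimed.
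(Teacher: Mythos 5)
Your proposal has the right starting point --- the one-way character of the non-horse streets forces many soliton generating functions to vanish --- but it deploys this fact the wrong way and misses the one case where background data genuinely enters the horse equations. At a joint of the type in Figure \ref{fig:background_joint} (a background street with no endpoint on a horse joint crossing a horse street), the correct conclusion is not that one can ``solve for'' the surviving background generating function: that function (e.g.\ the incoming $\nu_{jk}$) is determined by data upstream on the background street and cannot be expressed in horse data at that joint, and the duality of Section \ref{sec_duality} is a symmetry of the six-way system, not a source of an independent equation to solve with. What actually happens is that the vanishing functions sit precisely in the cross-terms of the equations for the horse-street functions, so that $\tau_{ij}=\nu_{ij}$ and $\tau_{ji}=\nu_{ji}$: the background data \emph{decouples}, and there is nothing to substitute back. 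Because this argument is purely local at each such joint, no induction along the background network is needed; indeed your proposed induction would express background data in terms of \emph{other} simpletons rather than horse data, which is not what the claim requires, and the assertion that background streets ``must originate at simpletons'' is neither used nor needed.

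The genuinely nontrivial case, which your three-way classification of joints does not isolate, is the streets of type $13$ with a single endpoint on an internal horse joint (where $a_1$ and $b_1$ meet, or where $\conj{a_1}$ and $\conj{b_1}$ meet, etc.). There the background generating function \emph{does} appear in the six-way equations for the horse streets (e.g.\ $\tau_{13}$ enters the equation for $\tau_{12}$ in (\ref{eq:6way})), so one must (i) argue that such a street is necessarily \emph{outgoing} from the joint --- this uses the defining condition that $c$ is two-way, since an incoming $13$ street would pair with the outgoing one to form a two-way non-horse street, violating the horse condition --- and then (ii) read off from (\ref{eq:6way}) that its generating function equals $\eta^{-1}\Upsilon_{a_1}\Upsilon_{b_1}$, which \emph{is} horse data, and substitute. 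Without this step the closure claim is not established at the four internal horse joints, which are exactly the joints on which the computation of $Q(c)$ depends.
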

\begin{proof}
	Let us temporarily denote the four joints in Fig.~\ref{fig:horse} (left frame) as \textit{horse joints}.  We split non-horse streets into two classes:
	\begin{enumerate}
		\item[(A)] Streets that have no endpoints on a horse joint.

		\item[(B)] Streets that have a single endpoint on a horse joint.
	\end{enumerate}
	Let us first consider streets of class $(A)$.  The claim is trivial for $(A)$-streets that do not intersect a horse street.  Thus, we turn our attention to a joint where an $(A)$-street meets a horse street.  The most general picture of such a joint\footnote{By the ``most general picture" we mean a six-way junction equipped with the weakest possible constraints on incoming soliton degeneracy functions, compatible with the condition that only the streets of type $ij$ (for some fixed pair $ij$) are two-way.  Using (\ref{eq:6way}), one finds that the most general picture is Fig.~\ref{fig:background_joint}.} is depicted in Fig.~\ref{fig:background_joint}. In this figure: $i,j,k$ label any permutation of the sheets $1,2,3$, the streets of type $jk$ and $ki$ label background one-way streets, and the streets of type $ij$ compose the the horse street (after being split into two streets by the joint).  The soliton generating functions on the horse street are (in the ``joint-wise" notation of Section \ref{app:six-way}) $\tau_{ji}, \nu_{ij},\, \tau_{ij}$, and $\nu_{ji}$.  The claim (for $(A)$-streets) is then equivalent to the statement that $\tau_{ij} = \nu_{ij},\, \tau_{ji} = \nu_{ji}$; we will show this is the case.  Indeed, by the six-way joint equations (\ref{eq:6way}):
	\begin{align*}
		\tau_{ij} &= \nu_{ij} +
		\left\{
		\begin{array}{lr}
			\eta \tau_{ik} \nu_{kj}, & \text{if $ij \in \{12,\,23,\,31\}$}\\
			\eta^{-1} \nu_{ik} \tau_{kj}, & \text{if $ij \in \{21,\,32,\,13\}$}
		\end{array}
		\right.\\
		\tau_{ji} &= \nu_{ji} +
		\left\{
		\begin{array}{lr}
			\eta^{-1} \nu_{jk} \tau_{ki}, & \text{if $ij \in \{12,\,23,\,31\}$}\\
			\eta \tau_{jk} \nu_{ki}, & \text{if $ij \in \{21,\,32,\,13\}$}
		\end{array}
		\right.
	\end{align*}
	but $\nu_{kj} = 0,\, \tau_{k j} =0 , \, \nu_{ki} = 0$, and $\tau_{ki} = 0$.  Hence,
	\begin{align*}
		\tau_{ij} &= \nu_{ij}\\
		\tau_{ji} &= \nu_{ji}.
	\end{align*}

	Now, via inspection of Fig.~\ref{fig:horse}, streets of class $(B)$ are of type 13.  If a $(B)$-street meets a horse street at a non-horse joint, then we apply the same argument used for $(A)$-streets to see that the (equations for) soliton generating functions on horse streets do not depend on the $(B)$-street soliton generating function.  Thus, we focus our attention on the horse joint.
	
	If a $(B)$-street meets a horse joint, then (\ref{eq:6way}) requires the equations for soliton generating functions, on the horse streets meeting the joint, to depend on the soliton generating function of the $(B)$-street.  We will show that the soliton generating function on the $(B)$ street can be rewritten in terms of generating functions on the horse streets.  First, note that if a $(B)$-street meets the horse joint where $a_{1}$ and $b_{1}$ meet, or the joint where $\conj{a}_{1}$ and $\conj{b}_{1}$ meet, then it must be outgoing with respect to the horse joint.  Indeed, the constraint that $c$ is two-way requires the presence of outgoing streets of type $13$ at the horse joints meeting $c$; if the $(B)$-street were incoming, it would combine with one of these outgoing streets to form a two-way street, violating the horse condition.  Without loss of generality, assume the $(B)$ street meets the horse joint where $a_{1}$ and $b_{1}$ meet; denote the soliton generating function on the $(B)$-street by $\Upsilon_{(B)}$.  Then, using (\ref{eq:6way}), it follows that $\Upsilon_{(B)} = \eta^{-1} \Upsilon_{a_{1}} \Upsilon_{b_{1}}$; so its soliton generating function is a function of the soliton generating functions on horse streets.

	If a $(B)$-street meets one of the other two horse joints (where $b_{3}$ and $\conj{a}_{3}$ meet or where $a_{3}$ and $\conj{b}_{3}$ meet), then there are two situations: the horse streets at the horse joint are both two-way, or only one of the horse streets at the horse joint is two-way.  The former situation is equivalent to the situation where the $(B)$-street meets the horse joint where $a_{1}$ and $b_{1}$ meet.  To resolve the latter situation we repeat the same argument used for $(A)$-streets.
\end{proof}

We divide the soliton generating functions supported on horse streets into elements of three subspaces: incoming data, outgoing data, and internal data.

\subsubsection*{Incoming data}
Incoming data is defined as the spectral data which flows into the internal joints of the horse and is supported on the external streets.  Here, the space of such data is composed of four soliton generating functions and their duals:
\begin{equation}
	\incdat = \left\{ \left(
	\begin{array}{cccc}
	\inup_{a_{1}}, & \inup_{b_{1}}, & \indown_{a_{3}}, & \indown_{b_{3}},\\
	 \indown_{\conj{a_{1}}}, & \indown_{\conj{b_{1}}}, &	 \inup_{\conj{a_{3}}}, & \inup_{\conj{b_{3}}}
	  \end{array}
	  \right)	\in \CA_{S}^{\times 8} \right\}.
\end{equation}
It will prove useful to subdivide this space of data further into generating functions of solitons that agree with the orientation of the diagram, $\incdat^{+}$, and those that disagree, $\incdat^{-}$:
\begin{equation*}
	\begin{aligned}
		\incdat^{+} &= \left\{ \left( \inup_{a_{1}}, \inup_{b_{1}}, \inup_{\conj{a_{3}}},  \inup_{\conj{b_{3}}} \right) \in \CA_{S}^{\times 4} \right\}, \\
		\incdat^{-} &= \left\{ \left( \indown_{\conj{a_{1}}}, \indown_{\conj{b_{1}}}, \indown_{a_{3}},  \indown_{b_{3}} \right ) \in \CA_{S}^{\times 4} \right\}.
	\end{aligned}
\end{equation*}

\subsubsection*{Outgoing data}
Similarly, outgoing data is defined as the spectral data which flows out of the internal joints and is supported on external streets.  This consists of the space of soliton generating functions,
\begin{equation}
	\outdat = \left\{\left(
	\begin{array}{cccc}
		\odown_{a_{1}}, & \odown_{b_{1}}, & \oup_{a_{3}}, & \oup_{b_{3}},\\
		\oup_{\conj{a_{1}}}, & \oup_{\conj{b_{1}}}, & \odown_{\conj{a_{3}}}, & \odown_{\conj{b_{3}}}
	\end{array}
	\right) \in \CA_{S}^{\times 8} \right\}.
\end{equation}
As with the incoming data, we can similarly subdivide this data into generating functions of solitons that agree or disagree with the overall orientation:
\begin{equation}
	\begin{aligned}
		\outdat^{+} &= \left\{ \left( \oup_{\conj{a_{1}}}, \oup_{\conj{b_{1}}}, \oup_{a_{3}}, \oup_{b_{3}} \right ) \in \CA_{S}^{\times 4} \right\}, \\
		 \outdat^{-} &= \left\{ \left(\odown_{a_{1}}, \odown_{b_{1}}, \odown_{\conj{a_{3}}}, \odown_{\conj{b_{3}}} \right) \in \CA_{S}^{\times 4} \right\}.
	\end{aligned}
\end{equation}

\subsubsection*{Internal/Bound data}
The internal data of the diagram is composed of the ten soliton generating functions defined on the internal streets $a_{2},\, b_{2},\, \conj{a_{2}},\, \conj{b_{2}}$:
\begin{equation}
	\intdat = \left\{ \left(
	\begin{array}{ccccc}
	\oup_{a_{2}}, & \oup_{b_{2}}, & \oup_{\conj{a_{2}}}, & \oup_{\conj{b_{2}}}, & \oup_{c},\\
	\odown_{\conj{a_{2}}}, & \odown_{\conj{b_{2}}}, & \odown_{a_{2}}, & \odown_{b_{2}}, & \odown_{c}.
	\end{array}
	\right) \in \CA_{S}^{\times 10} \right\}
\end{equation}
However, as far as the results of this paper are concerned, all that is relevant are the street factors $Q_p$, for $p$ an internal street, which are derived from the soliton generating functions above:
\begin{equation}
	\intdat \rightsquigarrow \left\{\left(
	\begin{array}{ccccc}
	\oQ_{a_{2}}, & \oQ_{b_{2}}, & \oQ_{\conj{a_{2}}}, & \oQ_{\conj{b_{2}}}, & \oQ_{c}
	\end{array}
	\right) \in \formgamma^{\times 5} \right\}.
\end{equation}
We then view a horse as a scattering-matrix machine that eats incoming solitons and spits out outgoing solitons + ``bound"/internal solitons:
\begin{equation*}
	\text{Horse}: \incdat \rightarrow \outdat \times \intdat,
\end{equation*}
or in other words, we can determine $\outdat$ and $\intdat$ as a function of $\incdat$; to do so we utilize the six-way junction equations (\ref{eq:6way}) to give\footnote{When using the six-way junction equations on the four relevant joints of a horse, pictured in the left panel of Fig.~\ref{fig:horse}, one must take into account one-way streets of type 13 that flow out of these joints.  However, as shown in the proof of the claim of Section \ref{app:horse_machine}, the soliton generating functions on these one-way streets can be written in terms of soliton generating functions on the horse streets.}
\begin{equation}\label{eq:sixwayhorse}
	\begin{split}
	\oup_{a_{2}} &= \inup_{a_{1}} + \eta \oup_{c} \odown_{b_{2}}\\
	\oup_{a_{3}} &= \oup_{a_{2}} + \eta \left(\eta^{-1} \oup_{a_{2}} \oup_{\conj{b_{2}}} \right) \odown_{\conj{b_{2}}}\\
	& = \oup_{a_{2}}  \osQ_{\conj{b_{2}}}\\
	\oup_{b_{2}} &= \inup_{b_{1}}\\
	\oup_{b_{3}} &= \oup_{b_{2}}\\
	\oup_{c} &= \eta^{-1} \inup_{a_{1}} \inup_{b_{1}}\\
	\odown_{a_{1}} &= \odown_{a_{2}} + \eta^{-1} \inup_{b_{1}} \left( \odown_{c} + \eta \odown_{b_{1}} \odown_{a_{2}} \right)\\
	&= \osQ_{b_{1}} \odown_{a_{2}} + \eta^{-1} \inup_{b_{1}} \odown_{c}\\
	\odown_{a_{2}} &= \indown_{a_{3}} + \eta^{-1} \inup_{\conj{b_{3}}} \left(\eta \odown_{\conj{b_{3}}} \indown_{a_{3}} \right)\\
	&= \osQ_{\conj{b_{3}}} \odown_{\conj{a_{3}}}\\
	\odown_{b_{1}} &= \odown_{b_{2}} + \eta^{-1} \odown_{c} \oup_{a_{2}}\\
	\odown_{b_{2}} &= \indown_{b_{3}}.
	\end{split}
\end{equation}
By applying the duality operation of Section \ref{sec_duality} to each equation above, we produce the rest of the six-way junction equations.

We wish to solve for the outgoing and internal (blue) quantities in terms of the incoming (red) quantities.

\subsubsection{Outgoing Soliton Generating Functions} \label{sec_out_sol}
Starting from $a_{1}$ and moving counter-clockwise around the edge of Fig. \ref{fig:horse}, we have
\begin{equation*}
	\begin{aligned}
	\odown_{a_{1}} &= \left(1 + \inup_{b_{1}} \indown_{\conj{b_{1}}} \indown_{\conj{a_{1}}} \inup_{a_{1}} \right) \left( 1 + \inup_{b_{1}} \indown_{b_{3}}\right) \left(1 + \inup_{\conj{b_{3}}} \indown_{\conj{b_{1}}}\right) \indown_{a_{3}} + \inup_{b_{1}} \indown_{\conj{b_{1}}} \indown_{\conj{a_{1}}} \\
	\odown_{b_{1}} &= \indown_{b_{3}} + \indown_{\conj{b_{1}}} \indown_{\conj{a_{1}}} \inup_{a_{1}} \left( 1+ \inup_{b_{1}} \indown_{b_{3}} \right) \\
	\odown_{\conj{b_{3}}} &= \indown_{\conj{b_{1}}} \\
	\oup_{a_{3}} &= \inup_{a_{1}} \left( 1 + \inup_{b_{1}} \indown_{b_{3}} \right) \left(1+\inup_{\conj{b_{3}}} \indown_{\conj{b_{1}}}\right)  \\
	\oup_{\conj{a_{1}}} &= \inup_{\conj{a_{3}}} \left( 1 + \indown_{\conj{a_{1}}} \inup_{a_{1}} \inup_{b_{1}} \indown_{\conj{b_{1}}} \right) \left( 1 + \inup_{\conj{b_{3}}} \indown_{\conj{b_{1}}}\right) \left( 1 + \inup_{b_{1}} \indown_{b_{3}}\right) + \inup_{a_{1}} \inup_{b_{1}} \indown_{\conj{b_{1}}} \\
	\oup_{\conj{b_1}} &= \inup_{\conj{b_{3}}} + \left( 1 + \inup_{\conj{b_{3}}} \indown_{\conj{b_{1}}} \right) \indown_{\conj{a_{1}}} \inup_{a_{1}} \inup_{b_{1}} \\
	\oup_{b_{3}} &= \inup_{b_{1}}  \\
	\odown_{\conj{a_{3}}} &= \left( 1 + \inup_{\conj{b_{3}}} \indown_{\conj{b_{1}}}\right) \left( 1 + \inup_{b_{1}} \indown_{b_{3}}\right) \indown_{\conj{a_{1}}} .
	\end{aligned}
\end{equation*}

\subsubsection{Outgoing Street Factors} \label{sec_out_deg}
We remark that all outgoing street factors can be expressed in terms of the internal street factors.  Hence, starting from $a_{1}$ and moving counter-clockwise around the edge of the diagram, we have
\begin{equation*}
	\begin{aligned}
		\oQ_{a_{1}} &= \oQ_{c} \oQ_{a_{2}}\\
		\oQ_{b_{1}} &= \oQ_{c} \oQ_{b_{2}} \\
		\oQ_{\conj{b_{3}}} &= \oQ_{\conj{b_{2}}} \\
		\oQ_{a_{3}} &= \oQ_{a_{2}} \\
		\oQ_{\conj{a_{1}}} &=  \oQ_{c} \oQ_{\conj{a_{2}}}  \\
		\oQ_{\conj{b_{1}}} &= \oQ_{c} \oQ_{\conj{b_{2}}}  \\
		\oQ_{b_{3}} &= \oQ_{b_{2}} \\
		\oQ_{\conj{a_{3}}} &= \oQ_{\conj{a_{2}}} .\\
	\end{aligned}
\end{equation*}

\subsubsection{Internal Street Factors} \label{sec_int_deg}
We now state the internal street factors in terms of the incoming soliton generating functions. These equations follow from \eqref{eq:sixwayhorse} and are:

\begin{equation*}
	\begin{aligned}
		\osQ_{c} &= 1 + \inup_{a_{1}} \inup_{b_{1}} \indown_{\conj{b_{1}}} \indown_{\conj{a_{1}}} \\
		\osQ_{a_{2}} &= 1 + \inup_{a_{1}} \oQ_{b_{2}} \oQ_{\conj{b_{2}}}  \indown_{a_{3}} \\
		\osQ_{\conj{b_{2}}} &= 1 + \inup_{\conj{b_{3}}} \indown_{\conj{b_{1}}} \\
		\osQ_{\conj{a_{2}}} &= 1 + \inup_{\conj{a_{3}}} \oQ_{b_{2}} \oQ_{\conj{b_{2}}} \indown_{\conj{a_{1}}} \\
		\osQ_{b_{2}} &= 1 + \inup_{b_{1}} \indown_{b_{3}}.
	\end{aligned}
\end{equation*}
By applying the closure map $\cl$ one produces the corresponding $Q_{p}$ functions.

\begin{remark}
	We note that in all the equations of sections \ref{sec_out_sol} - \ref{sec_out_deg} there is an almost magical cancellation of the half-twists $\eta$; this cancellation will ultimately ensure that the coefficients of the degeneracy generating functions $Q_p$ (as polynomials in some formal variable $X_{\widehat{\gamma_c}}$, yet to be identified) are all positive.

\end{remark}

\subsubsection*{Special Cases}
We now cite two important special cases of incoming data for a horse.

\begin{definition}[Definitions]\
	\begin{enumerate}
		\item A lower-sourced horse is a horse along with exactly ``two-sources from below," i.e. it is a horse restricted to the subset of $\incdat$ where a point in $\incdat^{+}$ is specified:
		\begin{equation}
			\incdat_{\operatorname{LSH}}^{+} = \left\{ \left(
			\begin{array}{l}
			\inup_{a_{1}} = \inX_{a}\\
			\inup_{b_{1}} = \inX_{b}\\
			\inup_{\conj{a_{3}}} = 0\\
			\inup_{\conj{b_{3}}} = 0.
			\end{array}
			 \right) \right\} \subset \incdat^{+}.
			\label{eq:lsh_cond}
		\end{equation} \\

		\item An upper-sourced horse is dual to a lower-sourced horse, i.e. it is a horse restricted to the subset of $\incdat$ where a point in $\incdat^{-}$ is specified:
		\begin{equation}
			\incdat_{\operatorname{USH}}^{-} = \left\{ \left(
			\begin{array}{l}
			\indown_{\conj{a_{1}}} = \inX_{\conj{a}}\\
			\indown_{\conj{b_{1}}} = \inX_{\conj{b}}\\
			\indown_{a_{3}} = 0\\
			\indown_{b_{3}} = 0.
			\end{array}
			 \right) \right \} \subset \incdat^{-}.
			\label{eq:ush_cond}
		\end{equation}

	\end{enumerate}

\end{definition}

\begin{remark}
	Inserting the lower-sourced horse conditions into the equations of Section \ref{sec_int_deg},  the most important of the resulting equations are
	\begin{equation}
		\oQ_{\overline{a_{2}}} = \oQ_{\overline{b_{2}}} = 1;
		\label{eq:lsh_Q_simp_1}
	\end{equation}
	which, furthermore, via (\ref{sec_out_deg}) require
	\begin{equation}
		\oQ_{\overline{a_{3}}} = \oQ_{\overline{b_{3}}} = 1.
		\label{eq:lsh_Q_simp_2}
	\end{equation}
	 The upper-sourced horse conditions yield the dual equations,
	\begin{equation}
		\oQ_{a_2} = \oQ_{a_3} = \oQ_{b_2} = \oQ_{b_3} =  1.
		\label{eq:ush_Q_simp}
	\end{equation}
\end{remark}

With this technology, we can define an $m$-herd on an arbitrary oriented real surface $C$ as a collection of $m$-horses glued together using the relations (\ref{eq:horse_glue}), beginning with a lower-sourced horse coming from a pair of branch points, and ending with an upper-sourced horse near another pair of branch points (which, for the purposes of this paper, we will take to be disjoint from the lower-sourced branch points).

\begin{definition}
	Let $N$ be a spectral network subordinate to some branched cover $\Sigma \rightarrow C$ and let $H \subset N$ be the set of two-way streets of $N$.  Then $N$ is an $m$-herd if the following conditions are satisfied.

	\begin{description}
\labitem{\textit{Horses}}{cond_horses}There exists a collection of open embedded disks $\{U_{l}\}_{l=1}^{m} \subset C'$ forming a covering of $H$, with $U_{l} \cap U_{k} \neq \emptyset$ iff $l = k \pm 1$, and each $N \cap U_{l}$ is:
\begin{itemize}
		\item a lower-sourced horse if $l=1$,
		\item a horse if $1<l<m$,
		\item an upper-sourced horse if $l=m$.
	\end{itemize}

\labitem{\textit{Gluing}}{cond_gluing} Each horse satisfies particular gluing conditions: let $p^{(l)}$ denote a horse street\footnote{Using our previous naming convention: $p \in \{a_{1},a_{2},a_{3},b_{1},b_{2},b_{3},c,\conj{a_{1}},\conj{a_{2}},\conj{a_{3}},\conj{b_{1}}, \conj{b_{2}},\conj{b_{3}}\}$.} on $N \cap U_{l}$.  Then, for $l = 2, \cdots, m -1$, we have the conditions
	\begin{equation}
		\begin{aligned}
			a_{1}^{(l)} &= a_{3}^{(l-1)}\\
			b_{1}^{(l)} &= b_{3}^{(l-1)}\\
			\conj{a_{1}}^{(l)} &= \conj{a_{3}}^{(l+1)}\\
			\conj{b_{1}}^{(l)} &= \conj{b_{3}}^{(l+1)}.
		\end{aligned}
		\tag{\ref{eq:horse_glue}}
	\end{equation}

\labitem{\textit{No Holes}}{cond_noholes} For $l = 1, \cdots, m -1 $, the oriented loops traced out by the words
	\begin{itemize}
		\item $\left(\conj{a}_{2}^{(l)} \right) \left(\conj{b}_{1}^{(l)}
\right) \left(a_{2}^{(l+1)} \right)^{-1} \left(b_{3}^{(l)}\right)^{-1}$,

		\item $\left(\conj{b}_{2}^{(l)} \right) \left(\conj{a}_{1}^{(l)}
\right) \left(b_{2}^{(l+1)} \right)^{-1}\left(a_{3}^{(l)} \right)^{-1}$
	\end{itemize}

are each the oriented boundary of (separate) disks on $C'$ (see Fig.~\ref{fig:noholes}).
	\end{description}

\end{definition}

\begin{remark}[Remarks]\
	\begin{itemize}
		\item Note that a $1$-herd is the spectral network for a saddle: indeed, via the above definition it consists of a single horse which is both lower and upper-sourced.  The picture of a saddle is formed by viewing only the two-way streets remaining after ``removing" the horse streets constrained to be one-way according to (\ref{eq:lsh_Q_simp_1}) - (\ref{eq:ush_Q_simp}).

		\item Let $\incdat^{\pm}(l)$ ($\outdat^{\pm}(l)$) be the domain of incoming (range of outgoing) data associated to the $l$th horse of an $m$-herd. Via the definition, $\incdat^{+}(1)$ and $\incdat^{+}(m)$ are specified by the lower sourced horse conditions (\ref{eq:lsh_cond}) and upper-sourced horse conditions (\ref{eq:ush_cond}) respectively:
		\begin{equation}
			\begin{aligned}
				\incdat^{+}(1) &= \left\{ \left(
					\begin{array}{l}
					\inup_{a_{1}}^{(1)} = \inX_{a}\\
					\inup_{b_{1}}^{(1)} = \inX_{b}\\
					\inup_{\conj{a_{3}}}^{(1)} = 0\\
					\inup_{\conj{b_{3}}}^{(1)} = 0.
					\end{array}
					\right)
					\right \}, \\
					\incdat^{-}(m) &= \left\{ \left(
					\begin{array}{l}
					\indown_{\conj{a_{1}}}^{(m)} = \inX_{\conj{a}}\\
					\indown_{\conj{b_{1}}}^{(m)} = \inX_{\conj{b}}\\
					\indown_{a_{3}}^{(m)} = 0\\
					\indown_{b_{3}}^{(m)} = 0.
					\end{array}
					\right) \right \}.
			\end{aligned}
			\label{eq:data_initial}
		\end{equation}
		 Further, for $l = 2, \, \cdots, m-1$, the gluing conditions (\ref{eq:horse_glue}) force\footnote{We have omitted the parallel transport map (on the RHS of (\ref{eq:data_glue})), detailed in Section \ref{app:global}, that transports spectral data on the $(l-1)$th horse to the $l$th horse.}
		\begin{equation}
			\begin{aligned}
				\incdat^{+}(l) &= \outdat^{+}(l-1), \\
				 \incdat^{-}(l) &= \outdat^{-}(l+1).
			 \end{aligned}
		\label{eq:data_glue}
		\end{equation}
In fact, as we will discover, all spectral data on an $m$-herd can be determined recursively from (\ref{eq:data_glue}) using the initial conditions (\ref{eq:data_initial}).

		\item The technical \ref{cond_noholes} condition excludes cases where there are  ``holes" between adjacent streets when gluing together horses.  This condition is essential for our proof of Prop. \ref{prop_l}, as such holes create obstructions to auxiliary streets introduced in the proof.  Furthermore, the \ref{cond_noholes} condition is utilized in Prop. \ref{prop_Q} in order to produce an explicit expression for the charge $\widehat{\gamma}_{c}$ (defined in (\ref{eq:gammac_def})) that appears in the formal variable $z$, but the condition is not necessary to derive the algebraic equation (\ref{eq:P}).\footnote{In particular, the \ref{cond_noholes} condition is used in the definition of the parallel transport maps $\rho_{*}^{(l,l \pm 1)}$ of Section \ref{app:global}.  One could use a more general notation for parallel transport in a situation without the \ref{cond_noholes} condition and the proof of the algebraic equation would follow similarly, although, the final expression for $z$ would be modified.}
	\end{itemize}
\end{remark}

\begin{figure}
	\begin{center}
		 \includegraphics[scale=0.3]{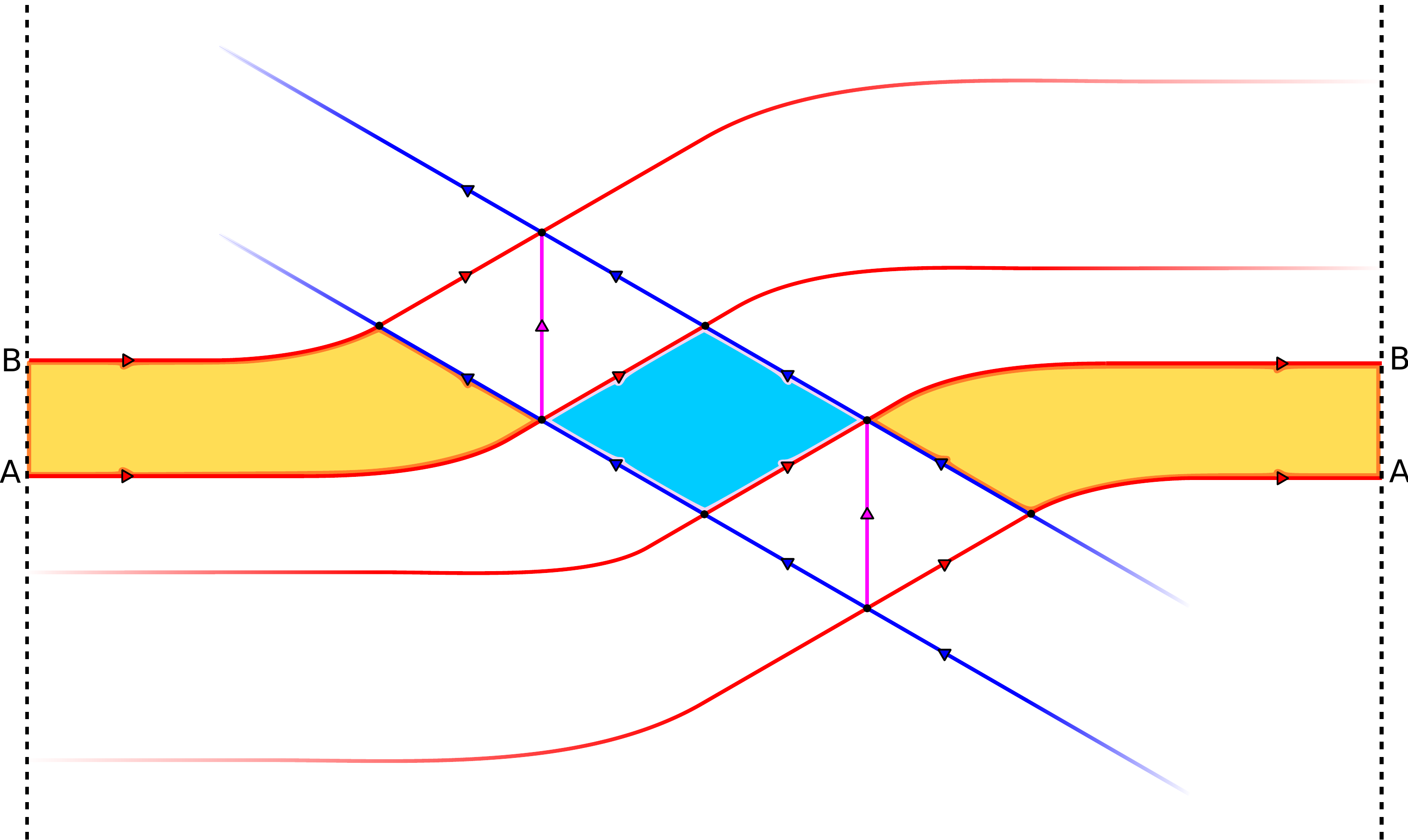}
		\caption{A picture of two horses (cf. Fig.~\ref{fig:horse}) glued together using the \ref{cond_gluing} conditions and satisfying the \ref{cond_noholes} condition; the dotted lines are identified, and we assign the ``horse-indices" $l$ and $l+1$ to the bottom and top horses respectively.  The aqua-blue region is a disk with boundary traced out by the word $\left(\conj{a}_{2}^{(l)} \right) \left(\conj{b}_{1}^{(l)}
\right) \left(a_{2}^{(l+1)} \right)^{-1} \left(b_{3}^{(l)}\right)^{-1}$; the yellow region is a disk with boundary traced out by the word $\left(\conj{b}_{2}^{(l)} \right) \left(\conj{a}_{1}^{(l)}
\right) \left(b_{2}^{(l+1)} \right)^{-1}\left(a_{3}^{(l)} \right)^{-1}$.  Two examples for which the \ref{cond_noholes} condition fails can be pictured by either inserting a puncture, or connect summing with a torus (inserting a ``handle"), inside of the colored regions. \label{fig:noholes}}
	\end{center}
\end{figure}

\subsection{A Global Interlude} \label{app:global}
The following is a technical subsection dedicated to a proper definition of the symbols $\rho_{*}^{(k,l)}$ that appear throughout the proof of Prop. \ref{prop_Q}.  Readers who wish to avoid this technical detour may skip this section and interpret the symbols $\rho_{*}^{(l,l \pm 1)}$ as parallel transport maps along an appropriate path, from the $l$th horse to the $\left(l \pm 1 \right)$th horse,
 along the graph of the $m$-herd living on $C'$; further, the $R^{(k,l)}$ can be replaced by parallel transport maps from the $l$th horse to the $k$th horse.

First, we will define the local system of soliton charges over $\twid{C'}$.

\begin{definition}
	Let $\mathfrak{s}: \bigcup_{\twid{z} \in \twid{C'}}
\twid{\Gamma}(\twid{z},-\twid{z}) \rightarrow \twid{C'}$
be the projection map with fibers $\mathfrak{s}^{-1}(\twid{z})
= \twid{\Gamma}(\twid{z},-\twid{z})$.

\end{definition}

\begin{remark}
	$\mathfrak{s}$ defines a local system of $\twid{\Gamma}$-sets (a locally constant sheaf of $\twid{\Gamma}$-sets) over $\twid{C'}$, when equipped with a parallel transport map defined by a lifted version of the parallel transport of solitons (\ref{eq:soliton_trans}).  More explicitly, for any path $\ell :[0,1] \rightarrow \twid{C'}$, the parallel transport map $\twid{P}_{\ell}: \twid{\Gamma}\left(\ell(0),-\ell(0) \right) \rightarrow \twid{\Gamma} \left( \ell(1), -\ell(1) \right)$ is given by
	\begin{align}
		\twid{P}_{\ell} s &= \left(s' + [ \ell \{j \} ] - [ \ell \{i \} ] \right)\, \operatorname{mod \:} 2H,\, s \in \twid{\Gamma}_{ij}(\ell(0), -\ell(0)).
		\label{eq:local_sys_mon}
	\end{align}
	where
	\begin{itemize}
		\item $s'$ is a lift of $s$ to a relative homology cycle\footnote{Recall from (\ref{eq:twid_gamma_def}): $\twid{\Gamma}(\twid{z}, -\twid{z})$ is defined as a quotient of the the subset $G(\twid{z}, - \twid{z})$ (consisting of relative homology classes on $\twid{\Sigma}$).} on $\twid{\Sigma}$,

		\item  $\ell\{n\}$ is the lift of $\ell$ to a path on $\twid{\Sigma}$ given by lifting $\ell(0)$ to sheet $n$,

		\item $[ \ell\{n\} ]$ is the relative homology class of $\ell\{n\}$

		\item $\left( \cdot \right) \operatorname{mod \:} 2H: G( \ell(1), -\ell(1)) \rightarrow \twid{\Gamma}(\ell(1), -\ell(1))$ is the quotient map (where the subset of relative homology classes $G(\ell(1), -\ell(1))$ is defined in (\ref{eq:G_def})).
		\end{itemize}
By construction, $\twid{P}_{\ell}$ only depends on the homotopy class of $\ell$ (rel endpoints).
\end{remark}

Let $\xi: \twid{C'} \rightarrow C'$ be the unit tangent bundle projection map (previously denoted $\xi^{C'}$).  We now make an important observation.

\begin{remark}[Observation]
	The monodromy
of $\twid{P}_\ell$
around any loop that wraps the circle fibers of $\xi$ is trivial.  I.e., let $z \in C'$ and choose $\ell: S^{1} \rightarrow \left(\xi \right)^{-1}(z) \subset \twid{C'}$ to be a closed loop supported on the circle fiber $\xi^{-1}(z)$, then the monodromy $\twid{P}_{\ell}$ is the identity map.
\end{remark}
\begin{proof}
	The proof is immediate:  if $\ell$ is such a loop, then for any sheet $n$, we have $\cl\left([\ell\{n\}] \right) = H$; the result follows from (\ref{eq:local_sys_mon}).
\end{proof}

\begin{definition}
	Let $S$ be any topological space; $\pi_{1}(S; z_{1}, z_{2})$ is the set of homotopy (rel endpoints) classes of paths $p:[0,1] \rightarrow S$ with $p(0) = z_{1}$ and $p(1) = z_{2}$.
\end{definition}

\begin{corollary} \label{cor_transport_descent}
	Let $\ell: [0,1] \rightarrow \twid{C'}$ be a path.  Then $\twid{P}_{\ell}: \twid{\Gamma}(\ell(0), - \ell(0)) \rightarrow \twid{\Gamma}(\ell(1), - \ell(1))$ is completely specified by the homotopy class (rel endpoints) of the projected path $\xi \circ \ell: [0,1] \rightarrow C'$.
\end{corollary}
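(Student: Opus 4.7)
The strategy is to reduce the claim to the Observation immediately preceding the corollary, via homotopy lifting for the circle bundle $\xi: \twid{C'} \to C'$.

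Suppose $\ell_1, \ell_2 : [0,1] \to \twid{C'}$ are two paths with $\ell_1(0) = \ell_2(0) =: \twid{z}_0$ and $\ell_1(1) = \ell_2(1) =: \twid{z}_1$, such that $\xi \circ \ell_1$ and $\xi \circ \ell_2$ are homotopic rel endpoints in $C'$. The goal is to show $\twid{P}_{\ell_1} = \twid{P}_{\ell_2}$. Since $\twid{P}_{\ell}$ depends only on the rel-endpoint homotopy class of $\ell$ in $\twid{C'}$, it suffices to analyze the loop $\ell := \ell_1 \star \ell_2^{-1}$ at $\twid{z}_0$ and show that $\twid{P}_{\ell}$ is the identity; then applying $\twid{P}$ to both sides of $[\ell_1] = [\ell \cdot \ell_2]$ yields $\twid{P}_{\ell_1} = \twid{P}_{\ell_2}$.

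First I would invoke the long exact sequence of homotopy groups associated to the fiber sequence $S^1 \hookrightarrow \twid{C'} \xrightarrow{\xi} C'$:
\begin{equation*}
 \pi_1(S^1) \xrightarrow{\iota_*} \pi_1(\twid{C'}, \twid{z}_0) \xrightarrow{\xi_*} \pi_1(C', z_0) \to \pi_0(S^1) = 0.
\end{equation*}
By assumption, $\xi_*([\ell]) = 1$, so $[\ell]$ lies in $\ker \xi_* = \operatorname{im}(\iota_*)$. Hence $\ell$ is rel-endpoint homotopic (in $\twid{C'}$) to a loop supported entirely in the fiber $\xi^{-1}(z_0)$, i.e. to a power of the generator $\sigma$ of the fiber loop class.

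Now I would apply the Observation: $\twid{P}_{\sigma} = \operatorname{id}$. Since parallel transport is a homomorphism from $\pi_1(\twid{C'}, \twid{z}_0)$ into $\operatorname{Aut}(\twid{\Gamma}(\twid{z}_0, -\twid{z}_0))$, it follows that $\twid{P}_{\sigma^n} = \operatorname{id}$ for every $n \in \mathbb{Z}$, so $\twid{P}_{\ell} = \operatorname{id}$ as desired.

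The only subtlety to double-check is the applicability of the long exact sequence: since $\xi$ is the unit tangent bundle of the smooth surface $C'$, it is a locally trivial $S^1$-fibration (in particular a Serre fibration), so the long exact sequence is available. No further obstacle is expected; the main point is simply that the triviality of monodromy around a single fiber loop (the Observation), combined with the fact that fiber loops generate the kernel of $\xi_*$, forces the entire kernel to act trivially and therefore makes $\twid{P}$ descend to a function of $[\xi \circ \ell] \in \pi_1(C'; z_0, z_1)$.
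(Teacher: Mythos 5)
Your proof is correct and follows the same route the paper intends: the corollary is stated as an immediate consequence of the preceding Observation, the implicit argument being exactly that two lifts of rel-endpoint-homotopic paths in $C'$ differ only by fiber windings, on which the monodromy is trivial. You have merely made explicit the standard topological input (exactness of $\pi_1(S^1) \to \pi_1(\twid{C'}) \to \pi_1(C')$ for the circle bundle $\xi$) that the paper leaves unstated, which is a sound and welcome elaboration rather than a different approach.
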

	In particular, given $q \in \pi_{1}(C',z_{1}, z_{2})$ along with lifts $\twid{z}_{1} \in \xi^{-1}(z_{1}),\, \twid{z}_{2} \in \xi^{-1}(z_{2})$, we may associate a parallel transport map $\twid{P}_{\ell}: \twid{\Gamma}(\twid{z}_{1}, - \twid{z}_{1}) \rightarrow \twid{\Gamma}(\twid{z}_{2}, - \twid{z}_{2})$ where $\ell: [0,1] \rightarrow \twid{C'}$ is a lift of any path representative of the class $q$ such that $\ell(0) = \twid{z}_{1},\, \ell(1) = \twid{z}_{2}$.  By the corollary this association $(q,\twid{z}_{1}, \twid{z}_{2}) \rightsquigarrow \twid{P}_{\ell}$ is well-defined.

\begin{definition}
	Let $q \in \pi_{1}(C';z_{1}, z_{2})$ and $\twid{z}_{1} \in \xi^{-1}(z_{1}),\, \twid{z}_{2} \in \xi^{-1}(z_{2})$,  then $\twid{P}_{(q,\twid{z}_{1},\twid{z}_{2})}: \twid{\Gamma}(\twid{z}_{1}, -\twid{z}_{1}) \rightarrow \twid{\Gamma}(\twid{z}_{2}, - \twid{z}_{2})$ is the unique parallel transport map assigned to $(q,\twid{z}_{1}, \twid{z}_{2})$.
\end{definition}

	To simplify matters of computation, without ignoring global issues, we will develop a notation, suitable to combinatorics, for parallel transport on an $m$-herd.  As each horse is embedded in a contractible region of $C$, it suffices to keep track of parallel transport of paths \textit{between the horses} of an $m$-herd: our notation need not keep track of parallel transport between points in an individual horse as suggested by the following remark.

	\begin{remark}
	Let $\{ U_{l} \}_{l=1}^{m}$ be an open cover of disks (on $C'$) satisfying the \ref{cond_horses} condition for an $m$-herd, then all paths running between points $z_{1},\,z_{2} \in U_{l}$ and contained within $U_{l}$ are homotopic (rel endpoints).  Thus, by Cor. \ref{cor_transport_descent}, for each pair of points $\twid{z}_{1} \in \xi^{-1}(z_{1}),\, \twid{z}_{2} \in \xi^{-1}(z_{2})$,  there is a unique parallel transport map assigned to all paths running from $\twid{z}_{1}$ to $\twid{z}_{2}$ and contained in $\xi^{-1}(U_{l})$.
	\end{remark}

	\begin{figure}
	\begin{center}
		 \includegraphics[scale=0.4]{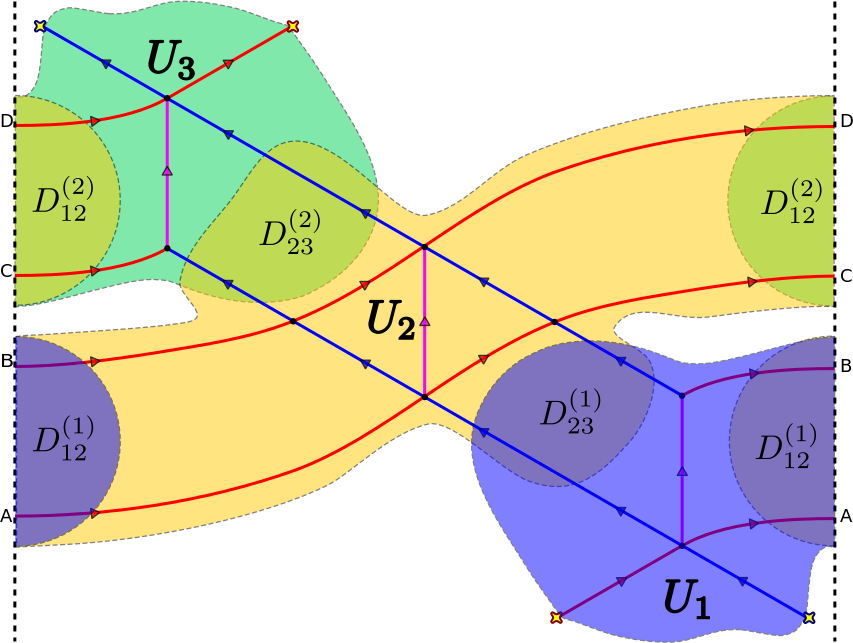}
		\caption{A 3-herd, on $C = \mathbb{R} \times S^{1}$, equipped with a (two-way street) cover $\{U_{l}\}_{l=1}^{3}$ satisfying the \ref{cond_horses} condition along with (\ref{eq:int_cond_1})-(\ref{eq:int_cond_2}). \label{fig:disk_int}}
	\end{center}
\end{figure}

	Now, let us turn our attention to parallel transport of paths
running between horses; in particular, paths contained in
$U_{l} \cup U_{l+1}$ for some $l = 1,\cdots, m-1$.
First, note that each non-vanishing intersection
$U_{l} \cap U_{l + 1},\, l= 1, \cdots, m-1$, will consist of
some number of disconnected disks.  However, on an $m$-herd,
the \ref{cond_noholes} condition
allows us to modify our cover
such that $U_{l} \cap U_{l + 1}$ contains exactly \textit{two} components:
\begin{equation}
	U_{l} \cap U_{l + 1} = D^{(l)}_{12} \sqcup D^{(l)}_{23},
	\label{eq:int_cond_1}
\end{equation}
where the $D^{(l)}_{ij}$ are disks such that for $l=1, \cdots, m - 1$
\begin{equation}
	\begin{aligned}
	\left(\conj{a}_{1}^{(l)} = \conj{a}_{3}^{(l+1)} \right) \cap \left(U_{l} \cap U_{l+1} \right) \subset D_{12}^{(l)}, \, \left(a_{3}^{(l)} = a_{1}^{(l+1)} \right) \cap \left(U_{l} \cap U_{l+1} \right) & \subset D_{12}^{(l)}\\
	\left(\conj{b}_{1}^{(l)} = \conj{b}_{3}^{(l+1)} \right) \cap \left(U_{l} \cap U_{l+1} \right) \subset D_{23}^{(l)}, \, \left(b_{3}^{(l)} = b_{1}^{(l+1)} \right) \cap \left(U_{l} \cap U_{l+1} \right) & \subset D_{23}^{(l)};
	\end{aligned}
	\label{eq:int_cond_2}
\end{equation}
an example of such a cover is shown in Fig.~\ref{fig:disk_int} for the case of a 3-herd on $\mathbb{C} = \mathbb{R} \times S^{1}$.  Thus, fixing a pair of points $\twid{z}_{1} \in \xi^{-1} \left(U_{l} \right),\, \twid{z}_{2} \in \xi^{-1}(U_{l+1})$, our interest lies in two homotopy classes (rel endpoints) of paths that run from $\twid{z}_{1}$ to $\twid{z}_{2}$, and are contained in $U_{l} \cup U_{l+1}$.  In particular, denoting these two classes by $q_{12}, q_{23} \in \pi_{1}(U_{l} \cup U_{l+1}; z_{1}, z_{2})$,
\begin{enumerate}
	\item $q_{12}$ has a path representative given by a simple curve running from $\twid{z}_{1}$ to $\twid{z}_{2}$ and passing through $D_{12}^{(l)}$ (but not $D_{23}^{(l)}$) exactly once,

	\item $q_{23}$ has a path representative given by a simple curve running from $\twid{z}_{1}$ to $\twid{z}_{2}$ and passing through $D_{23}^{(l)}$ (but not $D_{12}^{(l)}$) exactly once.
\end{enumerate}

\begin{definition}
Let $z_{1} \in U_{l},\, z_{2} \in U_{l +1}$, and take $q_{ij}$ ($ij \in \{12,23\}$) to be the homotopy classes described above.  Then, for a choice of lifts $\twid{z}_{1} \in \xi^{-1}(z_{1})$ and $\twid{z}_{2} \in \xi^{-1}(z_{2})$,
	\begin{equation}
		\begin{aligned}
	\rho_{ij}^{(l, l+1)}(\twid{z}_{1}, \twid{z}_{2}) &:= \twid{P}_{(q_{ij},\twid{z}_{1},\twid{z}_{2})}: \twid{\Gamma}(\twid{z}_{1}, -\twid{z}_{1}) \rightarrow \twid{\Gamma}(\twid{z}_{2}, -\twid{z}_{2}),\\
		\rho_{ij}^{(l+1,l)}(\twid{z}_{1}, \twid{z}_{2}) &:= \twid{P}_{(q_{ij}^{-1},\twid{z}_{2},\twid{z}_{1})}: \twid{\Gamma}(\twid{z}_{2}, -\twid{z}_{2}) \rightarrow \twid{\Gamma}(\twid{z}_{1}, -\twid{z}_{1}) = \left[\rho_{ij}^{(l,l+1)} (\twid{z}_{1}, \twid{z}_{2}) \right]^{-1}.
		\end{aligned}
		\label{eq:rho_ij_def}
	\end{equation}
\end{definition}

\begin{definition}[Notation]
	In the following computations we will just write $\rho_{ij}^{(l, l+1)}$, dropping the explicit dependence on the endpoints $\twid{z}_{1} \in \xi^{-1} \left(U_{l} \right)$ and $\twid{z}_{2} \in \xi^{-1} \left(U_{l+1} \right)$; this notation will be sufficiently unambiguous for our purposes.  Indeed, let $\twid{w}_{1} \in \xi^{-1} \left(U_{l} \right),\, \twid{w}_{2} \in \xi^{-1}\left(U_{l+1} \right)$ be another choice of endpoints with projections $w_{i} = \xi(\twid{w}_{i}),\, i =1,2$; then, by a remark above, $\exists!$ homotopy classes $q_{1} \in \pi_{1}(U_{l}; w_{1}, z_{1})$ and $q_{2} \in \pi_{1}(U_{l+1},z_{2},w_{2})$ such that
	\begin{align*}
		\rho_{ij}^{(l,l+1)}(\twid{w}_{1}, \twid{w}_{2}) &=  \twid{P}_{(q_{2}, \twid{z}_{2}, \twid{w}_{2})} \left(\rho_{ij}^{(l,l+1)}(\twid{z}_{1}, \twid{z}_{2}) \right) \twid{P}_{(q_{1}, \twid{w}_{1}, \twid{z}_{1})}  ,\: ij \in \{12,23\}.
	\end{align*}
\end{definition}

Now, on an $m$-herd, (\ref{eq:int_cond_2}) indicates that only solitons of type $12$ or $21$ will be transported via $\rho_{12}^{(l,l+1)}$, and only solitons of type $23$ or $32$ will be transported via $\rho_{23}^{(l,l+1)}$.  With this in mind, for the sake of readability, it will prove convenient to make further notation simplifying definitions.
\\
\begin{definition}[Definitions]\
\begin{enumerate}
	\item Let $\twid{z} \in \xi^{-1}(U_{l})$, then
	\begin{equation}
		\rho_{*}^{(l,l + 1 )}a := \left\{
		\begin{array}{ll}
		\rho_{12}^{(l, l + 1)} a & \text{if $a \in \twid{\Gamma}_{12}(\twid{z},-\twid{z}) \cup \twid{\Gamma}_{21}(\twid{z},-\twid{z})$}\\
		\rho_{23}^{(l, l + 1)} a & \text{if $a \in \twid{\Gamma}_{23}(\twid{z},-\twid{z}) \cup \twid{\Gamma}_{32}(\twid{z},-\twid{z})$}
		\end{array}
		\right.,
		\label{eq:rho_*_def}
	\end{equation}
	and $\rho_{*}^{(l-1,l)} := \left(\rho_{*}^{(l-1,l)} \right)^{-1}$.
	\item
		\begin{equation}
					R^{(k,n)} :=
			\left\{
			\begin{array}{ll}
			\rho_{*}^{(n-1, n)} \cdots \rho_{*}^{(k+1, k+2)} \rho_{*}^{(k, k + 1)} & \text{if $k < n$}\\
			\rho_{*}^{(n+1, n)} \cdots \rho_{*}^{(k-1, k-2)} \rho_{*}^{(k, k-1)} & \text{if $n < k$}.
			\end{array}
			\right.
			\label{eq:R_def}
		\end{equation}
	\end{enumerate}
\end{definition}

\begin{remark}[Remarks]\
\begin{enumerate}
	 \item The $\rho_{*}^{(l,k)}$ extend their action to formal variables $X_{a}$ via
			\begin{equation*}
				\rho_{*}^{(l,k)} X_{a} = X_{\rho_{*}^{(l,k)} a} .
			\end{equation*}
		 \item $R^{(k,n)}$ is a parallel transport map, on the local system $\mathfrak{s}$,
from the $k$th horse
 to the $n$th horse associated to a path that passes through each $l$-horse
 between $k$ and $n$ exactly once.  If $R^{(k,n)}$ acts on a soliton of charge $12$ or $21$, this path
 passes through the sets $D_{12}^{(l)}$ (but never $D_{23}^{(l)}$) for $\min\{k,n\} < l < \max\{k,n\}$;
 if $R^{(k,n)}$ acts on a soliton of charge $23$ or $32$ the path passes through the sets $D_{23}^{(l)}$
 (but never $D_{12}^{(l)}$) for $\min\{k,n\} < l < \max\{k,n\}$.
		\end{enumerate}

We make one final observation that will be of use in Section \ref{app:proof_decomp}.

\begin{remark}
Let $\mathfrak{r}: \bigcup_{z \in C} \Gamma(z,z) \rightarrow C'$ be the projection map with $\mathfrak{r}^{-1}(z) = \Gamma(z,z)$;  this forms a local system over $C'$ when equipped with the parallel transport map
\begin{align}
	P_{q} s_{*} &= s_{*} + [ q \{j \} ] - [ q \{i \} ],\, s_{*} \in \Gamma_{ij}(q(0), q(0)).
\end{align}
The parallel transport on $\mathfrak{r}$ is compatible with the parallel transport (\ref{eq:local_sys_mon}) on the local system $\mathfrak{s}$ in the sense that for any $q \in \pi_{1}(C',z_{1},z_{2})$ and $\twid{z}_{1} \in \xi^{-1}(z_{1}),\, \twid{z}_{2} \in \xi^{-1}(z_{2})$, we have
\begin{align}
	\xi^{\Sigma}_{*} \left( \twid{P}_{(q, \twid{z}_{1}, \twid{z}_{2})} s \right) &= P_{q} \left(\xi^{\Sigma}_{*} s \right),
	\label{eq:mon_compat}
\end{align}
	where, recall, $\xi^{\Sigma}: \twid{\Sigma} \rightarrow \Sigma$ is the unit tangent bundle projection map.
\end{remark}
	Now, we may define the analog of the parallel transport operators $R^{(k,n)}$ for $\mathfrak{r}$.
	\begin{definition}
	Let $s_{*} \in \bigsqcup_{ij \in \{12,21,23,32\}} \Gamma_{ij}(z,z)$ for some $z \in U_{k}$, then
	\begin{equation}
		R_{\mathfrak{r}}^{(k,n)} s_{*} := \xi_{*}^{\Sigma} R^{(k,n)} s,
		\label{eq:R_red_def}
	\end{equation}
	where $s \in \bigsqcup_{ij \in \{12,21,23,32\}} \twid{\Gamma}_{ij}(\twid{z},-\twid{z})$ is any lift of $s_{*}$ (i.e. $s_{*} = \xi^{\Sigma}_{*} s$).
\end{definition}
(\ref{eq:mon_compat}) ensures that (\ref{eq:R_red_def}) is a well-defined (lift-independent) statement.

\end{remark}
\subsection{Identifications of Generating Functions}
Using the notation developed in Section \ref{app:global}, we can express (\ref{eq:data_glue}) explicitly as
\begin{equation}
	\begin{array}{lr}
		\begin{aligned}
			\inup_{a_{1}}^{(l)} &= \rho^{(l-1,l)}_{*} \oup_{a_{3}}^{(l-1)} , \\
			\inup_{b_{1}}^{(l)} &= \rho^{(l-1,l)}_{*} \oup_{b_{3}}^{(l-1)} , \\
			\inup_{\conj{a_{3}}}^{(l)} &= \rho^{(l-1,l)}_{*} \oup_{\conj{a_{1}}}^{(l-1)} , \\
			\inup_{\conj{b_{3}}}^{(l)} &= \rho^{(l-1,l)}_{*} \oup_{\conj{b_{1}}}^{(l-1)} , \\
		\end{aligned} &
		\begin{aligned}
			\indown_{\conj{a_{1}}}^{(l)} &= \rho^{(l+1,l)}_{*} \odown_{\conj{a_{3}}}^{(l+1)} , \\
			\indown_{\conj{b_{1}}}^{(l)} &= \rho^{(l+1,l)}_{*} \odown_{\conj{b_{3}}}^{(l+1)} , \\
			\indown_{a_{3}}^{(l)} &= \rho^{(l+1,l)}_{*} \odown_{a_{1}}^{(l+1)} ,\\
			\indown_{b_{3}}^{(l)} &= \rho^{(l+1,l)}_{*} \odown_{b_{1}}^{(l+1)} .
		\end{aligned}
	\end{array}
	\label{eq:sol_rec}
\end{equation}
In particular,
\begin{equation}
	\begin{array}{lr}
		\begin{aligned}
		\oQ_{a_{1}}^{(l)} &= \oQ_{a_{3}}^{(l-1)},\\
		\oQ_{b_{1}}^{(l)} &= \oQ_{b_{3}}^{(l-1)},
		\end{aligned} &
		\begin{aligned}
		\oQ_{\conj{a_{1}}}^{(l)} &= \oQ_{\conj{a_{3}}}^{(l+1)},\\
		\oQ_{\conj{b_{1}}}^{(l)} &= \oQ_{\conj{b_{3}}}^{(l+1)}.
		\end{aligned}
	\end{array}
	\label{eq:Q_rec}
\end{equation}

\subsection{Proof of Proposition \ref{prop_Q}} \label{app_prop_Q_pf}

\subsubsection{Proof of Equations (\ref{eq:Q_red})}
Using the recursion relations $(\ref{eq:Q_rec})$, in conjunction with the equations listed in Sections \ref{sec_int_deg} and \ref{sec_out_deg}, we first solve for the internal street factors $\oQ_{a_{2}}^{(l)},\, \oQ_{\conj{a_{2}}}^{(l)},\, \oQ_{b_{2}}^{(l)},\,\oQ_{\conj{b_{2}}}^{(l)}$ in terms of street factors on the lower/upper-sourced horses at $l=1$ or $l=m$.  As we noticed in Section \ref{sec_out_deg}, all other street factors can be written in terms of the internal ones.

Now, via (\ref{eq:Q_rec}), and the equations of Section (\ref{sec_out_deg}),
\begin{align}
	\oQ_{a_{2}}^{(l)} &= \oQ_{a_{3}}^{(l)} \nonumber \\
	 &= \oQ_{a_{1}}^{(l+1)} \nonumber\\
	 &= \oQ_{c}^{(l+1)} \oQ_{a_{2}}^{(l+1)}.
\label{eq:Q_a2_rec}
\end{align}
Similarly, we find
\begin{align}
	\oQ_{\conj{a_{2}}}^{(l)} &= \oQ_{c}^{(l-1)}\oQ_{\conj{a_{2}}}^{(l-1)}
	\label{eq:Q_a2d_rec} \\
	\oQ_{b_{2}}^{(l)} &= \oQ_{c}^{(l+1)}\oQ_{b_{2}}^{(l+1)}
	\label{eq:Q_b2_rec} \\
	\oQ_{\conj{b_{2}}}^{(l)} &= \oQ_{c}^{(l-1)}\oQ_{\conj{b_{2}}}^{(l-1)}
	\label{eq:Q_b2d_rec}.
\end{align}
This leads us to the following.
\begin{lemma} \label{lem_Q_red}
	For $l=1,\dots, m$, we have
	\begin{align}
		\oQ_{a_{2}}^{(l)} &= \oQ_{b_{2}}^{(l)} = \prod_{r=l+1}^{m+1} \oQ_{c}^{(r)} \label{eq:claim_1}\\
		\oQ_{\conj{a_{2}}}^{(l)} &= \oQ_{\conj{b_{2}}}^{(l)} = \prod_{r=0}^{l-1} \oQ_{c}^{(r)} \label{eq:claim_2}.
	\end{align}
	with the convention that $\oQ_{c}^{(m+1)} = \oQ_{c}^{(0)} = 1$.
\end{lemma}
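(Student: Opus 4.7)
The proof proposal is a direct induction, starting from the boundary horses and working inward via the already-established recursion relations (\ref{eq:Q_a2_rec})--(\ref{eq:Q_b2d_rec}).

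First, I would extract the two sets of initial conditions at the ends of the herd. At $l=m$ the \ref{cond_horses} condition forces the last horse to be upper-sourced, so (\ref{eq:ush_Q_simp}) gives
\begin{equation*}
\oQ_{a_{2}}^{(m)} = \oQ_{b_{2}}^{(m)} = 1,
\end{equation*}
while at $l=1$ the first horse is lower-sourced, so (\ref{eq:lsh_Q_simp_1})--(\ref{eq:lsh_Q_simp_2}) give
\begin{equation*}
\oQ_{\conj{a_{2}}}^{(1)} = \oQ_{\conj{b_{2}}}^{(1)} = 1.
\end{equation*}
These are precisely the base cases of (\ref{eq:claim_1}) (using the convention $\oQ_{c}^{(m+1)}=1$ so that the empty product equals $1$) and of (\ref{eq:claim_2}) (using the convention $\oQ_{c}^{(0)}=1$).

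Next I would run the two halves of the induction in opposite directions. For (\ref{eq:claim_1}), I would do descending induction on $l$ from $l=m$ down to $l=1$. The recursion (\ref{eq:Q_a2_rec}) reads $\oQ_{a_{2}}^{(l)} = \oQ_{c}^{(l+1)}\oQ_{a_{2}}^{(l+1)}$, and by the inductive hypothesis $\oQ_{a_{2}}^{(l+1)} = \prod_{r=l+2}^{m+1}\oQ_{c}^{(r)}$, so
\begin{equation*}
\oQ_{a_{2}}^{(l)} \;=\; \oQ_{c}^{(l+1)} \prod_{r=l+2}^{m+1}\oQ_{c}^{(r)} \;=\; \prod_{r=l+1}^{m+1}\oQ_{c}^{(r)}.
\end{equation*}
The identical argument with (\ref{eq:Q_b2_rec}) handles $\oQ_{b_{2}}^{(l)}$, and in particular shows $\oQ_{a_{2}}^{(l)} = \oQ_{b_{2}}^{(l)}$ even though the two sides of the horse involved different incoming streams. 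For (\ref{eq:claim_2}), I would do ascending induction on $l$ from $l=1$ up to $l=m$, using (\ref{eq:Q_a2d_rec}) and (\ref{eq:Q_b2d_rec}) in the form $\oQ_{\conj{a_{2}}}^{(l)} = \oQ_{c}^{(l-1)}\oQ_{\conj{a_{2}}}^{(l-1)}$; telescoping gives $\oQ_{\conj{a_{2}}}^{(l)} = \prod_{r=0}^{l-1}\oQ_{c}^{(r)}$, and likewise for $\oQ_{\conj{b_{2}}}^{(l)}$.

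There is no substantive obstacle here: the telescoping recursions are already proven earlier in the appendix as consequences of the six-way junction equations and the gluing relations (\ref{eq:horse_glue}), and the lower/upper-sourced conditions (\ref{eq:lsh_Q_simp_1})--(\ref{eq:ush_Q_simp}) are exactly the input needed to terminate the telescoping. The only bookkeeping point is to make sure the conventions $\oQ_{c}^{(0)} = \oQ_{c}^{(m+1)} = 1$ are adopted consistently so that the statements hold uniformly at the boundary indices $l=1$ and $l=m$; I would state these conventions explicitly at the start of the proof and then present the two inductions in parallel.
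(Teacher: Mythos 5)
Your proposal is correct and is essentially identical to the paper's own proof: the paper likewise takes the upper-sourced condition $\oQ_{a_{2}}^{(m)}=\oQ_{b_{2}}^{(m)}=1$ and the lower-sourced condition $\oQ_{\conj{a_{2}}}^{(1)}=\oQ_{\conj{b_{2}}}^{(1)}=1$ as the two base cases and then telescopes the recursions (\ref{eq:Q_a2_rec})--(\ref{eq:Q_b2d_rec}) in the two opposite directions. You have merely written out the induction more explicitly than the paper does.
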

\begin{proof}
	From the upper-sourced horse conditions (\ref{eq:ush_cond}) we have
	\begin{equation}
		\oQ_{a_{2}}^{(m)} = \oQ_{b_{2}}^{(m)} = 1;
		\tag{\ref{eq:ush_Q_simp}}
	\end{equation}
	so (\ref{eq:claim_1}) follows via (\ref{eq:Q_a2_rec}) and (\ref{eq:Q_b2_rec}).  Similarly, from the lower-sourced horse conditions (\ref{eq:lsh_cond}) we have
	\begin{equation}
		\oQ_{\conj{a_{2}}}^{(1)} = \oQ_{\conj{b_{2}}}^{(1)} = 1;
		\tag{\ref{eq:lsh_Q_simp_1}}
	\end{equation}
	so (\ref{eq:claim_2}) follows via (\ref{eq:Q_a2d_rec}) and (\ref{eq:Q_b2d_rec}).
\end{proof}

To reduce (\ref{eq:claim_1}) - (\ref{eq:claim_2}) further, we must compute some soliton generating functions.

\subsection*{Computing $\Upsilon_{b_{1}}^{(l)}$}
	Via (\ref{eq:sol_rec})
	\begin{align}
	\inup_{b_{1}}^{(l)} &= \rho^{(l-1,l)}_{*} \oup_{b_{3}}^{(l-1)} \nonumber\\
	 &= \rho^{(l-1,l)}_{*} \inup_{b_{1}}^{(l-1)}.
\label{eq:tau_b1_rec}
\end{align}
Thus, propagating the lower sourced horse conditions
\eqref{eq:lsh_cond} through this recursion relation,
\begin{align}
	\inup_{b_{1}}^{(l)} &= \left( \prod_{r=1}^{l}\rho^{(r-1,r)}_{*} \right) \inX_{b} \\
	&=R^{(1,l)}\inX_{b}.
\label{eq:tau_b1_sol}
\end{align}

\subsection*{Computing $\Delta_{\conj{b_{1}}}^{(l)}$}
The idea is dual to above; indeed
\begin{align}
	\indown_{\conj{b_{1}}}^{(l)} &= \rho^{(l+1,l)}_{*} \odown_{\conj{b_{3}}}^{(l+1)}  \nonumber\\
	 &= \rho^{(l+1,l)}_{*} \indown_{\conj{b_{1}}}^{(l+1)}.
\label{eq:nu_b1d_rec}
\end{align}
Using the upper-sourced horse conditions \eqref{eq:ush_cond},
\begin{align}
	\indown_{\conj{b_{1}}}^{(l)} &= \left(\prod_{r=l}^{m} \rho^{(r+1,r)}_{*} \right) \inX_{\conj{b}} \nonumber\\
	&=R^{(m,l)} \inX_{\conj{b}}.
\label{eq:nu_b1d_sol}
\end{align}

\subsection*{Computing $\Upsilon_{a_{1}}^{(l)}$}
Via (\ref{eq:sol_rec}) and the equation for $\oup_{a_{3}}$ in Section \ref{sec_out_sol},
\begin{align}
	\inup_{a_{1}}^{(l)} &= \rho_{*}^{(l-1,l)}\oup_{a_{3}}^{(l-1)} \nonumber\\
	 &= \rho_{*}^{(l-1,l)} \inup_{a_{1}}^{(l-1)} \left(1+\inup_{\conj{b_{3}}}^{(l-1)} \indown_{\conj{b_{1}}}^{(l-1)} \right) \left(1+ \inup_{b_{1}}^{(l-1)}  \indown_{b_{3}}^{(l-1)} \right) \nonumber\\
	 &= \rho_{*}^{(l-1,l)} \inup_{a_{1}}^{(l-1)} \oQ_{\conj{b_{2}}}^{(l-1)} \oQ_{b_{2}}^{(l-1)} .
	\label{eq:tau_a1_rec}
\end{align}
Using the lower-sourced horse conditions \eqref{eq:lsh_cond},
\begin{align}
\inup_{a_{1}}^{(l)} &= \left(\prod_{r=1}^{l} \rho_{*}^{(r-1,r)}  \inX_{a} \right) \left( \prod_{r=1}^{l} \rho_{*}^{(r-1,r)}  \oQ_{\conj{b_{2}}}^{(r-1)} \oQ_{b_{2}}^{(r-1)} \right) \nonumber \\
 &= R^{(1,l)} \inX_{a} \left(\prod_{r=0}^{l-1}\oQ_{\conj{b_{2}}}^{(r)} \oQ_{b_{2}}^{(r)}\right)
\label{eq:tau_a1_sol_part}.
\end{align}

\subsection*{Computing $\Delta_{\conj{a_{1}}}^{(l)}$}
Again, the computation is dual to that for $\Upsilon_{a_{1}}^{(l)}$,
\begin{align}
	\indown_{\conj{a_{1}}}^{(l)} &= \rho_{*}^{(l+1,l)}\odown_{\conj{a_{3}}}^{(l+1)} \nonumber\\
	 &= \rho_{*}^{(l+1,l)}\left[ \left(1+\inup_{\conj{b_{3}}}^{(l+1)} \indown_{\conj{b_{1}}}^{(l+1)}\right) \left(1+\inup_{b_{1}}^{(l+1)} \indown_{b_{3}}^{(l+1)}\right) \indown_{\conj{a_{1}}}^{(l+1)} \right] \nonumber\\
	 &= \rho_{*}^{(l+1,l)} \oQ_{\conj{b_{2}}}^{(l+1)} \oQ_{b_{2}}^{(l+1)} \indown_{\conj{a_{1}}}^{(l+1)}.
\label{eq:nu_a1d_rec}
\end{align}
So, using the upper-sourced horse conditions \eqref{eq:ush_cond},
\begin{align}
	\indown_{\conj{a_{1}}}^{(l)} &= \left(\prod_{r=l}^{m} \rho_{*}^{(r+1,l)} \oQ_{\conj{b_{2}}}^{(r+1)} \oQ_{b_{2}}^{(r+1)}\right) \inX_{a} \nonumber \\
	 &= \left( \prod_{r=l+1}^{m+1} \oQ_{\conj{b_{2}}}^{(r)} \oQ_{b_{2}}^{(r)} \right)R^{(m,l)} \inX_{a}.
	\label{eq:nu_a1d_sol_part}
\end{align}

These computations lead us to the following key lemma that allows all street factors $Q_p$ to be reduced to powers of a single function.

\begin{lemma} \label{lem_Qc_equiv}
	\begin{equation*}
		\oQ_{c}^{(l)} = \oQ_{c}^{(1)},\,\forall l=1,\cdots, m.
	\end{equation*}
	\end{lemma}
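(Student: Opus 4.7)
The plan is to derive an explicit formula for $\oQ_c^{(l)}$ that manifestly does not depend on $l$, from which the lemma follows at once. Starting from the identity
\[
\osQ_c^{(l)} = 1 + \inup_{a_1}^{(l)}\,\inup_{b_1}^{(l)}\,\indown_{\conj{b_1}}^{(l)}\,\indown_{\conj{a_1}}^{(l)}
\]
recorded in Section~\ref{sec_int_deg}, I will substitute the four closed-form expressions (\ref{eq:tau_b1_sol}), (\ref{eq:nu_b1d_sol}), (\ref{eq:tau_a1_sol_part}), (\ref{eq:nu_a1d_sol_part}) for the incoming solitons. Since the internal street factors lie in $\formgamma$ and are central in $\CA$, they can be pulled to the front. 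Writing $P := \prod_{s=1}^{m}\oQ_c^{(s)}$ and using Lemma~\ref{lem_Q_red} (with the convention $\oQ_c^{(0)}=\oQ_c^{(m+1)}=1$), one finds $\oQ_{\conj{b_2}}^{(r)}\oQ_{b_2}^{(r)} = P/\oQ_c^{(r)}$ for $r=1,\dots,m$. The combined prefactor is then
\[
\prod_{r=1}^{l-1}\oQ_{\conj{b_2}}^{(r)}\oQ_{b_2}^{(r)} \cdot \prod_{r=l+1}^{m}\oQ_{\conj{b_2}}^{(r)}\oQ_{b_2}^{(r)} \;=\; \prod_{\substack{r=1\\ r\neq l}}^{m}\frac{P}{\oQ_c^{(r)}} \;=\; P^{m-2}\,\oQ_c^{(l)},
\]
in which the crucial factor of $\oQ_c^{(l)}$ reappears. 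The remaining factor is the product of four transported simpleton variables
\[
\Theta_l \;:=\; R^{(1,l)}\inX_{a}\cdot R^{(1,l)}\inX_{b}\cdot R^{(m,l)}\inX_{\conj b}\cdot R^{(m,l)}\inX_{\conj a}.
\]

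The central step is to show that $\cl[\Theta_l]\in\formgamma$ is independent of $l$. Because the four simpletons have types $12$, $23$, $32$, $21$, their endpoints match cyclically at the common base $\widetilde{z}_l$, so the product is well defined (nonzero) and its underlying chain is a closed cycle $\widehat{\gamma_c}(l)\in\widetilde{\Gamma}$. I will verify $l$-independence by comparing $l$ and $l+1$: using $R^{(1,l+1)} = \rho_*^{(l,l+1)}R^{(1,l)}$ and $R^{(m,l+1)} = \rho_*^{(l+1,l)^{-1}}R^{(m,l)}$ together with the definition (\ref{eq:local_sys_mon}) of parallel transport on $\mathfrak{s}$, the extra pieces of chain appended to the four solitons pair up across the two intersection disks $D_{12}^{(l)}$ and $D_{23}^{(l)}$. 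The \ref{cond_noholes} condition guarantees that the two loops formed in this way bound disks in $C'$, and the corresponding tangent-framing lifts bound disks in $\widetilde{\Sigma}$, so $\widehat{\gamma_c}(l+1)-\widehat{\gamma_c}(l)=0$ in $\widetilde{\Gamma}=H_1(\widetilde{\Sigma};\mathbb{Z})/2H$. A bookkeeping of the tangent framings and of the rule $X_H=-1$ produces the overall sign $(-1)^m$, so that $\cl[\Theta_l] = (-1)^m X_{\widehat{\gamma_c}} =: z$ is indeed $l$-independent.

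Applying $\cl$ to the defining identity then yields
\[
\oQ_c^{(l)} \;=\; 1 + P^{m-2}\, \oQ_c^{(l)}\, z,
\]
which I solve formally as
\[
\oQ_c^{(l)} \;=\; \frac{1}{1-P^{m-2} z}.
\]
The right-hand side depends only on $P$ and $z$, both of which are manifestly independent of $l$. Hence $\oQ_c^{(l)}=\oQ_c^{(1)}$ for every $l=1,\dots,m$. I expect the principal obstacle to be the $l$-independence of $\cl[\Theta_l]$: all other steps are direct algebra, but showing that the $\rho_*^{(l,l+1)}$-corrections to the four solitons bound disks in $\widetilde{\Sigma}$ requires a careful use of the no-holes condition and of the sign conventions associated with the tangent-framing lift, and it is precisely this geometric input that forces the equation (\ref{eq:P}) to take the clean algebraic form $P_m = 1 + z\,P_m^{(m-1)^2}$ once one substitutes $P = (\oQ_c)^m$.
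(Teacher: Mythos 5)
Your proof is correct, and it rests on the same two pillars as the paper's own argument --- the solved junction expressions for $\inup_{a_1}^{(l)},\inup_{b_1}^{(l)},\indown_{\conj{b_1}}^{(l)},\indown_{\conj{a_1}}^{(l)}$ together with Lemma~\ref{lem_Q_red} --- but it is organized differently, and the difference is worth spelling out. The paper proves the lemma by comparing only \emph{adjacent} levels: substituting the one-step recursions into $\osQ_c^{(l)}=1+\inup_{a_1}^{(l)}\inup_{b_1}^{(l)}\indown_{\conj{b_1}}^{(l)}\indown_{\conj{a_1}}^{(l)}$ yields $\oQ_c^{(l)}=1+\bigl(\oQ_c^{(l-1)}-1\bigr)\oQ_c^{(l)}/\oQ_c^{(l-1)}$, whence $\oQ_c^{(l)}=\oQ_c^{(l-1)}$; the only geometric input is that a \emph{single} transport $\rho_*^{(l-1,l)}$ acts trivially on the closed four-fold product, which holds because the appended chains $\pm[q\{i\}]$ for the types $12,23,32,21$ cancel in pairs identically. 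Your route instead iterates the recursion down to the sourced horses, so you must establish the $l$-independence of $\cl[\Theta_l]$ \emph{before} the lemma is available --- a fact the paper only needs (and proves) afterwards, when identifying $\widehat{\gamma}_c$ and deriving \eqref{eq:P}. That step is true and your argument for it works, but you make it heavier than necessary: it follows by iterating exactly the same pairwise sign cancellation, and the \ref{cond_noholes} condition is not actually required for it (the paper remarks explicitly that \ref{cond_noholes} is used only to produce the explicit expression for $\widehat{\gamma}_c$, not to derive the algebraic equation). What your organization buys in exchange is economy downstream: the closed form $\oQ_c^{(l)}=(1-P^{m-2}z)^{-1}$ with $P=\prod_{s}\oQ_c^{(s)}$ delivers the lemma and, upon setting $P=P_m^m$, equation \eqref{eq:P} in one stroke, whereas the paper separates the two. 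Your prefactor bookkeeping $\prod_{r\neq l}\oQ_{\conj{b_2}}^{(r)}\oQ_{b_2}^{(r)}=P^{m-2}\oQ_c^{(l)}$ and the inversion of $1-P^{m-2}z$ in $\formgamma$ (legitimate since $P^{m-2}z$ has no constant term) are both sound.
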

\begin{proof}
	Recall (\ref{eq:tau_b1_rec}), (\ref{eq:nu_b1d_rec}), (\ref{eq:tau_a1_rec}), and (\ref{eq:nu_a1d_rec})
	\begin{align*}
		\inup_{b_{1}}^{(l)} &= \rho^{(l-1,l)}_{*}\inup_{b_{1}}^{(l-1)}\\
		\indown_{\overline{b_{1}}}^{(l)} &= \rho^{(l+1,l)}_{*}\indown_{\overline{b_{1}}}^{(l+1)}\\
		\inup_{a_{1}}^{(l)} &= \rho_{*}^{(l-1,l)}\oQ_{\overline{b_{2}}}^{(l-1)}\oQ_{b_{2}}^{(l-1)} \inup_{a_{1}}^{(l-1)}\\
		\indown_{\overline{a_{1}}}^{(l)} &= \rho_{*}^{(l+1,l)}\oQ_{\overline{b_{2}}}^{(l+1)}\oQ_{b_{2}}^{(l+1)} \indown_{\overline{a_{1}}}^{(l+1)};
	\end{align*}
	we can rewrite the equations for $\indown_{\overline{b_{1}}}^{(l)}$ and $\indown_{\overline{a_{1}}}^{(l)}$ as
	\begin{align*}
		\indown_{\overline{b_{1}}}^{(l)} &= \rho_{*}^{(l-1,l)} \indown_{\overline{b_{1}}}^{(l-1)}\\
		\indown_{\overline{a_{1}}}^{(l)} &= \rho_{*}^{(l-1,l)} \frac{\indown_{\overline{a_{1}}}^{(l-1)}}{\oQ_{\overline{b_{2}}}^{(l)} \oQ_{b_{2}}^{(l)}}.
	\end{align*}
	Using the equation for $\osQ_c$ in Section \ref{sec_int_deg}
	\begin{align*}
		\osQ_{c}^{(l)} &= 1 + \inup_{a_{1}}^{(l)} \inup_{b_{1}}^{(l)} \indown_{\overline{b_{1}}}^{(l)} \indown_{\overline{a_{1}}}^{(l)}\\
		 &= 1 + \left( \rho_{*}^{(l-1,l)} \oQ_{\overline{b_{2}}}^{(l-1)} \oQ_{b_{2}}^{(l-1)} \inup_{a_{1}}^{(l-1)} \right) \left(\rho^{(l-1,l)}_{*} \inup_{b_{1}}^{(l-1)} \right) \left(\rho_{*}^{(l-1,l)} \indown_{\overline{b_{1}}}^{(l-1)}\right) \left(\rho_{*}^{(l-1,l)} \frac{\indown_{\overline{a_{1}}}^{(l-1)}}{\oQ_{\overline{b_{2}}}^{(l)} \oQ_{b_{2}}^{(l)}} \right)\\
		 &= 1 + \left( \osQ_{c}^{(l-1)}-1 \right) \left(\frac{\oQ_{\overline{b_{2}}}^{(l-1)} \oQ_{b_{2}}^{(l-1)}}{\oQ_{\overline{b_{2}}}^{(l)} \oQ_{b_{2}}^{(l)}} \right);
	\end{align*}
	where, on the last line, the cancellation of the $\rho_{*}^{(l-1,l)}$ (parallel transport) actions\footnote{This is consistent with the fact that, according to (\ref{eq:local_sys_mon}), parallel transport acts trivially on charges of type $ii$.} can be seen by working through its definition in equations (\ref{eq:local_sys_mon}), (\ref{eq:rho_ij_def}), and (\ref{eq:rho_*_def}).   Applying the closure map we obtain
	\begin{align*}
		\oQ_{c}^{(l)} &= 1 + \left(\oQ_{c}^{(l-1)}-1 \right) \left(\frac{\oQ_{\overline{b_{2}}}^{(l-1)} \oQ_{b_{2}}^{(l-1)}}{\oQ_{\overline{b_{2}}}^{(l)} \oQ_{b_{2}}^{(l)}} \right).
	\end{align*}
	Using (\ref{eq:claim_1}) and (\ref{eq:claim_2}), then
	\begin{align*}
		\oQ_{c}^{(l)} &= 1 + \left(\oQ_{c}^{(l-1)} - 1 \right) \left( \frac{\prod_{r \neq l-1}\oQ_{c}^{(r)}}{\prod_{r \neq l}\oQ_{c}^{(r)}} \right)\\
		 &= 1 + \left( \oQ_{c}^{(l-1)} - 1 \right) \frac{\oQ_{c}^{(l)}}{\oQ_{c}^{(l-1)}}\\
		 &= 1 + \oQ_{c}^{(l)} - \frac{\oQ_{c}^{(l)}}{\oQ_{c}^{(l-1)}}.
	\end{align*}
	Hence,
	\begin{align*}
		\oQ_{c}^{(l)} &= \oQ_{c}^{(l-1)},\, l = 2, \cdots, m.
	\end{align*}
\end{proof}

The above proposition motivates the following simplified notation.
\begin{definition}
	\begin{equation*}
		P_{m} : = \oQ_{c}^{(1)}.
	\end{equation*}
\end{definition}

Now, when lemmata \ref{lem_Q_red} and \ref{lem_Qc_equiv} are combined, we have
\begin{corollary} \label{cor_Q_red}
	\begin{equation*}
		\begin{aligned}
		\oQ_{a_{2}}^{(l)} &= \oQ_{b_{2}}^{(l)} = \left(P_{m} \right)^{m-l} \\
		\oQ_{\conj{a_{2}}}^{(l)} &= \oQ_{\conj{b_{2}}}^{(l)} = \left(P_{m} \right)^{l-1}.
		\end{aligned}
	\end{equation*}
\end{corollary}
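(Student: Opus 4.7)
The plan is to exploit the recursive structure of $m$-herds to reduce every street factor to a power of a single generating function $P_m := Q(c,1)$, and then extract the algebraic equation (\ref{eq:P}) by unpacking the defining formula for $\mathcal{Q}_c$ at the first horse. The overall workflow has three stages: (i) derive recursions in the index $l$ labeling horses; (ii) prove $Q(c,l)$ is $l$-independent; (iii) expand $\mathcal{Q}_c^{(1)}$ to read off both the algebraic equation and the formal variable $z$.

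First I would apply the six-way junction equations of Appendix \ref{app:six-way} at each internal joint of a horse, obtaining explicit formulas for the outgoing generating functions and internal street factors in terms of the incoming data. Combining these with the gluing relations (\ref{eq:horse_glue}) produces two-step recursions such as $Q(a_2,l) = Q(c,l{+}1)\, Q(a_2,l{+}1)$ together with its duals for $Q(b_2,l), Q(\bar{a}_2,l), Q(\bar{b}_2,l)$. Feeding in the boundary conditions $Q(a_2,m) = Q(b_2,m) = 1$ and $Q(\bar{a}_2,1) = Q(\bar{b}_2,1) = 1$ from the upper/lower-sourced horse conditions (\ref{eq:ush_Q_simp}), (\ref{eq:lsh_Q_simp_1}) at once yields the product formulas $Q(a_2,l) = Q(b_2,l) = \prod_{r=l+1}^{m}Q(c,r)$ and $Q(\bar{a}_2,l) = Q(\bar{b}_2,l) = \prod_{r=1}^{l-1}Q(c,r)$.

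Next I would establish the collapsing identity $Q(c,l) = Q(c,l{-}1)$. Starting from $\mathcal{Q}_c^{(l)} = 1 + \Upsilon_{a_1}^{(l)}\Upsilon_{b_1}^{(l)}\Delta_{\bar{b}_1}^{(l)}\Delta_{\bar{a}_1}^{(l)}$, I would propagate $\Upsilon_{a_1}$ forward and $\Delta_{\bar{a}_1}$ backward across horses using the recursions (\ref{eq:tau_a1_rec}), (\ref{eq:nu_a1d_rec}), picking up multiplicative corrections $Q(b_2,l{-}1)Q(\bar{b}_2,l{-}1)$ and $Q(b_2,l{+}1)^{-1}Q(\bar{b}_2,l{+}1)^{-1}$ respectively, while $\Upsilon_{b_1}$, $\Delta_{\bar{b}_1}$ propagate trivially; the parallel transport maps $\rho_*^{(l,l\pm 1)}$ act trivially on closed cycles of type $ii$ so only the correction ratio survives after applying $\cl$. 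Substituting the product formulas from the previous step telescopes this ratio to $Q(c,l)/Q(c,l{-}1)$, giving $Q(c,l) = 1 + (Q(c,l{-}1)-1)\, Q(c,l)/Q(c,l{-}1)$ and hence $Q(c,l) = Q(c,l{-}1)$. Setting $P_m := Q(c,l)$ for all $l$, the remaining formulas in (\ref{eq:Q_red}) then follow by combining the two reductions with the outgoing-street identities of Section \ref{sec_out_deg}.

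The last and most delicate step is to extract the algebraic equation. I would expand $\mathcal{Q}_c^{(1)}$ using the lower-sourced input $\Upsilon_{a_1}^{(1)} = X_a$, $\Upsilon_{b_1}^{(1)} = X_b$ and the form of $\Delta_{\bar{a}_1}^{(1)}, \Delta_{\bar{b}_1}^{(1)}$ obtained by iterating (\ref{eq:nu_a1d_rec}), (\ref{eq:nu_b1d_rec}) from $l=m$ back to $l=1$. Each iteration of the $\Delta_{\bar{a}_1}$ recursion contributes a factor $Q(b_2,r)Q(\bar{b}_2,r) = P_m^{m-1}$, and after $m{-}1$ such steps and the accumulated boundary contributions the total exponent becomes $(m-1)^2$; the residual formal variable is the closure of the concatenated chain $a + b + R^{(1,m)}\bar a + R^{(1,m)}\bar b$ on $\tilde{\Sigma}$, which I would identify with $\tilde\gamma + \tilde\gamma'$ for saddle-type hypermultiplet classes $\gamma, \gamma'$ with $\langle\gamma,\gamma'\rangle = m$. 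The hardest part will be precisely this final bookkeeping: the $\eta$ half-twists, the parallel transport maps $R^{(k,n)}$ of Section \ref{app:global}, and the tangent-framing shifts defined in Section \ref{sec:comp_Omega} must combine cleanly to produce the sign $(-1)^m$ and the lifted class $\tilde\gamma + \tilde\gamma'$, and verifying that the simpleton concatenations close up into a genuine cycle on $\Sigma$ representing $\gamma + \gamma'$ is exactly where the \ref{cond_noholes} condition enters the argument.
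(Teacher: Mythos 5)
Your proposal follows the paper's own route essentially step for step: the product formulas you derive in stage (i) are the paper's Lemma \ref{lem_Q_red}, the telescoping identity $Q(c,l)=Q(c,l-1)$ in stage (ii) is Lemma \ref{lem_Qc_equiv} (proved by exactly the same propagation-plus-cancellation argument), and the corollary is then their immediate combination. Stage (iii) goes beyond what the corollary itself requires, but it too matches the paper's subsequent derivation of the algebraic equation (\ref{eq:P}).
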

The above corollary, combined with the equations of
Section \ref{sec_out_deg}, is enough to express the remainder of the street factors in terms of $P_{m}$,
\begin{equation*}
	\begin{aligned}
	\left(P_{m} \right)^{m-l} &= \oQ_{a_{3}}^{(l)} = \oQ_{b_{3}}^{(l)} \\
	\left(P_{m} \right)^{l-1} &= \oQ_{\conj{a_{3}}}^{(l)} = \oQ_{\conj{b_{3}}}^{(l)} \\
	(P_{m})^{m-l+1} &= \oQ_{a_{1}}^{(l)} = \oQ_{b_{1}}^{(l)}\\
	(P_{m})^l &= \oQ_{\conj{a_{1}}}^{(l)} = \oQ_{\conj{b_{1}}}^{(l)}.
	\end{aligned}
\end{equation*}
This completes the proof of (\ref{eq:Q_red}) in Prop. 3.1.

\subsubsection{Proof of the Algebraic Equation (\ref{eq:P})}
Via the equation for $\osQ_c$ in Section \ref{sec_int_deg} along with (\ref{eq:tau_b1_sol})-(\ref{eq:tau_a1_sol_part}),
\begin{align*}
	\osQ_{c}^{(l)} &= 1 + \inup_{a_{1}}^{(l)} \inup_{b_{1}}^{(l)} \indown_{\conj{b_{1}}}^{(l)} \indown_{\conj{a_{1}}}^{(l)}\\
	&= 1 + \left[ \left( \prod_{r=0}^{l-1} \oQ_{\conj{b_{2}}}^{(r)} \oQ_{b_{2}}^{(r)} \right) R^{(1,l)} \inX_{a} \right] \left[ R^{(1,l)} \inX_{b} \right] \left[R^{(m,l)} \inX_{\conj{b}} \right] \left[ \left( \prod_{r=l+1}^{m+1} \oQ_{\conj{b_{2}}}^{(r)} \oQ_{b_{2}}^{(r)} \right) R^{(m,l)} \inX_{a} \right]\\
	&= 1 + \left( \prod_{r \neq l} \oQ_{\conj{b_{2}}}^{(r)} \oQ_{b_{2}}^{(r)} \right) \left( R^{(1,l)} \inX_{a} \right) \left(R^{(1,l)} \inX_{b} \right) \left(R^{(m,l)} \inX_{\conj{b}} \right) \left( R^{(m,l)} \inX_{a} \right)\\
	&= 1 + \left( P_{m} \right)^{(m-1)^2} \left( R^{(1,l)} \inX_{a} \right) \left(R^{(1,l)} \inX_{b} \right) \left(R^{(m,l)} \inX_{\conj{b}} \right) \left( R^{(m,l)} \inX_{\conj{a}} \right);
\end{align*}
where, on the last line we utilized Corollary \ref{cor_Q_red}.

\begin{remark}
	We note that,
	\begin{align*}
		R^{(1,l)}a + R^{(1,l)}b+ R^{(m,l)}\conj{b} + R^{(m,l)}\conj{a}
	\end{align*}
	represents a soliton charge of type $11$ on the open set $\xi^{-1}(U_{l}) \subset \twid{C'}$.  Thus, we may apply the map $\cl$ to this expression to produce an element of $\twid{\Gamma}$.
\end{remark}
This leads us to the following definition.

\begin{definition}
	We define
	\begin{align}
		\widehat{\gamma}_{c} :&= \cl \left[R^{(1,l)}a + R^{(1,l)}b + R^{(m,l)}\conj{b}+R^{(m,l)}\conj{a}\right] \in \twid{\Gamma}
		\label{eq:gammac_def}
	\end{align}
	and corresponding formal variable
	\begin{equation}
		z := X_{\widehat{\gamma}_{c}}.
		\label{eq:z_def}
	\end{equation}
\end{definition}
(We will show below that, in fact, \eqref{eq:gammac_def} does not depend on
$l$; thus, this definition is sensible.)

With the above definitions we have
\begin{align*}
	\oQ_{c}^{(l)} &= 1 + z P_{m}^{(m-1)^2}
\end{align*}
hence, by Lemma \ref{lem_Qc_equiv},
 $P_{m}$ satisfies the algebraic equation
\begin{align*}
	P_{m} &= 1 + z P_{m}^{(m-1)^2}.
	\tag{\ref{eq:P}}
\end{align*}

This completes the proof of the algebraic equation in Prop. \ref{prop_Q}.

\begin{remark}
	As we will show in Section \ref{app:proof_decomp}, $\widehat{\gamma}_{c}$ is the sum of two tangent framing lifts of simple closed curves with corresponding homology classes $\gamma,\, \gamma' \in \Gamma$.  In fact, we will show that (\ref{eq:z_def}) can be rewritten in the form stated in Prop. \ref{prop_Q}: $z = (-1)^{m} X_{\twid{\gamma} + \twid{\gamma}'}$, where $\twid{\left( \cdot \right)}: \Gamma \rightarrow \twid{\Gamma}$ is defined in Section (\ref{sec:comp_Omega}) and discussed further in Section \ref{app:sign-rule}.
\end{remark}

\subsection{Proof of the Decomposition of $\widehat{\gamma}_{c}$} \label{app:proof_decomp}

We begin with an example (which may be skipped for the more general proof below).\footnote{The following sections rely on the ideas of Section \ref{app:global}.}

\subsubsection{Example: $\widehat{\gamma}_{c}$ for $m$-herds on the cylinder}
		 We consider generalizations (to arbitrary $m$) of
the herds shown in Fig.~\ref{fig:herds} for $m = 1, \cdots,4$.
Assume we are equipped with a branched $3$-cover of the
cylinder $C = S^{1} \times \mathbb{R}$ with four branch points.
Now, consider an $m$-herd such that it is contained in a
presentation of the cylinder as an identification space
of $[0,1] \times \mathbb{R}$: the streets of type $23$ lie entirely
 in the interior of $(0,1) \times \mathbb{R}$, while the streets of
 type $12$ involved in the identifications (\ref{eq:horse_glue}) pass
 through the identified boundary.  First, each of the charges $a,\,b,\conj{a},\,\conj{b}$ can
 be thought of as flat sections of the local system
 $\mathfrak{s}: \bigcup_{\twid{z} \in \twid{C'}}
 \twid{\Gamma}(\twid{z},-\twid{z}) \rightarrow \twid{C'}$,\,
 locally defined around their respective branch points.
 The two-way streets are contained within the open set
 $U:= \bigcup_{l =1}^{m} U_{l}$, which is homeomorphic
 to $S^{1} \times I$ for $I \cong (0,1)$ an open interval.
 Let $U^{c} \cong (0,1)^2$ be the open set formed by
 removing the vertical line\footnote{Here $\sim$ denotes
 the identification of the boundary of $[0,1] \times \mathbb{R}$ to
 form the cylinder.  The removed vertical line is given by
 the (identified) dotted lines in Fig.~\ref{fig:herds}.}
 $(\{0\} \times \mathbb{R}) \cap U \sim (\{1\} \times \mathbb{R})
 \cap U$ from $U$.  $\mathfrak{s}$ is trivial over
 the open set $\xi^{-1}(U^{c}) \cong (0,1)^2 \times S^{1}$ in $\twid{C'}$;
 so, we can extend $a,\,b,\conj{a},\,\conj{b}$ to flat sections over all of $\xi^{-1}(U^{c})$.

	Now, let $q_{\text{cyl}}: [0,1] \rightarrow U \subset C'$
denote a loop winding once around the $S^{1}$ direction of $U$,
and oriented such that the upper-sourced horse branch points sit
to its ``left," while the lower-sourced horse branch points sit to its ``right";  $\widehat{q}_{\text{cyl}}: [0,1] \rightarrow \twid{C'}$ will denote the tangent framing lift of $q_{\text{cyl}}$.

		Working through the definition of the parallel transport maps $R^{(k,l)}$ in (\ref{eq:local_sys_mon}), (\ref{eq:rho_ij_def})-(\ref{eq:R_def}), we have
	\begin{align*}
		\widehat{\gamma}_{c} &= \cl \left(a + b + \conj{b} + \conj{a} \right) + (m-1) \left( [\hat{q}_{\text{cyc}}\{2\}] -[\hat{q}_{\text{cyc}} \{1\}] \right);
	\end{align*}
the expression in the closure map is defined by evaluating the sections $a,\,b,\conj{a},\,\conj{b}$ at some point $\twid{z} \in \xi^{-1}(U^{c} \cap U_{l})$ and taking their sum to define an element in $\twid{\Gamma}_{11}(\twid{z},-\twid{z})$.

Observe that we can decompose $\widehat{\gamma}_{c}$ as $\widehat{\gamma}_{c} = \widehat{\gamma} + \widehat{\gamma}'$ where,
\begin{align*}
	\widehat{\gamma} &= \cl \left(b + \conj{b} \right)\\
	\widehat{\gamma}' &= \cl \left(a + \conj{a} \right) +  \left(m - 1 \right) \left( [\hat{q}_{\text{cyc}}\{2\}] -[\hat{q}_{\text{cyc}} \{1\}] \right).
\end{align*}
Now, note that we can realize $\widehat{\gamma}$ the tangent
framing lift of a simple closed curve on $\Sigma$.
Indeed, consider an auxiliary street of type $23$, realized as a
straight line on $U \cup \{\text{branch pts.}\}$, running between
the two branch points of type $23$ (beginning at the branch point
emitting the charge $b$ and ending at the branch point emitting the
charge $\conj{b}$).  The lift of this street to $\Sigma$ is a
simple closed curve; the tangent framing lift is a representative
of $\cl \left(b + \conj{b} \right)$.
 Similarly, we can
realize  $\cl \left(a + \conj{a} \right)$ with the tangent
framing lift of a simple closed curve $\ell_{a}$
on $U \cup \{\text{branch pts.}\}$ and so $\widehat{\gamma}'$ can
be realized as a modification of $\ell_{a}$ by smoothly ``detouring"
along the lifts (to sheets 1 and 2) of a curve that winds $m-1$
times around the $S^{1}$ direction of $U$.  The resulting curve is
the tangent framing lift of a simple closed curve.  Furthermore, project these simple-closed curves to $\Sigma$; then letting $\gamma$ and $\gamma'$ be the homology classes of our projections,
with their representative curves it is clear that
$\langle \gamma, \gamma' \rangle = m$.

Now, using different techniques, let us proceed on with the general proof of the decomposition $\widehat{\gamma}_{c} = \widehat{\gamma} + \widehat{\gamma}'$, described in the example above, for an $m$-herd on a general oriented curve $C$.

\subsubsection{General Proof}

Let $\xi^{\Sigma}: \twid{\Sigma} \rightarrow \Sigma$ be the unit tangent bundle projection.
\begin{definition}
	\begin{equation*}
		\gamma_{c} := \xi^{\Sigma}_{*} \widehat{\gamma}_{c} \in \Gamma.
	\end{equation*}
\end{definition}
To derive an explicit expression for $\gamma_{c}$ in terms of simpleton charges (\ref{eq:charge_def}) in $\bigcup_{z \in C} \Gamma(z,z)$, we ``pushforward" the expression (\ref{eq:gammac_def}) via $\xi^{\Sigma}$.
From the definitions (\ref{eq:charge_def}), (\ref{eq:lifted_charge_def}), and (\ref{eq:gammac_def}) it follows that
\begin{equation}
	\gamma_{c} = \cl \left[R_{\mathfrak{r}}^{(1,l)} a_{*} + R_{\mathfrak{r}}^{(1,l)}  b_{*} + R_{\mathfrak{r}}^{(m,l)} \conj{b}_{*} + R_{\mathfrak{r}}^{(m,l)}  \conj{a}_{*}\right]
	\label{eq:gammac_pushed}
\end{equation}
where $R_{\mathfrak{r}}^{(k,n)}$ are the ``pushforward" of the parallel transport operators $R^{(k,n)}$ defined in (\ref{eq:R_red_def}).

We will construct a decomposition $\gamma_{c} = \gamma + \gamma'$ with $\langle \gamma, \gamma' \rangle = m$ roughly by shrinking the $c^{(l)}$ streets of the herd to points.  To be precise, we introduce some definitions.

\begin{figure}
	\begin{center}
		 \includegraphics[scale=0.15]{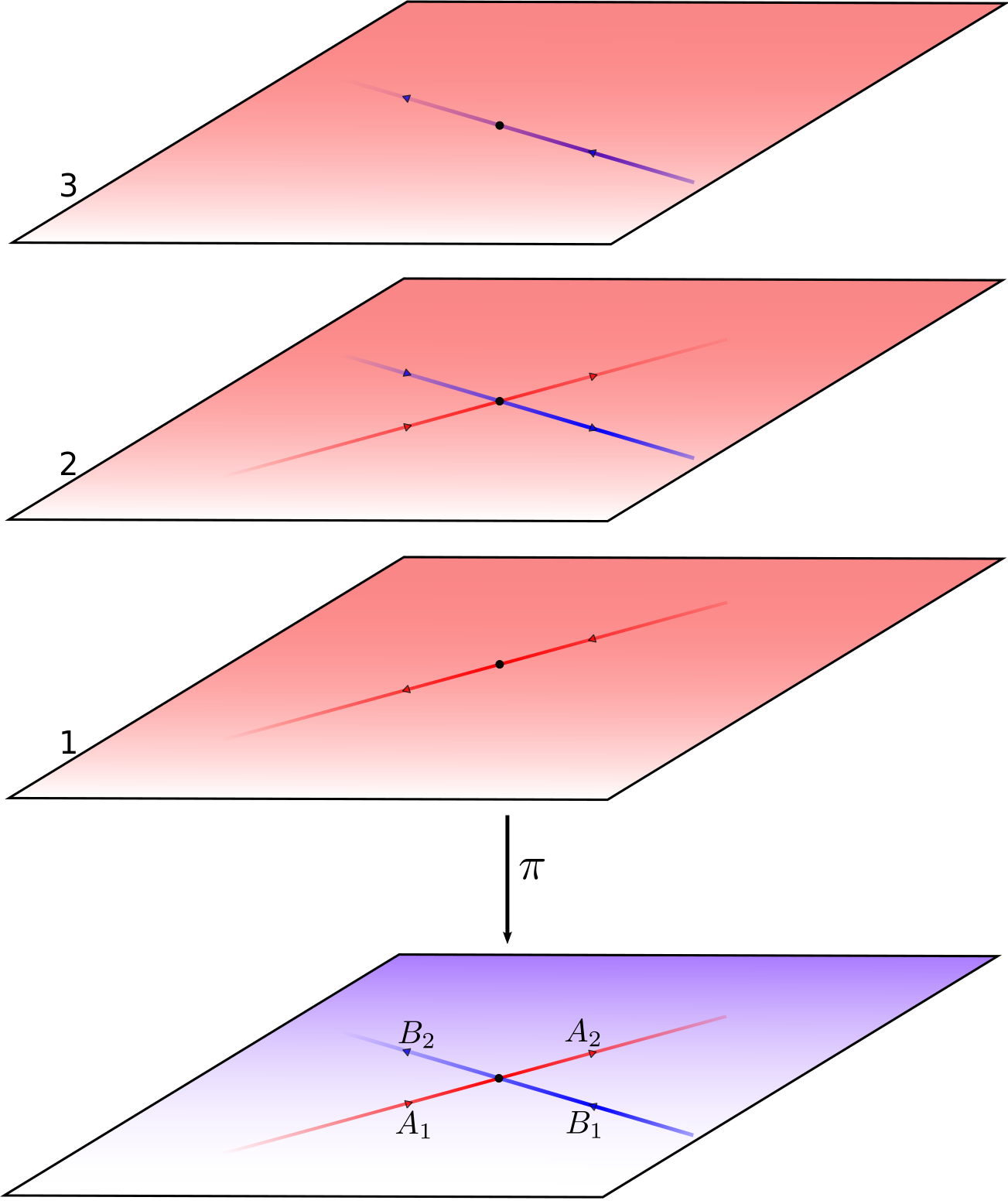}
		\caption{A pony and its lift to $\Sigma$.  Red streets are of type $12$, blue streets are of type $23$.\label{fig:pony}}
	\end{center}
\end{figure}

\begin{definition}[Definitions]\
	\begin{enumerate}
		\item A pony is a partial spectral network as shown in Fig. \ref{fig:pony}.  Upper and lower-sourced ponies are defined similar to upper and lower sourced horses.

		\item The string of ponies $S_{m}$ associated to an $m$-herd $H_{m}$ is the spectral network constructed by placing
		\begin{enumerate}
			\item A lower sourced pony on $U_{1}$\\

			\item Ponies on $U_{l},\,1 < l < m$\\

			\item An upper-sourced pony on $U_{m}$,
		\end{enumerate}
		where the $U_{m}$ are a good open cover satisfying the \ref{cond_horses} condition for $H_{m}$, and forcing the identifications
		\begin{align*}
			A_{1}^{(l+1)} &= A_{2}^{(l)}\\
			B_{1}^{(l+1)} &= B_{2}^{(l)}
		\end{align*}
		on each $U_{l} \cap U_{l+1}$.
	\end{enumerate}
\end{definition}

\begin{remark}[Remarks] \
	\begin{enumerate}
		\item $S_{m}$ is only defined up to homotopy on each disk $U_{l}$.

		\item The interpretation of $S_{m}$ as a spectral network is overkill for our discussion and we introduce it as such mainly for notational convenience: all that will be necessary is the graph of the lift $\operatorname{Lift}(S_{m}) \subset \Sigma$.  However, in the wall-crossing interpretation of $m$-herds discussed in Section \ref{sec:herds}, the spectral network $S_m$ is expected to appear on the wall of marginal stability where two hypermultiplets of intersection number $m$ have coincident central charge phase.  In fact, the procedure of deforming such a picture is what motivated the construction of $m$-herds.
	\end{enumerate}
\end{remark}

\begin{definition}
	 Let $p^{(l)}$ be a street of type $ij$, then $\lift{p}^{(l)} \in C_{1}(\Sigma; \mathbb{Z})$ is the 1-chain on $\Sigma$ representing the lift\footnote{If $p^{(l)}$ connects two joints, this lift has two components.  If $p^{(l)}$ connects a joint to a branch point of type $ij$, then the two components combine to form a connected $1$-chain between sheets $i$ and $j$.} of $p^{(l)}$ as a street of type $ij$ (using the orientation discussed in Section \ref{sec:wtheta}).
\end{definition}

If we define,
\begin{align*}
	\gamma &= \left[\sum_{l=1}^{m} \left( \lift{B}_1^{(l)} + \lift{B}_2^{(l)} \right) \right] \in H_{1}(\Sigma; \mathbb{Z})\\
	\gamma' &= \left[\sum_{l=1}^{m} \left( \lift{A}_1^{(l)} + \lift{A}_2^{(l)} \right) \right] \in H_{1}(\Sigma; \mathbb{Z}),
\end{align*}
then, as shown in Fig. \ref{fig:pony}, $\gamma$ and $\gamma'$ intersect once in each $\pi^{-1}(U_{l}),\, l=1, \cdots m$; hence,
\begin{align*}
	\langle \gamma, \gamma' \rangle & = m.
\end{align*}
Now
\begin{itemize}
	\item $\sum_{l = 1}^{m-1} \left( \lift{A}_1^{(l)} + \lift{A}_2^{(l)} \right)$
	 is a 1-chain representative of the parallel transported charge $ R_{\mathfrak{r}}^{(1,m-1)} a_{*}$.

	\item $\sum_{l = 1}^{m-1} \left( \lift{B}_1^{(l)} + \lift{B}_2^{(l)} \right)$ is a 1-chain representative of $ R_{\mathfrak{r}}^{(1,m-1)} b_{*}$.

	\item $\lift{A}_{1}^{(m)} + \lift{A}_{2}^{(m)}$ is a 1-chain representative of $\conj{a}_{*}$.

	\item $\lift{B}_{1}^{(m)} + \lift{B}_{2}^{(m)}$ is a 1-chain representative of $\conj{b}_{*}$.
\end{itemize}

Hence,
\begin{align*}
	\gamma_{c}  = \gamma + \gamma '.
\end{align*}
Now, each of the 1-chains $\lift{A}_{i}^{(l)},\, \lift{B}_{i}^{(l)}$ have well-defined tangent framing lifts $\widehat{\lift{A}}_{i}^{(l)},\, \widehat{\lift{B}}_{i}^{(l)}$ when thought of as oriented paths on $\operatorname{Lift}(S_{m}) \subset \Sigma$.  Similarly, $\gamma$ and $\gamma'$ have obvious representative curves on $\operatorname{Lift}(S_{m})$ that allow us to produce tangent framing lifts $\widehat{\gamma},\, \widehat{\gamma}'$.  In fact,
\begin{align*}
	\widehat{\gamma} &= \left[\sum_{l=1}^{m} \left( \widehat{\lift{B}}_1^{(l)} + \widehat{\lift{B}}_2^{(l)} \right) \right] \in H_{1}(\twid{\Sigma}; \mathbb{Z})\\
	\widehat{\gamma}' &= \left[\sum_{l=1}^{m} \left( \widehat{\lift{A}}_1^{(l)} + \widehat{\lift{A}}_2^{(l)} \right) \right] \in H_{1}(\twid{\Sigma}; \mathbb{Z}).
\end{align*}
Via similar arguments to above, along with the definition of $\widehat{\gamma}_{c}$ in (\ref{eq:gammac_def}), we have
\begin{equation*}
	\widehat{\gamma}_{c} = \widehat{\gamma} + \widehat{\gamma}'.
\end{equation*}
Alternatively, we can lift $\gamma_{c} = \gamma + \gamma'$ using the map  $\twid{\left( \cdot \right)}: \Gamma \rightarrow \twid{\Gamma}$ defined in (\ref{eq:lift_def}) of Appendix \ref{app:sign-rule}.  Indeed, as the curves representing $\gamma$ and $\gamma'$ intersect $m$ times, we have
\begin{equation*}
	\twid{\gamma}_{c} = \widehat{\gamma} + \widehat{\gamma}' + m H = \widehat{\gamma}_{c} + m H.
\end{equation*}
Thus,
\begin{align*}
	z = X_{\widehat{\gamma}_{c}} = (-1)^{m} X_{\twid{\gamma}_{c}}.
\end{align*}

\subsection{Proof of Proposition \ref{prop_l}} \label{app:prop_l_pf}
We wish to compute the homology class of the $1$-chain $L(n \gamma_{c})$.  First we introduce a few notational definitions that differ slightly from the main body of the paper.
\begin{definition}
	$\lift{p}^{(l,r)} \in C_{1}(\Sigma; \mathbb{Z})$ is the component of $\lift{p}^{(l)} \in C_{1}(\Sigma; \mathbb{Z})$ on the $r$th sheet.  If $p^{(l)}$ is a street of type $ij$, then
		 \begin{align*}
		 	\lift{p}^{(l,r)} &=
		 	\left\{
		 	\begin{array}{ll}
		 		+\left( \text{1-chain representing the lift of $p^{(l)}$ to the $r$th sheet} \right), & \text{if $r = j$} \\
		 		-\left(\text{1-chain representing the lift of $p^{(l)}$ to the $r$th sheet} \right), & \text{if $r = i$}\\
		 		0 & \text{otherwise}
		 	\end{array} .
		 	\right.
		 \end{align*}
\end{definition}

Now,
\begin{equation}
		\begin{aligned}
		L(n \gamma_{c}) &= \sum_{l=1}^{m} \sum_{p^{(l)}} \alpha_{n}(p,l) \liftnb{p}^{(l)} \\
		 &= \alpha_{n} \sum_{l=1}^{m} \left\{ \liftnb{c}^{(l)} + (m-l) \left( \lift{a_{2}}^{(l)} + \lift{a_{3}}^{(l)} + \lift{b_{2}}^{(l)} + \lift{b_{3}}^{(l)} \right) \right. \\
		 & + (l-1) \left( \lift{\conj{a_{2}}}^{(l)} + \lift{\conj{a_{3}}}^{(l)}  + \lift{\conj{b_{2}}}^{(l)} + \lift{\conj{b_{3}}}^{(l)} \right) + \\
		& \left. + (m-l+1) \left(\lift{a_{1}}^{(l)} + \lift{b_{1}}^{(l)} \right) +  l \left( \lift{\conj{a_{1}}}^{(l)} + \lift{\conj{b_{1}}}^{(l)} \right) \right\}.
		\end{aligned}
		\tag{\ref{eq:ln}}
	\end{equation}
after using the results of Prop. \ref{prop_Q} and the definition of $\alpha_n$
given in equation \eqref{eq:P_decomp}.

For the sake of readability we introduce some simplifying notation.

\begin{definition}[Notational Definition]
We denote,
\begin{align*}
	\la_{12} &:= \la_{1}^{(l)} + \la_{2}^{(l)}\\
	\la_{23} &:= \la_{2}^{(l)} + \la_{3}^{(l)}\\
	\la_{123} &:= \la_{1}^{(l)} + \la_{2}^{(l)} + \la_{3}^{(l)};
\end{align*}
and similarly, for $\lac_{i},\, \lb_{i},$ and $\lbc_{i}$.
\end{definition}

Using this notation, we can rewrite our sum in slightly more illuminating form,
\begin{align*}
	L(n \gamma_{c}) &= \alpha_{n} \sum_{l=1}^{m} \left\{(m-l) \left(\la_{123}^{(l)} + \lb_{123}^{(l)} \right) + l \left( \lac_{123}^{(l)} + \lbc_{123}^{(l)} \right) \right.\\
	 & \left. + \lift{a_{1}}^{(l)} + \lift{b_{1}}^{(l)} + \lift{c}^{(l)} -   \lift{\conj{a_{23}}}^{(l)} - \lift{\conj{b_{23}}}^{(l)} \right\}.
\end{align*}

This form suggests we should try to find a homological equivalence taking the terms multiplying the factor $l$, to the terms multiplying the factor $(m-l)$.  We introduce extra 1-chains to aid in our computation.  To define them, it is helpful to think of them as lifts of auxiliary streets.  However, the interpretation as lifts of streets on $C$ is only a notational tool: these streets are not part of any spectral network.

\begin{figure}
	\begin{center}
		 \includegraphics[scale=0.22]{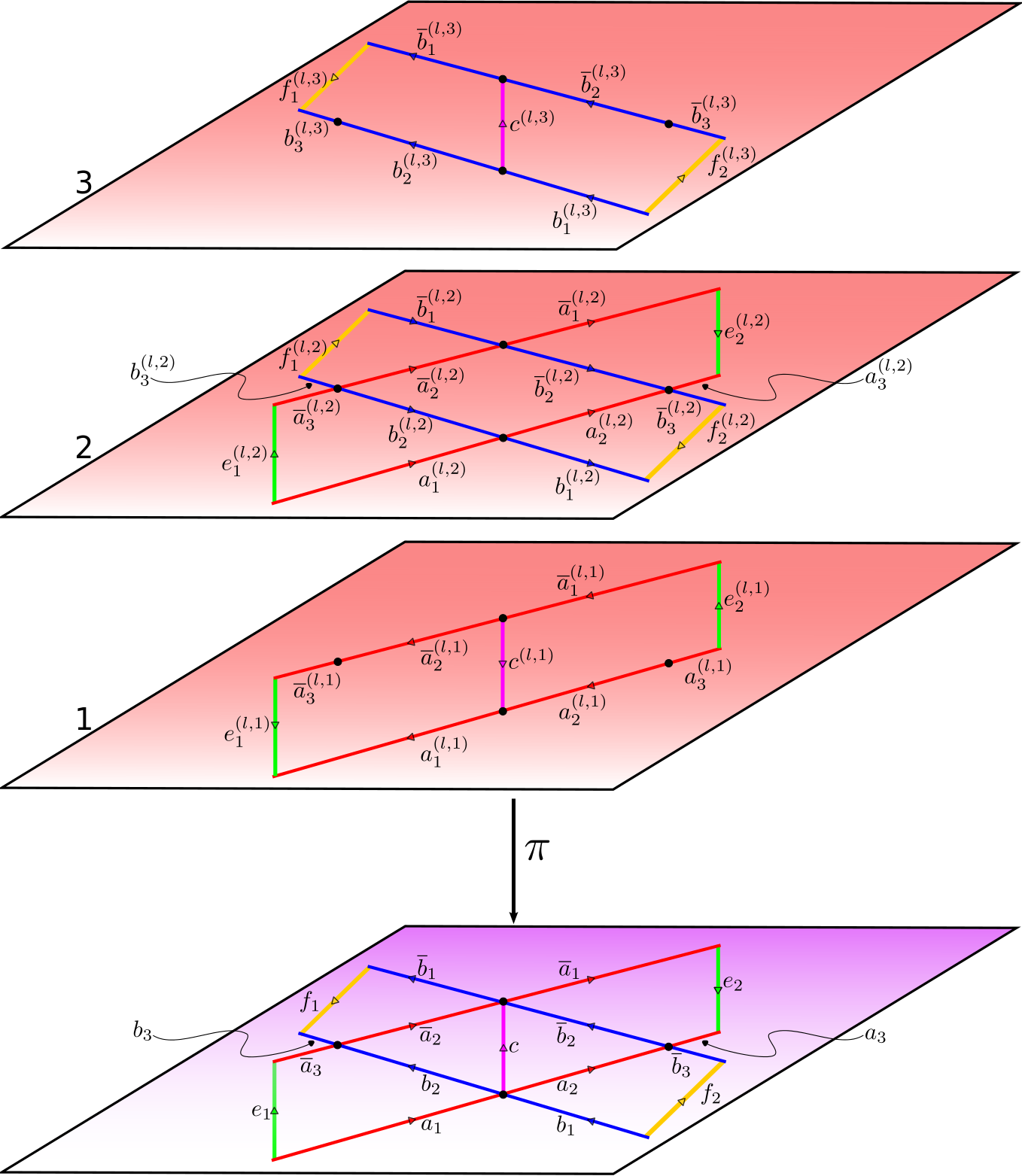}
		\caption{Lift of a horse with extra 1-chains, pictured here as the lift of some auxiliary streets on $C$.  For the sake of readability, the ``horse label" $(l)$ is suppressed on the base $C$.\label{fig:horse_decorated}}
	\end{center}
\end{figure}

\begin{definition}
	 Let $\{U_{l}\}_{l=1}^{m}$ be an open covering satisfying the \ref{cond_horses} condition for an $m$-herd. On each horse we define auxiliary streets as in Fig.~\ref{fig:horse_decorated}: $e_{1}^{(l)}, e_{2}^{(l)} \subset U_{l}$ of type $12$, and $f_{1}^{(l)},\, f_{2}^{(l)} \subset U_{l}$, of type $23$ ; such that,
		\begin{enumerate}
		\labitem{(\textit{C1})}{cond_aux_1}
			\begin{align*}
				e_{1}^{(l+1)} &= - e_{2}^{(l)}\\
				f_{2}^{(l+1)} &= - f_{1}^{(l)},
			\end{align*}
		where ``$-$" indicates orientation reversal.

		\labitem{(\textit{C2})}{cond_aux_2} $e_{1}^{(1)}$ and $f_{2}^{(1)}$ end on the branch points of type $12$ and $23$ (respectively) of the lower-sourced horse, while $e_{2}^{(m)}$ and $f_{1}^{(m)}$ end on the branch points of type $12$ and $23$ (respectively) of the upper-sourced horse.
		\end{enumerate}
\end{definition}

\begin{remark}
	The \ref{cond_noholes} condition removes any obstruction to condition \ref{cond_aux_1}.
\end{remark}

The 1-chains that will aid in our proof are the lifts of the auxiliary streets.

\begin{remark}
	Keeping with the (previously defined) convention for lifts of streets, there are 1-chains (on $\Sigma$) $\lte_{1}^{(l)},\, \lte_{2}^{(l)},\ \ltf_{1}^{(l)},$ and $\ltf_{2}^{(l)}$ (also depicted in Fig.~\ref{fig:horse_decorated}).  It follows that, via \ref{cond_aux_1},
		\begin{equation}
			\begin{aligned}
				\lte_{1}^{(l+1)} &= - \lte_{2}^{(l)}\\
				\ltf_{2}^{(l+1)} &= - \ltf_{1}^{(l)}.
			\end{aligned}
			\label{eq:aux_rel}
		\end{equation}
		for $l=1, \cdots, m-1$.
\end{remark}

\begin{lemma}
Let $\sim$ denote homological equivalence.  Then for each $l = 1, \cdots, m$:
on the first (locally defined) sheet,
	\begin{align}
		0 & \sim \lac_{23}^{(l,1)} + \lte_{1}^{(l,1)} - \la_{1}^{(l,1)} - \lift{c}^{(l,1)} \label{eq:hom_sht1_1}\\
		0 & \sim \lac_{1}^{(l,1)} + \lift{c}^{(l,1)} - \la_{23}^{(l,1)} + \lift{e}_{2}^{(l,1)} \label{eq:hom_sht1_2}.
	\end{align}
On the second sheet,
	\begin{align}
		0 & \sim \lac_{123}^{(l,2)} + \lte_{2}^{(l,2)} - \la_{123}^{(l,2)} + \lte_{1}^{(l,2)} \label{eq:hom_sht2_1}\\
		0 & \sim \lbc_{123}^{(l,2)} + \ltf_{2}^{(l,2)} - \lb_{123}^{(l,2)} +	 \ltf_{1}^{(l,2)} \label{eq:hom_sht2_2}\\
		0 & \sim \lac_{3}^{(l,2)} + \lb_{2}^{(l,2)} - \la_{1}^{(l,2)} + \lte_{1}^{(l,2)} \label{eq:hom_sht2_3}\\
		0 & \sim \la_{2}^{(l,2)} + \lbc_{3}^{(l,2)} + \ltf_{2}^{(l,2)}- \lb_{1}^{(l,2)} \label{eq:hom_sht2_4}\\
		0 & \sim \lb_{2}^{(l,2)} + \la_{2}^{(l,2)} - \lbc_{2}^{(l,2)} - \lac_{2}^{(l,2)} \label{eq:hom_sht2_5} .
	\end{align}
On the third sheet,
	\begin{align}
				0 & \sim \lb_{1}^{(l,3)} + \lift{c}^{(l,3)} - \lbc_{23}^{(l,3)} - \ltf_{2}^{(l,3)} \label{eq:hom_sht3_1} \\
		0 & \sim \lbc_{1}^{(l,3)} + \ltf_{1}^{(l,3)} - \lb_{23}^{(l,3)} + \lift{c}^{(l,3)} \label{eq:hom_sht3_2}.
	\end{align}
\end{lemma}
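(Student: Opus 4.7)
The plan is to verify each of the nine homological equivalences by exhibiting, for each one, an explicit oriented 2-chain on $\Sigma$ whose boundary is precisely the stated combination of 1-chains. The guiding principle is that within a single horse $U_l \subset C'$ (an embedded disk), we may trivialize the cover so that each sheet $\pi^{-1}(U_l)_i$, for $i=1,2,3$, is itself a topological disk on which the lifts of the horse streets --- together with the lifts of the auxiliary streets $e_1^{(l)},e_2^{(l)},f_1^{(l)},f_2^{(l)}$ introduced in Figure \ref{fig:horse_decorated} --- form a planar graph. Any 1-cycle on such a disk is automatically a boundary, so to establish the relations it suffices to identify the correct 2-cell in each case.

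First I would fix the trivialization pictured in Figure \ref{fig:horse_decorated}: recall that a street of type $ij$ contributes with positive orientation on sheet $j$ and negative orientation on sheet $i$ (via the definition of $\lift{p}^{(l,r)}$), and similarly for the auxiliary $e_*$ ($12$-type) and $f_*$ ($23$-type) chains. With this convention one can then read the nine relations directly off the picture. For example, equation \eqref{eq:hom_sht1_1} asserts that $\lac_{23}^{(l,1)} + \lte_{1}^{(l,1)} - \la_{1}^{(l,1)} - \lift{c}^{(l,1)}$ bounds; on the first sheet, the lifts of $\bar a_2$, $\bar a_3$, $e_1$, $a_1$, and $c$ together cut out a single simply-connected region whose oriented boundary is exactly this combination. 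The relations \eqref{eq:hom_sht3_1}--\eqref{eq:hom_sht3_2} are the duals on sheet 3 via the involution of Section \ref{sec_duality} (with $e \leftrightarrow f$ and $a \leftrightarrow b$), so they are handled by the same argument.

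The second-sheet relations \eqref{eq:hom_sht2_1}--\eqref{eq:hom_sht2_5} are more involved, because each of the $12$-streets and $23$-streets of a horse is visible on sheet 2 (in contrast to sheets 1 and 3, where only half of them appear). I would handle them in the order given: \eqref{eq:hom_sht2_1} and \eqref{eq:hom_sht2_2} say that the ``left and right A-columns'' (respectively the B-columns) together with the auxiliary $e$- (respectively $f$-) chains bound a large strip running the length of the horse on sheet 2; \eqref{eq:hom_sht2_3} and \eqref{eq:hom_sht2_4} identify smaller quadrilateral 2-cells near the branch points of the horse; and \eqref{eq:hom_sht2_5} is a genuine ``closed'' relation inside the body of the horse, bounding the interior 2-cell enclosed by the four middle streets $a_2, b_2, \bar a_2, \bar b_2$ on sheet 2.

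The main obstacle, as I see it, will be not finding the 2-cells but rather the bookkeeping of orientations and sign conventions: verifying that each listed combination, read with the signs prescribed by the $\lift{p}^{(l,r)}$ convention, genuinely equals $\partial$ applied to the chosen 2-chain with its natural orientation. The cleanest way to organize this is to draw the three sheets of $\pi^{-1}(U_l)$ side by side with all the lifts labeled by the conventions of Section \ref{sec:wtheta}, and to list, for each of the nine 2-cells, its four or five edges together with the sign with which that edge appears in the boundary. Once this table is drawn up, each of \eqref{eq:hom_sht1_1}--\eqref{eq:hom_sht3_2} reduces to direct inspection. Crucially, condition \ref{cond_noholes} is what guarantees that these 2-cells are genuine disks in $\Sigma$ rather than annuli or punctured regions, so this is where that hypothesis is used.
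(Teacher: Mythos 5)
Your proposal is correct and is essentially the paper's own argument: the paper proves the lemma by observing that each listed combination of 1-chains is the oriented boundary of a disk visible in Fig.~\ref{fig:horse_decorated}, which is exactly the 2-cell-by-2-cell verification you describe. One small correction: the \ref{cond_noholes} condition is invoked in the paper to remove obstructions to the gluing condition \ref{cond_aux_1} on the auxiliary streets \emph{between} adjacent horses, whereas the nine relations here live entirely inside a single horse $U_l$, which is an embedded disk by the \ref{cond_horses} condition, so the bounding 2-cells are automatically disks without appeal to \ref{cond_noholes}.
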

\begin{proof}
The lemma follows by inspection of Fig.~\ref{fig:horse_decorated}.  Each of the listed sum of 1-chains is the boundary of an oriented disk.
\end{proof}

In particular, it follows from the lemma that
\begin{align*}
	\lac_{123}^{(l)} & \sim \la_{123}^{(l)} - \lte_{1}^{(l)} - \lte_{2}^{(l)}\\
	\lbc_{123}^{(l)} & \sim \lb_{123}^{(l)} - \ltf_{1}^{(l)} - \ltf_{2}^{(l)}.
\end{align*}
Hence,
\begin{align}
		L(n \gamma_{c}) &\sim \alpha_{n} \sum_{l=1}^{m} \left\{(m-l) \left(\la_{123}^{(l)} + \lb_{123}^{(l)} \right) + l \left( \la_{123}^{(l)} + \lb_{123}^{(l)} \right) \right\} + \alpha_{n} R_{1} + \alpha_{n} R_{2} \nonumber \\
		&\sim m \alpha_{n} \sum_{l=1}^{m} \left(\la_{123}^{(l)} + \lb_{123}^{(l)} \right) + \alpha_{n} R_{1} + \alpha_{n} R_{2}
		\label{eq:L_with_remainder}
\end{align}
where
\begin{align*}
	R_{1} &= - \sum_{l = 1}^{m} l \left\{\lte_{1}^{(l)} + \lte_{2}^{(l)} + \ltf_{1}^{(l)} + \ltf_{2}^{(l)} \right\}\\
	R_{2} &= \sum_{l=1}^{m} \left\{\lift{a_{1}}^{(l)} + \lift{b_{1}}^{(l)} + \lift{c}^{(l)} - \lift{\conj{a_{23}}}^{(l)} - \lift{\conj{b_{23}}}^{(l)}  \right\}.
\end{align*}
Using (\ref{eq:aux_rel}), the first of these sums can be simplified,
\begin{align}
	R_{1} &= -\sum_{l = 1}^{m} l \left( \lte_{1}^{(l)} + \ltf_{2}^{(l)} \right) - \sum_{l = 1}^{m} l \left( \lte_{2}^{(l)} + \ltf_{1}^{(l)} \right) \nonumber \\
	&= -\sum_{l = 1}^{m} l \left( \lte_{1}^{(l)} + \ltf_{2}^{(l)} \right) + \sum_{l = 1}^{m-1} l \left( \lte_{1}^{(l+1)} + \ltf_{2}^{(l+1)} \right) -m \left(\lte_{2}^{(m)} + \ltf_{1}^{(m)} \right) \nonumber\\
	&= -\sum_{l = 1}^{m} l \left( \lte_{1}^{(l)} + \ltf_{2}^{(l)} \right) + \sum_{l = 2}^{m} (l-1) \left( \lte_{1}^{(l)} + \ltf_{2}^{(l)} \right) - m \left(\lte_{2}^{(m)} + \ltf_{1}^{(m)} \right) \nonumber\\
	&= -\left(\lte_{1}^{(1)} + \ltf_{2}^{(1)} \right) -  m \left(\lte_{2}^{(m)} + \ltf_{1}^{(m)} \right) - \sum_{l = 2}^{m} \left( \lte_{1}^{(l)} + \ltf_{2}^{(l)} \right) \nonumber\\
	&= -  m \left(\lte_{2}^{(m)} + \ltf_{1}^{(m)} \right) - \sum_{l = 1}^{m} \left( \lte_{1}^{(l)} + \ltf_{2}^{(l)} \right).
	\label{eq:R1_red}
	\end{align}
To reduce $R_{2}$, we use the following lemma.

\begin{lemma}
	\begin{align*}
		\lift{a_{1}}^{(l)} + \lift{b_{1}}^{(l)} + \lift{c}^{(l)} - \lift{\conj{a_{23}}}^{(l)} - \lift{\conj{b_{23}}}^{(l)}  & \sim  \lte_{1}^{(l)} + \ltf_{2}^{(l)}.
	\end{align*}
\end{lemma}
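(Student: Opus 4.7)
The plan is to verify the claimed homological equivalence sheet-by-sheet on $\Sigma$. Since every 1-chain $\lift{p}^{(l)}$ decomposes canonically as $\sum_{r=1}^{3} \lift{p}^{(l,r)}$, and the sign/support conventions ``$+$ on sheet $j$, $-$ on sheet $i$, $0$ otherwise'' for a street of type $ij$ are already set up in the paper, the identity on $\Sigma$ is equivalent to three identities, one per sheet. I will then assemble each sheet-identity from the equivalences already established in the preceding lemma.

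On sheet 1, the type-$23$ street $b_{1}$, the type-$32$ streets $\conj{b_{2}},\conj{b_{3}}$, and the type-$23$ auxiliary $f_{2}$ all contribute zero. The sheet-1 restriction of the claim therefore collapses to $\la_{1}^{(l,1)} + \lift{c}^{(l,1)} - \lac_{23}^{(l,1)} \sim \lte_{1}^{(l,1)}$, which is a trivial rewriting of \eqref{eq:hom_sht1_1}. By the perfectly symmetric sheet-count on sheet 3 (where $a_{1}$, $\conj{a_{2}}$, $\conj{a_{3}}$, and $e_{1}$ vanish), the sheet-3 restriction reduces to \eqref{eq:hom_sht3_1}.

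The middle sheet is where the work happens. On sheet 2 the type-$13$ street $c$ has no lift, so the required identity is
\[
\la_{1}^{(l,2)} + \lb_{1}^{(l,2)} - \lac_{23}^{(l,2)} - \lbc_{23}^{(l,2)} \;\sim\; \lte_{1}^{(l,2)} + \ltf_{2}^{(l,2)}.
\]
I would substitute $\la_{1}^{(l,2)} \sim \lac_{3}^{(l,2)} + \lb_{2}^{(l,2)} + \lte_{1}^{(l,2)}$ from \eqref{eq:hom_sht2_3} and $\lb_{1}^{(l,2)} \sim \la_{2}^{(l,2)} + \lbc_{3}^{(l,2)} + \ltf_{2}^{(l,2)}$ from \eqref{eq:hom_sht2_4}, then expand $\lac_{23}^{(l,2)} = \lac_{2}^{(l,2)} + \lac_{3}^{(l,2)}$ and $\lbc_{23}^{(l,2)} = \lbc_{2}^{(l,2)} + \lbc_{3}^{(l,2)}$. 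The $\lac_{3}^{(l,2)}$ and $\lbc_{3}^{(l,2)}$ terms cancel, leaving $\lte_{1}^{(l,2)} + \ltf_{2}^{(l,2)} + \la_{2}^{(l,2)} + \lb_{2}^{(l,2)} - \lac_{2}^{(l,2)} - \lbc_{2}^{(l,2)}$. The last four terms are precisely the combination shown to be null-homologous in \eqref{eq:hom_sht2_5}, finishing the sheet-2 identity.

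The only point demanding any real care is the sign bookkeeping in the sheet decomposition; once the eight 1-chains $a_{1}, b_{1}, c, \conj{a_{2}}, \conj{a_{3}}, \conj{b_{2}}, \conj{b_{3}}, e_{1}, f_{2}$ are tabulated against the three sheets using the type-$ij$ convention, the claim follows by stitching together a small number of the boundary-of-disk relations already established in Figure~\ref{fig:horse_decorated}. No new geometric input is needed beyond the auxiliary streets $e_{1}^{(l)}, f_{2}^{(l)}$, whose existence on each horse is guaranteed by the \ref{cond_noholes} condition.
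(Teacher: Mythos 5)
Your proposal is correct and follows essentially the same route as the paper's own proof: the identical sheet-by-sheet decomposition, with sheets 1 and 3 handled directly by \eqref{eq:hom_sht1_1} and \eqref{eq:hom_sht3_1}, and sheet 2 by substituting \eqref{eq:hom_sht2_3}--\eqref{eq:hom_sht2_4} and closing with \eqref{eq:hom_sht2_5}. (The only blemish is a trivial one in the prose: you announce ``eight'' 1-chains but list nine.)
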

\begin{proof}
	On sheet 1,
	\begin{align*}
		\la_{1}^{(l,1)} + \lb_{1}^{(l,1)} + \lift{c}^{(l,1)} -\lac_{23}^{(l,1)} - \lbc_{23}^{(l,1)} &= \la_{1}^{(l,1)} + \lift{c}^{(l,1)} -\lac_{23}^{(l,1)}.
	\end{align*}
Using (\ref{eq:hom_sht1_1}),
	\begin{align*}
		& \sim \la_{1}^{(l,1)} + \lift{c}^{(l,1)} + \left( \lte_{1}^{(l,1)} - \la_{1}^{(l,1)} - \lift{c}^{(l,1)} \right) \\
		& \sim \lte_{1}^{(l,1)}.
	\end{align*}
 Similarly, on sheet 3, using (\ref{eq:hom_sht3_1}) appropriately,
	\begin{align*}
		\la_{1}^{(l,3)} + \lb_{1}^{(l,3)} + \lift{c}^{(l,3)} -\lac_{23}^{(l,3)} - \lbc_{23}^{(l,1)} &= \lb_{1}^{(l,3)} + \lift{c}^{(l,3)} - \lbc_{23}^{(l,3)}\\
		& \sim \lb_{1}^{(l,3)} + \lift{c}^{(l,3)} + \left( \ltf_{2}^{(l,3)} - \lb_{1}^{(l,3)} - \lift{c}^{(l,3)} \right) \\
		& \sim \ltf_{2}^{(l,3)}.
	\end{align*}
	  On sheet 2
	\begin{align*}
		\la_{1}^{(l,2)} + \lb_{1}^{(l,2)} + \lift{c}^{(l,2)} -\lac_{23}^{(l,2)} - \lbc_{23}^{(l,2)} &= \la_{1}^{(l,2)} + \lb_{1}^{(l,2)} - \lac_{23}^{(l,2)} - \lbc_{23}^{(l,2)}.
	\end{align*}
	Now, via (\ref{eq:hom_sht2_3}) and (\ref{eq:hom_sht2_4})
	\begin{align*}
		\la_{1}^{(l,2)} & \sim \lac_{3}^{(l,2)} + \lb_{2}^{(l,2)} + \lte_{1}^{(l,2)}\\
		\lb_{1}^{(l,2)} & \sim \la_{2}^{(l,2)} + \lbc_{3}^{(l,2)} + \ltf_{2}^{(l,2)}.
	\end{align*}
	Hence,
	\begin{align*}
		\la_{1}^{(l,2)} + \lb_{1}^{(l,2)} + \lift{c}^{(l,2)} -\lac_{23}^{(l,2)} - \lbc_{23}^{(l,2)} & \sim \left( \lac_{3}^{(l,2)} + \lb_{2}^{(l,2)} + \lte_{1}^{(l,2)} \right)\\
		& + \left( \la_{2}^{(l,2)} + \lbc_{3}^{(l,2)} + \ltf_{2}^{(l,2)} \right) - \lac_{23}^{(l,2)} - \lbc_{23}^{(l,2)}\\
		& \sim \lb_{2}^{(l,2)} + \la_{2}^{(l,2)} - \lbc_{2}^{(l,2)} - \lac_{2}^{(l,2)}  + \lte_{1}^{(l,2)} + \ltf_{2}^{(l,2)}\\
		& \sim \lte_{1}^{(l,2)} + \ltf_{2}^{(l,2)},
	\end{align*}
	where the last reduction is due to (\ref{eq:hom_sht2_5}).
\end{proof}
Thus,
\begin{align*}
	R_{2} \sim \sum_{l=1}^{m} \left( \lte_{1}^{(l)} + \ltf_{2}^{(l)} \right);
\end{align*}
so, with (\ref{eq:R1_red}), we have
\begin{align*}
	R_{1} + R_{2} = -m \left(\lte_{2}^{(m)} + \ltf_{1}^{(m)} \right).
\end{align*}
Substituting this result into (\ref{eq:L_with_remainder}),
\begin{align*}
	L(n \gamma) \sim  m \alpha_{n} \sum_{l=1}^{m -1 }
\left(\la_{123}^{(l)} + \lb_{123}^{(l)} \right) + m \alpha_{n} \left[ \left(\la_{123}^{(m)}
+ \lb_{123}^{(m)} \right) - \left(\lte_{2}^{(m)} + \ltf_{1}^{(m)} \right) \right].
\end{align*}
After inspecting Fig.~\ref{fig:horse_decorated}, by deforming slightly on the $m$th horse we can convince ourselves this is precisely a 1-chain representing $\gamma_{c}$.

 To make this claim precise, let $\lift{q}$ be a
  1-chain on $\Sigma$ such that $\partial \lift{q} \subset \pi^{-1}(z)$ for some $z \in C'$,
  and define $[ \lift{q} ]_{R}$ as the corresponding equivalence class in $\bigcup_{z \in C'} \Gamma(z,z)$.  Then, for any $k = 1, \cdots ,m$
\begin{align*}
	R_{\mathfrak{r}}{(1,k)}  a_{*} &= \left[ \sum_{l=1}^{k} \la_{123} \right]_{R} \\
   	R_{\mathfrak{r}}{(1,k)} b_{*} &= \left[ \sum_{l=1}^{k} \lb_{123} \right]_{R}.
\end{align*}
Furthermore, by parallel transporting the endpoints of $\conj{a}$ and $\conj{b}$ along an appropriate path contained in the $m$th horse\footnote{As per our notation motivated in Section \ref{app:global}, we do not write this parallel transport map explictly.}
\begin{align*}
	\conj{a}_{*} &= \left[ \left(\la_{123}^{(m)} - \lte_{2}^{(m)} \right) \right]_{R}\\
	\conj{b}_{*} &= \left[  \left( \lb_{123}^{(m)} - \ltf_{1}^{(m)} \right) \right]_{R}.
\end{align*}
Thus,
\begin{align*}
	[L(n \gamma_{c})]_{R} &= m \alpha_{n} \left[R_{\mathfrak{r}}^{(1,k)} a_{*} +
R_{\mathfrak{r}}^{(1,k)}  b_{*} + \conj{a}_{*} + \conj{b}_{*} \right]_{R}.
\end{align*}
Applying the closure map to both sides, by (\ref{eq:gammac_pushed}) the proposition holds:
\begin{align*}
	[L(n \gamma_{c})] &= m \alpha_{n} \gamma_{c} \in H_{1}(\Sigma ;\mathbb{Z}).
\end{align*}

\subsection{Table of $m$-herd BPS indices $\Omega(n \gamma_{c})$, for low values of $n$ and $m$} \label{app_omega_table}
\begin{table}[h]
\begin{center}
\caption{Values of $\Omega(n \gamma_{c})$ for low $n$ and $m$}
	\begin{tabular}{|c||c|c|c|c|c|c|c|} \hline
	{} & \multicolumn{7}{|c|}{$n$} \\
	\hline
	{} & 1 & 2 & 3 & 4 & 5 & 6 & 7\\
	\hline \hline
	$m = 1$ & 1 & 0 & 0 & 0 & 0 & 0 & 0\\
	\hline
	$m = 2$ & -2 & 0 & 0 & 0 & 0 & 0 & 0\\
	\hline
	$m = 3$ & 3 & -6 & 18  & -84 & 465 & -2808 & 18123 \\
	\hline
	$m = 4$ & -4 & -16 & -144 & -1632 & -21720 & -318816 & -5018328 \\
	\hline
	$m = 5$ & 5 & -40 & 600 & -12400 & 300500 & -8047440 & 231045220 \\
	\hline
	$m = 6$ & -6 & -72 & -1800 & -58800 & -2251500 & -95312880 &	 -4325917260 \\
	\hline
	$m = 7$ & 7  & -126 & 4410 & -208740 & 11579925 & -710338104 & 46716068007 \\
	\hline
	\end{tabular}
\end{center}
\end{table}

\section{Proof of Proposition \ref{prop:asymp}} \label{app:proof_asymp}
Define the sequence
\begin{equation*}
	b_{l} := \binom{(m-1)^2 l}{l};
\end{equation*}
we will show
\begin{align}
	\lim_{n \rightarrow \infty} \frac{\Omega(n \gamma_{c})}{ (-1)^{mn + 1} \left( \frac{m}{(m-1)^2n^2} \right) b_{n}} & =1.
	\label{eq:omega_b_asymp}
\end{align}
Indeed, from (\ref{eq:Omega_sol}),
\begin{align*}
	\frac{\Omega(n \gamma_{c})}{ (-1)^{mn + 1} \left( \frac{m}{(m-1)^2n^2} \right) b_{n}}  &= 1 +  \overbrace{\sum_{\substack{d|n\\ d<n}} (-1)^{m \left( n + d \right)} \mu \left(\frac{n}{d} \right) \left(\frac{b_{d}}{b_{n}} \right)}^{R(n)},
\end{align*}
but
\begin{align*}
	|R(n)| \leq \sum_{\substack{d|n\\ d<n}} \frac{b_{d}}{b_{n}}.
\end{align*}
Now, from the bounds
\begin{equation*}
	\sqrt{2 \pi} n^{n + \frac{1}{2}} e^{-n} < n! \leq n^{n +\frac{1}{2}} e^{1 - n}
\end{equation*}
it follows that
\begin{equation*}
\frac{b_{d}}{b_{n}} < \left(\frac{e}{\sqrt{2\pi}} \right)^{3} \left(\frac{n}{d} \right)^{1/2} e^{c_{m} (d-n)},
\end{equation*}
where $c_{m}$ is the constant defined in (\ref{eq:c_m}).  Hence,
\begin{equation*}
	|R(n)| < \left( \frac{e}{ \sqrt{2\pi}} \right)^{3} \left( n^{1/2} e^{-c_{m} n} \right) \sum_{\substack{d|n\\ d<n}} d^{-1/2} e^{c_{m} d}.
\end{equation*}
Now, the next largest divisor of $n$, other than $n$ itself, is $\leq n/2$.  Using this fact, the observation that $d^{-1/2} e^{c_{m} d}$ is a monotonically increasing function of $d$, and the crude bound that number of divisors of $n$ is $\leq n$, we have
\begin{equation*}
	\sum_{\substack{d|n\\ d<n}} d^{-1/2} e^{c_{m} d} \leq  n \left(\left(\frac{n}{2} \right)^{-1/2} e^{c_{m}n/2} \right) = \sqrt{2n} e^{c_{m} n/2};
\end{equation*}
so
\begin{equation*}
	|R(n)| < \sqrt{2} \left( \frac{e}{ \sqrt{2\pi}} \right)^{3} n e^{-c_{m} n/2},
\end{equation*}
which vanishes as $n \rightarrow \infty$, verifying (\ref{eq:omega_b_asymp}).  In other words, the $n \rightarrow \infty$ asymptotics of $\Omega(n \gamma_{c})$ are given by the asymptotics of the largest term $b_{n}$ of (\ref{eq:Omega_sol}) inside the sum over divisors:
\begin{equation*}
	\Omega(n \gamma_{c}) \sim (-1)^{mn + 1} \left( \frac{m}{(m-1)^2} \right) n^{-2} b_{n}.
\end{equation*}
Equation (\ref{eq:Omega_asymp}) follows by using Stirling's asymptotics on the binomial coefficient $b_{n}$: as $n \rightarrow \infty$,
\begin{equation*}
	b_{n} \sim \frac{1}{\sqrt{2\pi}} \left(\frac{m-1}{\sqrt{m(m-2)}} \right) n^{-1/2} e^{c_{m} n}.
\end{equation*}

\section{A sign rule} \label{app:sign-rule}

In this appendix we discuss a subtle point about signs which was not treated correctly in the first version
of \cite{GMN5}.

The issue concerns the proper way of extracting 4D BPS degeneracy information from the generating functions $Q(p)$ defined in \eqref{eq:def-Q}.
What we want to do is factorize $Q(p)$ as we wrote in \eqref{eq:Q-exp}, but to do so, we need a way of choosing the lifts $\tilde\gamma \in \tilde\Gamma$
of classes $\gamma \in \Gamma$.

We propose the following rule.
First, represent $\gamma$ as a sum of $k$ smooth closed curves $\beta_m$ on $\Sigma$.
Each such
curve has a canonical lift $\hat \beta_m$ to $\tilde\Sigma$ just given by the tangent framing.
Then we define
\begin{equation}
 \tilde\gamma = \sum_{m=1}^k (\hat \beta_m + H) + \sum_{m \le n} \# (\beta_m \cap \beta_n)  H.
 \label{eq:lift_def}
\end{equation}

We need to check that $\tilde\gamma$ so defined is independent of the choice of how we represent
$\gamma$ as a union of $\beta_m$.
First we check that $\tilde\gamma$ is stable
under creation/deletion of a null-homologous loop.
If $\beta$ denotes such a loop then $\hat \beta = H$ modulo $2H$ (indeed, suppose $\beta$
bounds a subsurface $S$; $S$ admits a vector field extending $\hat\beta$,
with $\chi(S)$ signed zeroes in the interior; this vector field gives a 2-chain on $\tilde\Sigma$
which shows $\hat\beta$ is homologous on $\tilde\Sigma$ to $\chi(S) H$;
but $\chi(S)$ is odd since $S$ has a single boundary component.)
Thus the extra term $\hat \beta + H$ added to $\tilde\gamma$ is zero modulo $2H$.
Next we check $\tilde\gamma$ is stable under resolution of an intersection:
indeed this changes $\sum_{m \le n} \# (\beta_m \cap \beta_n)$
by $-1$, and changes $k$
by $\pm 1$, while not changing $\sum \hat \beta_m$; it thus
changes $\tilde\gamma$ by either $0$ or $-2H$, which is in either case trivial mod $2H$.
Finally we note that any representation of $\gamma$ as a union of smooth closed curves can be related to
any other by repeated application of these two operations and their inverses.
It follows that $\tilde\gamma$ is indeed well defined.

Moreover, this rule has the following property:
\begin{equation}
 \tilde\gamma + \tilde\gamma' = \widetilde{\gamma + \gamma'} + \langle \gamma, \gamma' \rangle H.
\end{equation}
It follows that the corresponding formal variables
\begin{equation}
 Y_\gamma = X_{\twid\gamma}
\end{equation}
obey the twisted product rule
\begin{equation}
 Y_{\gamma} Y_{\gamma'} = (-1)^{\langle \gamma, \gamma' \rangle} Y_{{\gamma + \gamma'}}.
\end{equation}
In turn it follows (using the arguments of \cite{GMN3,GMN5}) that, if we use this particular lifting rule
to extract the 4D BPS degeneracies, all the wall-crossing relations (and in particular the KSWCF for the pure 4D degeneracies)
will come out as they should.

\section{Spectral networks and algebraic equations}\label{app:algeq}

It has been noted by Kontsevich that the generating functions of Donaldson-Thomas invariants
are often solutions of algebraic equations.  The equation \eqref{eq:P-intro} is one example.
This equation determines the BPS degeneracies $\Omega(n \gamma_c)$ corresponding to an $m$-cohort.
As we have seen in this paper, this equation can be derived from a close analysis of the
spectral network corresponding to an $m$-herd.

While finding the precise equation \eqref{eq:P-intro} involved some hard work, the bare fact that the BPS generating function obeys
\ti{some} algebraic equation is not so mysterious.  Indeed, this seems to be a general phenomenon, which we expect to occur for
\ti{any} theory of class $S$.  Let us briefly explain why.

The junction equations \eqref{eq:6way} involve variables $\nu$ and $\tau$ attached to each street of the network.
These variables lie \ti{a priori} in the noncommutative algebra $\CA_S$.  However, one can replace them by variables lying in the
commutative algebra $\CA_C$ simply by choosing local trivializations of the torsors $\tilde\Gamma(\tilde z, -\tilde z)$;
indeed such a trivialization gives an embedding of $\CA_S$ into the algebra of $K \times K$ matrices over $\CA_C$;
taking the individual matrix components then gives equations where all of the variables lie in $\CA_C$.
These equations alone do not quite determine
$\nu$ and $\tau$ --- there are not quite enough of them.  However, once one supplements them with the
``branch point'' equations from \cite{GMN5} (which are also algebraic), one then has one equation for each variable.

In principle the spectral network may involve infinitely many streets and joints, so at this stage we may have
an infinite set of algebraic equations in an infinite number of variables.  However, in all examples we have considered,
only finitely many of these equations are relevant for determining any particular BPS generating function.  Indeed, in these
examples the set of ``two-way streets'' is always supported in some compact set $K$
obtained by deleting small discs around punctures on $C$; the intersection $\CW \cap K$
only involves finitely many streets; and there are no streets which enter $K$ from outside.  It seems likely that these properties
hold for \ti{all} spectral networks, although we have not proven it.
In any case, taking these properties for granted, it follows that the finitely many variables $\nu$ and $\tau$ attached
to the finitely many streets in $\CW \cap K$ are indeed determined by a finite set of algebraic equations.

The functions $Q(p)$ in turn are algebraic combinations of the $\nu$ and $\tau$, as are the
BPS generating functions $\prod_{p} Q(p)^{\langle \bar a, p_{\Sigma} \rangle}$.
Thus we expect that the BPS generating functions in any theory of class $S$ always satisfy algebraic equations,
which gives a natural explanation of Kontsevich's observation, at least
in those theories.

\clearpage


\begin{thebibliography}{99}
\bibitem{WESS-BAGGER} J. Wess, J. Bagger, ``Supersymmetry and Supergravity," Princeton University Press, 1992
\bibitem{SWI} N. Seiberg, E. Witten, ``Monopole Condensation, and Confinement in ${\cal N}=2$ Supersymmetric Yang-Mills Theory," Nucl.Phys.B426:19-52, 1994, 	 \href{http://arxiv.org/abs/hep-th/9407087}{arXiv:hep-th/9407087}\\
\bibitem{SWII}    N. Seiberg, E. Witten, ``Monopoles, Duality and Chiral Symmetry Breaking in ${\cal N}=2$ Supersymmetric QCD," Nucl.Phys.B431:484-550, 1994, 	 \href{http://arxiv.org/abs/hep-th/9408099}{arXiv:hep-th/9408099}
\bibitem{BILAL}    A. Bilal, F. Ferrari, ``The Strong-coupling Spectrum of the Seiberg-Witten Theory," Nucl.Phys.B469:387-402, 1996, 	 \href{http://arxiv.org/abs/hep-th/9602082}{arXiv:hep-th/9602082}
\bibitem{SOLIDSTATE} C. L. de Souza Batista, D. Li, ``Analytic Calculations of Trial Wave Functions of the Fractional Quantum Hall Effect on the Sphere," Phys.Rev.B55:1582, 1997, \href{http://arxiv.org/abs/cond-mat/9607170}{arXiv:cond-mat/9607170}
\bibitem{DENEF} F. Denef, ``Quantum Quivers and Hall/Hole Halos," JHEP 0210:023, 2002, \href{http://arxiv.org/abs/hep-th/0206072}{arXiv:hep-th/0206072}
\bibitem{REINEKE} M. Reineke, ``The Harder-Narasimhan System in Quantum Groups and Cohomology of Quiver
Moduli.", Invent. Math. 152, no. 2, 349, 2003, \href{http://arxiv.org/abs/math/0204059}{arXiv:math/0204059}
\bibitem{REINEKE03} M. Reineke, ``The Use of Geometric and Quantum Group Techniques for Wild Quivers," \href{http://arxiv.org/abs/math/0304193}{arXiv:math/0304193}
\bibitem{FG} V.V. Fock, A.B. Goncharov, ``Moduli Spaces of Local Systems and Higher Teichmuller Theory," \href{http://arxiv.org/abs/math/0311149}{arXiv:math/0311149}
\bibitem{OSV} H. Ooguri, A. Strominger, C. Vafa, ``Black Hole Attractors and the Topological String'', Phys.Rev.D70:106007, 2004, \href{http://arxiv.org/abs/hep-th/0405146}{arXiv:hep-th/0405146}
\bibitem{FIOL} B. Fiol, ``The BPS Spectrum of ${\cal N} = 2$ $SU(N)$ SYM," JHEP 0602:065, 2006, \href{http://arxiv.org/abs/hep-th/0012079}{arXiv:hep-th/0012079}
\bibitem{Cheng-et-al} J. de Boer, M.C.N. Cheng, R. Dijkgraaf, J. Manschot, E. Verlinde, ``A Farey Tail for Attractor Black Holes", JHEP 0611:24, 2006, \href{http://arxiv.org/abs/hep-th/0608059}{arXiv:hep-th/0608059}
\bibitem{REINEKE08} M. Reineke, ``Moduli of Representations of Quivers," \href{http://arxiv.org/abs/0802.2147}{arxiv:0802.2147}
\bibitem{DEBOER} J. de Boer, S. El-Showk, I. Messamah, D. Van den Bleeken, ``Quantizing ${\cal N}=2$ Multicenter Solutions," 	 \href{http://arxiv.org/abs/0807.4556}{arXiv:0807.4556}
\bibitem{GMN1} D. Gaiotto, G.W. Moore, A. Neitzke, ``Four-dimensional Wall-crossing via Three-dimensional Field Theory," \href{http://arxiv.org/abs/0807.4723}{arXiv:0807.4723}
\bibitem{KS_MOTIVIC_I} M. Kontsevich, Y. Soibelman, ``Stability Structures, Motivic Donaldson-Thomas Invariants and Cluster Transformations," \href{http://arxiv.org/abs/0811.2435}{arXiv:0811.2435}
\bibitem{REINEKE-09} M. Reineke, ``Cohomology of Quiver Moduli, Functional Equations, and Integrality of Donaldson-Thomas Type Invariants,"  \href{http://arxiv.org/abs/0903.0261}{arXiv:0903.0261}
\bibitem{WEIST} T. Weist, ``Localization in Quiver Moduli Spaces," \href{http://arxiv.org/abs/0903.5442}{arXiv:0903.5442}
\bibitem{GMN2} D. Gaiotto, G.W. Moore, A. Neitzke, ``Wall-crossing, Hitchin Systems, and the WKB Approximation," \href{http://arxiv.org/abs/0907.3987}{arXiv:0907.3987}
\bibitem{GROSS} M. Gross, R. Pandharipande, ``Quivers, Curves, and the Tropical Vertex," \href{http://arxiv.org/abs/0909.5153}{arXiv:0909.5153}
\bibitem{KS_MOTIVIC_II} M. Kontsevich, Y. Soibelman, ``Motivic Donaldson-Thomas Invariants: Summary of Results," \href{http://arxiv.org/abs/0910.4315}{arXiv:0910.4315}
\bibitem{DIMOFTE-GUKOV-09} T. Dimofte, S. Gukov, ``Refined, Motivic, and Quantum," Lett.Math.Phys.91:1, 2010, \href{http://arxiv.org/abs/0904.1420}{arXiv:0904.1420}
\bibitem{GMN3} D. Gaiotto, G.W. Moore, A. Neitzke, ``Framed BPS States,"  \href{http://arxiv.org/abs/1006.0146}{arXiv:1006.0146}
\bibitem{KS} M. Kontsevich, Y. Soibelman, ``Cohomological Hall Algebra, Exponential Hodge Structures and Motivic Donaldson-Thomas Invariants," 	 \href{http://arxiv.org/abs/1006.2706}{arXiv:1006.2706}
\bibitem{MOORE_PITP} G.W. Moore, PiTP Lectures on Wall-Crossing, PiTP School at the Institute for Advanced Study, July 27-29, 2010, \href{http://www.physics.rutgers.edu/~gmoore}{http://www.physics.rutgers.edu/$\sim$gmoore}
\bibitem{MANSCHOT} J. Manschot, B. Pioline, A. Sen, ``Wall-Crossing from Boltzmann Black Hole Halos," JHEP 1107:059, 2011, \href{http://arxiv.org/abs/1011.1258}{arXiv:1011.1258}
\bibitem{Manschot:2011xc}
  J.~Manschot, B.~Pioline and A.~Sen,
  ``A Fixed Point Formula for the Index of Multi-centered {\cal N}=2 Black Holes,"
  JHEP 1105: 057, 2011, \href{http://arxiv.org/abs/1103.1887}{arXiv:1103.1887}
\bibitem{GMN4} D. Gaiotto, G.W. Moore, A. Neitzke, ``Wall-Crossing in Coupled 2d-4d Systems,"  \href{http://arxiv.org/abs/1103.2598}{arXiv:1103.2598}
\bibitem{Denef-Moore} F. Denef, G.W. Moore, ``Split States, Entropy Enigmas, Holes and Halos,''
JHEP 1111:129,  2011, \href{http://arxiv.org/abs/hep-th/0702146}{arXiv:hep-th/0702146}
\bibitem{CV} M. Alim, S. Cecotti, C. Cordova, S. Espahbodi, A. Rastogi, C. Vafa, ``${\cal N}=2$ Quantum Field Theories and Their BPS Quivers," \href{http://arxiv.org/abs/1112.3984}{arXiv:1112.3984}
\bibitem{WEIST12} T. Weist, ``On the Euler Characteristic of Kronecker Moduli Spaces," \href{http://arxiv.org/abs/1203.2740}{arxiv:1203.2740}
\bibitem{Andriyash:2010yf}
  E. Andriyash, F. Denef, D.L. Jafferis and G.W. Moore,
  ``Bound State Transformation Walls,"
  JHEP 1203:007, 2012, \href{http://arxiv.org/abs/1008.3555}{arXiv:1008.3555}
\bibitem{GMN5} D. Gaiotto, G.W. Moore, A. Neitzke, ``Spectral Networks,"  \href{http://arxiv.org/abs/1204.4824}{arXiv:1204.4824}
\bibitem{GMN6} D. Gaiotto, G.W. Moore, A. Neitzke, ``Spectral Networks and Snakes," \href{http://arxiv.org/abs/1209.0866}{arXiv:1209.0866}
\bibitem{MOORE_FELIX} G.W. Moore, Felix Klein Lectures: ``Applications of the Six-dimensional (2,0) Theory to Physical Mathematics," October 1 - 11, 2012 at the Hausdorff Insitute for Mathematics, Bonn.  \href{http://www.physics.rutgers.edu/~gmoore}{http://www.physics.rutgers.edu/$\sim$gmoore}
\bibitem{Moore:2012yp}
  G.W. Moore,
  ``Four-dimensional ${\cal N}=2$ Field Theory and Physical Mathematics," \href{http://arxiv.org/abs/1211.2331}{arXiv:1211.2331}
\bibitem{Cecotti:2012se}
  S. Cecotti,
  ``The Quiver Approach to the BPS Spectrum of a 4d ${\cal N}=2$ Gauge Theory," \href{http://arxiv.org/abs/1212.3431}{arXiv:1212.3431}
\bibitem{Diaconescu-Moore} D.-E. Diaconescu, G.W. Moore, ``Crossing the Wall: Branes vs. Bundles'',  Adv. Theor. Math. Phys. Volume 14, Number 6 (2010), 1621-1650, \href{http://arxiv.org/abs/0706.3193}{arXiv:0706.3193} 
\bibitem{DIACONESCU} W.-Y. Chuang, D.-E. Diaconescu, J. Manschot, G.W. Moore, Y. Soibelman, ``Geometric Engineering of (Framed) BPS States.", \href{http://arxiv.org/abs/1301.3065}{arXiv:1301.3065}
\bibitem{Gaiotto:2009we} D.~Gaiotto, ``N=2 dualities'', JHEP {\bf 1208}, 034 (2012) [arXiv:0904.2715 [hep-th]].
\bibitem{KontsevichCommun} M. Kontsevich, Private communication.
\bibitem{video-charge-motion} \href{http://www.physics.rutgers.edu/het/wwc/charge-disposition-movie.flv}{http://www.physics.rutgers.edu/het/wwc/charge-disposition-movie.flv}
\bibitem{video-strong-cplg} \href{http://www.physics.rutgers.edu/het/wwc/strong-coupling-full-range.flv}{http://www.physics.rutgers.edu/het/wwc/strong-coupling-full-range.flv}
\bibitem{video-before-wall-focus} \href{http://www.physics.rutgers.edu/het/wwc/focus-before-wall.flv}{http://www.physics.rutgers.edu/het/wwc/focus-before-wall.flv}
\bibitem{video-spectrum-charges} \href{http://www.physics.rutgers.edu/het/wwc/spectrum-charges.flv}{http://www.physics.rutgers.edu/het/wwc/spectrum-charges.flv}

\end{thebibliography}
\end{document}